\begin{document}

\title[Generating collection transformations]{Generating  collection transformations from proofs}         


\author{Michael Benedikt}
\affiliation{
  \department{Computer science department}              
  \institution{Oxford University}            
  \country{United Kingdom}                    
}

\author{Pierre Pradic}
\affiliation{
  \department{Computer science department}              
  \institution{Oxford University}           
  \country{United Kingdom}                   
}

\newcommand{\trans}{{\mathcal T}}
\newcommand{\coq}{\kw{COQ}}
\newcommand{\freevars}{\kw{FV}}
\newcommand{\funall}{\kw{Fun}_{\kw{All}}}
\newcommand{\funfin}{\kw{Fun}_{\kw{Fin}}}
\newcommand{\ljboundedvarq}{\kw{LJBoundVarQ}}
\newcommand{\ljboundedq}{\kw{LJBoundedQ}}
\newcommand{\ljboundedconn}{\kw{LJBoundedConn}}
\newcommand{\ljdeltazero}{LJ\deltazero}
\newcommand{\rnandL}{$\wedge$-\textsc{L}}
\newcommand{\rnandR}{$\wedge$-\textsc{R}}
\newcommand{\rnandRB}{$\wedge$-\textsc{RB}}
\newcommand{\rnorL}{$\vee$-\textsc{L}}
\newcommand{\rnorR}{$\vee$-\textsc{R}}
\newcommand{\rnimplL}{$\Rightarrow$-\textsc{L}}
\newcommand{\rnimplLB}{$\Rightarrow$-\textsc{LB}}
\newcommand{\rnimplR}{$\Rightarrow$-\textsc{R}}
\newcommand{\rnallL}{$\forall$-\textsc{L}}
\newcommand{\rnallR}{$\forall$-\textsc{R}}
\newcommand{\rnexL}{$\exists$-\textsc{L}}
\newcommand{\rnexR}{$\exists$-\textsc{R}}
\newcommand{\rnallLBV}{$\forall$-\textsc{LBV}}
\newcommand{\rnallRBV}{$\forall$-\textsc{RBV}}
\newcommand{\rnexLBV}{$\exists$-\textsc{LBV}}
\newcommand{\rnexRBV}{$\exists$-\textsc{RBV}}
\newcommand{\rnallLB}{$\forall$-\textsc{LB}}
\newcommand{\rnallRB}{$\forall$-\textsc{RB}}
\newcommand{\rnexLB}{$\exists$-\textsc{LB}}
\newcommand{\rnexRB}{$\exists$-\textsc{RB}}
\newcommand{\rnax}{\textsc{AX}}
\newcommand{\rntop}{$\top$-\textsc{R}}
\newcommand{\rnbot}{$\bot$-\textsc{L}}
\newcommand{\rneqconst}{$=$-\textsc{CL}}
\newcommand{\rneqL}{$=$-\textsc{L}}
\newcommand{\rneqR}{$=$-\textsc{R}}
\newcommand{\rnpaireta}{$\times_\eta$}
\newcommand{\rnpairbeta}{$\times_\beta$}
\newcommand{\rnuniteta}{$\unit_\eta$}

\newcommand{\msu}{,}
\newcommand{\nativemember}{\in}
\newcommand{\derivedmember}{\in'}
\newcommand{\macroin}{\mathrel{\tilde\in}}
\newcommand{\macrosubseteq}{\mathrel{\widetilde\subseteq}}
\newcommand{\macroeq}{\mathrel{{\widetilde=}}}

\newcommand{\groupq}{\kw{Group}}
\newcommand{\projq}{\kw{Proj}}
\newcommand{\filterq}{\kw{Filter}}
\newcommand{\nrcget}{\kw{Get}}
\newcommand{\nrcwget}{\nrc\kw{[}\nrcget\kw{]}}
\newcommand{\rqfo}{\kw{RQFO}}
\newcommand{\unit}{\kw{Unit}}
\newcommand{\nra}{\kw{NRA}}
\newcommand{\nrc}{\kw{NRC}}
\newcommand{\inschema}{\aschema_{in}}
\newcommand{\outschema}{\aschema_{out}}

\newcommand{\convert}{\kw{Convert}}
\newcommand{\cast}{\kw{Cast}}
\newcommand{\obj}{\kw{o}}
\newcommand{\oneobjth}{\kw{O}}
\newcommand{\oneobj}{\kw{obj}}
\newcommand{\inobj}{\kw{o}_{in}}
\newcommand{\outobj}{\kw{o}_{out}}
\newcommand{\map}{\kw{Map}}
\newcommand{\flatten}{\kw{Flatten}}
\newcommand{\pairwith}{\kw{PairWith}}
\newcommand{\booltype}{\kw{Bool}}
\newcommand{\booltt}{\kw{tt}}
\newcommand{\boolff}{\kw{ff}}
\newcommand{\is}{\kw{is}}
\newcommand{\instances}{\kw{Instances}}
\newcommand{\inst}{\kw{Inst}}
\newcommand{\sig}{{\mathcal SIG}}
\newcommand{\sorts}{\kw{Sorts}}
\newcommand{\bool}{\booltype}
\newcommand{\depth}{\kw{Depth}}
\newcommand{\false}{\kw{False}}
\newcommand{\smallsort}{\sort_0}
\newcommand{\bigsort}{\sort_1}
\newcommand{\smallsorts}{\sorts_0}
\newcommand{\bigsorts}{\sorts_1}
\newcommand{\sort}{\kw{S}}
\newcommand{\interp}{{\mathcal I}}
\newcommand{\ur}{{\mathcal U}}
\newcommand{\inttype}{{\bbN}}
\newcommand{\ursort}{\ur}
\newcommand{\true}{\kw{True}}
\newcommand{\verify}{\kw{Verify}}
\newcommand{\deltazero}{\Delta_0}
\newcommand{\bqf}{\Delta_0}
\newcommand{\depjoin}{\join^{dep}}
\newcommand{\tuples}{\kw{Tuples}}
\newcommand{\tuple}{\kw{Tuple}}
\newcommand{\tupleof}{\tuple}
\newcommand{\collapse}{\kw{Collapse}}
\newcommand{\vals}{\kw{Vals}}
\newcommand{\exptime}{\kw{EXPTIME}}
\newcommand{\pspace}{\kw{PSPACE}}
\newcommand{\ptime}{\kw{PTIME}}
\newcommand{\aseq}{\kw{Seq}}
\newcommand{\efi}[1]{\textbf{Efi}: \textcolor{red}{#1}}
\newcommand{\balder}[1]{\textcolor{red}{B: #1}}
\newcommand{\proves}{\vdash}
\newcommand{\join}{\bowtie}
\newcommand{\accpart}{\kw{AccPart}}
\newcommand{\accs}{\kw{AcSch}}
\newcommand{\accsb}{\kw{AcSch}^\leftrightarrow}
\newcommand{\accsneg}{\kw{AcSch}^\neg}
\newcommand{\command}{\kw{Command}}
\newcommand{\abag}{\kw{B}}
\newcommand{\comms}{\kw{Comms}}
\newcommand{\accout}{\kw{AccOut}}

\newcommand{\accax}{\kw{AccAx}}
\newcommand{\forax}{\kw{ForAx}}
\newcommand{\backax}{\kw{BackAx}}
\newcommand{\aschema}{{\mathcal SCH}}
\newcommand{\domainof}{\kw{Domain}}

\newcommand{\rels}{{\mathcal R}}
\newcommand{\meths}{\kw{Meths}}
\newcommand{\methoddef}{\kw{MethodDefs}}
\newcommand{\projax}{\kw{ProjAx}}

\newcommand{\globdom}{\geq_{GD}}
\newcommand{\uspjneg}{\kw{USPJ}^{\neg}}
\newcommand{\kw}[1]{{\mathsf{#1}}\xspace}
\newcommand{\profinfo}{ \kw{Profinfo}}
\newcommand{\studinfo}{ \kw{Studentinfo}}
\newcommand{\universitydirectory}{\kw{Udirectory}}
\newcommand{\udirectory}{\kw{Udirect}}
\newcommand{\eid}{\kw{eid}}
\newcommand{\lname}{\kw{lname}}
\newcommand{\onum}{\kw{onum}}
\newcommand{\accessible}{\kw{accessible}}
\newcommand{\access}{\kw{Access}}
\newcommand{\acc}[1]{\kw{Accessed} #1}
\newcommand{\infacc}[1]{\kw{InferredAcc} #1}
\newcommand{\bestplan}{\kw{BestPlan}}
\newcommand{\bestcost}{\kw{BestCost}}
\newcommand{\oldbestcost}{\kw{OldBestCost}}
\newcommand{\candidates}{\kw{Candidates}}
\newcommand{\plan}{\kw{Plan}}
\newcommand{\plantree}{\kw{PlanTree}}
\newcommand{\config}{\kw{config}}
\newcommand{\mt}{\kw{mt}}
\newcommand{\aplan}{\kw{PL}}
\newcommand{\cost}{\kw{Cost}}
\newcommand{\pointer}{\kw{Pointer}}
\newcommand{\parent}{\kw{Parent}}
\newcommand{\children}{\kw{Children}}
\newcommand{\oldcost}{\kw{OldCost}}
\newcommand{\newcost}{\kw{NewCost}}
\newcommand{\atomiccost}{\kw{AtomicCost}}
\newcommand{\haspointer}{\kw{HasPointer}}
\newcommand{\isroot}{\kw{IsRoot}}
\newcommand{\prune}{\kw{Prune}}
\newcommand{\goodplans}{\kw{GoodPlans}}
\newcommand{\oldplans}{\kw{OldPlans}}

\newcommand{\backup}{\kw{BackUpBestCost}}

\newcommand{\uplan}{\kw{ChasePlan}}

\newcommand{\ids}{\kw{Ids}}
\newcommand{\names}{\kw{Names}}

\newcommand{\B}{\mathcal{B}}
\newcommand{\correct}[1]{\textcolor{red}{\textbf{Check:} \textbf{#1}}}
\renewcommand{\phi}{\varphi}

\newtheorem{op}{Open question}
 \newtheorem{algo}{Algorithm}
\newtheorem{question}{Question}

\newcommand{\myparagraph}[1]{{\bf #1.}}
\newcommand{\myeat}[1]{}

\newenvironment{myexmp}{\refstepcounter{theorem}\par\medskip
\noindent\textsc{{Example~\thetheorem.}}}{\null\hfill$\triangleleft$\medskip}

\newcommand{\structt}{\kw{Struct}}
\newcommand{\sett}{\kw{Set}}
\newcommand{\issing}{\kw{IsSing}}
\newcommand{\sing}{\kw{Sing}}
\newcommand{\istwo}{\kw{IsTwo}}
\newcommand{\case}{\kw{case}}

\newcommand{\totalit}{\kw{TotalIt}}
\newcommand{\firstmatch}{\kw{ItFirstMatch}}
\newcommand{\bestmatch}{\kw{ItBestPlan}}
\newcommand{\totalfact}{\kw{TotalFacts}}
\newcommand{\dominance}{\kw{Dominance}}
\newcommand{\globalequiv}{\kw{GlobalEquiv}}
\newcommand{\costprune}{\kw{CostPrune}}
\newcommand{\totalfacts}{\kw{Facts}}
\newcommand{\image}{\kw{Im}}

\newcommand{\minput}{\kw{input}}
\newcommand{\moutput}{\kw{output}}

\newcommand{\<}{\langle}
\renewcommand{\>}{\rangle}

\newcommand{\bbN}{\mathbb{N}}                
\newcommand{\eqdef}{\mathrel{:=}}
\newcommand{\isfst}{\kw{IsFst}}
\newcommand{\issnd}{\kw{IsSnd}}
\newcommand{\allpairs}{\kw{AllPairs}}
\newcommand{\pair}{\kw{Pair}}
\newcommand{\bnfeq}{\mathrel{::=}}
\newcommand{\bnfalt}{\mid}
\newcommand{\subtype}{\leq}
\newcommand{\subobject}{\leq}
\newcommand{\interpsort}{\tau}
\newcommand{\isnonuniformdefiner}{\delta}


\begin{abstract}
Nested relations,  built up from atomic types via product and set types,
form a rich data model.
Over the last decades the \emph{nested relational calculus}, $\nrc$, has emerged as a standard language
for defining transformations on nested collections. $\nrc$ is a strongly-typed functional language which
allows building up transformations using tupling and projections, a singleton-former,
and a map operation that lifts transformations on tuples to transformations on sets. 

In this work we describe an alternative declarative method of describing transformations
in logic. 
A formula with distinguished inputs
and outputs gives an \emph{implicit definition} if one can prove
that for each input there is only one output that satisfies it.
Our main result shows that one can synthesize transformations
from proofs  that a formula provides an implicit definition, where
the  proof is in an intuitionistic calculus
that captures a natural style of reasoning about nested
collections.  Our polynomial time synthesis procedure is based on
an analog of Craig's  interpolation lemma, starting with
a provable containment between terms representing nested collections
and generating an $\nrc$ expression that interpolates between them.

We further show that $\nrc$ expressions that implement an implicit definition
can  be found when there is a classical proof of functionality, not just when
there is an intuitionistic one.
That is, whenever a formula implicitly
defines a transformation, there is an $\nrc$ expression that implements it.
\end{abstract}



\begin{CCSXML}
<ccs2012>
<concept>
<concept_id>10003752.10003790.10003792</concept_id>
<concept_desc>Theory of computation~Proof theory</concept_desc>
<concept_significance>300</concept_significance>
</concept>
<concept>
<concept_id>10003752.10010070.10010111.10011734</concept_id>
<concept_desc>Theory of computation~Logic and databases</concept_desc>
<concept_significance>300</concept_significance>
</concept>
</ccs2012>
\end{CCSXML}

\ccsdesc[300]{Theory of computation~Proof theory}
\ccsdesc[500]{Theory of computation~Logic and databases}

\keywords{nested collections, synthesis, proofs}  

\maketitle

\section{Introduction}

\emph{Nested relations} are a natural data model for hierarchical data.
Nested relations are objects within a type system built up from basic types via tupling
and a set-former.
In the 1980's  and 90's, a number of algebraic languages were proposed for defining
transformations on 
nested collections. 
Eventually a standard
language emerged, the
 \emph{nested relational calculus} ($\nrc$). 
The language is strongly-typed and  functional, with transformations 
built up via tuple manipulation
operations  as well as operators for lifting 
transformations over a type $T$ to transformations taking 
as input a set of objects of type $T$, such as singletons
constructors and a mapping operator.
One common formulation of these uses variables and
a ``comprehension'' operator for forming new objects from old ones
\cite{natj}, while  an alternative
algebraic formalism
 presents the language as a set of operators that can be freely composed. It was
shown that each $\nrc$ expression can be evaluated in  polynomial time in the size
of a finite data input, and that 
when the input and output is ``flat'' (i.e. only one level of nesting), $\nrc$ expresses exactly the 
transformations in the standard relational database  language relational algebra.
Wong's thesis \cite{limsoonthesis} summarizes the argument made by this line of work
 ``$\nrc$ can be profitably regarded
as the `right' core for nested relational languages''. 
$\nrc$ has been the basis
for most work on transforming nested relations. It is the basis for a number of 
commercial   tools \cite{dremel}, including those embedding nested data transformations  in
programming languages \cite{linq}, in addition to having influence in the effective implementation of data transformations in 
functional programming languages   \cite{gibbonshenglein,ringads}.

Although $\nrc$ can be applied to other collection types, such as bags and lists,
we will focus here on just nested sets.  We will  show a new connection between
$\nrc$ and first-order logic.
There is a natural logic for describing properties of nested relations, the well-known
$\deltazero$ formulas, built up from equalities using
quantifications $\exists x \in \tau$ and $\forall y \in \tau$ where $\tau$ is a term.
For example, formula $\forall x \in c ~ \pi_1(x) \in \pi_2(x)$ might describe a property
of a nested relation $c$ that
is a set of pairs, where the first component of a pair is of
some type $T$  and the second component is a set containing elements
of type $T$.
A $\deltazero$ formula $\Sigma(\inobj^1 \ldots \inobj^k, \outobj)$ over 
variables $\inobj^1 \ldots \inobj^k$ and variable $\outobj$  thus defines
a relationship between 
$\inobj^1 \ldots \inobj^k$ and $\outobj$. For such a  formula to define a transformation it must
be \emph{functional}: it must enforce that $\outobj$ is determined by the values of 
$\inobj^1 \ldots \inobj^k$.
More generally, if we have a formula $\Sigma(\inobj^1 \ldots \inobj^k,  \outobj, \vec a)$, we say that 
$\Sigma$
\emph{implicitly defines $\outobj$ as a function of $\inobj^1 \ldots \inobj^k$} if:

\medskip

(*) For each two bindings $\sigma_1$ and $\sigma_2$
of the variables $\inobj^1 \ldots \inobj^k, \vec a, \outobj$ 
to nested relations satisfying $\Sigma$, if $\sigma_1$ and $\sigma_2$
agree on each  $\inobj^i$, then  they agree on $\outobj$.

\medskip

That is, $\Sigma$ entails that the value of $\outobj$ is a partial function of  the value
of $\inobj^1 \ldots \inobj^k$. 
\myeat{
Such a formula $\Sigma$ is \emph{total for $\vec i$} if
for each valuation for $\vec i$, there is  an extension
to a valuation for $\vec i, \vec a, o$ such that
$\Sigma(\vec i, o, \vec a)$. A formula that is total for $\vec i$ and defines $o$ as a function
of $\vec i$  specifies a transformation
from $\vec i$ to $o$ in the obvious way.
}

Note that when we say ``for each binding of variables to  nested relations''  in
the definitions above, we include infinite nested relations as well
as finite ones. 
An alternative characterization of $\Sigma$ being an implicit definition, which will be more relevant 
to us in the sequel,  is that there is a proof that $\Sigma$ defines  a functional relationship.
Note that (*) is a first-order entailment: 
$\Sigma(\inobj^1 \ldots \inobj^k, \outobj, \vec a) \wedge \Sigma(\inobj^1 \ldots \inobj^k, \outobj', \vec a')
\models \outobj=\outobj'$
where in the entailment we omit some first-order ``sanity axioms''
about tuples and sets.
We refer to a proof of (*) for a given $\Sigma$ and subset of
the input variables $\inobj^1 \ldots \inobj^k$, as a \emph{proof that $\Sigma$ implicitly defines
 $\outobj$ as a function
of  $\inobj^1 \ldots \inobj^k$}, or simply a \emph{proof of functionality}
dropping $\Sigma$, $\outobj$, and $\inobj^1 \ldots \inobj^k$ when they are clear
from context.
By the completeness theorem of first-order logic,  whenever $\Sigma$
defines $\outobj$ as a function of
$\inobj^1 \ldots \inobj^k$
according to the semantic definition above, this is witnessed by a
proof, in any of the standard complete proof calculi for classical first-order logic
(e.g. tableaux, resolution).
Such a proof will use  the sanity axioms referred to above,
which capture extensionality of sets, the compatibility
of the membership relation with the type hierarchy, and properties of projections and tupling.

\begin{myexmp} \label{ex:invertible}
We consider a specification in logic involving two nested collections, $F$ and $G$.
 The collection $F$ is  of type $\sett(\ur \times \ur)$, where $\ur$
refers to the basic set of elements, the ``Ur-elements'' in the sequel.
That is, $F$ is a set of pairs. The collection $G$ is of
of type $\sett(\ur \times \sett(\ur))$, a set whose
members are pairs, the first component an element
and the second a set.

Our specification $\Sigma$ will state that for each element $g$ in $G$
there is an element $f_1$ appearing as the first component of a pair in $F$, such that
$g$ represents $f_1$, in the sense that its first component is $f_1$ and
its second component accumulates all elements paired with $f_1$ in $F$.
 This can be specified easily by a $\deltazero$ formula:
\begin{align*}
\forall g \in G ~\exists f \in F \quad&  \pi_1(g)=\pi_1(f)~
\wedge~ \forall x \in \pi_2(g) ~  \<\pi_1(f), x \>  \in F \\
\wedge~~&\forall f' \in F ~~ \left[\pi_1(f')=\pi_1(f) \rightarrow \pi_2(f')  \in \pi_2(g)\right]
\end{align*}

$\Sigma$ also states that for each element $f_1$ lying  within 
a pair in $F$  there is a corresponding element $g$ of $G$ that pairs $f_1$
with all of the elements  linked  with $f$ in $F$. 
\begin{align*}
\forall f  \in F ~ \exists g \in G \quad& \pi_1(g)=\pi_1(f) ~\wedge~ \forall x \in \pi_2(g) ~ \<\pi_1(f), x \> \in F\\
\wedge~~& \forall f' \in F ~~
\left[\pi_1(f')=\pi_1(f) \rightarrow ~ \pi_2(f') \in \pi_2(g)\right]
\end{align*}

We can prove from $\Sigma$ that $G$ is a function of $F$, and thus $\Sigma$
implicitly defines a transformation from $F$ to $G$. We give the argument
informally here. Fixing $F,G$ and $F,G'$ satisfying $\Sigma$, we will
prove that if $g \in G$ then $g \in G'$. The proof begins by using the conjunct in
the first item to obtain an $f \in F$. We can then use the  
second  item on $G'$ to obtain 
a $g' \in G'$. 
We now need to prove that $g' = g$. Since $g$ and $g'$ are pairs, it suffices to show that their two projections are the same. We can easily see that $\pi_1(g)=\pi_1(f)=\pi_1(g')$, so it suffices
to prove $\pi_2(g')=\pi_2(g)$. Here we will make use of extensionality, arguing for 
containments between $\pi_1(g')$ and $\pi_2(g)$ in both directions.
In one direction we consider an $x \in \pi_2(g')$, and we need  to show
$x$ is in $\pi_2(g)$. By the second conjunct in the second item we have $\<\pi_1(f), x\> \in F$.
Now using the first item we can argue that $x \in \pi_2(g)$. 
In the other direction we consider $x \in \pi_2(g)$, we can apply the first
item to claim $\<\pi_1(f), x\> \in F$ and then employ the second item
to derive $x \in \pi_2(g')$.

Now let us consider $G$ as the input and $F$ as the output.
We cannot say that $\Sigma$ describes $F$ as a \emph{total} function of $G$, since 
$\Sigma$  enforces constraints on $G$:
that the second component of a pair in $G$ cannot be empty, and
that any two pairs in $G$ that agree on the first component must agree on the second.
But we can prove from $\Sigma$ that $F$ is a partial function of $G$: fixing
$F,G$ and $F',G$ satisfying $\Sigma$, we can prove that $F=F'$. 
\end{myexmp}



Our first main contribution is a polynomial time synthesis procedure    that takes
as input  a proof that $\Sigma$ implicitly defines $o$  as a function
of  $\inobj^1 \ldots \inobj^k$, generating  an $\nrc$ expression with input 
$\inobj^1 \ldots \inobj^k$ that implements the transformation that $\Sigma$ defines.
We require a  proof of functionality in a certain
intuitionistic calculus. Although the calculus is not complete for classical
entailment, we argue that it is quite rich and show that it is equivalent  to 
certain prior intuitionistic calculi.

\begin{myexmp} \label{ex:invertibleexp}
Let us return to Example \ref{ex:invertible}.
From a proof in our calculus that $\Sigma$ defines $G$ as
a function of $F$, our synthesis algorithm will produce  an
expression in $\nrc$ that generates $G$ from $F$. This will
be an expression that simply  ``groups on the first component''.

From a proof from $\Sigma$ that $F$ is a function of $G$,
our algorithm
will generate an $\nrc$ expression that forms $F$ by flattening $G$.
\end{myexmp}



We also show that this phenomenon applies 
when there is a  classical proof of functionality, not just
an intuitionistic one. That is, we show that whenever a formula
$\Sigma$ projectively implicitly defines a transformation $\trans$, that 
transformation can be expressed in a slight variant of $\nrc$. 
The result can be seen as an analog of the well-known Beth definability
theorem for first-order logic \cite{beth},  stating that a property of a  first-order structure
is defined by a first-order open formula exactly when it is implicitly defined by a first-order sentence. In the process we prove an \emph{interpolation theorem}, showing
that whenever we have provable containments between nested relations,
there is an $\nrc$ expression that sits between them.
Overall our results show a close connection between logical specifications
of transformations on nested collections and the functional
transformation language $\nrc$, a result which is not anticipated
by the prior theory.


\myparagraph{Organization}
We overview related work in Section \ref{sec:related} and
provide preliminaries in Section \ref{sec:prelims}. 
Section \ref{sec:effectivebeth} details our proof
calculus and the algorithm  that synthesizes definitions
from proofs. 
We include an example
(Figure~\ref{fig:distinguishers-prooftree}) of how one would use it
to prove functionality of an expression, and
an illustration
of how  our synthesis algorithm would generate an $\nrc$ expression
from the proof (Example \ref{ex:synth}).
Section \ref{sec:fointerp} concerns  another logic-based specification
that can be transformed into $\nrc$ expressions, based on the notion of interpretations.
Section \ref{sec:bethmodeltheoretic} shows
that even for classical proofs there is a corresponding $\nrc$ expression. This conversion
goes through the interpretation representation introduced in 
Section \ref{sec:bethmodeltheoretic}. We show a general result
that implicit definitions in multi-sorted logic can be converted to
interpretations, and then use the results of Section \ref{sec:bethmodeltheoretic}
to argue that these interpretations can be converted to $\nrc$ expressions.

We close with conclusions in Section \ref{sec:conc}.
In the body of the paper we focus on explaining the results and some
proof ideas, with most proof details
deferred to the supplementary materials.


\section{Related work} \label{sec:related}
\myeat{
One connection between logic and data transformation languages is well-known
in the  context of transformations of ordinary ``flat'' relations.
Codd's theorem  \cite{coddcomplete}, building on earlier
work in algebraization of logic,  identifies transformations
in the database language relational algebra with those in first-order logic that are safe, in that
they depend only on the interpretation of the relation symbols.
}
In the context of transformations of ordinary ``flat'' relations,
Segoufin and Vianu  \cite{SVconf} 
showed that transformations definable in relational algebra 
are the same as those that satisfy a variant of implicit definability (``determinacy'').
The result of \cite{SVconf} makes use of a refinement of Craig's interpolation theorem due to Otto \cite{otto}. The use of interpolation theorems in moving from implicit to explicit is well-established,
dating back to  Craig's proof of the Beth definability theorem \cite{craig57beth}.
Segoufin and Vianu's result is motivated
by the ability to evaluate transformations  defined over one set of ``base predicates'' using
another set of ``view predicates'', where the views are defined implicitly by a background theory
relating them  to the base predicate.
The idea that one can use interpolation algorithms
to synthesize
transformations from implicit specifications first appears in the work of Toman and Weddell \cite{tomanweddell}
and has been developed in a number of directions subsequently \cite{interpbook}.
In the absence of nesting of sets, the relationship between formulas
and terms of an algebra is much more straightforward; relational algebra defines
exactly those transformations whose output is a comprehension by a first-order
formula over the elements that are in the projection of some relation.
In the presence of nesting the relationship of algebra and logic
is more complex, and so in this work we will need to develop some different techniques
(e.g.  a new kind of interpolation result) to analyze the relationship
between logical and algebraic
definability.

The development of the nested relational model, culminating
in the convergence on the language $\nrc$, has a long
history. 
The thesis of Wong \cite{limsoonthesis} and the related paper of Buneman et al. \cite{natj} 
gave an elegant presentation of $\nrc$, and summarize the equivalences known
between a number of variations on the syntax. 
Connections with logic are implicit in results stating that
$\nrc$ queries can be ``simulated'' by flat queries: see
\cite{conservativity,simulation}. Further discussion on these
simulations can be found in Section \ref{sec:fointerp}.

More powerful languages than $\nrc$ were also considered,
including an extension with an operator
for forming the powerset of a set. This extension can
be captured using the natural logic with membership
\cite{abiteboulbeeri}.  The increased expressiveness implies  correspondingly
higher complexity (e.g. non-elementary in combined complexity), and perhaps
for this reason the subsequent development has focused on $\nrc$.
Much of the  development of $\nrc$ in the last decades has 
focused primarily on integration with functional languages 
\cite{gibbonshenglein,ringads,linq}, rather than synthesis or expressiveness.

Quite independently of work on logics
for nested relations in  computer science, researchers in other
areas have investigated the relationships
between various restricted algebras for manipulating sets.
Gandy \cite{gandy}  defines a class of \emph{Basic functions},
and compares them to functions definable by  $\deltazero$ formulas.
Later languages build on Gandy's work, particularly
for  a finer-grained  analysis of the  constructible sets \cite{jensen}.
An important distinction from the setting of $\nrc$ is that
these works do not restrict to sets built up from finitely many levels
of nesting above the Ur-elements. For instance, 
Gandy showed that there are Basic functions 
checking whether an input is an ordinal, or is the
ordinal $\omega$; in fact, he showed that there are Basic functions
that are not primitive recursive. 
In  the setting of \cite{gandy}, the $\deltazero$ functions are strictly more 
expressive than the Basic functions.

Model theorists have looked at generalizing the Beth definability theorem 
that relates implicit and explicit definability to
the case where the ``implicitly definable structure''  has new elements,
not just new relations.
Hodges and his collaborators \cite{hodgesbook, hodgesdugald} explore this in some restricted cases. 
Our approach in Section \ref{sec:bethmodeltheoretic}
 to showing a relationship between implicitly definable
transformations and interpretations is inspired
by the  unpublished  draft \cite{madarasz}, motivated
 from the perspective of algebraic
logic, which provides model-theoretic tools for connecting semantic and syntactic notions
of definability in multi-sorted logic.
\myeat{
\cite{hodgesbook} defines
the notion of \emph{rigidly relatively categorical} which is the single-sorted
analog of our implicitly interpretable. \cite{hodgesbook} does not prove
any  connection to explicit interpretability, although he proves the equivalence
with coordinisability; all of the ingredients in our arguments are present in his exposition.
The later unpublished draft \cite{madarasz} extends these ideas to a multi-sorted setting, but without full proofs.
}

Our effective result yields an algorithm translating intuitionistic proofs of functionality
into NRC definitions. 
In contrast, extraction procedures related to the Curry-Howard correspondence typically take
as input constructive proofs, possibly with cuts, of statements
of the type $\forall x \exists y ~\varphi(x,y)$ witnessing that $\varphi(x,y)$ defines
a total relation
and turn those proofs into programs for functions
$f$ such that $\forall x~\varphi(x,f(x))$ hold.
Our procedure works on cut-free proofs that a formula defines a \emph{partial function}
using techniques more closely related to interpolation.
This leaves open the question of 
extracting $\nrc$ terms from constructive totality proofs.
Sazonov \cite{sazonovcoll} addressed this question for an untyped  analogue of $\nrc$.
He uses weak set theories based on intuitionistic Kripke-Platek set theory.
These theories are richer than the ones we use for functionality proofs.


\section{Preliminaries} \label{sec:prelims}

Despite their long history of study in several communities, we know of no succinct presentation of
the basics of nested collection transformation languages. 
So we will give a quick introduction  here that assumes no background.
Indeed, for the issues that we will be concerned with in this work, the 
aspects of these transformation languages that have been the focus
of most past work (e.g. integration with functional languages \cite{linq, ezra} and complexity
of evaluation \cite{koch})
 will not be critical.


\myparagraph{Nested relations}
We deal with schemas that describe objects of various
 \emph{types} given by the following grammar.
$$T, \; U \bnfeq \ur \bnfalt T \times U \bnfalt \unit \bnfalt \sett(T)$$
For simplicity throughout the remainder
we will assume only two basic types: the one-element type $\unit$
and $\ur$,
whose inhabitant are not specified further; according to the application we may
think of $\ur$ as being infinite or empty.
We call this  set the \emph{Ur-elements}.
From the Ur-elements and a unit type we can build up the set of types via
product and the power set operation.  
We use standard conventions for abbreviating types, with the $n$-ary product abbreviating
an iteration of binary products.
A \emph{nested relational schema} consists of declarations of 
variable names associated to objects of given types. 

\begin{myexmp} \label{ex:aschema}
An example nested relational schema declares two objects
$R: \sett(\ur \times \ur)$ and $S: \sett( \ur \times \sett(\ur))$.
That is, $R$ is a set of pairs of Ur-elements: a standard ``flat'' binary relation.
$S$ is a collection of pairs whose first elements are Ur-elements
and whose second elements are sets of Ur-elements.
\end{myexmp}

The types have a natural interpretation, which we refer to
as the \emph{universe over $\ur$}.  The unit type has a unique member and the members of $\sett(T)$
are the sets of members of $T$.
An \emph{instance} of such a schema is defined in the obvious way, or a
$\ur$-instance if we want to emphasize the set of Ur-elements on which it is based.
Notice that nested relational schemas allow one to describe programming language data structures that are 
built up inductively via the tupling and set constructors, rather than just sets of tuples. Thus the literature often refers also to
the types above as ``object types'' and to the ``complex object data model'' \cite{limsoonthesis,abiteboulbeeri}. In this
work we will sometimes refer to the interpretation of a variable in an instance of a nested relational
schema as an \emph{object}. The \emph{subobjects} of
an object are defined in the obvious way. For example, if
$o$ is an object of type $\sett(T)$, then it is of
the form $\{t_1, \ldots \}$, where each $t_i$ is a subobject of $o$
of type $T$.

For the schema in Example \ref{ex:aschema} above, assuming that $\ur = \mathbb{N}$, one possible instance has
$R = \{ \<4,  6\>, \<7,  3\>\}$ and 
$S = \{ \<4 , \{  6,  9 \} \> \}$.
\myeat{
$R = \{ \<c_4,  c_6\>, \<c_7,  c_3\>\}$ and 
$S = \{ \<c_4 , \{  c_6,  c_9 \} \> \}$.
Note that a relational schema as defined above
is a special case of a nested relational schema,
in which all the declarations have type $\sett(\ur \times \ldots \times \ur)$.
}

\myparagraph{Transformation languages for nested relations}
A \emph{nested relational transformation}  (over input schema $\inschema$ and output schema $\outschema$) is a function that
takes as input an instance of $\inschema$, and returns an instance of $\outschema$.
For example, suppose our input schema consists of a
declaration
$R: \sett(\ur \times \ur)$ and our output schema consists also of a declaration
$S : \sett(\ur \times (\sett(\ur))$.
Then one possible transformation would return the nested relation formed
by grouping on the first position: informally returning a set of pairs $\<a, s\>$ where $a$
is any Ur-element appearing in the first component of a tuple in the input $R$, and $s$
nt is the set of $b$ such that $\<a,b\>$ is in $R$.

\myparagraph{Transformation equivalence}
We say that two transformations are \emph{equivalent} if they
agree on all instances (finite and infinite) of  a given input schema over any set of Ur-elements. It will turn
out that for the transformations we are interested in, ``over any set of Ur-elements'' can be freely replaced
by ``over any infinite set of Ur-elements'' or ``over some fixed infinite set of Ur-elements''.
When we say that a transformation $\trans$ is \emph{expressible} in some class of transformations $C$, we mean that
there is a transformation $\trans'$ in $C$ that is equivalent to $\trans$ in the sense above.

\myparagraph{Nested Relational Calculus}
We review the main language for declaratively transforming  nested relations, Nested
Relational Calculus ($\nrc$). Each expression is  associated with 
an \emph{output type}, which are in the type system described above.
We let $\booltype$ denote the type $\sett(\unit)$. Then $\booltype$ has exactly
two elements, and will be used to simulate Booleans.

The grammar and typing rules of $\nrc$ expressions  are presented in
Figure \ref{fig:nrc_type}.
\begin{figure}[t]
\begin{mathpar}
\small
\inferrule*{ }{\Gamma, \; x : T, \; \Gamma' \vdash x : T}
\\
\inferrule*{ }{\Gamma \vdash () : \unit}
\and
\inferrule*{\Gamma \vdash \textit{e}_1 : T_1 \and \Gamma \vdash \textit{e}_2 : T_2}{\Gamma \vdash \< \textit{e}_1, \textit{e}_2 \> : T_1 \times T_2}
\and
\inferrule*{\Gamma \vdash \textit{e} : T_1 \times T_2 \and \mathsmaller{i \in \{1,2\}}}{\Gamma \vdash \pi_i(e) : T_i}
\\
\inferrule*{\Gamma \vdash \textit{e} : T}{\Gamma \vdash \{\textit{e}\} : \sett(T)}
\and
\inferrule*{\Gamma \vdash \textit{e}_1 : \sett(T_1) \and \Gamma, \; x : T_1 \vdash \textit{e}_2 : \sett(T_2)}{\Gamma \vdash \bigcup \left\{ \textit{e}_2 \mid x \in \textit{e}_1 \right\} : \sett(T_2)}
\\
\inferrule*{ }{\Gamma \vdash \emptyset_T : \sett(T)}
\and
\inferrule*{\Gamma \vdash \textit{e}_1 : \sett(T) \and \Gamma \vdash \textit{e}_2 : \sett(T)}{\Gamma \vdash \textit{e}_1 \cup \textit{e}_2: \sett(T)}
\and
\inferrule*{\Gamma \vdash \textit{e}_1 : \sett(T) \and \Gamma \vdash \textit{e}_2 : \sett(T)}{\Gamma \vdash \textit{e}_1 \setminus \textit{e}_2: \sett(T)}
\end{mathpar}
\caption{$\nrc$ syntax and typing rules}
\label{fig:nrc_type}
\end{figure}


The definition
of the free and bound variables of an expression is standard. For example,
the union operator $\bigcup \{ E \mid x \in R \}$ binds variable $x$.

The semantics of these expressions should be fairly evident.
If $E$ has type $T$, and has input variables $x_1 \ldots x_n$
of types $T_1 \ldots T_n$, respectively, then the semantics associates with $E$
a function that 
given a binding associating each free variable a value of the appropriate type,
returns an object of type $T$.
For example, the expression $()$ always returns the empty tuple, while
$\emptyset$ returns the empty set of type $T$. The expression
$\{e\}$ evaluates to $\{o\}$, where $e$ evaluates to $o$.

In the sequel, we thus assume that every $\nrc$ expression is implicitly
associated with an \emph{input schema}, which declares 
 a list of free variables and their input types,  $X_1: T_1 \ldots X_n: T_n$,
along
with an output type $S$.
We may write $E : T_1, \ldots, T_n \rightarrow S$
and
refer to $S$ as the \emph{output type} of $E$.
We often abuse notation by identifying an $\nrc$ expression with the associated
transformation. For example, if $E$ is an $\nrc$ expression and $\inobj$ is an object of the input
type of $E$, we will write $E(\inobj)$ for the output of 
(the function defined by) $E$ on $\inobj$.

As explained in \cite{limsoonthesis}, the following transformations  are definable
with their expected semantics.
\begin{compactitem}
\item  For every type $T$ there is an $\nrc$ expression
$=_T$ of type $\bool$ representing equality of elements of type $T$.
In particular, there is an expression $=_\ur$ representing equality between
Ur-elements. 
\item  For every type $T$ there is an $\nrc$ expression $\in_T$ of type
$\bool$ representing membership between an element of type $T$
in an element of type $\sett(T)$.
\end{compactitem}

Further, if $E$ is a $\nrc$ expression  with
free variable $x$ of type $T$ and $F$ is an expression of type $T$,
then the $\nrc$ expression
\[
\bigcup \{\{E\} \mid x \in \{F\} \}
\]
represents the query obtained by running $E$ with  $x$ set to the output of $F$.
Combining this with the first observations above, we can see that
for expressions $E_1$ and $E_2$ of type $T$, we have an expression representing
$E_1 =_T E_2$ of type $\bool$. Using this, we will often treat  $=_T$ and
$\in_T$ as additional constructors
of the language.

Boolean operations $\wedge, \vee, \neg$ can also be represented as $\nrc$ 
expressions with output type $\bool$.
For example $\neg~ x$ is just $\{()\} \setminus x$. Applying the observation
about composition as we did above, we see that given $E$ of type $\bool$ we can obtain
an expression $\neg ~ E$ of type $\bool$, and thus as we did with $=_T$  and $\in_T$ we
will treat the Boolean operations as primitives.

Arbitrary arity tupling and projection operations  $\<E_1,\ldots E_n\>$, $\pi_j(E)$ for $j >2$
can be seen as abbreviations for a composition of binary operations.
Further
\begin{compactitem}
\item 
If $B$ is an expression of type $\bool$ and $E_1, E_2$ expressions of
type $\sett(T)$, then there is an expression $\case(B,E_1,E_2)$ of type $\sett(T)$ that implements ``if $B$ then $E_1$ else $E_2$''. 
\item If $E_1$ and $E_2$ are expressions of type $\sett(T)$, then there is an expression
$E_1 \cap E_2$ of type $\sett(T)$.
\end{compactitem}
The derivations of these are not difficult.
For example, the conditional
required by the first item is given by:
\[
\bigcup \{E_1 \mid ~ x \in B\} \cup \bigcup \{E_2 \mid x \in (\neg ~ B)\} 
\]

\begin{myexmp} \label{ex:nrc}
Consider an input schema including a binary relation $F: \sett( \ur \times \ur ) $.
The transformation $\trans_{\projq}$ with input
$F$ returning the projection of $F$ on the first component can be expressed in
$\nrc$ as  $\bigcup \{ \{\pi_1(f) \} \mid f \in F \}$.
The transformation $\trans_{\filterq}$ with input $F$ and also $v$ of type
$\ur$ that filters $F$ down to those pairs which agree with $v$ on the first component can
be expresses in $\nrc$ as $\bigcup \left\{\case([\pi_1(f) =_\ur v], \{f\}, \emptyset) \mid f \in F \right\}$.
Consider now the transformation $\trans_{\groupq}$
that groups $F$ on the first component, returning an object of type
$\sett( \ur \times  \sett(\ur))$; this is the first
transformation mentioned in Example \ref{ex:invertibleexp}. The transformation
 can be expressed in $\nrc$ as $\bigcup \left\{ \{\< v,  \bigcup \{ \{\pi_2(f)\} \mid f \in \trans_{\filterq} \} \>\} \mid v \in \trans_{\projq}\right\}$.
Finally, consider the second transformation $\trans_{\flatten}$ mentioned in Example \ref{ex:invertibleexp}, that flattens an input $G$ of type $\sett(\ur \times \sett(\ur))$ .
This can be expressed in $\nrc$ as 
\begin{align*}
\bigcup \left\{   \bigcup \{ \{ \<\pi_1(g), x\>  \} \mid x \in \pi_2(g)  \}  \mid g \in G \right\}
\end{align*}
\end{myexmp}



The language $\nrc$ cannot define certain natural transformations whose output type is $\ur$,
such as, for instance, $\case(B,E_1,E_2)$ for $E_1$ and $E_2$ of sort $\ur$.
To get a canonical language for such transformations, we let $\nrcwget$ denote the extension of $\nrc$ with the family of operations
$\nrcget_T : \sett(T) \to T$ that extracts the unique element from a singleton.
$\nrcget$ was considered in \cite{limsoonthesis}, with connection to parallel
evaluation explored in \cite{suciuthesis}.
The semantics are: if $E$ returns a singleton set $\{x\}$, then $\nrcget_T(E)$
returns $x$; otherwise it returns some default object of the appropriate type.
The semantics of $\nrcget_T(x)$ on non-singleton $x$ is not particularly important;
to fix ideas, we can define for each type $T$ a default element $d_T$ that will be
the output of $\nrcget_T(x)$ when $x$ is not a singleton assuming that we have a
constant $c_0$ in $\ursort$: take $d_\ursort = c_0$,
$d_{\sett(T)} = \emptyset$, $d_\unit = ()$ and $d_{T_1 \times T_2} = (d_{T_1}, d_{T_2})$.
In \cite{suciuthesis}, it is shown that $\nrcget$ is not expressible
in $\nrc$ at sort $\ursort$.
However, $\nrcget_T$ for general $T$ is definable from $\nrcget_\ursort$
and the other $\nrc$ constructs. 




\myparagraph{$\deltazero$ formulas}
We need a logic appropriate for talking about nested relations.
A natural and well-known subset of first-order logic formulas with
a set membership relation are the $\deltazero$ formulas.
They are built up from equality of Ur-elements via the Boolean operators
$\vee , \neg$ as well as relativized existential and universal quantification. All terms involving
tupling and projections are allowed. 
Formally, we deal with multi-sorted first-order logic, with sorts corresponding
to each of our types. We use the following syntax for $\deltazero$ formulas and terms.
Terms are built using tupling and projections.
All formulas and terms are assumed to be well-typed in the obvious way, with the
expected sort of $t$ and $u$ being $\ursort$ in expressions
$t =_\ursort u$ and $t \neq_\ursort u$, while
in $t \in_T u$  the sort of $t$ is $T$ and the sort of $u$ is  $\sett(T)$.

$$
\begin{array}{lcl}
t, u &\bnfeq& x \bnfalt () \bnfalt \< t, u \>  \bnfalt \pi_1(t) \bnfalt \pi_2(t)
\\
\varphi, \psi &\bnfeq& t =_\ursort t' \bnfalt t \neq_\ursort t' \bnfalt \top \bnfalt \bot \bnfalt \varphi \vee \psi \bnfalt \phi \wedge \psi \bnfalt \forall x \in_T t~ \varphi(x) \bnfalt \exists x \in_T t~ \varphi(x)
\end{array}
$$
Note that there is no primitive negation or equalities for sorts other than $\ursort$.
This does not limit expressiveness of formulas with respect to classical semantics.
Negation $\neg \phi$ may be defined by induction on $\phi$ by dualizing every connective;
we write $\phi \Rightarrow \psi$ for $\neg \phi \vee \psi$ in the sequel.
Equality, inclusion and membership predicates may be defined as notations by induction on the involved types.
$$
\begin{array}{rcl !\qquad rcl}
t \in_T u &\eqdef& \exists z' \in u\; t =_T z'
&
t \subseteq_T u &\eqdef& \forall z \in_T t ~~ z \in_T u
\\
t =_{\sett(T)} u &\eqdef& {t \subseteq_T u} ~~\wedge ~~{u \subseteq_T t}&
t =_\unit u &\eqdef& \top \mbox{ \small (since all elements of  this type are equal) }\\
\multicolumn{6}{c}{t =_{T_1 \times T_2} u ~~\eqdef~~ {\pi_1(t) =_{T_1} \pi_1(u)}~~ \wedge~~ {\pi_2(t) =_{T_2} \pi_2(u)}
}\\
\end{array}
$$
Here we have not defined $\in$ at higher types as an atomic predicate, but
rather as a derived predicate. 
We can think of the kind of entailments we want to prove in terms
of these derived predicates, without use of a set-extensionality axiom:
$$(\forall z \in_T x~~ z \in_T y) ~~\wedge~~ (\forall z \in_T y~~ z \in_T x) ~~~\Rightarrow~~~ x =_{\sett(T)} y$$
Alternatively, we can think of them as new primitives with extensionality
as an axiom relating them to the other primitives we have given above.

\myeat{
in certain parts of our
holds in first-order logic without having to assume any additional axiom.
This will allow us to omit  y critical axioms needed to reason about $\deltazero$ formulas
are that the equality $=_\ur$ is a congruence for terms and relations,
and $\pi_i(t_1,t_2) = t_i$ holds for every $t_1, t_2$ and $i \in \{1,2\}$.
The distinction between
 is not a practical advantage,
since in proof steps about $\in_T$ and the related derived predicates like $=_T$,
we  will have proof rules that are very similar to the use of extensionality.
But the fact that we are dealing with a proof system with only equality
and products will be useful in comparing our proof system to prior intuitionistic
systems.
}

The notion of a formula $\phi$ \emph{entailing} another formula
$\psi$, writing $\phi \models \psi$, is the standard one in first-order logic,
meaning that every model of $\phi$ is a model of $\psi$.



\myparagraph{$\nrc$ and $\deltazero$ formulas}
Since we have a Boolean type in $\nrc$, one may ask about the expressiveness
of $\nrc$ for defining transformations of shape $T_1, \ldots, T_n \to \bool$.
It turns out that they are equivalent to $\deltazero$ formulas.
This gives one justification for focusing on $\deltazero$ formulas.

\begin{proposition} \label{prop:verify} There is a polynomial time
algorithm taking a $\deltazero$ formula $\phi(\vec x)$ as input and producing
 an $\nrc$ expression $\verify_\phi(\vec x)$ of type $\bool$
such that $\verify_\phi(\vec x)$ returns true if and only if $\phi(\vec x)$ holds.
\end{proposition}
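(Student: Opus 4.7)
The plan is to construct $\verify_\phi(\vec{x})$ by structural induction on the $\deltazero$ formula $\phi$, using the fact that the preliminaries have already established that equality $=_\ursort$, membership $\in_T$, the Boolean operations $\wedge, \vee, \neg$, and tupling/projection are all expressible in $\nrc$ at output type $\bool$. Since every $\deltazero$ term is built from variables by $()$, $\langle \cdot, \cdot\rangle$, $\pi_1, \pi_2$, each such term is literally a legal $\nrc$ expression of the same type, so terms translate directly.

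For the atomic formulas, I set $\verify_{t =_\ursort u}(\vec x) \eqdef (t =_\ursort u)$ and $\verify_{t \neq_\ursort u}(\vec x) \eqdef \neg(t =_\ursort u)$, with $\verify_\top \eqdef \{()\}$ and $\verify_\bot \eqdef \emptyset_\unit$. The Boolean connectives $\wedge$ and $\vee$ are handled by the homonymous $\nrc$-definable Boolean operations of the previous page: $\verify_{\phi \wedge \psi} \eqdef \verify_\phi \wedge \verify_\psi$, and similarly for $\vee$. For the bounded existential, I use the fact that $\bigcup\{E \mid x \in t\}$ flattens a family of $\bool$-valued expressions into a single $\bool$: set
\[
\verify_{\exists x \in_T t\; \phi}(\vec x) \;\eqdef\; \bigcup \{\verify_\phi(x, \vec x) \mid x \in t\},
\]
which, since each $\verify_\phi(x,\vec x)$ is either $\{()\}$ (true) or $\emptyset$ (false), returns $\{()\}$ iff at least one witness makes $\phi$ true. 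For the bounded universal I take the dual:
\[
\verify_{\forall x \in_T t\; \phi}(\vec x) \;\eqdef\; \neg\, \bigcup \{\neg\, \verify_\phi(x, \vec x) \mid x \in t\}.
\]
Correctness for both quantifier clauses is immediate from the semantics of $\bigcup$ and the induction hypothesis.

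A routine induction on $\phi$ shows $\verify_\phi(\vec x)$ evaluates to $\{()\}$ iff $\phi(\vec x)$ holds and to $\emptyset$ otherwise, over any instance (finite or infinite) with any set of Ur-elements; in particular the quantifier cases only require that the witness check is performed over the bounding set $t$, matching the $\deltazero$ semantics. For the polynomial-time bound, each syntactic constructor of $\phi$ is translated into a constant-size $\nrc$ context plus one or two recursive calls, so $|\verify_\phi|$ is linear in $|\phi|$ and the translation can be done in a single bottom-up pass.

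There is essentially no hard step here: the construction is a direct syntactic shadowing of $\deltazero$ by $\nrc$. The only place one must be slightly attentive is the bounded universal case, where one needs to avoid appealing to a primitive universal quantifier over a set in $\nrc$ (which does not exist) and instead derive it by dualising the existential with the already-available $\nrc$-level negation; this is what the formula above does.
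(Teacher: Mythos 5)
Your proposal is correct and follows essentially the same route as the paper's own proof: a structural induction translating each $\deltazero$ constructor into the corresponding derived $\nrc$ Boolean operation, with bounded quantifiers handled via $\bigcup$ (and the universal obtained by dualising with $\nrc$-level negation, exactly as the paper does). The correctness and linear-size arguments are as in the paper, so there is nothing to add.
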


This useful result is proved by an easy induction over $\phi$.





\section{Synthesizing transformations from intuitionistic proofs} \label{sec:effectivebeth}
We will now present our first main result, concerning synthesis of nested relational
transformations from proofs.

  We consider an input  schema $\inschema$ with one
   input object $\inobj$ and an output schema
 with one output object $\outobj$. Using product objects, we can
  easily model any nested relational transformation in this way.
  We deal with a $\deltazero$ formula $\phi(\inobj, \outobj, \vec a)$
  with distinguished variables $\inobj, \outobj$.
 Recall from the introduction that such  a formula 
\emph{implicitly defines $\outobj$ as a function of $\inobj$} if for each nested relation
  $\inobj$ there is at most
  one $\outobj$ such that $\phi(\inobj, \outobj, \vec a)$ holds for some
  $\vec a$.
  A formula $\phi(\inobj, \outobj, \vec a)$ \emph{projectively
  implicitly defines} a transformation $\trans$ from $\inobj$ to $\outobj$
  if  for each $\inobj$, $\phi(\inobj, \outobj, \vec a)$ holds for some
  $\vec a$ if and only if $\trans(\inobj)=\outobj$. 
We drop ``projectively'' if $\vec a$ is empty.


\begin{myexmp}\label{ex:impnrc}
Consider the transformation $\trans_{\groupq}$ from Example \ref{ex:nrc}.
It has a simple implicit $\deltazero$ definition as given in
Example \ref{ex:invertible}, which we can restate as follows. First, define the auxiliary formula
$\chi(x, p,R)$ stating that $\pi_1(p)$ is $x$ and $\pi_2(p)$ is the set of $y$ such
that $\<x, y\>$ is in   $R$ (the ''fiber of $R$ above $x$''):
$$\chi(x,p,R) ~~~\eqdef~~~ \pi_1(p) = x~~ \wedge~~ \left( \forall t' \in R~ \left[\pi_1(t') = x ~\Rightarrow~ \pi_2(t') \in \pi_2(p)  \right]\right) ~~\wedge~~ \forall z\in \pi_2(p)~ \< x,z \>  \in R$$
Then $T_{\groupq}$ is implicitly defined by $ \forall t \in R~~\exists p \in q~~\chi(\pi_1(t),p,R)) \wedge \forall p \in q~~ \chi(\pi_1(p),p,R)$.
\end{myexmp}


\begin{figure}[!]
\begin{mathpar}
\small
%
\inferrule*[left={Contraction}]{\Theta; \; \Gamma,  \varphi,  \varphi \vdash t \in_T u}{\Theta; \; \Gamma,  \varphi \vdash t \in_T u}
\and
%
%
%
%
\inferrule*
[left={$\in_\ursort$-R}]
{ }{\Theta, \; t \in_\ursort u; \; \Gamma \vdash t \in_\ursort u}
\\
\inferrule*
[left={$=_{\sett}$-R}]
{\Theta; \; \Gamma \vdash t \subseteq_T u \and
\Theta; \; \Gamma \vdash u \subseteq_T t
 }{\Theta; \; \Gamma \vdash t =_{\sett(T)} u}
\and
\inferrule*
[left={$=_\times$-R}]
{
\Theta; \; \Gamma \vdash \pi_1(t) =_{T_1} \pi_1(u)
\and
\Theta; \; \Gamma \vdash \pi_2(t) =_{T_2} \pi_2(u)
 }{\Theta; \; \Gamma \vdash t =_{T_1 \times T_2} u}
\and
\inferrule*
[left={$=_\unit$-R}]
{ }{\Theta; \; \Gamma \vdash t =_\unit u}
\and
\inferrule*
[left={$=_\ur$-R}]
{\Theta, \; t \in_\ursort z; \; \Gamma \vdash u \in_\ursort z \and z \notin \freevars(\Theta, \Gamma, t, u)}{\Theta; \; \Gamma \vdash t =_\ursort u}
\and
\inferrule*
[left={$\subseteq$-R}]
{\Theta, \; z \in_T t; \; \Gamma \vdash z \in_T u \and z \notin \freevars(\Theta; \; \Gamma, t, u)}{\Theta; \; \Gamma \vdash t \subseteq_T u}
\and
\inferrule*
[left={$\in_{\sett}$-R}]
{\Theta, \; t \in_{\sett(T)} v; \; \Gamma \vdash t =_{\sett(T)} u}{\Theta, \; t \in_{\sett(T)} v; \; \Gamma \vdash u \in_{\sett(T)} v}
\\

\inferrule*
[left={$\bot$-L}]
{ }{\Theta; \; \Gamma,\;  \bot \vdash t \in_T u}
\and
\inferrule*
[left={$\wedge$-L}]
{\Theta; \; \Gamma, \;  \phi,\;  \psi  \vdash t \in_T u}{\Theta; \; \Gamma, \; \phi \wedge \psi \vdash t \in_T u}
\and
\inferrule*
[left={$\vee$-L}]
{\Theta; \; \Gamma, \;  \phi \vdash t \in_T u \qquad
\Theta; \; \Gamma, \;  \psi \vdash t \in_T u
}{\Theta; \; \Gamma, \;  \phi \vee \psi \vdash t \in_T u}

\\

\inferrule*
[left={$\forall$-L}]
{\Theta, \; t \in_{T} z; \; \Gamma, \; \phi[t/y] \vdash v \in_{T'} w}{\Theta , \; t \in_T z; \; \Gamma, \; \forall y \in_T z ~~ \phi\vdash v \in_{T'} w}
\and
\inferrule*
[left={$\exists$-L}]
{\Theta, \; x \in_T y; \; \Gamma, \; \phi \vdash t \in_{T'} v \qquad x \notin \freevars(\Theta, \Gamma, y, t, v)}{\Theta; \; \Gamma, \; \exists x \in_T y~~\phi \vdash t \in_{T'} v}
\\

\inferrule*
[left={$=$-subst}]
{\Theta[y/x]; \; \Gamma[y/x] \vdash (v \in_T w)[y/x]}
{\Theta; \; \Gamma, \; x =_\ursort y \vdash v \in_T w}
\and
\inferrule*
[left={$\neq$-L}]
{ }{\Theta; \; \Gamma, \;  t \neq_\ursort t \vdash u \in_T v}
\\
\inferrule*
[left={$\times_\beta$}]
{\Theta[t_i/y]; \; \Gamma[t_i/y] \vdash (t \in_T u)[t_i/y] \and i \in \{1,2\}}{\Theta[\pi_i(\<t_1,t_2\>)/y]; \; \Gamma[\pi_i(\<t_1,t_2\>)/y] \vdash (t \in_T u)[\pi_i(\<t_1,t_2\>)/y]}
\and
\inferrule*
[left={$\times_\eta$}]
{\Theta[\<x_1,x_2\>/x]; \; \Gamma[\<x_1,x_2\>/x] \vdash (t \in_T u)[\<x_1,x_2\>/x] \and x_1,x_2 \notin \freevars(\Theta; \; \Gamma, t, u)}{\Theta; \; \Gamma \vdash t \in_T u}
\end{mathpar}

\caption{Our intuitionistic sequent calculus for proofs of implicit definability}
\label{fig:ourproofsystem}
\end{figure}


\myeat{In addition, we will restrict to implicit definitions of objects $o$ whose type is not
at an Ur-element. That is, we will ignore transformations that return a single Ur-element.
This restriction is justified because from a $\Sigma(\ldots, o) $ implicitly defining  
$o$  as an argument
of some inputs, where $o$
has Ur-element type, we can easily generate an implicit definition $\Sigma'(\ldots o') $ describing the singleton
sets containing such a $o$. Assuming we can extract an $\nrc$ expression $E_{\Sigma'}$ capturing the value of $o'$,
we can then use $\nrcget(E'_{\Sigma'})$ as an $\nrcwget$ expression capturing the  transformation implicitly defined
by $\Sigma$.}


\myparagraph{Restricted proof system} 
Our synthesis result requires a proof of functionality within a restricted proof system.
We present a special-purpose sequent calculus in Figure~\ref{fig:ourproofsystem} deriving judgments $\Theta; \; \Gamma \vdash \phi$
where $\Gamma$ is a multi-set of $\deltazero$ formulas, $\Theta$ a multi-set of membership formulas $t \in u$, and $\phi$ is a $\deltazero$
formula with one of the following shapes: $t \in_T u$, $t =_T u$ or $t \subseteq_T u$.
A multi-set of formulas will also be called a \emph{context}, and
above we write $C \msu \; C'$ for the
concatenation of contexts $C$ and $C'$.
Informally, a judgment $\Theta; \; \Gamma \vdash \phi$ is meant to be read as
``If all the containments in $\Theta$ and formulas in $\Gamma$ hold, then $\phi$ does''.
In the figure, we use $\freevars$ to denote the free variables of a context,  and we use $\phi[t/x]$ to denote
the result of substituting $t$ for $x$ in $\phi$.

The main essential restriction on the proof system is that it is intuitionistic. There is no 
way to deduce $\Theta; \; \Gamma \vdash \phi$ from $\Theta; \; \Gamma, \; \neg \phi \vdash \bot$
in general.
Informally, this means that we forbid reasoning by contradiction. In particular, this means that some
sequents are classically valid but not derivable in our calculus. For instance, consider
$w \in r ; \; \forall x \in l~ l \in r, \; \forall y \in w~ l \in r \vdash l \in r$.
This is seen to be classically valid by considering separately the following three cases: $l$ non-empty, $w$ non-empty and
$l = w = \emptyset$. However, it is also easy to check that this cannot be derived intuitionistically.
The other restrictions, such as the specific shape of formulas on the right-hand side for many rules,
do not limit the power of the system when it comes to functionality proofs, but allow us to prove
our main extraction result more easily.

It is straightforward to capture the informal reasoning used to argue
for functionality in Example 
\ref{ex:invertible} within our proof system.
We also note that many natural proof rules are \emph{admissible} in our system; they are
conservative in terms of the set of proofs that they enable. We collect the most useful cases in
Figure~\ref{fig:admissible-rules}. Showing that they are admissible is done by rather elementary
inductions, and it can be noted that eliminating those additional proof rules
can be done in polynomial time in the size of proof trees and the types of the involved formulas.
This list is not meant to be exhaustive, as it can be shown that the derivable sequents 
in our system are exactly those derivable in more standard sequent calculus for multi-sorted intuitionistic logic that appear in the prior literature (see e.g.~\cite[Section 4.1]{jacobsbook}).
We offer a detailed discussion of the correspondence between our
proof system and several previously known intuitionistic calculi in the supplementary materials.

A technicality is that in our presentation of the proof system there is a slight asymmetry between
how the set predicates $=_T$, $\subseteq_T$ and $\in_T$ are treated
on the left and on the right. The  proof rules  decomposing formulas
on the right, such as $\subseteq$-\textsc{R},
are specialized to deal with the semantics of these predicates.
They are justified either based on extensionality --  if one thinks
of these predicates as primitive --  or by definition, if one
thinks of these predicates as derived.
On the other hand, on the left side we require that all of our
formulas in $\Gamma$ are described in the basic grammar
of  $\deltazero$ formulas, which does not have these predicates
as atomic. We do this only for convenience,
to avoid having additional proof rules
capturing  extensionality in decomposing formulas on the left.
\begin{figure}
\begin{mathpar}
\small \inferrule*[left=wk]{\Theta; \; \Gamma \vdash \psi}{\Theta; \; \Gamma, \; \varphi \vdash \psi}
\and
\inferrule*[left=ax]{ }{\Theta; \; \psi \vdash \psi}
\and
\inferrule*[left={$\in$-l}]{\Theta, \; t \in_T u; \; \Gamma \vdash \psi}{\Theta; \; \Gamma, \; t \in_T u \vdash \psi}
\and
\inferrule*[left={$\subseteq$-l}]
{\Theta, \; t \in_T v; \; \Gamma \vdash \psi}
{\Theta, \; t \in_T u; \; \Gamma, \; u \subseteq_T v \vdash \psi}
\and
\inferrule*[left={$\Rightarrow$}-l] {\Theta; \; \Gamma, \; \theta \vdash \psi }{\Theta; \; \Gamma, \; \varphi \Rightarrow \theta, \; \varphi \vdash \psi}
\and
\inferrule*[left={$=$-r}]{ }{\Theta; \; \Gamma \vdash t =_T t}
%
\end{mathpar}
\caption{Some typical admissible rules.}
\label{fig:admissible-rules}
\end{figure}


\myparagraph{Provably implicit definitions}
By an intuitionistic proof that $\Sigma(\inobj,\outobj, \vec a)$ implicitly defines
$\outobj$ as a function of  $\inobj$  we mean
a formal derivation of a sequent $\Sigma(\inobj, \outobj, \vec a), \; \Sigma(\inobj, \outobj', \vec a') \vdash \outobj =_T \outobj'$ in our proof system. 
\myeat{
Such a proof witnesses that 
$\Sigma(\inobj,\outobj, \vec a)$ projectively
implicitly defines $\outobj$ over $\inobj$.
}

We can now state our main  result on effectively generating
$\nrc$ expressions from proofs:
\begin{theorem} \label{thm:betheffective} There is a $\ptime$ procedure
which takes as input an intuitionistic  proof 
that $\Sigma(\inobj, \outobj, \vec a)$
defines $\outobj$ as a function of  $\inobj$, and returns
an $\nrc$ expression $E$
such that whenever $\Sigma(\inobj, \outobj, \vec a)$ holds,
then $E(\inobj)= \outobj$.
\end{theorem}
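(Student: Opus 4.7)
The plan is to view the synthesis as an analog of Craig interpolation, where the two ``colors'' correspond to the two copies of $\Sigma$ appearing in the functionality proof of endsequent $\Sigma(\inobj, \outobj, \vec a), \Sigma(\inobj, \outobj', \vec a') \vdash \outobj =_T \outobj'$. I would color $\outobj, \vec a$ ``red'', $\outobj', \vec a'$ ``blue'', and treat $\inobj$ as the shared vocabulary. If, for each provable containment $t \subseteq_T u$ arising in the proof, we can extract an $\nrc$ expression built only from the shared vocabulary that sits between $t$ and $u$, then the final extraction on $\outobj =_T \outobj'$ will yield an $\inobj$-only $\nrc$ expression that is equal to both $\outobj$ and $\outobj'$ on every model of $\Sigma$, giving exactly the transformation required.

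The first step is to normalize the root of the proof using the right-introduction rules for equality ($=_{\sett}$-R, $=_\times$-R, $=_\ursort$-R, $=_\unit$-R), reducing the derivation of $\outobj =_T \outobj'$ to sub-proofs of atomic subset inclusions and Ur-element equalities. The core technical work is then an \emph{interpolation extraction lemma}: from any proof of $\Theta; \Gamma \vdash \phi$ in the calculus of Figure~\ref{fig:ourproofsystem}, where the hypotheses are two-colored and $\phi$ satisfies an appropriate color-coherence condition, one can synthesize in polynomial time an $\nrc$ expression in the shared vocabulary that interpolates between the colored sides. The lemma would be proved by induction on the derivation, translating each proof rule into an $\nrc$ constructor.

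For the inductive cases, the left-introduction rules determine how the interpolant consumes input data: $\exists$-L yields a $\bigcup\{E' \mid x \in s\}$ binder over the fresh witness, $\vee$-L uses $\case$, $\forall$-L instantiates at a term already available in $\Theta$, $\wedge$-L makes both conjuncts available to the inner extraction, and $=$-subst together with $\times_\beta, \times_\eta$ are handled by substitution, pairing, and projection at the $\nrc$ level. The right-introduction rules determine the shape of the output: $\subseteq$-R yields a set comprehension, $\in_{\sett}$-R picks out a specific shared-vocabulary member, and the $=_\times$-R and $=_{\sett}$-R rules compose interpolants componentwise. The axiom-like base cases produce identity projections or the empty/singleton set constructors.

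The main obstacle will be maintaining color-coherence across rules that cross color boundaries, especially $\exists$-L applied to a red-colored hypothesis: the witness has to be realized constructively by an expression in the shared vocabulary. This is precisely where the intuitionistic restriction is essential, since a classical case-split on excluded middle provides no such witness -- the sequent $w \in r; \; \forall x \in l~ l \in r, \; \forall y \in w~ l \in r \vdash l \in r$ noted earlier is classically valid but admits no intuitionistic derivation and hence no extraction in our framework. A secondary concern is that the extraction remains polynomial in the size of the proof; this should follow from the near one-to-one correspondence between proof rules and $\nrc$ constructors, together with the absence of a cut rule in the calculus, so that no proof-theoretic normalization is required. Correctness of the synthesized expression would then be established by a parallel semantic induction, showing that on every model of the premises the extracted interpolant evaluates to an object lying between the two contained terms, hence at the root to an object equal to both $\outobj$ and $\outobj'$.
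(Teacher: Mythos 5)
Your proposal follows essentially the same route as the paper: reduce the theorem to a higher-type interpolation lemma for two-colored sequents with $\inobj$ as the shared vocabulary, prove that lemma by induction on the derivation with each proof rule mapped to an $\nrc$ constructor, and read off the answer at the root sequent $\Sigma(\inobj,\outobj,\vec a),\;\Sigma(\inobj,\outobj',\vec a') \vdash \outobj =_T \outobj'$. This is precisely Lemma~\ref{lem:invar} and its application.

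Two of your inductive cases need repair, though. First, for \rnorL{} you propose $\case$, but a conditional requires a Boolean guard over the \emph{shared} vocabulary that decides which disjunct of $\phi_1 \vee \phi_2$ holds, and no such guard is available in general (the disjunction may live entirely on one color side). The paper instead takes the plain union $E_1 \cup E_2$ of the two inductively obtained expressions; this works because the invariant for a right-hand formula $t \in_T u$ is only an upper bound on $t$, and a union of upper bounds is an upper bound. Second, for $\subseteq$-\textsc{R} (and likewise $=_\ur$-\textsc{R}) a ``set comprehension'' is the right shape, but you do not say where the guard of the comprehension comes from: the inductive hypothesis only gives a bounding expression $E'$ with $z \in E'$, and to cut $E'$ down so that it also sits \emph{below} $u$ you need the ordinary formula-level interpolation result for $\deltazero$ formulas (Proposition~\ref{prop:interpolation}) applied to the premise, producing a $\deltazero$ interpolant $\theta$ over the common variables that is then turned into the filter via $\verify_\theta$ (Proposition~\ref{prop:verify}). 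Without that Boolean interpolation step the comprehension's guard may mention variables outside the shared vocabulary and the containment $E \subseteq u$ is not obtained. Finally, note that the $=_\ur$-\textsc{R} case forces the extracted expressions into $\nrcwget$ rather than plain $\nrc$, since one must extract the unique element of a definable singleton.
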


Let us provide a detailed example to illustrate Theorem~\ref{thm:betheffective}.

\begin{myexmp}
\label{ex:distinguishers}
Given a set of sets of Ur-elements $X \in \sett(\sett(\ursort))$, say that an Ur-element $a$
\emph{distinguishes} a set $x \in X$ if $x$ is the unique element of $X$ containing $a$.
Consider the transformation taking as input such an $X$ and returning the set of Ur-elements that distinguish
some element of $X$.
This is implicitly definable by a $\deltazero$ formula $\Sigma(X,o)$ stating that
every $a$ in  $o$ distinguishes some element of $X$ and conversely.
Writing this in our restricted syntax for $\deltazero$ formulas, in which
membership of higher-order objects must be expressed  using bounded quantification and equality,
we obtain an implicit definition
$$
\def\arraystretch{1.3}
\begin{array}{c}
\Sigma(X, o) ~~\eqdef~~ \left(\forall a \in o ~~ \exists x \in X ~~ \psi(X,x,a)\right) ~\wedge~ \left(\forall x \in X~~ \forall a \in x ~ \left[\chi(X,x,a) ~\Rightarrow~ a \in_{\ur} o \right]\right) \quad \text{where} \\
\chi(X,x,a) ~~\eqdef~~ \forall y \in X~\left(a \in_{\ur} y ~\Rightarrow~ x =_{\sett(\ur)} y\right)
\quad \text{and}\quad \psi(X, x, a) ~~\eqdef~~ a \in_{\ur} x ~\wedge~ \chi(X,x,a)
\end{array}
$$
Note that when  $a \in_{\ur} x$, $x =_{\sett(\ur)} y$, and $a \in_{\ur} o$ occur
on the left side of a sequent, they should be thought of as abbreviations for
more complex formulas built up through bounded quantification.
Similarly $\Rightarrow$ is a derived connective, built up from the Boolean
operations allowed in $\deltazero$ formulas in the obvious way.
\end{myexmp}

\newcommand\myLeftLabel[1]{\LeftLabel{\scriptsize #1}}
\newcommand\myRightLabel[1]{\RightLabel{\scriptsize #1}}

\begin{figure}
\[
\text{
\AXC{ }
\dashedLine
\myRightLabel{(7)}
\myLeftLabel{\textsc{ax}}
\UIC{${\color{red} {\color{black} z} \in o}, \; x \mathrel{\color{red} \in} X, \; z \mathrel{\color{red} \in} x; \; {\color{blue} {\color{black} z} \in o'} \vdash z \in {\color{blue} o'}$}
\myRightLabel{(6)}
\myLeftLabel{\rnimplL}
\dashedLine
\UIC{${\color{red} {\color{black} z} \in o}, \; x \mathrel{\color{red} \in} X, \; z \mathrel{\color{red}\in} x; \; {\color{red} \chi({\color{black}X},{\color{black}x},{\color{black}z})}, \; {\color{blue} \chi({\color{black}X},{\color{black}x},{\color{black}z}) \Rightarrow {\color{black}z} \in o'} \vdash z \in {\color{blue} o'}$}
\myLeftLabel{\rnallL}
\myRightLabel{(5)}
\UIC{${\color{red} z \in o}, \; x \mathrel{\color{red} \in} X, \; {\color{red} z \in {\color{black} x}}; \; {\color{red} \chi({\color{black}X},{\color{black}x},z)}, \; {\color{blue} \forall a \in {\color{black} x}~~ \left(\chi({\color{black} X},{\color{black} x},a) \Rightarrow a \in o'\right)} \vdash {\color{red} z} \in {\color{blue} o'}$}
\myLeftLabel{\textsc{=-subst}}
\UIC{${\color{red} z \in o}, \; x \mathrel{\color{red}\in} X, \; {\color{red} z' \in {\color{black} x}}; \; {\color{red} z =_\ursort z'}, \; {\color{red} \chi({\color{black} X},{\color{black} x},z)}, \; {\color{blue} \forall a \in {\color{black} x}~~ \left(\chi({\color{black} X},{\color{black} x},a) \Rightarrow a \in o'\right)} \vdash {\color{red} z} \in {\color{blue} o'}$}
\myLeftLabel{\rnexL}
\UIC{${\color{red} z \in o}, \; x \mathrel{\color{red} \in} X; \; {\color{red} z \in {\color{black} x}}, \; {\color{red} \chi({\color{black} X},{\color{black} x},z)}, \; {\color{blue} \forall a \in {\color{black} x}~~ \left(\chi({\color{black} X},{\color{black} x},a) \Rightarrow a \in o'\right)} \vdash {\color{red} z} \in {\color{blue} o'}$}
\myLeftLabel{\rnandL}
\UIC{${\color{red} z \in o}, \; x \mathrel{\color{red} \in} X; \; {\color{red} \psi({\color{black} X},{\color{black} x},z)}, \; {\color{blue} \forall a \in {\color{black} x}~~ \left(\chi({\color{black} X},{\color{black} x},a) \Rightarrow a \in o'\right)} \vdash {\color{red} z} \in {\color{blue} o'}$}
\myRightLabel{(4)}
\myLeftLabel{\rnallL}
\UIC{${\color{red} z \in o}, \; {\color{red} x \in {\color{black} X}}; \; {\color{red} \psi({\color{black} X},x,z)}, \; {\color{blue} \forall y \in {\color{black} X}~~\forall a \in y~~ \left( \chi({\color{black} X},y,a) \Rightarrow a \in o'\right)} \vdash {\color{red} z} \in {\color{blue} o'}$}
\myLeftLabel{\rnandL}
\UIC{${\color{red} z \in o}, \; {\color{red} x} \mathrel{\color{red} \in} X; \; {\color{red} \psi({\color{black} X},x,z)}, \; {\color{blue} \Sigma({\color{black} X},o')} \vdash {\color{red} z} \in {\color{blue} o'}$}
\myRightLabel{(3)}
\myLeftLabel{\rnexL}
\UIC{${\color{red} z \in o}; \; {\color{red} \exists x \in {\color{black} X} ~~ \psi({\color{black} X},x,z)}, \; {\color{blue} \Sigma({\color{black} X},o')} \vdash {\color{red} z} \in {\color{blue} o'}$}
\myLeftLabel{\rnallL}
\UIC{${\color{red} z \in o}; \; {\color{red} \forall a \in o ~~ \exists x \in {\color{black} X} ~~ \psi({\color{black} X},x,a)}, \; {\color{blue} \Sigma({\color{black} X},o')} \vdash {\color{red} z} \in {\color{blue} o'}$}
\myLeftLabel{\rnandL}
\UIC{${\color{red} z \in o}; \; {\color{red} \Sigma({\color{black} X},o)}, \; {\color{blue} \Sigma({\color{black} X},o')} \vdash {\color{red} z} \in {\color{blue} o'}$}
\myRightLabel{(2)}
\myLeftLabel{$\subseteq$-\textsc{R}}
\UIC{$\cdot \; ; \; {\color{red} \Sigma({\color{black} X},o)}, \; {\color{blue} \Sigma({\color{black} X},o')} \vdash {\color{red} o} \subseteq {\color{blue} o'}$}
\myRightLabel{(1)}
\myLeftLabel{$=_\sett$-\textsc{R}}
\dashedLine
\UIC{$\cdot \; ; \; {\color{red} \Sigma({\color{black} X},o)}, \; {\color{blue} \Sigma({\color{black} X},o')} \vdash {\color{red} o} = {\color{blue} o'}$}
\DisplayProof}
\]
\caption{Formal proof tree of functionality for Example~\ref{ex:distinguishers}. Admissible rules are denoted with dashed lines and some instances of the admissible weakening rule (\textsc{wk}) are omitted for legibility.
Formulas and variables specific to the left and right-hand side are respectively colored in red and blue.
}
\label{fig:distinguishers-prooftree}
\end{figure}



Figure~\ref{fig:distinguishers-prooftree} contains a formal derivation of functionality for $\Sigma(X,o)$.
We may render this proof informally as follows (putting references to proof steps in Figure~\ref{fig:distinguishers-prooftree} in parentheses).

\begin{proof}[Proof of functionality of Example~\ref{ex:distinguishers}]
Assume $\Sigma(X,o)$ and $\Sigma(X,o')$. To show $o = o'$, we need to show
that $o \subseteq o'$ and $o' \subseteq o$. Since the roles of $o$ and $o'$ are symmetric,
without loss of generality, it suffices to give the proof that $o \subseteq o'$ (1).
So fix $z \in o$ (2). Since $\Sigma(X,o)$ holds, according to its first conjunct,
we have in particular that there exists some $x \in X$ such that $\psi(X,x,z)$ holds (3).
Because $\Sigma(X,o')$ holds and $x \in X$, the second conjunct tells us that for every $a \in x$,
we have $\chi(X,x,a) \Rightarrow a \in o'$ (4).
Recall that $\psi(X,x,z)$ is the conjunction of $z \in x$ and $\chi(X,x,z)$, so that we may
deduce that $\chi(X,x,z) \Rightarrow z \in o'$ (6) and thus $z \in o'$ (7).
\end{proof}

As per Theorem~\ref{thm:betheffective}, the transformation defined in Example~\ref{ex:distinguishers}
is $\nrc$-definable as 
$$\bigcup \left\{ \case(\verify_{\theta}(X,a), \{a\}, \emptyset) \mid a \in \bigcup X\right\} \qquad \text{\small with $\quad \theta(X,a) = \exists x \in X~ \psi(X,x,a)$}$$
where $\verify$ is the filtering function given by Proposition \ref{prop:verify}.


We emphasize that our results apply to proofs of functionality over \emph{any
subsignature} of the input. In particular they apply to synthesize inverses
of transformations, a problem of considerable interest in several communities
\cite{invertloris, invertswarat}:
\begin{myexmp} \label{ex:invertiblereturn}
Return to the setting of Example \ref{ex:invertible},
and suppose that we  are interested in  the transformation over an input
object $G$ of type $\sett(\ur \times \sett(\ur))$ which simply ``flattens'' $G$.
We write this explicitly in $\nrc$,
as we did in Example \ref{ex:nrc}:
\[
E=\bigcup \left\{ \bigcup \{ \{ \< \pi_1(g), t \> \} \mid t \in \pi_2(g) \}  \mid g \in G\right\}
\]
From $E$ we can automatically generate a $\deltazero$ formula such as $\Sigma$ from Example \ref{ex:invertible},
stating that $F$ is the output of $G$ under $E$. Indeed, this is true for any
$\nrc$ transformation: one just encodes the semantics
of $\nrc$ in logic.  

This transformation is invertible, 
as mentioned in  Example \ref{ex:invertible}, and we can prove
 its invertibility in our calculus.
Our synthesis algorithm will generate from this proof an
expression in $\nrc$ that represents the  inverse, namely an expression
that groups $F$ to form $G$.
\end{myexmp}


\begin{myexmp} \label{ex:views}
Another  application are for the synthesis result of
Theorem \ref{thm:betheffective}  is to rewrite
transformations using cached results, a variation
on the idea of ``rewriting with views'' in relational databases 
\cite{dataint,tomanweddell,alonviews,NSV,afratichirkovabook}.

Consider a sequence where assigns to variable $J$ of type $\sett(\ur \times \ur)$ the intersection of
 $A$ and $B$,  and later assigns to variable $S$ of type $\sett(\ur)$ the set of elements that
have a self-loop in both $A$ and $B$.
\[
J := A \cap B; \ldots; S :=
\bigcup \left\{ \bigcup \{ \case(\pi_1(a) = \pi_2(a) = \pi_1(b) = \pi_2(b), \{\pi_1(a)\}, \emptyset) \mid a \in A\} \mid b \in B\right\}; \ldots
\]
One can easily see that $S$ is a function of $J$. And from a proof of functionality, our
method produces a rewriting of
the assignment producing $S$, using an  $\nrc$ expression that makes use of $J$. An example
of such a rewriting is
\begin{align*}
S := \bigcup \{  \case(\pi_1(j) = \pi_2(j), \{\pi_1(j)\}, \emptyset) \mid j \in J \}
\end{align*}
Such a rewriting of $S$ using the cached value of $J$ may be much more efficient than
recomputing $S$ from scratch.
\end{myexmp}


We now turn to explaining the ingredients that underlie the procedure of Theorem \ref{thm:betheffective}.

\myparagraph{Interpolation for $\deltazero$ formulas} Often a key ingredient in 
moving from implicit to explicit definition is  an \emph{interpolation theorem},
stating that for each entailment between formulas $\phi_L$ and $\phi_R$ there is an  intermediate formula
(an  \emph{interpolant} for the entailment), 
which is entailed by $\phi_L$ and entails $\phi_R$ while using only symbols common to $\phi_L$ and
$\phi_R$.
We can show using a standard inductive approach to interpolation (e.g. \cite{fittingbook}) that our calculus admits  efficient 
interpolation.

\begin{proposition}
\label{prop:interpolation}
Let $\Theta_L$, $\Theta_R$, $\Gamma_L$ and $\Gamma_R$ be contexts
and $\psi$ a formula and call $C = \freevars(\Theta_L, \Gamma_L) \cap \freevars(\theta_R, \Gamma_R)$ the set of common free variables.
For every derivation
$\Theta_L, \; \Theta_R; \; \Gamma_L, \; \Gamma_R \vdash \psi$
there exists a $\deltazero$ formula $\theta$ with $\freevars(\theta) \subseteq C$ such that the following holds
$$\Theta_L; \; \Gamma_L \models \theta \qquad \qquad \text{and} \qquad \qquad \Theta_R; \; \Gamma_R, \; \theta \models \psi$$
Further the interpolant $\theta$ can be found in polynomial time from the derivation.
\end{proposition}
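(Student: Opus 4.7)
The plan is to prove the proposition by a Maehara-style induction on the derivation of $\Theta_L, \Theta_R; \Gamma_L, \Gamma_R \vdash \psi$. I would show that, given any partition of the hypotheses into left and right contexts, one can effectively construct an interpolant whose size is linear in the input derivation. For each inference rule in Figure~\ref{fig:ourproofsystem}, I would consider every way the active formulas can be assigned to either side of the partition, and combine interpolants for the premises into one for the conclusion.

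The base axioms are handled directly: for $\in_\ursort$-R, the interpolant is $t \in_\ursort u$ itself when the context hypothesis is partitioned on the left, and $\top$ when it is on the right; the cases for $\bot$-L, $\neq$-L, and $=_\unit$-R are analogous, using $\bot$, $\top$, or the atomic formula as appropriate. The propositional rules (\rnandL, \rnorL, contraction) and the conjunctive right rules ($=_\sett$-R, $=_\times$-R) reduce to taking suitable Boolean combinations of the premises' interpolants determined by the side on which the active formula sits. For example, in $\vee$-L with $\phi \vee \psi \in \Gamma_L$, the conjunction interpolant is the disjunction of the two premise interpolants; with the formula on the right, one takes the conjunction.

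The interesting cases are the quantifier and substitution rules. Rules with eigenvariable side conditions ($\exists$-L, $\subseteq$-R, $=_\ursort$-R, and $\times_\eta$) introduce fresh variables that by assumption lie outside $\freevars(\Theta, \Gamma, \psi)$, hence outside the common vocabulary $C$; the induction hypothesis therefore supplies premise interpolants free of these fresh names, so we may lift them unchanged to the conclusion (after reversing the substitution $[\pi_1(x)/x_1, \pi_2(x)/x_2]$ in the $\times_\eta$ case). The rule $\forall$-L introduces no fresh variable but instantiates with a term $t$; the crucial hypothesis $t \in_T z$ already sits in $\Theta$ on some side of the partition, so the free variables of $t$ are available in the appropriate partition and the premise interpolant transports to the conclusion unchanged. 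The substitution rules $\times_\beta$ and $=$-subst are handled analogously by reversing the substitution on the premise's interpolant.

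The main obstacle is the uniform book-keeping: each rule admits several choices of how to split its active formulas, and for every choice one must verify that the constructed interpolant has free variables contained in $C$, is a consequence of the left context, and, together with the right context, entails $\psi$. A subtle point is that our logic only has bounded quantifiers, so any case that seemed to demand an unrestricted $\exists$ or $\forall$ in the interpolant would require exhibiting a suitable bound from $\Theta$; the eigenvariable side conditions ensure this complication never arises, since the premise interpolants produced by the IH already do not mention the fresh variables, and no new quantification in the interpolant is ever needed. Polynomial-time extraction follows immediately: each inductive step contributes only constant or linear overhead in the premise interpolants, so the overall construction runs in polynomial time and produces an interpolant of size linear in the derivation.
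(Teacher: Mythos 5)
Your overall strategy---a Maehara-style split induction over the rules of Figure~\ref{fig:ourproofsystem}---is the same as the paper's, and most of your case analysis (axioms, propositional rules, the eigenvariable rules \rnexL, $\subseteq$-\textsc{R}, $=_\ur$-\textsc{R}, \rnpaireta) is handled correctly. But there is a genuine gap exactly where you assert that ``no new quantification in the interpolant is ever needed.'' Consider \rnallL~in the cross-partition case: $\forall y \in_T z~\phi$ sits in $\Gamma_R$ while the membership hypothesis $t \in_T z$ sits in $\Theta_L$. In the premise, $\phi[t/y]$ joins the right context, so $\freevars(t)$ becomes \emph{common} to the two sides of the premise's partition, and the inductive interpolant may legitimately mention those variables. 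In the conclusion, however, $\freevars(t)$ need not occur on the right at all, so the premise interpolant does \emph{not} satisfy the conclusion's variable condition and cannot be ``transported unchanged.'' Your justification---that $t$'s variables are ``available in the appropriate partition''---confuses availability on one side with membership in the intersection $C$. The fix (which the paper uses) is to set $\theta' = \exists y \in_T z~\theta$, which is still $\deltazero$ because the bound $z$ occurs on both sides and hence lies in $C$; a symmetric issue and fix arise when the universal formula is on the left and $t$ lives only on the right.

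The same phenomenon occurs for $\in_{\sett}$-\textsc{R}, which you do not discuss: the premise's goal $t =_{\sett(T)} u$ mentions $t$, which may be common in the premise (via $t \in_{\sett(T)} v$ on the left and the goal on the right) but absent from the conclusion's goal $u \in_{\sett(T)} v$; again one must pass to $\exists t \in v~\theta$. So your ``subtle point'' paragraph correctly identifies the danger---needing a bound from $\Theta$ for a fresh quantifier in the interpolant---but then wrongly concludes it never materializes; it does, and the side conditions of the calculus are what guarantee the required bound ($z$, resp.\ $v$) is always in $C$. With these two cases repaired the induction goes through, and the polynomial-time bound survives (though the interpolant is then polynomial rather than linear in size).
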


The interpolation result above should be thought of  as giving us the result we want for 
transformations of \emph{Boolean} type.
From it we can derive that a formula whose truth value  is implicitly defined by a set of input
variables must be given as a $\deltazero$ formula over those inputs. 
By Proposition \ref{prop:verify}, these
formulas can be converted to $\nrc$.


\myparagraph{The higher-type interpolation lemma}
Our main result is deduced from a more general  interpolation result, 
which says that whenever a binary relationship between variables, such
as the containment relationship $t \subseteq_T u$,  is provable from
a theory that is partitioned into left and right formulas, and the variables
$t$ and $u$ appear exclusively in distinct sides of the partition, then
there is an \emph{interpolating expression} in $\nrcwget$, taking as input the
variables common to the left and right partitions. For an equality
relationship between variables, the synthesized expression will take
as input the common variables on the left and right
and select an object that is equal to the
variables participating in the equality.
For membership relationships $t \in u$,
our algorithm derives a bounding expression $E$ taking inputs in the common
signature such that $t \in E$; this could be strengthened to $t \in E \subseteq u$.
The result
 bears some similarity with other extraction procedures that produce
a program from a proof, such as those based on the Curry-Howard
correspondence.
However, it is formally much closer to the kind of interpolation theorem from
logic mentioned earlier in connection to Proposition \ref{prop:interpolation}.
\myeat{
An interpolation theorem can
be restated as saying that when there is a provable entailment
between  formulas $\phi_1$ and $\phi_2$, there is a formula $\chi$
sitting between them -- entailed by $\phi_1$ and entailing $\phi_2$, 
where $\chi$ uses only symbols that are common to $\phi_1$ and $\phi_2$.
}
In the past, interpolation results
have been applied to extract program invariants \cite{vampireinterp1,mcmillaninterp1}; here
we are proving and applying interpolation results to produce a different kind
of program artifact.

\begin{lemma} \label{lem:invar} [Higher-type Interpolation Lemma]
Let $\Theta = \Theta_L, \Theta_R$ be a $\in$-context and $\Gamma = \Gamma_L, \Gamma_R$ a context.
Suppose that $t$ and $u$ are terms of suitable types such that $\freevars(t) \subseteq \freevars(\Theta_L,\Gamma_L)$
and $\freevars(u) \subseteq \freevars(\Theta_R, \Gamma_R)$ and call
$C = \freevars(\Theta_L, \Gamma_L) \cap \freevars(\Theta_R, \Gamma_R)$ the set of common free variables.
Then we have:
\begin{compactitem}
\item If $\Theta; \; \Gamma \vdash t =_{T} u$ is derivable, there is an $\nrcwget$ expression $E$ of type $T$ such that
\[\Theta; \; \Gamma \models t = E = u \quad \text{and} \quad \freevars(E) \subseteq C\]
\item If $\Theta; \; \Gamma ~\vdash~ t \subseteq_T u$ is derivable, there is an $\nrcwget$ expression $E$ of type $\sett(T)$ such that
\[\Theta; \; \Gamma \models t \subseteq E \subseteq u \quad \text{and} \quad \freevars(E) \subseteq C\]
\item If $\Theta;\; \Gamma~\vdash~ t \in_T u$ is derivable, then there is an $\nrcwget$ expression $E$ of type $\sett(T)$ such that
\[\Theta; \; \Gamma \models t \in E \quad \text{and} \quad \freevars(E) \subseteq C\]
\end{compactitem}
Further the desired expressions can be constructed in time polynomial in the proof.
\end{lemma}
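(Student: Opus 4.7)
The plan is to prove the three statements of the lemma simultaneously by induction on the structure of the proof derivation, following the classical Craig interpolation strategy adapted to the higher-type setting. Simultaneous induction is necessary because the proof rules mix the three predicates: $=_{\sett}$-R reduces an equality to two subset claims, $\subseteq$-R reduces a subset assertion to a membership claim, and $=_\ursort$-R reduces an Ur-equality to a membership claim using a fresh set variable. For each rule in Figure~\ref{fig:ourproofsystem}, I case-split on how the principal formula is partitioned between $(\Theta_L, \Gamma_L)$ and $(\Theta_R, \Gamma_R)$, then build the interpolant from the inductively obtained interpolants of the premises.

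For the right-hand introduction rules that decompose the goal along the type structure, the interpolant is built compositionally: the $=_\times$-R case pairs component-wise interpolants using the tupling constructor; the $=_{\sett}$-R case combines the two subset interpolants, with either direction witnessing equality; and the $\subseteq$-R case uses the comprehension $\bigcup \{ E' \mid z \in \ldots \}$ to eliminate the freshly introduced eigenvariable, picking the bound from whichever side owns the relevant term. The logical left-rules $\wedge$-L, $\vee$-L, $\exists$-L, $\forall$-L are handled by dispatching on which side contains the principal formula; in particular, when the existential in an $\exists$-L step is on the opposite side from the conclusion, the eigenvariable may not appear in the interpolant, which forces us either to push the quantifier into a comprehension guarded by $\verify$ (via Proposition~\ref{prop:verify}) or to exploit the variable condition on the rule. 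The equality substitution rule and $\times_\beta$, $\times_\eta$ reduce straightforwardly by propagating substitutions through the induction hypothesis.

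The main technical obstacle is the $=_\ursort$-R rule, which is precisely where $\nrcget$ becomes indispensable. This rule reduces $t =_\ursort u$ to showing $u \in_\ursort z$ from the assumption $t \in_\ursort z$ for a fresh set variable $z$. The induction hypothesis for membership yields an $\nrcwget$ expression $E_0$ of type $\sett(\ursort)$ containing both $t$ and $u$, but the lemma requires an interpolant of type $\ursort$ equal to both. Since the common signature need not contain any Ur-valued term, the interpolant must be produced as $\nrcget_\ursort$ applied to a singleton constructed from $E_0$, filtered by a Boolean interpolant obtained from Proposition~\ref{prop:interpolation} applied to the propositional separation between $t$ and $u$. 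Showing that this construction is uniform in the placement of the fresh $z$ on either side of the partition, and that the resulting singleton is provably indeed a singleton, is the most delicate step.

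For the polynomial-time bound, each inductive case adds only a constant-sized amount of syntactic structure on top of the interpolants produced for its premises, so the resulting $\nrcwget$ expression has size linear in the derivation and each step is itself polynomial-time. Theorem~\ref{thm:betheffective} then follows by instantiating the lemma with $t := \outobj$, $u := \outobj'$, and the partition that places $\Sigma(\inobj, \outobj, \vec a)$ on the left and $\Sigma(\inobj, \outobj', \vec a')$ on the right, so that the common variable is precisely $\inobj$, yielding an $\nrcwget$ expression $E(\inobj)$ equal to both $\outobj$ and $\outobj'$ whenever $\Sigma$ holds.
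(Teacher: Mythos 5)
Your strategy coincides with the paper's: a simultaneous induction on the derivation over the three judgement shapes, a case split on which side of the partition owns each principal formula, tupling for $=_\times$-\textsc{R}, unions for $\vee$-\textsc{L} and for quantifier left rules when an instantiating variable becomes common, and --- for the key $=_\ursort$-\textsc{R} case --- $\nrcget$ applied to a singleton carved out of the inductively obtained bounding set by a Boolean interpolant from Proposition~\ref{prop:interpolation}; this last construction is exactly the paper's $\nrcget\left(\bigcup \{ \case(\verify_\theta(\vec i, \{x\}), \{x\}, \emptyset) \mid x \in E'\}\right)$.

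The one case your sketch gets wrong is $\subseteq$-\textsc{R}. You propose a comprehension $\bigcup\{E' \mid z \in \ldots\}$ ``to eliminate the freshly introduced eigenvariable,'' but there is no eigenvariable to eliminate: in the premise $\Theta, z \in_T t; \Gamma \vdash z \in_T u$ the hypothesis $z \in_T t$ joins the left partition (because $\freevars(t) \subseteq \freevars(\Theta_L,\Gamma_L)$), so $z$ is not common and the inductive interpolant $E'$ already omits it. The real difficulty is that the membership-case invariant only gives $t \subseteq E'$ --- every $z \in t$ lands in $E'$ --- with no upper bound, so $E'$ need not be contained in $u$ and the comprehension you describe fails the right-hand inclusion. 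The repair is the same move you correctly make for $=_\ursort$-\textsc{R}: apply Proposition~\ref{prop:interpolation} to the premise (where $z$ \emph{is} common between the left contexts and the conclusion $z \in_T u$) to obtain $\theta(\vec i, z)$ with $\Theta_L; \Gamma_L, z\in_T t \models \theta$ and $\Theta_R;\Gamma_R,\theta \models z\in_T u$, and set $E = \bigcup\{\case(\verify_\theta, \{z\},\emptyset)\mid z\in E'\}$, which satisfies $t \subseteq E \subseteq u$. A smaller remark: your hedge on $\exists$-\textsc{L} resolves in favour of the variable condition (the eigenvariable is never common in the premise, so $E = E'$ suffices); the case that genuinely requires wrapping the inductive interpolant in a big union over a common bound is $\forall$-\textsc{L} when the bounding membership $t \in_T z$ and the universal formula lie on opposite sides of the partition.
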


\begin{proof}[Proof of Theorem~\ref{thm:betheffective}]
A proof that $\Sigma(\inobj,  \outobj, \vec a)$ 
defines $\outobj$ as a function of $\inobj$ is
exactly a proof that
$\Sigma(\inobj, \outobj, \vec a) \;, \;  \Sigma(\inobj, \outobj', \vec a')  ~ \vdash 
\outobj  =_{T} \outobj'$
where $\outobj'$ and $\vec a'$ are  new variables. Applying Lemma~\ref{lem:invar} with 
$\Theta$ empty, $\Gamma_L = \Sigma(\inobj, \outobj, \vec a)$, 
 and $\Gamma_R = \Sigma(\inobj, \outobj', \vec a')$
yields an $\nrcwget$ expression $E(\inobj)$ such that 
$\Sigma(\inobj,\outobj, \vec a), \; \Sigma(\inobj, \outobj', \vec a') \models 
\outobj = E(\inobj) = \outobj'$.
Hence we have $\Sigma(\inobj, \outobj, \vec a) \models \outobj = E(\inobj)$ and the proof of Theorem \ref{thm:betheffective} is complete.
\end{proof}


Lemma~\ref{lem:invar} is proven by induction on the derivation, which requires examining every proof rule in Figure~\ref{fig:ourproofsystem}.
The more interesting cases are the left-hand side rules for first-order connectives
(\rnandL, \rnorL, \rnallL~and \rnexL)
and the rules for the right-hand side formulas $\in_\sett$-\textsc{R} and $=_\ur$-\textsc{R}.
Regarding the left-hand side rules, since the right-hand side formula of both the premise and 
conclusion
 is of the shape $t \in_T u$, the inductive invariant requires us to output an $\nrc$ expression bounding the term $t$.
To prove the inductive step, we use the binary union operator $E_1 \cup E_2$ of $\nrc$ for the
rule \rnorL~and the big union operator $\bigcup \{ E \mid x \in y \}$ for the rule \rnexL.
On the other hand, the inductive steps for the rules \rnandL~and \rnallL~do not require modifying the
expression obtained as part of the induction hypothesis. 
To treat the inductive steps corresponding to the rules
$\subseteq$-\textsc{R} and $=_\ur$-\textsc{R}, we use a combination of 
the usual ``Boolean'' interpolation (Proposition~\ref{prop:interpolation}) and the conversion of $\deltazero$ formulas to 
expressions of Boolean type in $\nrc$ (Proposition~\ref{prop:verify}).

\begin{myexmp} \label{ex:synth}
Let us illustrate the algorithm provided by Lemma~\ref{lem:invar} on the proof tree in Figure~\ref{fig:distinguishers-prooftree} by providing the corresponding intermediate $\nrc$ expressions that are synthesized, starting from top to bottom: from step (7) to (5), the $\nrc$ expression is the singleton $\{z'\}$. After the conclusion of the subsequent \rnexL~rule, the expression becomes
$$\bigcup \{ \{z'\} \mid z' \in x\}$$
which is semantically equivalent to $x$.
After the next \rnexL~rule at step (3), we obtain
$$\bigcup \left\{ \bigcup \{ \{ z'\} \mid z' \in x\} \mid  x \in X \right\}$$
which is equivalent to the union $\bigcup X$.
The final expression is then obtained right after step (2), by first computing an
interpolant $\theta(X,z)$ such that $z \in o \wedge \phi(X,o) \models \theta(X,z)$
and $\theta(X,z) \wedge \phi(X,o') \models z \in o'$. Computing according to the
procedure underlying Proposition~\ref{prop:interpolation} yields $\theta(X,a) = \exists x \in X~ \psi(X,x,a)$
and the final $\nrc$ expression
$$
\bigcup \left\{ \case(\verify_{\theta}(X,a), \{a\}, \emptyset) \mid a \in \bigcup \left\{ \bigcup \{ \{z'\} \mid z' \in x \} \mid x \in X \right\}\right\}
$$
\end{myexmp}


We now detail two cases of the inductive argument required to prove Lemma~\ref{lem:invar},
the other cases being relegated to the supplementary materials. We also omit the routine complexity analysis of the
underlying algorithm.

\subparagraph*{\bf Rule $\forall$-\textsc{L}:}
Assume that the last proof rule used introduces a universal quantifier on the left.
\begin{mathpar}
\inferrule*
[left={\rnallL}]
{\Theta, \; w \in_T y ; \; \Gamma, \; \phi[w/x] \vdash t \in_{T'}}{\Theta, \; w \in_T y; \; \Gamma, \; \forall x \in_T y~~\phi \vdash t \in_{T'} v}
\end{mathpar}
To simplify matters, assume that $w$ is a variable.
We apply the induction hypothesis to obtain a NRC expression, say $E'$ with $\freevars(E') \subseteq \{w\} \cup C$, by splitting the $\Theta, \; w \in_T y; \;\Gamma, \; \phi[w/x]$ in the obvious way (e.g., if $\forall x \in_T y ~~\phi$ was on the left context in the conclusion, we make $\phi[w/x]$ part of the left context in the premise).
If $w \notin \freevars(E')$, then it also satisfies the invariant in the conclusion. Otherwise, it must be the case that $y \in C$. Hence, we may show that the invariant is satisfied by
\[ E ~~=~~ \bigcup\{E' \mid w \in y\}\]


\subparagraph*{\bf Rule $\subseteq$-\textsc{R}:}
If the last proof rule used introduces an inclusion on the right
\begin{mathpar}
\inferrule*
[left={$\subseteq$-\textsc{R}}]
{\Theta, \; z \in_T t; \; \Gamma \vdash z \in_T u \qquad z \notin \freevars(\Theta, \; \Gamma, t, u)}{\Theta; \; \Gamma \vdash t \subseteq_T u}
\end{mathpar}
then the inductive hypothesis gives us an expression $E'$ such that
$\Theta, \; z \in_T t; \; \Gamma \models  z \in_T E'$ and $\freevars(E') \subseteq C$.
Apply interpolation to the premise so as to obtain a $\deltazero$ formula $\theta$ with $\freevars(\theta) \subseteq \{z\} \cup C$ such that
$\Theta_L; \; \Gamma_L, \; z \in_T t \models \theta$ and
$\Theta_R; \; \Gamma_R, \; \theta \models z \in_T u$.
In this case, we take $E = \{ z \in E' \mid \theta \}$,
which is $\nrcwget$-definable as
$$\bigcup \{ \case(\verify_\theta, \{ z \}, \emptyset) \mid z \in E' \}$$
Now, let us assume that $\Gamma$ holds and show that $t \subseteq E$ and $E \subseteq u$.
\begin{compactitem}
\item Suppose that $z \in t$. By the induction hypothesis, we know that $z \in E'$.
But we also know that $\Gamma_L$ is satisfied, so that $\theta$ holds.
By definition, we thus have $z \in E$.
\item Now suppose that $z \in E$, that is, that $z \in E'$ and $\theta$ holds.
The latter directly implies that $z \in u$ since $\Gamma_R$ holds.
\end{compactitem}



\section{Interpretations and nested relations} \label{sec:fointerp}
We will be interested in extending our synthesis result to classical
proofs. But first we give another characterization of $\nrc$,
an equivalence with transformations defined by \emph{interpretations}.

We first review the notion of an interpretation, which has become
a common way of defining transformations using logical expressions \cite{interpmikolaj,interpcolcombet}.
Let $\inschema$ and $\outschema$ be  multi-sorted vocabularies.
A first-order interpretation with input signature $\inschema$ and output
signature $\outschema$ consists of:
\begin{compactitem}
\item for each output sort $\sort'$, a sequence of input sorts $\interpsort(\sort') = \vec{\sort}$,
\item a formula $\phi^{\sort'}_\equiv(\vec x_1, \vec x_2)$ for each output sort $\sort'$ in $\outschema$ (where both tuples of variables $\vec x_1$ and $\vec x_2$ have types $\interpsort(\sort')$),
\item a formula $\phi^{\sort'}_\domainof(\vec x_1)$ for each output sort $\sort'$ in $\outschema$ (the variables $\vec x_1$ have types $\interpsort(\sort')$),
\item a formula $\phi_R(\vec x_1 , \ldots \vec x_n)$ for every relation $R$
of arity $n$ in  $\outschema$ (where the variables $\vec x_i$ have types $\interpsort(\sort'_i)$, provided the $i$-th argument of $R$ has sort $\sort'_i$),
\item for every function symbol $f(x_1, \ldots, x_k)$ of $\outschema$ with output sort $\sort'$ and input $x_i$ of sort $\sort_i$, a sequence of terms
$\overline{f}_1(\vec x_1, \ldots, \vec x_k), \ldots, \overline{f}_m(\vec x_1, \ldots, \vec x_k)$ with sorts $\interpsort(\sort_{out})$ and $\vec x_i$ of sorts $\interpsort(\sort_i)$.
\end{compactitem}
subject to the following constraints:
\begin{compactitem}
\item $\phi_\equiv^\sort(\vec x, \vec y)$ should define a partial equivalence relation, i.e. be symmetric
and transitive,
\item $\phi_\domainof^\sort(\vec x)$ should be equivalent to $\phi_\equiv^\sort(\vec x, \vec x)$,
\item $\phi_R(\vec x_1, \ldots, \vec x_n)$ and
$\phi_\equiv^{\sort_i}(\vec x_i, \vec y_i)$ for $1 \le i \le n$,
where $\sort_i$ is the output sort associated with position $i$
of the relation $R$, should jointly imply
$\phi_R(\vec y_1, \ldots, \vec y_n)$.
\item the formulas $\varphi_\equiv^\sort$ should be congruent with the interpretation of terms:
for every output function symbol $f(x_1, \ldots, x_k)$ represented by terms $\overline{f}_1(\vec x_1, \ldots, \vec x_k)$, \ldots, $\overline{f}_m(\vec x_1, \ldots, \vec x_k)$, writing $\vec x$ for the concatenation of $\vec x_1, \ldots, \vec x_k$, and $\vec y$ for the concatenation of $\vec y_1, \ldots, \vec y_k$, we enforce
\[\forall \; \vec x \; \vec y\quad
\left(\bigwedge^k_{i=1} \varphi^{\sort_i}_\equiv\left(\vec x_i, \vec y_i\right) \quad\Longrightarrow\quad
\varphi^{\sort'}_\equiv\left(
\overline{f}_1(\vec x), \ldots, \overline{f}_m(\vec x),
\overline{f}_1(\vec y), \ldots, \overline{f}_m(\vec y)
\right) \right)\]
where $\sort'$ is the sort of the output of $f$ and the $\sort_i$ correspond to the arities.
\end{compactitem}

In $\phi^{\sort}_\equiv$ and $\phi^{\sort}_\domainof$,  each $\vec x_1, \vec x_2$ is a tuple
containing
variables of sorts agreeing with the prescribed sequence of input sorts for $\sort'$.
Given a structure $M$ for the input sorts and a sort $\sort$ we call
 a binding of these variables to input elements of the appropriate
input sorts an \emph{$M,\sort$ input match}.
If in output relation $R$
 position $i$ is of sort $\sort_i$, then  in $\phi_R(\vec t_1 , \ldots \vec t_n)$ we
require
$\vec t_i$ to be a tuple of variables
of sorts agreeing with the prescribed sequence of input sorts for $\sort_i$.
Each of the above formulas
is over the vocabulary of $\inschema$.
An interpretation
$\interp$ defines a function from structures over vocabulary $\inschema$ to structures
over vocabulary $\outschema$ as follows:
\begin{compactitem}
\item The domain of sort $\sort'$ is the set of equivalence classes of the partial equivalence
relation defined by $\phi_\equiv^{\sort'}$ over the $M,\sort'$ input matches.
\item A relation $R$ in the output schema is interpreted by the set of those tuples $\vec a$ such
that
$\phi_R(\vec t_1 , \ldots \vec t_n)$ holds for some $\vec t_1 \dots \vec t_n$ with each $\vec t_i$ a 
representative of $a_i$.
\end{compactitem}

An interpretation $\interp$ also defines a map $\phi \mapsto \phi^*$ from formulas over $\outschema$ to
formulas over $\inschema$ in the obvious way. This map commutes with all logical connectives and thus
preserves logical consequence.

In the sequel, we are concerned with interpretations preserving certain theories
consisting of sentences in first-order logic. Recall that a \emph{theory}
in first-order logic is just a set of sentences.
Given a theory $\Sigma$ over $\inschema$ and a theory $\Sigma'$ over $\outschema$, we say that
$\interp$ is an interpretation of $\Sigma'$ within $\Sigma$ if $\interp$ is an interpretation
such that for every theorem $\phi$ of $\Sigma'$, $\phi^*$ is a theorem of $\Sigma$.
Since $\phi \mapsto \phi^*$ preserves logical consequence, if $\Sigma'$ is generated by a set
of axioms $A$, it suffices to check that $\Sigma$ proves $\phi^*$ for $\phi \in A$.

Finally, we are also interested in interpretations restricting to the identity on part of the input.
Suppose that $\outschema$ and $\inschema$ share a sort $\sort$. An interpretation $\interp$ of $\outschema$
within $\inschema$ is said to preserve $\sort$ if the output sort associated to $\sort$ is $\sort$ itself
and the induced map of structures is the identity over $\sort$. Up to equivalence, that
means we  fix $\phi_\domainof^T(x)$ to be, up to equivalence, $\top$,
$\phi_\equiv^\sort(x,y)$ to be the equality $x = y$
and map constants of type $\sort$ to themselves.

\myparagraph{Interpretations defining nested relational transformations}
We now consider how to define  nested relational transformations via interpretations.
The main idea will be to restrict all the constituent formulas to be
$\deltazero$ and to relativize the notion of interpretation to a background theory
that corresponds to our sanity axioms about tupling and sets.

We define the notion of \emph{component types} of a type $T$ inductively as follows.
\begin{compactitem}
\item $T$ is a component type of $\sett(T')$ if $T=\sett(T')$ or if it is a component type of $T'$.
\item $T$ is a component type of $T_1 \times T_2$ if $T = T_1 \times T_2$ or if it is a component type of either $T_1$ or $T_2$.
\item The only component types of $\ur$ and $\unit$ are themselves.
\end{compactitem}
Note in particular that if we have a complex object of sort $T$,
the possible sorts over its subobjects are exactly the component types of $T$.

For every type $T$, we build a multi-sorted vocabulary $\aschema_T$ as follows.
\begin{compactitem}
\item The sorts are all component types of $T$, $\unit$ and $\booltype = \sett(\unit)$.
\item The function symbols are the projections, tupling,
the unique element of type $\unit$, the constants $\boolff, \booltt$ of sort $\booltype$
representing $\emptyset, \{()\}$
and a special constant $\obj$ of sort $T$.
\item The relation symbols are the equalities at every sort and the membership
predicates $\in_T$.
\end{compactitem}
Let $T_\oneobj$ be a type  which will represent  the type of
a complex object $\oneobj$.
We build a theory $\Sigma(T_\oneobj)$ on top of $\aschema_{T_{\oneobj}}$ from the following axioms:
\begin{compactitem}
\item Equality should satisfy the congruence axioms for every formula $\phi$
$$\forall x y~~ (x = y ~\wedge~ \phi ~~\Rightarrow~~ \phi[y/x])$$
Note that it is sufficient to require this for atomic formulas to infer it for
all formulas.
\item We require that projection and tupling obey the usual laws for every type of $\aschema_{T_{\oneobj}}$.
$$
\forall x^{T_1} ~ y^{T_2}~ \pi_1(\< x, y \>) = x \qquad
\forall x^{T_1} ~ y^{T_2}~ \pi_2(\< x, y \>) = y \qquad
\forall x^{T_1 \times T_2} ~ \< \pi_1(x), \pi_2(x) \> = x$$
\item We require that $\unit$ be a singleton and every $\sett(T)$ in $\aschema_{T_\oneobj}$
$$\forall x^\unit ~ () = x$$
\item Lastly our theory imposes set extensionality
$$\forall x^{\sett(T)} ~ y^{\sett(T)} ~~ \left([\forall z^T~ (z \in_T x \Leftrightarrow z \in_T y)] \Rightarrow x =_T y\right)$$
\end{compactitem}

Note that in interpretations we associate the input to
a structure that includes a distinguished constant. For example,
an input of type $\sett(\ur)$ will be coded by a structure
with an element relation, an Ur-element  sort, and a constant whose  sort is the
type  $\sett(\ur)$.
In other contexts, like $\nrc$ expressions and implicit definitions
of transformations, we considered inputs to be \emph{free variables}.
This is only a change in terminology, but it reflects the fact that in evaluating the
interpretation on any input $i_0$ we will keep the interpretation of the associated
constant fixed, while we need to look at multiple bindings of the variables in each formula in order to form
the output structure.

We will show that $\nrcwget$ expressions defining
transformations from a nested relation of type $T_1$ to 
a nested relation of type $T_2$  correspond to a subset of interpretations
of $\Sigma(T_2)$ within $\Sigma(T_1)$ that preserve $\ursort$.
The only additional restriction we impose is that all
formulas $\phi_\domainof^T$ and $\phi_\equiv^T$ in the definition of such an interpretation
must be $\deltazero$. This forbids, for instance, universal quantification over the whole
set of Ur-elements. We thus call a first-order interpretation of
$\Sigma(T_2)$ within $\Sigma(T_1)$ consisting of $\deltazero$ formulas
a \emph{$\deltazero$ interpretation of $\Sigma(T_2)$ within $\Sigma(T_1)$}.

We now describe what it means for such an interpretation to define
a transformation from an instance of one nested relational schema
to another; that is, to map one  object to another.
We will denote the distinguished constant lying in the input sort by
$\inobj$ and the distinguished constant in the output sort by
$\outobj$.
Given any object $o$ of type $T$, define $M_o$
as the least structure such that
\begin{compactitem}
\item every subobjects of $o$ is part of $M_o$
\item when $T_1 \times T_2$ is a component type of $T$ and $a_1, a_2$ are objects of sort $T_1, T_2$ of $M_o$, then $\<a_1,a_2\>$ is an object of $M_o$
\item a copy of $\emptyset$ is part of $M_o$ for every sort $\sett(T)$ in $\aschema_T$
\item $()$ and $\{()\}$ are in $M_o$ at sorts $\unit$ and $\bool$.
\end{compactitem}
The map $o \mapsto M_o$ shows how to translate an object to a logical
structure that is appropriate as the input of an interpretation.
Note that $M_o$ satisfies $\Sigma(T)$ and that every sort has at least one element
in $M_o$ and that there is one sort, $\bool$, which contains two elements;
these technicality are important to ensure that interpretation be expressive enough.

We now discuss how the output of an interpretation is mapped back to an object.
The output of an interpretation is a multi-sorted structure
with a distinguished constant $\outobj$ encoding the output nested relational schema,
but it is not technically a nested relational instance
as required by our semantics for nested relational transformations. For example,
an element of $M_{\sett(\ur)}$ is not a set of Ur-elements, but simply a value
connected to Ur-elements by a membership relation.
We can convert the output to a semantically appropriate entity
via a modification of the well-known Mostowski collapse \cite{mostowski}.
We define $\collapse(e,M)$ on elements $e$ of the domain of a structure $M$ for the multi-sorted encoding of
a schema, by structural induction on the type of $e$:
\begin{compactitem}
\item If $e$ has sort $T_1 \times T_2$ then we set
$\collapse(e,M)= \<\collapse(\pi_1(e), M), \collapse(\pi_2(e), M)\>$
\item If $e$ has sort $\sett(T)$, then we set
$\collapse(e,M)=\{\collapse(t,M) \mid t \in e\}$
\item Otherwise, if $e$ has sort $\unit$ or $\ursort$, we set $\collapse(e,M) = e$
\end{compactitem}
We now formally describe how $\deltazero$ interpretations define functions between objects
in the nested relational data model.
\begin{definition}
We say that a nested relational transformation $\trans$ from  $T_1$ to $T_2$
is defined by a $\deltazero$ interpretation $\interp$ if, 
for every object $\inobj$ of type $T_1$,
the structure $M$ associated with $\inobj$ is mapped to $M'$ where $\trans(\inobj)$ is equal to $\collapse(\outobj,M')$.
\end{definition}
We will often identify a $\deltazero$ interpretation with the corresponding transformation, speaking
of its input and output as  a nested relation (rather than the corresponding structure).
For such an interpretation $\interp$ and an input object $\inobj$ we write
$\interp(\inobj)$ for the output of the transformation   defined by $\interp$ on $\inobj$.

\begin{myexmp} \label{ex:nrcinterp}
Consider an input schema consisting of a single binary relation $R: \sett( \ur \times \sett(\ur ))$,
so an input object is a set of pairs, with each pair consisting of an Ur-element and a set
of Ur-elements. The corresponding theory is $\Sigma(\sett(\ur \times \sett(\ur)))$,
which has sorts $\sett(\ur \times \sett(\ur))$, $\ur \times \sett(\ur)$, $\sett(\ur)$ and $\ur$ and
relation symbols $\in_\ur$ and $\in_{\ur \times \sett(\ur)}$ and one equality symbol for each
above sort.
If we consider the following instance of the nested relational schema
\[
R_0= \{ \<a, \{a,b\}\>, \<a, \{a, c\}\>, \<b, \{a, c\}\> \}
\]
Then the corresponding encoded structure $M$ consists of:
\begin{compactitem}
\item $M^{\sett(\ur\times \sett(\ur))}$ containing only the constant $R_0$
\item $M^{\ur \times \sett(\ur)}$ consisting of the elements of $R_0$,
\item $M^\ur$ consisting of  $\{a, b, c\}$ 
\item  $M^{\sett(\ur)}$ consisting of the sets $\{a,b\}$, $\{a, c\}$,
\item $M^\unit = \{()\}$ and $M^\booltype = \{\emptyset, \{()\}\}$
\item the element relations interpreted in the natural way
\end{compactitem}

Consider the transformation that groups on the first component, returning
an output object of type $\sett(\ur \times \sett(\sett(\ur)))$.
This is a variation of the grouping transformation from Example
\ref{ex:invertibleexp} and Example \ref{ex:nrc}.
On the example input $R_0$ the transformation would return
\[
\{ \< a, \{ \{a,b\}, \{a, c\} \}\> , \<b, \{\{a, c\}\}\> \}
\]

The output  would be represented by
a structure having  sorts \\
$\sett(\ur \times \sett(\sett(\ur))), \ur \times \sett(\sett(\ur)), \ur, 
\sett(\sett(\ur))$ and $\sett(\ur)$ in addition to $\unit$ and $\bool$.
It is easy to capture this transformation with
a  $\deltazero$ interpretation. For example,
the interpretation  could code
the output sort $\sett(\ur \times \sett(\sett(\ur)))$ as $\sett(\ur \times
\sett(\ur))$, representing
each group by the corresponding Ur-element. 
\myeat{
\michael{I made the handwavy paragraph above  slightly less terrible, but it  would be good to add some detail
and also proofread.}
}
\end{myexmp}



We will often make use of the following observation about interpretations:

\begin{proposition}
\label{prop:fointerp-comp}
$\deltazero$ interpretations can be composed, and their composition corresponds to
the underlying composition of transformations.
\end{proposition}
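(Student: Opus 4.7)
The plan is to establish the proposition in two stages: first define the composition syntactically, then verify the semantic correspondence with composition of underlying transformations.

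For the syntactic construction, given two $\deltazero$ interpretations $\interp_1$ (of $\Sigma(T_2)$ within $\Sigma(T_1)$) and $\interp_2$ (of $\Sigma(T_3)$ within $\Sigma(T_2)$), I would define the composed interpretation $\interp$ as follows. For each output sort $\sort'$ of $\interp_2$ with $\interpsort_2(\sort') = (\sort_2^1, \ldots, \sort_2^k)$, concatenate the tuples $\interpsort_1(\sort_2^1), \ldots, \interpsort_1(\sort_2^k)$ to get $\interpsort(\sort')$. The defining formulas for $\interp$ are obtained by applying the standard translation $\phi \mapsto \phi^{\interp_1}$ to each of the corresponding formulas of $\interp_2$, with variables renamed so that the concatenated tuples correspond to the split structure. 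The key fact I need here is that $\deltazero$ interpretations send $\deltazero$ formulas to $\deltazero$ formulas, which follows because every unbounded connective is preserved and bounded quantifiers $\forall x \in t$ and $\exists x \in t$ get replaced by bounded quantifiers (relativized by $\phi_\domainof^{\sort}$, itself $\deltazero$) ranging over representatives.

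Next I would verify that $\interp$ is an interpretation of $\Sigma(T_3)$ within $\Sigma(T_1)$. The partial equivalence, congruence, and coherence conditions all follow automatically: these conditions hold for $\interp_2$'s formulas over $\Sigma(T_2)$, and since $\interp_1$ is itself an interpretation of $\Sigma(T_2)$, the translation preserves provability from $\Sigma(T_2)$ into provability from $\Sigma(T_1)$. Similarly, the axioms of $\Sigma(T_3)$ transported along $\interp_2$ become theorems of $\Sigma(T_2)$, which transported along $\interp_1$ become theorems of $\Sigma(T_1)$.

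For the semantic correspondence, I would show that for every input object $\inobj$ of type $T_1$, $\interp(\inobj) = \interp_2(\interp_1(\inobj))$. The subtlety here is that the right-hand side goes through a $\collapse$ and a re-encoding $o \mapsto M_o$, while the left-hand side computes directly from $M_\inobj$. The strategy is to show that the multi-sorted structure $N_1$ defined by $\interp_1$ on $M_\inobj$, quotiented by its equivalence relations, is canonically isomorphic to $M_{\interp_1(\inobj)}$, with the isomorphism given essentially by the collapse map composed with re-encoding. Under this isomorphism, the action of $\interp_2$ on $M_{\interp_1(\inobj)}$ matches the action of $\interp_2$ on the quotient of $N_1$, which in turn matches the syntactic composition $\interp$ on $M_\inobj$. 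Finally, one checks that $\collapse$ on the result of $\interp$ yields the same nested relation as $\collapse$ on $\interp_2(M_{\interp_1(\inobj)})$, by induction on the type structure.

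The main obstacle will be the bookkeeping in the last step: ensuring that the tuple-of-tuples representation in $\interp$ faithfully encodes the same equivalence classes as the two-step semantic composition. I would handle this by proving a lemma stating that for any $\deltazero$ interpretation $\interp'$ and any structure $M$ satisfying $\Sigma(T)$, the collapse of the output of $\interp'$ on $M$ depends only on the underlying nested relation $\collapse(\obj, M)$; this decouples the syntactic representation chosen for intermediate values from the final semantic output and lets the composition go through cleanly.
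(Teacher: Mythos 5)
Your proposal is correct and follows essentially the same route as the paper, which simply invokes the standard composition of first-order interpretations (concatenating coordinate tuples and pulling back $\interp_2$'s formulas along $\interp_1$) together with an ``easy check'' that the extra constraints ($\deltazero$-ness, preservation of $\ursort$, and the background theory $\Sigma(T)$) survive. You spell out the details the paper leaves implicit --- in particular the semantic verification via the $\collapse$ map and the lemma that a $\deltazero$ interpretation's output depends only on the collapse of its input --- and these are exactly the right points to check.
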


The composition of nested relational interpretations amounts to the usual composition
of FO-interpretations (see e.g. \cite{xqueryinterp}) and an easy check that
the additional requirements we impose on nested relational interpretations are 
preserved.

We can now state the equivalence of $\nrc$ and interpretations formally:

\begin{theorem} \label{thm:coddnrc}
Every transformation in $\nrcwget$ can be translated effectively to a
$\deltazero$ interpretation.
Conversely, for every $\deltazero$ interpretation, one can effectively form an equivalent
$\nrcwget$ expression. The translation from $\nrcwget$ to interpretations
can be done in $\exptime$ while the converse translation can be performed in $\ptime$.
\end{theorem}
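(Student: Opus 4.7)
The plan is to treat the two directions of the equivalence separately, relying on Proposition~\ref{prop:fointerp-comp} for composition of interpretations and Proposition~\ref{prop:verify} for the conversion of $\deltazero$ formulas into Boolean $\nrc$ expressions.

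For the forward direction (from $\nrcwget$ to $\deltazero$ interpretations, in $\exptime$), I would express each $\nrcwget$ expression as a composition of primitive constructs and exhibit, for each primitive, a direct $\deltazero$ interpretation. For the singleton $\{\cdot\} : T \to \sett(T)$, the output sort is encoded using the same tuple of input sorts as $T$, with $\phi_\equiv^{\sett(T)}$ inherited and membership being the equivalence formula. For the comprehension $\bigcup\{E_2 \mid x \in E_1\}$, an output element is encoded by a pair consisting of a representative of an $E_1$-element together with a representative of the resulting $E_2$-output, and equivalence is obtained by lifting the component equivalences. Projections, tupling, $\unit$, $\emptyset$, union and set difference are handled similarly, and $\nrcget_T$ is simulated via case analysis using $\deltazero$-definable singleton-tests. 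Composing via Proposition~\ref{prop:fointerp-comp} then yields the final interpretation; the $\exptime$ upper bound is transparent because each composition step substitutes formulas into the sort representations and can multiply their size.

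For the converse direction (from $\deltazero$ interpretations to $\nrcwget$, in $\ptime$), given an interpretation $\interp$ from $T_1$ to $T_2$, I would construct an $\nrcwget$ expression implementing the collapse of $\outobj$. For each component sort $\sort$ of $T_1$ one builds, by nested comprehension over $\inobj$, an $\nrc$ expression enumerating the subobjects of the input of sort $\sort$; for each output sort $\sort'$ with $\interpsort(\sort') = \vec\sort$, taking the Cartesian product of such enumerations and filtering with $\verify_{\phi_\domainof^{\sort'}}$ yields the set of candidate domain tuples. The collapse is then implemented by structural recursion on $T_2$: at $\ur$ or $\unit$, return the representative (using $\nrcget$ when the value is needed at sort $\ur$); at a product, pair the two recursive collapses; at $\sett(T')$, form a comprehension running over all candidate tuples satisfying the membership predicate $\phi_{\in_{T'}}$ relative to the current representative, each collapsed recursively. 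Quotienting by $\phi_\equiv^{\sort'}$ is handled automatically by the extensional set equality of $\nrc$, since equivalent representatives collapse to equal sets.

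The main obstacle lies in the forward direction, in arranging the sort encodings so that they compose coherently through arbitrary $\nrcwget$ expressions, and so that $\phi^{\sort'}_\equiv$ is genuinely a partial equivalence relation congruent with tupling, membership, and the interpreted terms. This requires a strong inductive invariant along the induction --- roughly, that the collapse of any representative tuple recovers the value the subexpression would compute --- and a careful bookkeeping of free variables to ensure congruence through set operations. The exponential blowup appears intrinsic: encoding tuples grow with the depth of nesting, and unions force disjunctive sort representations that are then inherited by every enclosing construct, in sharp contrast with the backward direction where the $\nrc$ semantics absorbs the quotient for free.
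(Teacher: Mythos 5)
Your proposal is correct and follows essentially the same route as the paper's proof: the forward direction proceeds by structural induction on the expression, exhibiting a $\deltazero$ interpretation for each $\nrcwget$ primitive (with comprehension decomposed into map-then-flatten and output elements tagged by representatives of the bound variable) and composing via Proposition~\ref{prop:fointerp-comp}, while the backward direction enumerates representative tuples of input subobjects by nested comprehension, filters them with $\verify_{\phi_\domainof}$ (Proposition~\ref{prop:verify}), and collapses by structural recursion, letting the extensional set semantics of $\nrc$ absorb the quotient by $\phi_\equiv$. The only organizational difference is that the paper first reduces both directions to monadic schemas via a Kuratowski pair encoding and then works with the linear tower of sorts $\ursort_0, \ursort_1, \ldots$, whereas you carry the product types through the induction directly; both routes require the same congruence/collapse invariant and yield the same $\exptime$/$\ptime$ bounds.
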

This characterization holds when equivalence is
over finite nested relational inputs and also when arbitrary nested relations are allowed
as inputs to the transformations.

From this theorem one can easily derive many of the ``conservativity results''; e.g.
 \cite{conservativity}, which states that every nested relational algebra 
query from flat type $(\sett(\ur^n)$  to flat types can be expressed in relational algebra:
we simply convert to an interpretation and then note that in going backward
from an interpretation to an $\nrc$ expression we will not introduce additional levels of
nesting on top of those present in the input and output.

Note that a number of very similar results occur in the literature.
The underlying idea in one direction is
that 
 one can  ``shred'' a transformation
of collections to work on a flat representation.
This   has been investigated in several communities
for $\nrc$ and related languages
\cite{cheneysigmod, xqueryinterp}, in databases going at least as far back as \cite{abiteboulbidoit}.
The connection extends to richer collection types such as multi-sets, which
have been the focus in  using the shredding technique in systems
\cite{grust_aval,cheneysigmod,ulrichphd}. Algorithms for shredding can also
be useful as a technique for lifting optimizations, such as incremental query processing,
from relational languages to nested languages \cite{kochlupei}. And even in the collection of
richer collection types, many of the conservativity properties of $\nrc$ are maintained \cite{wongconservativity}.
But with these additional
type-formers, one needs to move beyond first-order logic in the simulating language. Thus although they are still extremely relevant
to implementation,  reasoning with the resulting representations
becomes problematic.
The thesis \cite{ulrichphd} provides a detailed look at shredding techniques, and also additional
historical background.

  Results of \cite{koch} show
that a $\ptime$ translation of $\nrc$ expressions to interpretations would imply
a collapse of the complexity class $TA[2^{O(n)}, n]$ to $\pspace$, even at Boolean
type. 
The early paper \cite{simulation} proves a  translation of $\nrc$ similar to the
one in the first half of Theorem \ref{thm:coddnrc}
for flat-to-nested queries,
and the nested-to-nested case can be easily obtained from this. 
However \cite{simulation} does not  formalize
the output of the interpretation as an interpretation, and we will need this connection
to obtain our other characterizations.  In the context of the
XML query language XQuery, \cite{xqueryinterp} proves a transformation
to first-order interpretations over trees. As noted in \cite{koch},
there is a very close relationship between XQuery and $\nrc$, and the
translation to interpretations  in \cite{xqueryinterp} can be easily
lifted to $\nrc$.

There is also similarity to results
from the 1960's of  Gandy \cite{gandy}. Gandy
 defines a class of set functions that are similar
to $\nrc$, and shows that they are ``substitutable''. This is the core
of the argument for translating $\nrc$ to interpretations.


\section{Synthesizing interpretations from classical proofs} \label{sec:bethmodeltheoretic}
In Section \ref{sec:effectivebeth} we showed that from an intuitionistic proof that
$\Sigma(\inobj, \ldots, \outobj)$ defines $\outobj$ as a function of $\inobj$,
we could synthesize an $\nrc$ expression that produces $\outobj$ from $\inobj$.
One might believe such a ``witnessing theorem'' to be specific to intuitionistic
calculi. But we will now demonstrate that this result extends to classical proofs,
and that it is actually a general phenomenon connecting implicit definitions
to interpretations.
We will  show that whenever we have a $\deltazero$
specification where there is  a classical proof that the specification
is functional, 
we can generate an  interpretation that realizes the function.
We can then rely
on Theorem \ref{thm:coddnrc} from the previous
section to infer that an $\nrcwget$ expression  realizes
the function as well.
That is, we will prove:

\begin{theorem} \label{thm:bethinterp}
For any
$\deltazero$ formula $\Sigma(\inobj, \outobj, \vec a)$ which
implicitly defines  $\outobj$ as a function of $\inobj$, there is a $\deltazero$ interpretation
$\interp$ such that whenever $\Sigma(\inobj, \outobj, \vec a)$ holds,
then $\interp(\inobj)= \outobj$.
\end{theorem}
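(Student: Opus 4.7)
The plan is to prove Theorem~\ref{thm:bethinterp} by a model-theoretic argument adapting the classical Beth definability theorem to our multi-sorted setting, in the spirit of the Madar\'asz-style approach mentioned in the related work. I would proceed in three main steps.

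First, I would formulate a general multi-sorted Beth-style theorem tailored to our framework: given a $\deltazero$ formula $\Sigma(\inobj, \outobj, \vec a)$ implicitly defining $\outobj$ as a function of $\inobj$, one can construct, for each component sort $T$ of the output type, a sequence of input-signature sorts $\interpsort(T)$ together with $\deltazero$ formulas $\phi_\equiv^T$ and $\phi_\domainof^T$, along with appropriate formulas and terms for the relations and function symbols of the output schema, forming a $\deltazero$ interpretation that realizes the transformation $\Sigma$ implicitly defines. The central lemma is a uniqueness-of-coding claim: in any model of $\Sigma$, every subobject of $\outobj$ is uniquely determined by a tuple of elements drawn from subobjects of $\inobj$, and this correspondence is $\deltazero$-definable.

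Second, I would establish this lemma by a back-and-forth argument on models of $\Sigma$ sharing $\inobj$. The idea is that any ``free choice'' of a subobject of $\outobj$ would yield two models of $\Sigma$ agreeing on $\inobj$ but disagreeing on some subobject; by repeatedly applying the sanity axioms for extensionality, projection, and tupling (as encoded in the theory of the output schema), the disagreement would propagate upward to $\outobj$ itself, contradicting implicit definability. A compactness and Svenonius-style argument then yields a first-order explicit description of the output-sort equivalence classes in terms of the input. Finally, I would check that the resulting interpretation validates the axioms of the output theory within the input theory (congruence, projection laws, set extensionality) and that its evaluation on the encoded structure for $\inobj$ yields, after applying $\collapse$ at $\outobj$, exactly the object $\Sigma$ prescribes; these verifications should be routine given the earlier steps.

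The main obstacle will be the second step: classical Beth definability natively produces only unrestricted first-order formulas, whereas a $\deltazero$ interpretation requires all quantifiers to be bounded. The core work will therefore lie in exploiting the specific structure of nested relations---the fact that $\outobj$ has finite depth and that, by the sanity axioms, all its subobjects are ``reachable'' from $\inobj$ by iterated projections and memberships---in order to confine every quantifier in the synthesized formulas to range over subobjects of $\inobj$. One promising route is to tailor the compactness argument so that witnesses for existential quantifiers are extracted from explicit subobjects of the input structure, which, being finitely nested, can always be enumerated by bounded quantification; another route is a careful double-negation-style translation connecting the classical proof of implicit definability to an intuitionistic proof amenable to Theorem~\ref{thm:betheffective}. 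Either way, handling this $\deltazero$ refinement is where the technical heart of the proof will reside.
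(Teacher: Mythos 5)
Your proposal follows essentially the same route as the paper's proof: a general multi-sorted Beth-style theorem relating implicit to explicit interpretability, established model-theoretically (isolation of types via omitting types and compactness, a back-and-forth argument on countable models sharing the input, and uniformization by compactness), followed by the observation that, since the sanity axioms force every subobject to lie beneath the distinguished input constant, all quantifiers in the synthesized formulas can be relativized to obtain a genuine $\deltazero$ interpretation — which is precisely your first route for the $\deltazero$ refinement and the one the paper takes. Two caveats: the paper first eliminates the auxiliary parameters $\vec a$ via two applications of classical $\deltazero$ interpolation (and reduces to monadic schemas, and to complete theories so that omitting types applies) before invoking the multi-sorted theorem, steps your sketch leaves implicit; and your alternative route via a double-negation-style translation into the intuitionistic calculus of Theorem~\ref{thm:betheffective} would not work, since there are classically valid functionality sequents that are not intuitionistically derivable.
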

In particular, if in addition for each $\inobj$ there is some $\outobj$ and $\vec a$ such that
$\Sigma(\inobj, \outobj, \vec a)$ holds,  then the interpretation and
the formula define the same transformation.

Recall from Section \ref{sec:effectivebeth}
that projective implicit definitions allow extra parameters $\vec a$
while implicit definitions allow only the input and output
variables $\inobj$ and $\outobj$.
From Theorem \ref{thm:bethinterp} we easily get the following characterization:

\begin{corollary} \label{cor:nrcbeth}
The following are equivalent for a transformation $\trans$:
\begin{compactitem}
\item $\trans$ is projectively implicitly definable by a $\deltazero$ formula
\item $\trans$ is implicitly definable by a $\deltazero$ formula
\item $\trans$ is definable via a $\deltazero$ interpretation
\item $\trans$ is $\nrcwget$ definable
\end{compactitem}
\end{corollary}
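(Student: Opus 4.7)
My plan is to establish the four-way equivalence by closing the cycle (b) $\Rightarrow$ (a) $\Rightarrow$ (c) $\Rightarrow$ (d) $\Rightarrow$ (b), since three of these four links can be read off directly from earlier results. The implication (b) $\Rightarrow$ (a) is immediate: an implicit $\deltazero$ definition without extra parameters is the special case of a projective implicit $\deltazero$ definition in which the parameter sequence $\vec a$ is empty. The implication (a) $\Rightarrow$ (c) is precisely the content of Theorem~\ref{thm:bethinterp}, which produces a $\deltazero$ interpretation defining $\trans$ from any projective implicit $\deltazero$ definition. The implication (c) $\Rightarrow$ (d) is the second half of Theorem~\ref{thm:coddnrc}, which converts a $\deltazero$ interpretation into an equivalent $\nrcwget$ expression.

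For the remaining link (d) $\Rightarrow$ (b), I would proceed as follows. Given $\trans$ defined by an $\nrcwget$ expression $E$ of output type $T$, form the Boolean-typed $\nrcwget$ expression $B(\inobj, \outobj) \eqdef \left(E(\inobj) =_{T} \outobj\right)$, using the $\deltazero$-definable equality predicate $=_T$ constructed in Section~\ref{sec:prelims}. Appealing to the correspondence between Boolean-typed $\nrc$ expressions and $\deltazero$ formulas (the converse direction to Proposition~\ref{prop:verify}, mentioned in the paragraph immediately preceding it), $B$ translates to a $\deltazero$ formula $\Sigma(\inobj, \outobj)$ holding exactly when $\outobj = \trans(\inobj)$. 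Such a $\Sigma$ gives an explicit graph-style definition and in particular an implicit $\deltazero$ definition involving only the variables $\inobj$ and $\outobj$, which establishes (b).

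The main obstacle is justifying the Boolean-to-$\deltazero$ translation in the presence of $\nrcget_\ursort$, which is not itself $\nrc$-definable. I would handle this by case analysis at each occurrence $\nrcget_\ursort(S)$ inside $B$: either $S$ is a singleton $\{u\}$, in which case the witness $u$ is captured by bounded existential quantification over $S$, or $S$ is not a singleton, in which case the default constant $c_0 \in \ursort$ is substituted. Both branches produce $\deltazero$-expressible sub-formulas, and combining this rewriting inductively with the standard translation of the remaining $\nrc$ constructs (tupling, projection, singleton, big union, and set difference) into $\deltazero$ yields the desired formula $\Sigma$.
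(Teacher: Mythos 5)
Your overall architecture is sound, and three of your four links coincide with the paper's: (b)~$\Rightarrow$~(a) is the trivial empty-parameter observation, (a)~$\Rightarrow$~(c) is Theorem~\ref{thm:bethinterp}, and (c)~$\Rightarrow$~(d) is the second half of Theorem~\ref{thm:coddnrc}. Where you diverge is the closing link (d)~$\Rightarrow$~(b): the paper proves it by a structural induction on the $\nrcwget$ expression that builds an \emph{implicit} definition directly (e.g.\ for $E=\{F\}$ it produces $(\exists q_2 \in q_1\, \top) \wedge (\forall q_2 \in q_1\, \phi_F(\vec x, q_2))$), whereas you reduce to the Boolean case by forming the graph expression $E(\inobj) =_T \outobj$ and invoking the converse of Proposition~\ref{prop:verify}. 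That reduction is a legitimate and arguably cleaner route, \emph{provided} the Boolean-to-$\deltazero$ correspondence is obtained the way the paper obtains it, namely as a consequence of the $\nrcwget$-to-interpretation translation (Theorem~\ref{thm:coddnrc}, forward direction): for Boolean output type the interpretation's defining formulas are $\deltazero$ over the input, which yields the required formula $\Sigma(\inobj,\outobj)$, and the resulting graph formula is trivially functional.

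The gap is in your final paragraph, where you try to justify that correspondence by a direct ``standard translation'' of the $\nrc$ constructs into $\deltazero$. No such direct inductive translation exists for the big-union operator. A $\deltazero$ quantifier must be bounded by a \emph{term} over the free variables, and terms are built only from variables, tupling and projections; but in $\bigcup\{E_1 \mid x \in E_2\}$ the variable $x$ ranges over the value of $E_2$, whose elements are in general freshly constructed sets (e.g.\ the elements of $\{\{a,b\} \mid a,b \in \inobj\}$) that are neither subobjects of $\inobj$ and $\outobj$ nor denotable by terms over them. So the inductive step for big union cannot simply relativize a quantifier to $E_2$; one must re-parameterize the intermediate collection by the input tuples that generate it, which is exactly the shredding performed by the interpretation translation, or else hold the intermediate values in auxiliary variables, which is why the paper's own induction for this implication either passes through \emph{projective} implicit definitions (closed under composition) and then appeals to Lemma~\ref{thm:dropprojective}, or first normalizes to composition-free $\nrcwget$ expressions. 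Your treatment of $\nrcget_\ursort$ by case analysis on singletons is fine, but it addresses the easy part of the translation; the big-union case is the one that needs the extra machinery. The fix is simply to cite the forward half of Theorem~\ref{thm:coddnrc} (or the paper's inductive construction of implicit definitions) rather than an elementwise syntactic translation.
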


\myparagraph{Finite instances versus all instances}
In Theorem \ref{thm:bethinterp} and Corollary \ref{cor:nrcbeth} we emphasize that our 
results concern the
class $\funall$ of transformations $\trans$ such  that there is a $\deltazero$ formula 
$\Sigma$  which
defines a functional relationship between  $\inobj$ and $\outobj$
on all instances, finite and infinite, and where the function
agrees with $\trans$. We can consider $\funall$ as a class of transformations on all instances or of
finite instances, but the class  is defined by reference to all instances
for $\inobj$.  Expressed semantically
\[
\Sigma(\inobj, \outobj, \vec a) \wedge  \Sigma(\inobj,\outobj', \vec a') \models \outobj'= \outobj
\]
An  equivalent characterization of $\funall$  is \emph{proof-theoretic}: these are the 
transformations such that there is a  classical proof of functionality in a complete first-order proof system
using some basic axioms about Ur-elements, products and projection functions, and the extensionality axiom
for the membership relation. For example, it is easy to extend the intuitionistic
proof system given in Section \ref{sec:effectivebeth} to be complete for classical
entailment.

Whether one thinks of $\funall$ semantically or proof-theoretically, our results say that 
$\funall$ is identical with the set of transformations  given by $\nrc$ expressions.
But the proof-theoretic perspective is crucial for the synthesis procedure.

It is natural to  ask about the analogous class  $\funfin$ 
of transformations $\trans$ over \emph{finite inputs} for which there is a $\deltazero$ 
$\Sigma_\trans$ which
is functional, when only finite inputs are considered,  and where the corresponding function  agrees with  $\trans$.
It is well-known that $\funfin$ is not identical to $\nrc$ and is not so well-behaved. 
The transformation returning the powerset of a given input relation $\inobj$ is in $\funfin$: the powerset of
a finite input $\inobj$ is the unique collection $\outobj$ of subsets of $\inobj$
  that contains the empty set and such that for each element $e$ of $\inobj$,
if a set $s$ is in $\outobj$ then $s-\{e\}$ and $s \cup \{e\}$ are in $\outobj$.
From this we can see that $\funfin$
contains transformations of high complexity. Indeed, even when considering transformations from flat relations
to flat relations, $\funfin$ contains transformations whose membership in polynomial time  would imply that $\kw{UP} \cap \kw{coUP}$, the class of problems such that both the problem
and its complement  can be solved by an unambiguous non-deterministic polynomial time  machine,
is identical to $\ptime$ \cite{kolaitisimpdef}.
Most importantly for our goals, membership in $\funfin$ is \emph{not} witnessed by proofs in any effective proof system,
since this set is not computably enumerable.

\myparagraph{Total versus partial functions}
When we have a proof  that
$\Sigma(\inobj, \outobj, \vec a)$ defines $\outobj$ as a function of $\inobj$,
the corresponding function may still be partial. 
Our procedure will synthesize an expression $E$ defining a total function
that agrees with the  partial function defined by $\Sigma$.
If $\vec a$ is empty, we can also synthesize a Boolean $\nrc$ expression $\verify_\kw{InDomain}$
that verifies whether a given $\inobj$ is in the domain of the function: that is whether
there is $\outobj$ such that $\Sigma(\inobj, \outobj)$ holds.
$\verify_\kw{InDomain}$ can be taken as:
$$\bigcup \{ \verify_{\Sigma}(\inobj, e) \mid e \in \{ E(\inobj)\}\}$$
where $\verify_\Sigma$ is from Proposition \ref{prop:verify}.

Recall the second transformation from Example \ref{ex:invertibleexp}, where
the domain of the function is the set  of $G$ such that the second component of
each pair is never empty and the value of the second component is determined
by the value of the first component. This property can clearly be described
by a $\deltazero$ formula, and thus by Proposition \ref{prop:verify}
it can be verified in $\nrc$.
\myeat{
\begin{align*}
(\bigcup \{ \pi_1(g) \neq \emptyset \mid g \in G \} \wedge \\
\bigcup \{ \bigcup \{\pi_1(g)=\pi_1('g) \Rightarrow \pi_2(g)=\pi_2(g') \mid g' \in G\} \mid g \in G\}
\end{align*}
}

When $\vec a$ is not empty we cannot generate a domain check, since
the auxiliary parameters might enforce some second-order property of
$\inobj$: for example $\Sigma(\inobj, a, o)$ might state that $a$ is a bijection
from $\pi_1(\inobj)$ to $\pi_2(\inobj)$ and $o=\inobj$. This clearly defines
a functional relationship between $i_0, i_1$ and $o$, but
the domain consists of $i_0, i_1$ that have the same cardinality, which cannot
be expressed in first-order logic.



\myparagraph{Organization of the proof of the theorem}
Our proof of Theorem \ref{thm:bethinterp} will proceed first by some reductions
(Subsection \ref{subsec:reduction}),
showing that it suffices to prove a general result about implicit
definability and definability by interpretations
in  multi-sorted first-order logic, rather than
dealing with higher-order logic  and $\deltazero$ formulas.
In Subsection \ref{subsec:proofmultisorted} we sketch the argument for this
multi-sorted logic theorem.

\subsection{Reduction to a characterization theorem in multi-sorted logic} \label{subsec:reduction}
The first step in the proof of Theorem \ref{thm:bethinterp} is to reduce to a more general
statement relating implicit definitions in multi-sorted logic to
 interpretations.
The first part of this reduction is to argue that we 
can suppress auxiliary parameters $\vec a$ in implicit definitions:

\begin{lemma}
\label{thm:dropprojective}
For any $\deltazero$ formula $\Sigma(\inobj, \outobj, \vec a)$  that 
implicitly defines $\outobj$ as a function of
$\inobj$, there is another $\deltazero$ formula $\Sigma'(\inobj, \outobj)$ which 
implicitly  defines $\outobj$ as a function of $\inobj$,  such that
$\Sigma(\inobj, \outobj, \vec a) \Rightarrow \Sigma'(\inobj, \outobj)$.
\end{lemma}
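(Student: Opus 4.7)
The plan is to use compactness in classical first-order logic over the background multi-sorted theory $T$ encoding sanity axioms for nested relational structures---extensionality, projection laws, and tupling. I define $\Gamma$ to be the set of $\deltazero$ formulas $\psi(\inobj, \outobj)$ in the restricted vocabulary (omitting $\vec a$) such that $T \models \Sigma(\inobj, \outobj, \vec a) \to \psi(\inobj, \outobj)$. The goal is to show that $\Gamma$ itself implicitly defines $\outobj$ as a function of $\inobj$. Once this is established, a standard compactness argument applied to the theory $T \cup \Gamma(\inobj, \outobj_1) \cup \Gamma(\inobj, \outobj_2) \cup \{\outobj_1 \neq \outobj_2\}$ yields a finite $\Gamma_0 \subseteq \Gamma$ already implicitly defining $\outobj$; taking $\Sigma' := \bigwedge \Gamma_0$ gives a $\deltazero$ formula that is implied by $\Sigma$ and implicitly defines $\outobj$ as required.

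The critical semantic step is to prove that $\Gamma$ implicitly defines. I would argue by contradiction: suppose $M \models T \cup \Gamma(\inobj, \outobj_1) \cup \Gamma(\inobj, \outobj_2)$ with $\outobj_1 \neq \outobj_2$ in $M$. Consider then the theory $T \cup \mathrm{Diag}^{\deltazero}(M, \inobj, \outobj_1, \outobj_2) \cup \{\Sigma(\inobj, \outobj_1, \vec a_1), \Sigma(\inobj, \outobj_2, \vec a_2)\}$, where the diagram records the $\deltazero$ facts true of the relevant elements of $M$. If this theory were consistent, any model witnessing both instances of $\Sigma$ would force $\outobj_1 = \outobj_2$ by the projective implicit-definability hypothesis, contradicting the diagram's assertion that $\outobj_1 \neq \outobj_2$. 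Hence the theory is inconsistent; by compactness a finite inconsistency must arise, and the plan is to exhibit the corresponding $\deltazero$ consequence of $\Sigma$ as incompatible with $M \models \Gamma$, yielding the contradiction.

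The main obstacle I anticipate is extracting a usable $\deltazero$ consequence from the finite inconsistency. The naive derivation yields a formula mixing both $\outobj_1$ and $\outobj_2$, whose membership in the single-variable set $\Gamma$ is not immediate. Eliminating one of the two outputs via projective implicit definability (which gives $\outobj_1 = \outobj_2$ only under the full force of both $\Sigma$ instances) is delicate, because substitution preserves satisfaction only for carefully chosen shapes of the formula. I expect to overcome this by restricting the joint diagram to a ``separated'' fragment consisting only of formulas about $(\inobj, \outobj_1)$ or $(\inobj, \outobj_2)$ individually, and then exploiting the freedom to instantiate $\vec a_2, \outobj_2$ as $\vec a_1, \outobj_1$ to collapse two-sided consequences into one-sided ones, iterating if necessary to drive each fragment into $\Gamma$ until the diagram is contradicted.
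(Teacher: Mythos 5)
Your overall strategy --- take $\Gamma$ to be the set of all $\deltazero$ consequences $\psi(\inobj,\outobj)$ of $\Sigma(\inobj,\outobj,\vec a)$ that do not mention $\vec a$, show that $\Gamma$ already implicitly defines $\outobj$, and then extract a finite conjunction by compactness --- is a legitimate alternative skeleton, and the final compactness step is unproblematic. The genuine gap is exactly where you flag ``the main obstacle'': the argument that $\Gamma$ implicitly defines does not close as described. If the $\deltazero$ diagram of $M$ is the mixed one (including $\outobj \neq \outobj'$, which is what makes $T \cup \mathrm{Diag} \cup \{\Sigma(\inobj,\outobj,\vec a), \Sigma(\inobj,\outobj',\vec a')\}$ inconsistent), then the finite cause of inconsistency is a single formula $\delta(\inobj,\outobj,\outobj')$ mentioning both outputs, and there is no way to feed $\neg\delta$ back into the one-output set $\Gamma$. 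If instead you use a separated diagram $\delta_1(\inobj,\outobj)\wedge\delta_2(\inobj,\outobj')$, two things go wrong: first, the inconsistency need no longer arise (a model realizing both separated fragments with the two outputs identified contradicts nothing about $M$); second, even granting a joint consequence $T,\ \Sigma(\inobj,\outobj,\vec a),\ \Sigma(\inobj,\outobj',\vec a')\models\neg\delta_1(\inobj,\outobj)\vee\neg\delta_2(\inobj,\outobj')$, your substitution trick (instantiating $\outobj'$ as $\outobj$ and $\vec a'$ as $\vec a$) only places the \emph{disjunction} $\neg\delta_1(\inobj,y)\vee\neg\delta_2(\inobj,y)$ into $\Gamma$. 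Evaluating it in $M$ at the two outputs merely yields $\neg\delta_2(\inobj,\outobj)$ and $\neg\delta_1(\inobj,\outobj')$, which is perfectly consistent with the diagram ($\delta_1$ holds of the first output, $\delta_2$ of the second); no contradiction follows, and the proposed iteration has no visible termination. What is needed to make this converge is precisely a separation / joint-consistency principle for the $\deltazero$ fragment, i.e.\ an interpolation theorem; that is the missing idea, not a technicality.

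This is in fact how the paper proceeds: it invokes classical $\deltazero$ interpolation (Proposition~\ref{prop:interpolation-classical}) twice. From the functionality entailment $\Sigma(\inobj,\outobj,\vec a)\models\Sigma(\inobj,\outobj',\vec a')\Rightarrow\outobj=\outobj'$ one interpolates to obtain $\theta(\inobj,\outobj)$ with $\Sigma(\inobj,\outobj,\vec a)\models\theta(\inobj,\outobj)$ and $\theta(\inobj,\outobj)\wedge\Sigma(\inobj,\outobj',\vec a')\models\outobj=\outobj'$; a second interpolation applied to $\Sigma(\inobj,\outobj,\vec a)\models(\theta(\inobj,\outobj')\wedge\theta(\inobj,\outobj''))\Rightarrow\outobj'=\outobj''$ yields $D(\inobj)$, and $\Sigma'\eqdef D(\inobj)\wedge\theta(\inobj,\outobj)$ does the job. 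Note also that even if your compactness argument were repaired via general first-order interpolation or Robinson joint consistency, you would still owe an argument that the resulting formula is $\deltazero$; the paper gets this for free because its interpolation proposition is proved for the $\deltazero$ fragment specifically. I would suggest either importing that proposition and following the two-interpolation construction, or, if you wish to keep the consequence-set route, first establishing the required $\deltazero$ joint-consistency lemma --- at which point the two approaches essentially coincide.
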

The lemma is proven using two applications of \emph{classical} $\deltazero$ interpolation.
\begin{proposition}
\label{prop:interpolation-classical}
For any $\deltazero$ formulas $\varphi$ and $\psi$ such that $\varphi \models \psi$, there
exists another $\deltazero$ formula $\theta$ such that $\varphi \models \theta$
and $\theta \models \psi$.
\end{proposition}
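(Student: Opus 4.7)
\medskip

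This proposition is the classical counterpart of Proposition~\ref{prop:interpolation}. The intended reading --- clear from its use in Lemma~\ref{thm:dropprojective}, where the aim is to eliminate auxiliary variables from a formula --- is that the interpolant $\theta$ satisfies the additional constraint $\freevars(\theta) \subseteq \freevars(\varphi) \cap \freevars(\psi)$, otherwise the statement is trivial.

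The plan is to carry out Gentzen--Maehara interpolation in a classical, cut-free sequent calculus for $\deltazero$. First I would introduce a classical, multi-conclusion calculus $\mathsf{LK}_{\deltazero}$, whose propositional rules are the standard classical ones and whose quantifier rules are the bounded analogues of \rnallL, \rnexL (and their right-hand duals) from Figure~\ref{fig:ourproofsystem}, with the equality and $\in$-rules carried over unchanged. Cut-elimination is then proven by the usual Gentzen reduction strategy: because every quantifier is bounded and the bound term is a syntactic subterm of the formula introduced, each cut-reduction step carries over from ordinary first-order $\mathsf{LK}$ without modification. Soundness is immediate, and completeness of the cut-free system for classical $\deltazero$ validity is a routine Henkin-style argument. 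Combined with classical completeness, $\varphi \models \psi$ thus yields a cut-free derivation of the sequent $\varphi \vdash \psi$ in $\mathsf{LK}_{\deltazero}$.

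I would then extract the interpolant by Maehara's inductive method. The inductive statement asserts that for every cut-free sequent $\Gamma \vdash \Delta$ in the derivation and every partition of the formulas into $(\Gamma_L, \Delta_L)$ and $(\Gamma_R, \Delta_R)$, one can construct a $\deltazero$ formula $\theta$ with free variables in the common signature, such that $\Gamma_L \models \Delta_L, \theta$ and $\Gamma_R, \theta \models \Delta_R$. Applied at the root with the partition $\Gamma_L = \{\varphi\}$, $\Delta_R = \{\psi\}$, $\Delta_L = \Gamma_R = \emptyset$, this produces exactly the interpolant in the statement.

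The main obstacle is to check that the interpolants stay $\deltazero$ and free of out-of-signature variables, especially at the bounded quantifier rules. The propositional, equality, and extensionality cases adapt the corresponding cases of Proposition~\ref{prop:interpolation} essentially verbatim. For the quantifier cases, the eigenvariable condition guarantees that no fresh variable escapes into the interpolant coming from the premise; the subtlety is that when the introduced formula $\forall x \in_T t\, \phi$ or $\exists x \in_T t\, \phi$ ends up in the interpolant (bound by a fresh quantifier lifted from the premise), the bound term $t$ must be available in the common signature. Since $t$ is a subterm of the introduced formula, its free variables lie among those of that formula, and careful bookkeeping on which side the partition places the formula --- together with the fact that $\in$-premises witnessing the boundedness travel alongside $t$ --- ensures that this is always possible. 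Because $\mathsf{LK}_{\deltazero}$ only ever introduces bounded quantifiers, the interpolant constructed at each step is itself $\deltazero$, and this is the step where the argument's bookkeeping is most delicate.
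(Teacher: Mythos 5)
Your proposal is correct and matches the paper's own route: the authors likewise prove this by passing to a multi-conclusion (classical) variant of their cut-free sequent calculus for $\deltazero$ formulas and then running the standard Maehara-style inductive interpolant extraction as in Fitting's book, with the same bookkeeping on bounded quantifiers and common free variables. Your observation that the statement is only meaningful with the implicit condition $\freevars(\theta) \subseteq \freevars(\varphi) \cap \freevars(\psi)$ is also consistent with how the proposition is used in Lemma~\ref{thm:dropprojective}.
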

This proposition generalizes Proposition~\ref{prop:interpolation}
since we allow classical validity for $\varphi \models \psi$. That being said,
we may prove Proposition~\ref{prop:interpolation-classical} using similar tools, i.e., a complete
cut-free sequent calculus for $\deltazero$ formulas and a standard proof as in~\cite{fittingbook}.
With Lemma~\ref{thm:dropprojective} in hand, from this point on we assume
that we do not have auxiliary parameters $\vec a$ in our implicit definitions.

\myeat{Furthermore, it is also sufficient to consider implicit definitions that
do not mention the constants
$c_i$ (of sort $\ur$). The reason why we can make this assumption is that, for any 
implicit definition $\Sigma(\inobj,\outobj)$ mentioning some 
constants $c_{i_1}, \ldots, c_{i_k}$, we can consider another implicit definition 
$\Sigma'(\inobj',\outobj)$ which does not mention any $c_i$ such that 
$\Sigma'(\<\inobj,c_{i_1}, \ldots\>,\outobj)$ is equivalent to $\Sigma(\inobj, \outobj)$,
where $\< \ldots \>$ denotes iterated tupling.
\begin{myexmp} \label{ex:elimconstants}
We give an example of how we reduce to the case without constants.
Consider $\Sigma(\inobj, \outobj)$ with $\inobj$ of type
$T$ that states only that $\outobj=c$ for some constant $c$ of type $T'$.
Clearly $\Sigma$ implicitly defines $\outobj$ as a function of  $\inobj$.

We let $\Sigma'(\inobj',\outobj)$ with $\inobj'$ of type $T \times T'$  be obtained
by replacing $c$ with $\pi_2(\inobj')$, thus giving a formula
stating that $\outobj=\pi_2(\inobj')$. Then $\Sigma'$ implicitly defines $\outobj$
as a  function of  $\inobj'$.
Once we convert $\Sigma'$ to the desired interpretation or $\nrc$ expression with
input $\inobj'$ we can  regain  an interpretation or expression  for the original
$\Sigma$ by substituting $\<\inobj, c\>$ for $\inobj'$. In this example, the $\nrc$
expression for $\Sigma'$ is just $\pi_2(\inobj')$, and thus the $\nrc$ expression
for $\Sigma$ is $\pi_2(\<\inobj, c\>)=c$.
\end{myexmp}}

\myparagraph{Reduction to Monadic schemas}
A \emph{monadic type} is a type built only using
the atomic type $\ursort$ and the type constructor
$\sett$. To simplify notation we define
$\ursort_0 \eqdef \ursort, \; \ursort_1 \eqdef \sett(\ursort_0), \ldots \ursort_{n+1} \eqdef \sett(\ursort_n)$.
A monadic type is thus a $\ursort_n$ for some $n \in \bbN$.
A nested relational schema is monadic if it contains only monadic types,
and a $\deltazero$ formula is monadic if all of its variables have monadic types.

Restricting to monadic formulas simplifies 
the type system significantly and thus, certain arguments by induction.
It turns out that by the usual ``Kuratowski encoding'' of pairs by sets, we can
reduce all of our questions about implicit versus explicit definability to the case of monadic 
schemas. The following proposition implies that we can derive all of our main results
for arbitrary schemas from their restriction to monadic formulas.
We will thus restrict to 
monadic formulas for the remainder of the argument.

\begin{proposition} \label{prop:reducemonadic}
For any nested relational schema $\aschema$, there is
a monadic nested relational schema $\aschema'$,
 an injection $\convert$ from instances of $\aschema$ to instances
of $\aschema'$ that is definable in $\nrc$, and an $\nrc[\nrcget]$ expression
$\convert^{-1}$
such that $\convert^{-1} \circ \convert$ is the identity transformation from
$\aschema \to \aschema$.

Furthermore, there is a $\deltazero$ formula $\image_\convert$ from $\aschema'$ to $\bool$
such that $\image_\convert(i')$ holds if and only if $i' = \convert(i)$
for some instance $i$ of $\aschema$.

These translations can also be given in terms of $\deltazero$ interpretations
rather than $\nrc$ expressions.
\end{proposition}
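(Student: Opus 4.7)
The plan is to construct the encoding explicitly using a Kuratowski-style representation of pairs as sets, together with an inverse that uses $\nrcget$ to extract components. For each type $T$ appearing in $\aschema$ I would define by induction on $T$ a target monadic type $\widehat{T}$ of the form $\ur_{d(T)}$, an $\nrc$ expression $\convert_T : T \to \widehat{T}$, an $\nrcwget$ expression $\convert_T^{-1} : \widehat{T} \to T$, and a $\deltazero$ formula $\image_T(z)$ describing the image of $\convert_T$. The base case $\ursort$ is the identity with $\image_\ursort = \top$. For $\unit$, I encode $()$ as $\emptyset : \ur_1$ and set $\image_\unit(x) := (x = \emptyset)$. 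For $\sett(T)$, $\convert_{\sett(T)}$ acts elementwise via a $\bigcup$-comprehension; the image formula says every member of $z$ lies in $\image_T$. For products $T_1 \times T_2$, I first lift both components to a common depth $m = \max(d(T_1), d(T_2))$ by wrapping singletons $a \mapsto \{\ldots\{a\}\ldots\}$, then encode the pair as a tagged variant of a Kuratowski pair, such as $\{\{a, \emptyset\}, \{a, b, \{\emptyset\}\}\}$, where the two canonical tags $\emptyset$ and $\{\emptyset\}$ are distinct at depth $\geq 2$ and allow a purely $\deltazero$ and $\nrcwget$-definable recovery of $a$ and $b$.

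Having set up these building blocks, the key verifications are: (i) $\convert_T$ only uses $\nrc$ primitives (tupling is only ever used on the source side, never in the target); (ii) $\convert_T^{-1}$ is $\nrcwget$-definable, the uses of $\nrcget$ being exactly to pick ``the singleton component'' and ``the two-element component'' of each encoded pair, and to peel off the padding singletons; (iii) $\convert_T^{-1} \circ \convert_T = \mathrm{id}_T$ by a routine induction; and (iv) $\image_T$ is $\deltazero$, using bounded quantifiers that inspect the tag structure of purported pair encodings and recurse into $\image_{T_i}$ on their decoded components. To go from types to schemas, I would simply take the product of all declared input variables and apply the type-level construction, since $\nrc$, $\nrcwget$ and $\deltazero$ are all closed under products.

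Finally, the interpretation-level statement in the last sentence of the proposition is essentially free: by Theorem~\ref{thm:coddnrc} the $\nrc$ and $\nrcwget$ expressions for $\convert$ and $\convert^{-1}$ translate effectively into $\deltazero$ interpretations, and those interpretations may be composed with later constructions via Proposition~\ref{prop:fointerp-comp}. The image formula $\image_\convert$ is already a $\deltazero$ formula and requires no further translation.

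The main obstacle is getting the pairing encoding to behave uniformly. A naive Kuratowski pair $\{\{a\}, \{a,b\}\}$ is ambiguous when $a = b$ and, more severely, does not distinguish the first and second components purely from their set-theoretic shape unless one has at least two canonical distinguishable ``tag'' values to attach. The tag trick $\{\emptyset, \{\emptyset\}\}$ and the singleton-padding at $\ur_n$ together resolve both problems, but verifying that the decoding and the image formula remain $\deltazero$ in this presentation, and that the inductive invariant $\convert_T^{-1}\circ \convert_T = \mathrm{id}$ survives the padding, is where the bookkeeping will be heaviest.
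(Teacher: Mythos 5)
Your overall plan --- a depth function $d(T)$ on types, elementwise conversion at $\sett$ types, singleton-padding to equalize depths before pairing, a Kuratowski-style encoding of products, $\nrcget$ to peel off the padding, and image formulas built by induction on types --- is exactly the strategy of the paper's proof. The first genuine problem is your pairing gadget. Your reason for abandoning the plain Kuratowski pair (that it ``does not distinguish the first and second components purely from their set-theoretic shape'') is mistaken: in $\{\{a\},\{a,b\}\}$ the first component is the common member of all elements of the outer set and the second is what remains of the two-element member after removing the first; the only degenerate case $a=b$ collapses the encoding to the singleton $\{\{a\}\}$, which is detectable in $\nrc$ (the paper tests it with an $\allpairs$ expression) and decodes correctly to $(a,a)$. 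Your replacement $\{\{a,\emptyset\},\{a,b,\{\emptyset\}\}\}$, by contrast, has a real collision: after padding, $a$ and $b$ live at a monadic depth at which $\emptyset$ and $\{\emptyset\}$ are themselves legitimate data values, so the tags can be absorbed into the coordinates (take a pair whose first coordinate pads to $\emptyset$), the two outer members can then no longer be told apart by which tag they contain, and ``remove the tag and $\nrcget$ what remains'' returns the wrong answer. You cannot reserve the tags away from the data, since every object of the padded sort must be encodable; the fix is to drop the tags and use the untagged encoding with the singleton/two-element case split, which is what the paper does for both $\hat\pi_1$ (via intersection) and $\hat\pi_2$ (via set difference).

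The second problem is the last step. Deriving the $\deltazero$ interpretations for $\convert$ and $\convert^{-1}$ from Theorem~\ref{thm:coddnrc} is circular in this development: the paper's proof of the $\nrcwget$-to-interpretation direction of that theorem itself begins by reducing to monadic schemas, i.e.\ it invokes precisely the interpretation-level version of this proposition (it conjugates an arbitrary expression by $\interp_{\convert}$ and $\interp_{\convert^{-1}}$). The paper therefore constructs the interpretations directly --- giving explicit $\deltazero$ interpretations for the liftings, the Kuratowski pairing, and the two projections, and assembling them with Proposition~\ref{prop:fointerp-comp} --- rather than routing through the general $\nrc$-to-interpretation translation. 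You would need to do the same, or first give an independent proof of the forward direction of Theorem~\ref{thm:coddnrc} that does not presuppose the monadic reduction.
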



Given Proposition \ref{prop:reducemonadic} 
it suffices
to consider only monadic nested relational schemas.
Given a $\deltazero$  implicit definition
$\Sigma(\inobj, \outobj)$ we can form a new
definition that computes the composition of the following
transformations:  $\convert^{-1}_{\inschema}$, a
projection onto the first component, the transformation defined by $\Sigma$,
and $\convert_{\outschema}$. Our new definition  captures this composition by
 a formula $\Sigma'(\inobj', \outobj')$ that
defines $\outobj'$ as a function of $\inobj'$, where the formula
is  over a monadic schema.
Assuming that we have proven the theorem in the monadic case,
we would get an $\nrc$ expression $E'$ from $\inschema'$ to $\outschema'$ agreeing with  this formula
on its domain.
Now we can compose $\convert_{\inschema}$, $E'$, $\convert^{-1}_{\outschema}$,  and the
projection to
get an $\nrc$ expression agreeing with the partial function
defined by $\Sigma(\inobj, \outobj)$ on its domain, as required.

\myparagraph{Reduction to a result in multi-sorted logic}
Now we are ready to give our last reduction,
relating  Theorem \ref{thm:bethinterp} to a  general result
concerning multi-sorted logic.

Let $\sig$ be any multi-sorted signature, $\bigsorts$ be its
sorts and $\smallsorts$ be a subset of $\bigsorts$.
We say that a relation
$R$ is \emph{over $\smallsorts$} if all of its arguments are in
$\smallsorts$. Let $\Sigma$
be a set of sentences in $\sig$.
Given a model
$M$ for $\sig$, let $\smallsorts(M)$ be the union
of the domains of relations over $\smallsorts$,
and let $\bigsorts(M)$ be defined similarly.

We say that $\bigsorts$ is \emph{implicitly interpretable}
over $\smallsorts$ relative to $\Sigma$ if:

\medskip

For any models $M_1$  and $M_2$ of $\Sigma$,  if there is a mapping $m$
from $\smallsorts(M_1)$ to $\smallsorts(M_2)$ that preserves all
relations over $\smallsorts$, then $m$ extends to a unique mapping
from $\bigsorts(M_1)$ to $\bigsorts(M_2)$ which
preserves all relations over $\bigsorts$.

\medskip

Informally, implicit interpretability states  that
the sorts in $\bigsorts$ are semantically determined by
the sorts in $\smallsorts$. The property implies in particular  that
if $M_1$ and $M_2$ agree on the interpretation of sorts
in $\smallsorts$, then the identity mapping on sorts in $\smallsorts$
extends to a mapping that preserves sorts in $\bigsorts$.

We relate this semantic property to a syntactic one.
We say that $\bigsorts$ is \emph{explicitly interpretable}
over $\smallsorts$ relative to $\Sigma$ if
for all $\sort$ in $\bigsorts$ there is a formula $\psi_\sort(\vec x, y)$
where $\vec x$ are variables with sorts in $\smallsorts$, $y$ a variable of
sort $\bigsorts$, such that:

\begin{compactitem}
\item In any model $M$ of $\Sigma$, $\psi_\sort$ defines a partial function 
$F_\sort$
mapping  $\smallsorts$ tuples on to $\sort$.
\item For every relation $R$ of arity $n$ over $\bigsorts$, there is
a formula $\psi_R(\vec x_1, \ldots \vec x_n)$ using
only relations over $\smallsorts$ and
only quantification over $\smallsorts$ such that
in any model $M$ of $\Sigma$, the  pre-image of $R$
under the mappings $F_\sort$ for the different arguments
of $R$ is  defined by $\psi_R(\vec x_1, \ldots \vec x_n)$.
\end{compactitem}

Explicit interpretability states that there is an  interpretation in
the sense of the previous section that produces the structure in $\bigsorts$
from the structure in $\smallsorts$, and in addition  there is a definable
 relationship between
an element $e$ of a sort in $\bigsorts$ and the tuple that codes $e$ in the interpretation.
Note that $\psi_\sort$, the mapping between the elements $y$ in $\sort$  and
the tuples in $\smallsorts$ that interpret them, can use arbitrary relations.
The key property is that when we pull a relation $R$ over $\bigsorts$
back using the mappings $\psi_\sort$, then we obtain something definable
using $\smallsorts$.

With these definitions in hand,
we are  ready to state a result in  multi-sorted logic which
allows us to generate interpretations from classical
proofs of functionality:

\begin{theorem} \label{thm:general}
For any $\Sigma, \smallsorts, \bigsorts$ such that $\Sigma$ entails that a sort
of $\smallsorts$ has at least two elements, $\bigsorts$ is explicitly
interpretable over $\smallsorts$ if and only if it is implicitly interpretable
over $\smallsorts$.
\end{theorem}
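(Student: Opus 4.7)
The plan is to prove the two directions separately, with the explicit $\Rightarrow$ implicit direction being routine and the converse being the substantive content, proved by a Beth-style argument adapted to the multi-sorted setting following the approach sketched in \cite{madarasz}.

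For the easy direction, I would argue that given explicit interpretability, every relation over $\bigsorts$ is pulled back via the coding formulas $\psi_\sort$ to a formula over $\smallsorts$. So in any model $M$ of $\Sigma$, the entire $\bigsorts$-structure is recovered (up to a partial equivalence induced by the codes $\psi_\sort$) from the $\smallsorts$-structure. A homomorphism $m$ on $\smallsorts(M_1) \to \smallsorts(M_2)$ therefore extends canonically: send a $\bigsorts$-element $e$ with code $\vec a$ to the unique element of $M_2$ with code $m(\vec a)$. Well-definedness and uniqueness follow from the functional nature of $\psi_\sort$.

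For the hard direction, the first main task is to produce, for each sort $\sort \in \bigsorts$, a partial function $\psi_\sort(\vec x, y)$ from tuples over $\smallsorts$ onto $\sort$. I would take two disjoint copies of the signature sharing only the $\smallsorts$ part --- call them $\sig_1$ and $\sig_2$ --- and consider $\Sigma_1 \cup \Sigma_2$. Implicit interpretability, applied with $M_1 = M_2$ restricted appropriately, forces a canonical bijection between $\sort^{\sig_1}$ and $\sort^{\sig_2}$ over the common $\smallsorts$ part. Then for any formula $\chi(y_1)$ over $\sig_1$ defining a singleton, the ``image'' in $\sig_2$ is definable; by a compactness argument, every element of $\sort$ in every model is identified by some $\deltazero$-formula with parameters in $\smallsorts$ (that is, some $\psi(\vec x, y)$ with $\vec x$ ranging over $\smallsorts$). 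The non-triviality assumption on $\smallsorts$ (some sort has at least two elements) is used here to ensure that distinct elements of $\sort$ can be distinguished by tuples over $\smallsorts$, and to allow padding of tuples up to a uniform arity. A standard compactness/Craig-interpolation argument then consolidates the family of local formulas into a single global coding formula $\psi_\sort(\vec x, y)$.

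Once the coding formulas are in place, each $n$-ary relation $R$ over $\bigsorts$ can be pulled back: consider $\psi_R(\vec x_1, \ldots, \vec x_n) \equiv \exists y_1 \ldots y_n\; (\bigwedge_i \psi_{\sort_i}(\vec x_i, y_i) \wedge R(y_1, \ldots, y_n))$. To show this is equivalent to a formula using only $\smallsorts$-relations and $\smallsorts$-quantification, I would apply Craig interpolation once more: the pullback is preserved by every $\smallsorts$-homomorphism between models of $\Sigma$ (by implicit interpretability), which by a preservation theorem yields a definition using only $\smallsorts$ vocabulary.

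The main obstacle I expect is Step~2: showing that the collection of ``local'' coding formulas can be compressed into a single global $\psi_\sort$ with uniformly bounded arity. This is where the at-least-two-elements hypothesis does real work --- it allows one to encode finite disjunctions of formulas of different arities by padding tuples with distinguished pairs of elements acting as a ``tag''. Degenerate cases where $\smallsorts$ has only singleton sorts in every model would collapse the coding and must be ruled out, explaining the need for the hypothesis. A secondary subtlety is that our notion of implicit interpretability speaks of preservation of relations rather than isomorphism, so care is needed in Step~2 to avoid assuming bijections prematurely; one must argue that the ``unique extension'' clause already furnishes bijectivity between $\bigsorts$-parts.
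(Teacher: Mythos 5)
Your easy direction is fine and matches the paper. The hard direction, however, has a genuine gap at its central step. The two-disjoint-copies-plus-interpolation strategy is the standard proof of Beth's theorem for an implicitly defined \emph{relation} on a fixed domain: there one writes down a first-order entailment $\Sigma_1 \wedge \Sigma_2 \models R_1(\vec x) \Rightarrow R_2(\vec x)$ and interpolates. Here the implicitly defined objects are new \emph{sorts with new elements}. In the doubled signature the two copies of a sort $\sort \in \bigsorts$ are disjoint sorts with no relation or function symbols connecting them, so the statement ``there is a canonical bijection between $\sort^{\sig_1}$ and $\sort^{\sig_2}$ over the common $\smallsorts$-part'' is not a first-order consequence of $\Sigma_1 \cup \Sigma_2$ — it asserts the existence of an isomorphism, which is second-order. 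There is therefore no provable entailment to which Craig interpolation can be applied, and your claim that ``a compactness argument'' then identifies every element of $\sort$ by an $\smallsorts$-parametrized formula is precisely the assertion that needs proof, not a consequence of the setup. (This is exactly why the paper frames the result as a generalization of Beth to implicitly definable \emph{structures with new elements}, following Hodges and Madar\'asz, rather than as an instance of the relational Beth theorem.)

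The paper's route is different and fills this hole with model-theoretic machinery: it first reduces to the case where $\Sigma$ is a complete theory (a step absent from your outline and needed for what follows), then proves element-wise definability in two stages — the type of any $\bigsorts$-tuple over $\smallsorts$-parameters must be \emph{principal}, since otherwise the omitting types theorem produces a model with the same $\smallsorts$-part in which the type is not realized, contradicting implicit interpretability; and then, in a countable model obtained via L\"owenheim--Skolem, isolation yields a back-and-forth system whose induced automorphism fixes $\smallsorts$ pointwise while moving a non-defined element, again a contradiction. Your Step~2 intuition about the at-least-two-elements hypothesis is correct (it is used, via compactness, to tag and merge finitely many local defining formulas into one per sort), and your final pullback $\psi_R$ matches the paper's; but the reduction of that pullback to a pure $\smallsorts$-formula is also proved by a back-and-forth automorphism argument rather than by a homomorphism-preservation theorem — preservation under homomorphisms would in any case yield the wrong syntactic class. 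As written, the proposal does not contain a workable substitute for the omitting-types/back-and-forth core of the argument.
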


This can be thought of as an analog of Beth's theorem \cite{beth,craig57beth} for multi-sorted logic.
The proof is sketched in the next subsection.
For now we explain how it implies Theorem \ref{thm:bethinterp}.
In this explanation we assume a monadic schema for both input and output. Thus every element $e$
in an instance  has sort $\ursort_n$ for some $n \in \bbN$.

Consider a $\deltazero$ formula $\Sigma(\inobj, \outobj)$ over a monadic schema
that implicitly defines
$\outobj$ as a function of  $\inobj$.
$\Sigma$ can be considered as a multi-sorted first-order formula with  sorts for
every subtype occurrence of the input as well as distinct sorts for every subtype 
occurrence of the output other than
$\ursort$.
Because we are dealing with  monadic input and output schema, every sort other than
$\ursort$ will be of the form $\sett(T)$, and these sorts
have only the element relations $\in_T$ connecting them.  We refer
to these as \emph{input sorts} and \emph{output sorts}. We modify $\Sigma$ by asserting
that all elements of  the input sorts lie underneath $\inobj$, and
all elements of the output sorts lie underneath $\outobj$, where an element $e$ is said to lies
underneath an element $e'$ if there is a chain $e=e_1 \in \ldots \in e_n=e'$.
Since $\Sigma$
was $\deltazero$, this does not change the semantics.
We also conjoin to $\Sigma$ the sanity axioms for the schema, including
 the extensionality axiom at the sorts corresponding
to each object type. Let $\Sigma^*$ be the resulting
formula. In this transformation, as was the case with interpretations,
we  change our perspective on inputs and outputs, considering
them as constants rather than as free variables. We do this only 
to match our result in multi-sorted logic, which  deals with
a set of \emph{sentences} in multi-sorted first-order logic, rather than formulas with
free variables.

Given models $M$ and $M'$ of $\Sigma^*$, we define relations $\equiv_i$ connecting
elements of $M$ of depth $i$ with elements of $M'$ of depth $i$. For $i=0$,
$\equiv_i$ is the identity: that is, it connects elements of $\ursort$ if and only if they
are identical. For $i=j+1$, $\equiv_i(x,x')$ holds exactly when for every $y \in x$
there is $y' \in x'$ such that $y \equiv_j y'$, and vice versa.

The fact that $\Sigma$ implicitly defines $\outobj$ as a  function of $\inobj$ tells us that:

\medskip 

Suppose $M \models \Sigma^*$, $M' \models \Sigma^*$ and $M$ and $M'$ are identical on the input sorts.
Then the mapping $m$ taking a $y \in M$ of depth $i$ to a $y' \in M'$ such that $y' \equiv_i y$ is
 an isomorphism of the output sorts that is the identity on $\ursort$. 
Further, any isomorphism  of $\bigsorts(M)$ on to $\bigsorts(M')$ that is the identity 
on $\ursort$ must be equal to $m$: one can show this by induction on the depth $i$ using
the fact that $\Sigma^*$ includes the extensionality axiom.

\medskip

From this, we see that the output sorts are implicitly interpretable over the input sorts
relative to $\Sigma^*$.
Using Theorem \ref{thm:general}, we conclude that the output sorts are explicitly interpretable
in the input sorts relative to $\Sigma^*$.
Applying the conclusion to the formula $x=x$, where $x$ is a
variable of a sort corresponding
to object type $T$ of the
output, we obtain  a first-order formula $\phi^T_{\domainof}(\vec x)$ over the input sorts.
Applying the conclusion to the formula $x=y$ for $x, y$ variables
corresponding to the object type $T$ we get a formula
$\phi_{\equiv_T}(\vec x, \vec x')$ over the input sorts. Finally applying
the conclusion
to the element relation $\epsilon_T$ at every level of the output,  we get a first-order
formula $\phi_{\epsilon_T}(\vec x, \vec x')$ over the input sorts. 
Because $\Sigma^*$ asserts that each element of the input sorts lies beneath a constant for $\inobj$, we can
convert all quantifiers to bind only beneath $\inobj$, giving us $\deltazero$ formulas.
It is easy to verify that these formulas give us the desired interpretation.
This completes the proof of Theorem \ref{thm:bethinterp}, assuming Theorem \ref{thm:general}.


\subsection{Proof of the multi-sorted logic result} \label{subsec:proofmultisorted}
In the previous subsection we reduced our goal result about generating interpretations
from proofs to  a result in multi-sorted first-order logic, Theorem \ref{thm:general}.
We will sketch the proof of Theorem \ref{thm:general}.  
The direction from explicit interpretability to
implicit interpretability is straightforward, so we will be
interested only in the  direction from implicit to explicit.
Although the theorem appears to be new,
each of the components is a variant of arguments that already appear in the model theory literature.

In the body of the paper we make use of  only quite basic results from model theory:
\begin{compactitem}
\item the \emph{compactness theorem} for first-order logic, which states
that for any  theory $\Gamma$, if every finite subcollection of $\Gamma$
is satisfiable, then $\Gamma$ is satisfiable;
\item
the \emph{downward Lowenheim-Skolem theorem}, which
states that if $\Gamma$ is countable and has a model, then
it has a countable model;
\item the \emph{omitting types theorem} for first-order logic.
A first-order theory $\Sigma$ is
said to be \emph{complete} if for
every other first-order sentence $\phi$ in the vocabulary of $\Sigma$, either $\phi$ or $\neg \phi$
is entailed by $\Sigma$. 
Given a set of constants $B$, a \emph{type} over
$B$ is an infinite collection $\tau(\vec x)$ of formulas using
variables $\vec x$ and constants $B$.
A type is \emph{complete} with respect to a theory
$\Sigma$ if every first-order formula  with
variables in $\vec x$ and constants from $B$ is either entailed
or contradicted by $\tau(\vec x)$ and $\Sigma$.
 A type $\tau$ is said to be \emph{realized} in 
a model $M$ if there is a $\vec x_0$ in $M$ satisfying all formulas in
$\tau$. $\tau$ is \emph{non-principal} (with respect to
a first-order theory $\Sigma$)
 if
there is no formula $\gamma_0(\vec x)$ such that $\Sigma  \wedge \gamma_0(\vec x)$
entails all of $\tau(\vec x)$. 
The version of the omitting types theorem that we will
use states  that:

\medskip
 if we have a countable set $\Gamma$ of complete types that are all non-principal relative to a complete
theory $\Sigma$, there
is some model $M$ of $\Sigma$ in which  none of the types in $\Gamma$ are realized.
\end{compactitem}
Each of these results follows from a standard
model construction technique \cite{hodgesbook}. 

We 
 can easily show that to prove the multi-sorted
result, it suffices to consider $\Sigma$ that
is a complete theory.
\begin{proposition} \label{prop:assumecomplete} Theorem \ref{thm:general}
follows from its restriction to $\Sigma$ a complete theory.
\end{proposition}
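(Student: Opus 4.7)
The plan is to proceed by a standard compactness-plus-combination argument applied to the Stone space of completions. Assume the conclusion of Theorem~\ref{thm:general} holds for every complete theory, and let $\Sigma$ be an arbitrary theory (satisfying the hypothesis about $\smallsorts$ having two elements) for which $\bigsorts$ is implicitly interpretable over $\smallsorts$. Since implicit interpretability is a universal statement about pairs of models of $\Sigma$, it is inherited by every complete extension $T \supseteq \Sigma$. Hence, by our assumption, each such $T$ admits an explicit interpretation, witnessed by some tuple of formulas $\psi^T = (\psi^T_\sort, \psi^T_R)$.

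Next I would reduce to finitely many candidates by compactness. Let $\Psi$ be the (countable) family of all syntactically admissible candidate tuples $\psi$, and let $\chi_\psi$ be the first-order sentence asserting ``$\psi$ is an explicit interpretation'' (namely the conjunction of functionality of $\psi_\sort$, surjectivity onto $\sort$, and the pullback equivalence for each relation $R$). By the previous paragraph, every completion of $\Sigma$ satisfies some $\chi_\psi$, so $\Sigma \cup \{\neg \chi_\psi : \psi \in \Psi\}$ is unsatisfiable. Compactness yields finitely many $\psi_1, \ldots, \psi_n \in \Psi$ with $\Sigma \models \chi_{\psi_1} \vee \cdots \vee \chi_{\psi_n}$.

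I would then combine these via a case split. Set $\alpha_i := \chi_{\psi_i} \wedge \bigwedge_{j < i} \neg \chi_{\psi_j}$, so that in any model of $\Sigma$ exactly one $\alpha_i$ holds, and define
\[
\psi^*_\sort(\vec x, y) := \bigvee_{i=1}^n \bigl(\alpha_i \wedge \psi^i_\sort(\vec x, y)\bigr),
\qquad
\psi^*_R(\vec x_1, \ldots, \vec x_n) := \bigvee_{i=1}^n \bigl(\alpha_i \wedge \psi^i_R(\vec x_1, \ldots, \vec x_n)\bigr).
\]
The key obstacle is that the syntactic requirement on an explicit interpretation forces each $\psi^*_R$ to mention only relations and quantifiers over $\smallsorts$, whereas the raw sentence $\chi_{\psi_i}$ quantifies over sorts in $\bigsorts$. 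To remove this defect I would show that modulo $\Sigma$ each $\chi_{\psi_i}$ is equivalent to a $\smallsorts$-sentence $\alpha_i'$, and then substitute $\alpha_i'$ for $\alpha_i$ above. This equivalence rests on the claim that any two completions of $\Sigma$ which agree on all $\smallsorts$-sentences coincide: given such completions, take models, pass to elementarily equivalent $\smallsorts$-reducts, apply Keisler--Shelah (or a Löwenheim--Skolem plus back-and-forth argument using the hypothesis that a sort of $\smallsorts$ has at least two elements) to produce an isomorphism of the $\smallsorts$-reducts of suitable ultrapowers, and extend it uniquely to a full isomorphism by implicit interpretability; this forces the two completions to be identical. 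Equivalently, the clopen $[\chi_{\psi_i}]$ in the Stone space of completions of $\Sigma$ is the preimage of a clopen in the $\smallsorts$-Stone space, and compactness supplies the single $\smallsorts$-sentence $\alpha_i'$.

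The verification that the combined $\psi^*$ satisfies the definition of explicit interpretation is then routine case analysis: in any model of $\Sigma$, exactly one $\alpha_i'$ holds and the corresponding $\psi^i$ is known to satisfy functionality, surjectivity, and the pullback equivalences, so $\psi^*$ inherits these properties. The main obstacle is the $\smallsorts$-reformulation step; the rest is bookkeeping on the Stone space and disjunctive combination of formulas.
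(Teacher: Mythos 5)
Your argument is correct in outline but takes a genuinely different route from the paper's. The paper argues contrapositively: assuming $\Sigma$ is implicitly but not explicitly interpretable, it shows that for each sentence $\rho$ one of $\Sigma\cup\{\rho\}$, $\Sigma\cup\{\neg\rho\}$ must still fail to be explicitly interpretable (otherwise the two witnessing interpretations could be glued), and builds a bad completion sentence by sentence; the gluing condition is converted into an $\smallsorts$-sentence by an appeal to the classical Beth definability theorem for an auxiliary theory. You instead work directly: you note that ``$\psi$ is an explicit interpretation'' is a first-order sentence $\chi_\psi$, cover the Stone space of completions of $\Sigma$ by finitely many clopens $[\chi_{\psi_i}]$ via compactness, and glue the finitely many interpretations along a disjointified case split. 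Your mechanism for making the case conditions $\smallsorts$-sentences is also different: injectivity of the restriction map on completions (Keisler--Shelah plus implicit interpretability, with uniqueness forcing the extended map to be an isomorphism), followed by clopen separation of the disjoint closed images. Both steps check out, and your route avoids the transfinite completion construction at the cost of the heavier injectivity lemma, where the paper gets by with plain Beth definability. One caveat: your parenthetical alternative to Keisler--Shelah---L\"owenheim--Skolem plus back-and-forth---does not work as stated, since elementarily equivalent countable $\smallsorts$-reducts need not be isomorphic, and the two-element hypothesis plays no role in that step; you would need saturated or special models (or the paper's isolation-of-types machinery) to run a back-and-forth. Since Keisler--Shelah is your primary route, this does not affect correctness.
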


Recall that our assumption is that  $\Sigma$ yields a function from
$\inobj$ to $\outobj$. Our next step will be to show that the output of this
function is always ``sub-definable'': each element in the output is definable from
the input if we allow ourselves to guess  some parameters. For example, consider
the grouping transformation mentioned in Example \ref{ex:invertibleexp}
and Example \ref{ex:nrc}. Each output is obtained from
grouping input relation $F$ over some Ur-element $a$. So each member of the output is definable
from the input constant $F$ and a ``guessed'' input element $a$.
We will show that this is true in general.

Given a model $M$ of $\Sigma$ and $\vec x_0 \in \bigsorts$ within $M$, the
\emph{type of $\vec x_0$ with parameters
from $\smallsorts$} is the set of all formulas satisfied by 
$\vec x_0$, using any sorts and relations but only constants from $\smallsorts$.

A type $p$ is  \emph{isolated over $\smallsorts$} if
there is a formula $\phi(\vec x, \vec a)$ with parameters $\vec a$
from $\smallsorts$ such that  $M \models \phi(\vec x, \vec a) \rightarrow \gamma(\vec x)$
for each $\gamma \in p$.
The following is a step towards showing that elements in the output are well-behaved:

\begin{lemma} \label{lem:isolation}
Suppose $\bigsorts$ is implicitly interpretable over $\smallsort$ with respect to
$\Sigma$.
Then in any model $M$ of $\Sigma$ the type of any $\vec b$ over $\bigsorts$ with
parameters from $\smallsorts$ is isolated over $\smallsorts$.
\end{lemma}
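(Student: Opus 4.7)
The plan is to argue by contradiction via the omitting types theorem. Assume that $p = \text{tp}(\vec b / \smallsorts(M))$ is not isolated over $\smallsorts(M)$. I would first apply downward Löwenheim--Skolem to replace $M$ by a countable elementary substructure $M_0 \preceq M$ containing $\vec b$, then form $T^\star = \text{Th}(M_0, a)_{a \in \smallsorts(M_0)}$, the complete theory of $M_0$ in the language expanded by a fresh constant $c_a$ for each $a \in \smallsorts(M_0)$. Completeness of $T^\star$ translates non-isolation of $p$ in $M_0$ directly into non-principality of $p$ as a complete type of $T^\star$.

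To ensure that a model produced by omitting types has a $\smallsorts$-part matching $\smallsorts(M_0)$ exactly, I also introduce, for each sort $\sort \in \smallsorts$ that is infinite in $M_0$, the auxiliary type $q_\sort(x) = \{x \neq c_a : a \in \smallsorts(M_0) \text{ of sort } \sort\}$ asserting the existence of a \emph{new} element of sort $\sort$. Each such $q_\sort$ is consistent with $T^\star$ by compactness but non-principal: any consistent isolator $\psi(x)$ would, by completeness of $T^\star$, be realized in $M_0$, but that realization would be a named constant $c_a$, yielding the contradiction $\psi(c_a) \wedge c_a \neq c_a$. When the $\sort$-part is finite in $M_0$, completeness of $T^\star$ forces the same cardinality in every model, rendering $q_\sort$ inconsistent and automatically omitted.

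I then invoke the omitting types theorem on the countable complete theory $T^\star$ and the countable family $\{p\} \cup \{q_\sort\}_{\sort \in \smallsorts}$ of non-principal types, obtaining a countable $N \models T^\star$ omitting all of them. Since no $q_\sort$ is realized, the interpretation of constants yields a bijection $c : \smallsorts(M_0) \to \smallsorts(N)$, which by completeness of $T^\star$ is an elementary isomorphism of the $\smallsorts$-parts. Applying implicit interpretability to $c$ and to $c^{-1}$, and using the uniqueness clause for the identities on $\smallsorts(M_0)$ and on $\smallsorts(N)$, I obtain relation-preserving maps between $\bigsorts(M_0)$ and $\bigsorts(N)$ that are mutually inverse and combine with $c$ into a $\sig$-isomorphism $\hat c$. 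Since isomorphisms preserve every first-order formula, $\hat c(\vec b)$ realizes $p$ in $N$, contradicting that $N$ omits $p$.

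The main obstacle is engineering the omitting types construction so that $c$ is surjective rather than merely injective: without the auxiliary types $q_\sort$, implicit interpretability would produce only a relation-preserving embedding $\hat c : \bigsorts(M_0) \to \bigsorts(N)$, which preserves atomic formulas but not quantifier-alternations over $\bigsorts$. Such an embedding would guarantee only that $\hat c(\vec b)$ realizes the atomic part of $p$ in $N$, which is too weak to contradict the omission of the full type $p$. Forcing $c$ to be a bijection is precisely what lets the uniqueness of the implicit extension upgrade the embedding to a genuine isomorphism, automatically delivering elementarity.
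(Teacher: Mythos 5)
Your proof is correct and follows the same route as the paper's own (very terse) argument: derive a contradiction with implicit interpretability by using the omitting types theorem to produce a model whose $\smallsorts$-part matches that of $M$ but which omits the type of $\vec b$. The details you supply beyond the paper's four-line sketch --- the L\"owenheim--Skolem reduction to a countable $M_0$ (where one should still check that non-isolation over $\smallsorts(M)$ passes down to non-isolation over $\smallsorts(M_0)$, a routine elementarity argument), the auxiliary types $q_\sort$ forcing the constants to enumerate all of $\smallsorts(N)$, and the two-sided application of implicit interpretability together with the uniqueness clause to upgrade the extension to a genuine isomorphism --- are exactly what is needed to make that sketch precise, the only caveat being that your $q_\sort$ are not complete types, so you are invoking the standard omitting types theorem for arbitrary non-principal types rather than the restricted version for complete types stated in the paper.
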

\begin{proof}
Fix a counterexample $\vec b$, and let $\Gamma$ be the set of formulas in $\bigsorts$ with constants
from $\smallsorts$ satisfied by $\vec b$ in $M$.
We claim that there is a model $M'$ with $\smallsorts(M')$ identical to $\smallsorts(M)$
where there is no tuple  satisfying  $\Gamma$. This follows from the failure of
isolation and 
the omitting types theorem.

Now we have a contradiction of implicit interpretability, since the identity
mapping on $\smallsorts$ cannot extend to an isomorphism of relations over
$\bigsorts$ from $M$ to $M'$.
\end{proof}

The next step is to argue that every element of $\bigsorts$ is definable by a formula using
parameters from
$\smallsorts$. 
\myeat{This property, coupled with the ``reduction property'' defined further below, is referred to as \emph{coordinatisation} \cite{hodgesdugald,hodgesbook}, since it states that every element in
$\bigsorts$ is identified by ``co-ordinates'' from $\smallsorts$.
}

\begin{lemma} \label{lem:nonuniformdefinability}
Assume implicit interpretability of $\bigsorts$ over $\smallsorts$ relative to $\Sigma$.
In any model $M$ of $\Sigma$, for  every element $e$ of a sort $\bigsort$ in $\bigsorts$,
there is a first-order formula $\psi_e(\vec y, x)$  with variables $\vec y$
having sort in $\smallsorts$ and $x$ a variable of sort $\bigsort$,
along with a tuple $\vec a$ in $\smallsorts(M)$
such that $\psi_e(\vec a, x)$ is satisfied only by $e$ in $M$.
\end{lemma}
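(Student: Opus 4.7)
The plan is to deduce this definability statement from Lemma~\ref{lem:isolation} together with a saturation argument that leverages implicit interpretability to rule out ``ambiguity'' in the isolating formula.

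First I would invoke Lemma~\ref{lem:isolation} to extract a formula $\phi^*(x,\vec a)$, with $\vec a$ in $\smallsorts(M)$, that isolates the complete type $p(x)$ of $e$ over $\smallsorts(M)$.  The goal then reduces to showing that $\phi^*(x,\vec a)$ has $e$ as its \emph{unique} realization in $M$, since we can then take $\psi_e := \phi^*$.  Suppose toward a contradiction that $\phi^*(x,\vec a)$ admits a second realization $e' \neq e$ in $M$.  By isolation, $e$ and $e'$ share the same complete type over $\smallsorts(M)$ in $M$.

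I would then pass to a $\kappa$-saturated elementary extension $N \succ M$ with $\kappa > |\smallsorts(M)|$; standard results guarantee that such an $N$ is $|\smallsorts(M)|^+$-homogeneous.  By homogeneity, the partial elementary map fixing $\smallsorts(M)$ pointwise and sending $e \mapsto e'$ extends to a full automorphism $\alpha$ of $N$.  The idea is now to use $\alpha$ to contradict implicit interpretability.  Taking $M_1 = M_2 = N$, the restriction $m := \alpha\!\restriction_{\smallsorts(N)}$ is a relation-preserving bijection of $\smallsorts(N)$, hence admissible as input for implicit interpretability; by that hypothesis $m$ extends uniquely to a relation-preserving map on $\bigsorts(N)$.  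One such extension is $\alpha\!\restriction_{\bigsorts(N)}$, which sends $e \mapsto e'$.  The plan is then to produce a \emph{second} extension $\alpha'$ of $m$ that acts differently on $e$, contradicting uniqueness and forcing the initial assumption to fail.

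The main obstacle is the construction of this second extension, since a naive swap of $e$ and $e'$ need not preserve relations over $\bigsorts$, and elements sharing a type over $\smallsorts(M)$ need not share one over $\smallsorts(N)$.  The plan is to iterate the homogeneity argument, exploiting the hypothesis of Theorem~\ref{thm:general} that some sort of $\smallsorts$ contains at least two elements.  That hypothesis furnishes an auxiliary $\smallsorts$-parameter that can be moved by an automorphism fixing $\smallsorts(M)$ while acting non-trivially on $\smallsorts(N)$; by composing $\alpha$ with an automorphism arising from such an alternative choice of parameter one produces the required $\alpha'$ with $\alpha'\!\restriction_{\smallsorts(N)} = m$ but $\alpha'(e) \neq \alpha(e)$.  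Equivalently, one shows directly that $N$ carries a non-trivial automorphism $\sigma := \alpha^{-1} \circ \alpha'$ fixing $\smallsorts(N)$ pointwise, which already contradicts implicit interpretability applied to the identity on $\smallsorts(N)$.

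Once the contradiction is in hand, $\phi^*(x,\vec a)$ must have $e$ as its sole realization in $M$, so $\psi_e := \phi^*$ witnesses the lemma.  The delicate piece to carry out rigorously is the joint use of saturation at the two parameter sets $\smallsorts(M)$ and $\smallsorts(N)$: Lemma~\ref{lem:isolation} controls types over $\smallsorts(M)$ while implicit interpretability most naturally gives rigidity information over $\smallsorts(N)$, and bridging these requires careful use of the two-element hypothesis.
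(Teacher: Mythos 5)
Your opening move matches the paper's: invoke Lemma~\ref{lem:isolation} to isolate the type of $e$ over $\smallsorts$-parameters, assume a second realization $e'$, and aim to contradict implicit interpretability by producing an automorphism that fixes $\smallsorts$ pointwise while moving $e$. But the route you take to build that automorphism has a genuine gap, and you have in fact located it yourself: passing to a saturated, homogeneous $N \succ M$ only yields an automorphism $\alpha$ fixing $\smallsorts(M)$ pointwise, whose restriction $m$ to $\smallsorts(N)$ is a non-trivial permutation. Implicit interpretability then tells you $m$ has a \emph{unique} extension to $\bigsorts(N)$, and $\alpha$ itself provides one; to get a contradiction you would need a second extension, and nothing in your argument produces it. The appeal to the two-element hypothesis of Theorem~\ref{thm:general} does not help here: in the paper that hypothesis is consumed only in Lemma~\ref{lem:uniformdefinability}, to glue finitely many partial defining functions into a single formula, and it gives no leverage for manufacturing a second extension of $m$ or an automorphism of $N$ fixing $\smallsorts(N)$ pointwise. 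The underlying obstruction is exactly the one you name: $e$ and $e'$ share a type over $\smallsorts(M)$ but there is no reason they share one over $\smallsorts(N)$.

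The paper resolves this by going \emph{down} rather than up: by L\"owenheim--Skolem it suffices to treat $M$ countable, and then one exploits that Lemma~\ref{lem:isolation} isolates the type of \emph{every} tuple over $\smallsorts$-parameters, not just that of $e$. This makes the relation ``$\vec d \equiv \vec d'$ iff $\vec d$ and $\vec d'$ satisfy the same formulas with parameters from $\smallsorts(M)$'' a back-and-forth system: given $\vec d \equiv \vec d'$ and any $\vec e$, the formula isolating the type of $\vec d\,\vec e$ transfers along $\equiv$ to yield $\vec e'$ with $\vec d\,\vec e \equiv \vec d'\,\vec e'$. A countable back-and-forth then builds an automorphism of $M$ itself that is the identity on $\smallsorts(M)$ and sends $e$ to $e'$, contradicting the uniqueness clause of implicit interpretability applied to the identity map on $\smallsorts(M)$. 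If you want to salvage your argument, the fix is to replace the saturation step with this countable back-and-forth; as written, the construction of $\alpha'$ (equivalently of $\sigma$) is missing and the proof does not go through.
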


\myeat{
In a single-sorted setting, this can be found in \cite{hodgesbook} Theorem 12.5.8 where
it is referred to as ``Gaifman's coordinatisation theorem'', credited independently to unpublished work
of Haim Gaifman and Dale Myers.
The multi-sorted version is also a variant of Remark 1.2, part 4 in \cite{groupoids}, which points to a proof
in the appendix of \cite{modeltheorydiff};
the remark assumes that $\bigsorts$ is the set of all sorts. Another variation is Theorem 3.3.4 of
\cite{madarasz}.
}
\begin{proof}
Since a counterexample involves only formulas in a countable
language, by the Lowenheim-Skolem theorem mentioned
above, it is enough to consider the case where $M$ is countable.
By Lemma \ref{lem:isolation}, the type of every $e$ is isolated by
a formula $\phi(\vec x, \vec a)$ with parameters from $\smallsorts$
and relations from $\bigsorts$. We claim that $\phi$  defines $e$:
that is, $e$ is the only satisfier. If not, then there is $e' \neq e$ that satisfies
$\phi$.   Consider the relation $\vec e \equiv \vec e'$ holding
if $\vec e$ and $\vec e'$ satisfy all the same formulas using
relations and variables from $\bigsorts$ and
parameters from $\smallsorts$. 
Isolation implies  that $e \equiv e'$. Further, isolation of types
shows that $\equiv$ has the ``back-and-forth property''
given $\vec  d \equiv \vec d'$, and $\vec e$ we can obtain $\vec e'$
with $\vec d \vec e \equiv \vec d' \vec e'$.
To see this,  fix $\vec d \equiv \vec d'$ and
consider $\vec e$. We have $\gamma(\vec x, \vec y, \vec a)$ isolating the type
of $\vec d,\vec e$, and further $\vec d$ satisfies $\exists \vec y ~ \gamma(\vec x, \vec y, \vec a)$
and thus so does $\vec d'$ with witness $\vec e'$. But then using $\vec d \equiv \vec d'$
again we see that $\vec d, \vec e \equiv \vec d', \vec e'$. Using countability of
$M$ and this property we can inductively create a mapping on $M$ fixing $\smallsorts$ pointwise,  preserving
all relations in $\bigsorts$, and taking $\vec b$ to $\vec b'$. But this contradicts
implicit interpretability.
\end{proof}

\begin{lemma} \label{lem:uniformdefinability}
The formula in Lemma \ref{lem:nonuniformdefinability} can be taken to depend only
on the sort $\sort$.
\end{lemma}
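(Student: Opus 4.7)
The plan is to argue by contradiction and compactness, using Lemma~\ref{lem:nonuniformdefinability} as the target. Suppose no single first-order formula $\psi_\sort(\vec y, x)$, with $\vec y$ of sorts in $\smallsorts$ and $x$ of sort $\sort$, serves uniformly: i.e.\ for every candidate formula there is some model of $\Sigma$ and some element of sort $\sort$ that is not defined by any instantiation of its parameters. Enumerate all candidate formulas $\psi_0, \psi_1, \ldots$, expand the signature with a fresh constant $c$ of sort $\sort$, and for each $i$ form
\[ \alpha_i \;:=\; \forall \vec y\,\bigl(\neg\,\psi_i(\vec y, c) \;\vee\; \exists x\,(x \neq c \wedge \psi_i(\vec y, x))\bigr), \]
which asserts that $c$ is not defined by $\psi_i$ under any parameter choice. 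Any model of $\Sigma \cup \{\alpha_i : i \in \omega\}$ yields an element (the interpretation of $c$) not defined by any such formula, in direct contradiction to Lemma~\ref{lem:nonuniformdefinability}. So it suffices to show $\Sigma \cup \{\alpha_i : i \in \omega\}$ is consistent.

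By compactness, this reduces to showing that, given any finite list $\psi_1, \ldots, \psi_n$, some model of $\Sigma$ contains an element of sort $\sort$ that none of these formulas defines. At this step I use the hypothesis that some sort $\sort_0 \in \smallsorts$ is forced by $\Sigma$ to have at least two elements. Introducing parameters $u, v, b_1, \ldots, b_n$ of sort $\sort_0$ and parameter tuples $\vec y_i$ matching each $\psi_i$, I form the ``merged'' candidate
\[ \Psi(u, v, \vec b, \vec y_1, \ldots, \vec y_n, x) \;:=\; u \neq v \;\wedge\; \bigvee_{i=1}^n \Bigl(b_i = u \;\wedge\; \bigwedge_{j \neq i} b_j = v \;\wedge\; \psi_i(\vec y_i, x)\Bigr). \]
Under a one-hot setting of the selector parameters $\Psi$ collapses to the corresponding $\psi_i$, and under any other setting it is identically false. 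Consequently the set of elements uniformly defined by $\Psi$ (in any given model) is exactly the union of the elements uniformly defined by the $\psi_i$. Since $\Psi$ is itself a candidate uniform definer of $\sort$, our standing assumption that no uniform definer exists forces the existence of a model $M \models \Sigma$ and an element of $\sort$ in $M$ that $\Psi$ fails to define, hence one that no $\psi_i$ defines.

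The main obstacle is ensuring that the merger $\Psi$ genuinely lives in the intended language, with all selector parameters ranging over $\smallsorts$; this is precisely why the two-element hypothesis on $\smallsorts$ is indispensable, as it provides the distinct values $u \neq v$ needed to switch between the $n$ branches of the disjunction. Once the merging step is available, compactness applied to $\Sigma \cup \{\alpha_i : i \in \omega\}$ delivers a model whose interpretation of $c$ is defined by no formula in the relevant fragment, contradicting Lemma~\ref{lem:nonuniformdefinability}. This ``merging plus compactness'' pattern is the standard route to upgrading non-uniform to uniform definability results in classical model theory, here adapted to the multi-sorted setting of Theorem~\ref{thm:general}.
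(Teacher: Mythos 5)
Your proof is correct and follows essentially the same route as the paper's: both arguments rest on compactness applied to the family of assertions ``$x$ is not uniquely defined by $\varphi$ for any parameters'' together with the selector-based merging of finitely many candidate definers, which is exactly where the hypothesis of a two-element sort in $\smallsorts$ is used. You organize it contrapositively (finite satisfiability of the $\alpha_i$ via the merged $\Psi$, then a contradiction with Lemma~\ref{lem:nonuniformdefinability}), whereas the paper runs it directly (the type $\{\neg\isnonuniformdefiner_\varphi(x)\}$ is unrealizable, so compactness yields a finite cover that is then merged into the single formula), but the mathematical content is the same.
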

\begin{proof}
Consider the type over the single variable $x$ in $\sort$
consisting of the formulas $\neg \isnonuniformdefiner_\varphi(x)$, taking $\isnonuniformdefiner_\varphi(x)$
to be defined as
\[\exists \vec b ~ [~ \varphi(\vec b, x) \wedge \forall x'~ (\varphi(\vec b, x') \Rightarrow x' = x)]\]
where the tuple $\vec b$ ranges over $\smallsorts$.
By Lemma~\ref{lem:nonuniformdefinability}, this type cannot be satisfied in a model of $\Sigma$.
Since it is unsatisfiable, by compactness, there are finitely many formulas $\varphi_1(\vec b,x) , \dots, \varphi_n(\vec b, x)$ such that $\forall x ~~ \bigvee_{i = 1}^n \isnonuniformdefiner_{\varphi_i}(x)$ is satisfied. Therefore, each $\varphi_i(\vec b, x)$ defines a partial function from tuples of $\smallsort$ to $\sort$
 and every element of $\sort$ is covered by one of the $\phi_i$.
Recall that we
 assumed  that $\Sigma$ enforces that $\smallsorts$ has a sort
with at least two elements.
Thus we can combine the $\phi_i(\vec b, x)$ into a single formula $\psi(\vec b, \vec c, x)$
defining a surjective partial function from $\smallsort$ to $\sort$
where $\vec c$ is an additional parameter in $\smallsorts$ selecting some $i \le n$.
\end{proof}


We now need to go from the ``sub-definability'' or ``element-wise definability''
result above to an interpretation.
Consider the formulas $\psi_\sort$ produced
by Lemma \ref{lem:uniformdefinability}. For a relation $R$ of arity $n$ over
$\bigsorts$,  where the $i^{th}$ argument has sort $\sort_i$, consider the formula
\[\psi_R(\vec x_1 \ldots \vec x_n) ~~=~~ \exists y_1 \ldots y_n ~ R(y_1 \ldots y_n) \wedge 
\bigwedge_i \psi_{\sort_{i}}(\vec x_i, y_i)
\]
where $\vec x_i$ is a tuple of variables of sorts in $\smallsorts$.
The formulas $\psi_\sort$ for each sort $\sort$ and the formulas $\psi_R$ for each
relation $R$ are as required by the definition of
explicitly interpretable, except that they may use quantified variables and relations
of $\bigsorts$, while we only
want to use variables and relations from $\smallsorts$. We take care of this in the following lemma,
which says that  formulas over $\bigsorts$
do not allow us to define any more subsets of $\smallsorts$ than we can
with formulas over $\smallsorts$.

\begin{lemma} \label{lem:uniformreduction} 
Under the assumption of implicit interpretability, 
for  every
formula $\phi(\vec x)$ over $\bigsorts$ with $\vec x$ variables
of sort in $\smallsorts$  there is a formula $\phi^\circ(\vec x)$ over
$\smallsorts$  -- that is, containing only variables, constants, and relations
from $\smallsorts$ -- such that for every model $M$ of $\Sigma$,
\[
M \models \forall \vec x ~ \phi(\vec x) \leftrightarrow \phi^\circ(\vec x)
\]
\end{lemma}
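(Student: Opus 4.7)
The plan is to prove this by a compactness-plus-implicit-interpretability argument, in the spirit of the classical proof of Beth's definability theorem, adapted to the multi-sorted setting. Working as in Proposition~\ref{prop:assumecomplete} I first reduce to the case where $\Sigma$ is complete. Introducing fresh constants $\vec c$ of the sorts of $\vec x$, my goal becomes: the set $\Delta = \{\chi(\vec c) : \chi\text{ is a } \smallsorts\text{-formula and } \Sigma \cup \{\phi(\vec c)\} \models \chi(\vec c)\}$ together with $\Sigma$ entails $\phi(\vec c)$. Once this is shown, compactness extracts a finite conjunction $\chi_1 \wedge \cdots \wedge \chi_n$, which is the desired $\phi^\circ$.

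To prove the entailment $\Sigma \cup \Delta \models \phi(\vec c)$, I would argue by contradiction: suppose $M_1 \models \Sigma \cup \Delta \cup \{\neg \phi(\vec c)\}$. A standard consistency argument (using compactness applied to the $\smallsorts$-elementary diagram of $M_1$ together with $\Sigma \cup \{\phi(\vec c)\}$) produces $M_2 \models \Sigma \cup \{\phi(\vec c)\}$ whose $\smallsorts$-reduct contains the $\smallsorts$-reduct of $M_1$ as an elementary substructure, fixing the $\vec c$. By passing to suitably saturated elementary extensions of both $M_1$ and $M_2$ (using the downward and upward Lowenheim--Skolem theorems), I can arrange two models $M_1^*$ and $M_2^*$ of $\Sigma$ with the \emph{same} $\smallsorts$-reduct (interpreting $\vec c$ the same way), while preserving that $M_1^* \models \neg \phi(\vec c)$ and $M_2^* \models \phi(\vec c)$.

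Now I apply the implicit interpretability hypothesis to the identity map on $\smallsorts(M_1^*) = \smallsorts(M_2^*)$, obtaining a unique extension $\tilde m : \bigsorts(M_1^*) \to \bigsorts(M_2^*)$ preserving all $\bigsorts$-relations. By applying uniqueness in the reverse direction (from $M_2^*$ to $M_1^*$) and composing, I get that $\tilde m$ is a bijection, and is thus an isomorphism of the full structures that is the identity on $\smallsorts$. Since an isomorphism preserves all first-order formulas, $\phi(\vec c)$ must have the same truth value in $M_1^*$ and $M_2^*$, contradicting the choice of these models.

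The main obstacle will be the alignment step in the middle paragraph: producing $M_1^*$ and $M_2^*$ with genuinely identical $\smallsorts$-reducts rather than merely elementarily equivalent ones. Iterated chains of elementary extensions combined with a saturated/back-and-forth construction should suffice, but the argument must be laid out carefully because the hypothesis of implicit interpretability is only available once the $\smallsorts$-reducts match on the nose. Once that alignment is accomplished, the conversion of the unique relation-preserving extension into a genuine isomorphism (and hence into an elementary map) is the conceptual crux and relies essentially on the uniqueness clause of implicit interpretability.
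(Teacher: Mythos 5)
Your proposal is correct, but it follows a genuinely different route from the paper's. The paper argues in the contrapositive inside a \emph{single} countable model: by compactness and L\"owenheim--Skolem, a failure of the lemma yields one countable model $M \models \Sigma$ containing two tuples $\vec c, \vec c'$ of $\smallsorts$-sorts that satisfy the same $\smallsorts$-formulas yet disagree on $\phi$; it then invokes the back-and-forth construction of Lemma~\ref{lem:nonuniformdefinability} (which rests on the isolation of types from Lemma~\ref{lem:isolation}, hence on the omitting types theorem) to build a map on $M$ preserving $\smallsorts$ and carrying $\vec c$ to $\vec c'$, contradicting implicit interpretability. You instead run the classical consequence-set proof of Beth's theorem: you show $\Sigma \cup \Delta \models \phi(\vec c)$ by amalgamating two models over a literally common $\smallsorts$-reduct and then exploit the \emph{uniqueness} clause of implicit interpretability to upgrade the two canonical extensions into mutually inverse, hence isomorphic, maps. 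Both arguments are sound. Your version buys independence from the omitting-types/isolation machinery and from countability, and it isolates exactly where uniqueness is used; the paper's version buys economy, since the back-and-forth lemma is already in place for the element-wise definability results and the whole argument stays inside one countable structure. Two small remarks on your delicate step: the alignment of $\smallsorts$-reducts does not actually require saturation --- the standard two-sided union of elementary chains (elementary amalgamation over the common $\smallsorts$-elementary substructure, iterated $\omega$ times) already produces $M_1^*$ and $M_2^*$ with identical $\smallsorts$-reducts --- and the reduction to complete $\Sigma$ via Proposition~\ref{prop:assumecomplete} is harmless but not needed, since your entailment argument and the amalgamation go through for arbitrary $\Sigma$ satisfying the implicit interpretability hypothesis.
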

\begin{proof}
Assume not, with $\phi$ as a counterexample. By the compactness and Lowenheim-Skolem
theorems, we know that there is a countable model
$M$ of $\Sigma$ containing $\vec c$, $\vec c'$ that agree on all formulas in $\smallsorts$
but that disagree on $\phi$. As in Lemma \ref{lem:nonuniformdefinability}, we
can obtain a mapping on $M$ preserving $\smallsorts$ but sending $\vec c$ to $\vec c'$.
This contradicts implicit interpretability, since the mapping cannot be extended.
\end{proof}


Above we obtained the formulas $\psi_R$ for each relation symbol $R$
needed for an explicit interpretation.
We can obtain formulas defining the necessary equivalence relations
$\psi_{\equiv}$ and $\psi_{\domainof}$ easily from these.
Thus, putting Lemmas \ref{lem:nonuniformdefinability}, \ref{lem:uniformdefinability},
and \ref{lem:uniformreduction} together yields  a proof
of Theorem \ref{thm:general}.



\subsection{Putting it all together}
We summarize our results on extracting $\nrcwget$  expressions  from classical proofs of
functionality.
We have shown in Subsection \ref{subsec:reduction} how to convert the problem
to one with no extra variables other than input and output
and with only monadic schemas --
and thus no use of products or tupling. We also showed how to convert the resulting
formula  into a theory in multi-sorted first-order logic. That is, we no longer
need to talk about $\deltazero$ formulas.  

In Subsection \ref{subsec:proofmultisorted} we showed that from a theory in multi-sorted
first-order logic we can obtain an interpretation. This first-order
interpretation in a multi-sorted logic can then be converted
back to a $\deltazero$ interpretation, since the background theory forces
each of the input sorts in the multi-sorted structure to
correspond to a level of nesting below one of the constants corresponding
to an input object.
Finally, the results of Section \ref{sec:fointerp} allow us to convert this interpretation
to an $\nrcwget$ expression.
With the exception of the result in multi-sorted logic, all of the constructions are effective.
Further, these effective conversions are all in
polynomial time except for the transformation from an
interpretation to an $\nrcwget$ expression, which is exponential time in the worst case.
Outside of the multi-sorted result, which makes use of infinitary methods, the conversions
are each sound when equivalence over finite input structures is considered as well as the
default case when arbitrary inputs are considered. As explained in Subsection \ref{subsec:reduction},
when equivalence over finite inputs is considered, we cannot hope to get a synthesis result
of this kind.


\section{Conclusion} \label{sec:conc}
We have provided a method taking a proof that a logical formula
defines a functional transformation and generating an expression
in a functional transformation language
that implements it.
In the process we provide a more general synthesis procedure
(Lemma \ref{lem:invar}) that  can generate expressions
interpolating between variables whenever there is a provable containment.
This connection between provably functional formulas and the functional
transformation language
$\nrc$ studied in data management and programming languages is,
to our knowledge, new and non-trivial. 
\myeat{Our effective method relies
on an intuitionistic calculus for proving containments, memberships,
and  equalities among sets. But we also show that the same results
hold for classical logic,  relying on an equivalence between
$\nrc$ expressions and interpretations based on $\deltazero$
formulas that is specific to the
setting of nested sets,  combined with  a more general equivalence in multi-sorted logic, relating  transformations that are provably functional 
and those given by interpretations.
}

We are currently working on an implementation of our effective synthesis  result
in the $\coq$ proof assistant \cite{coq}.
This involves
formalizing the proof calculus, the semantics of $\deltazero$
formulas, the syntax and semantics of $\nrc$, 
in  $\coq$, as well as
the synthesis algorithm.
In addition to giving us a verified proof, we will gain the ability to create
proofs of functionality within a $\coq$ session,
allowing us to build up tactics and definitions on top of the basic rules of the proof calculus.

An open issue is to make the classical interpolation result effective. There is an obvious
extension of our proof system that gains completeness for classical logic: we allow
multiple disjuncts in the consequence, and revise the rules in the obvious way.
For instance, the rule $\in_{\sett(T)}$-\textsc{R} would become
\begin{mathpar}
\small \inferrule*
[left={$\in_{\sett}$-R}]
{\Theta, \; t \in_{\sett(T)} v; \; \Gamma \vdash t =_{\sett(T)} u,\; t_1 \in_{T_1} u_1, \ldots,\; t_k \in_{T_k} u_k}{\Theta, \; t \in_{\sett(T)} v; \; \Gamma \vdash u \in_{\sett(T)} v,\; t_1 \in_{T_1} u_1, \ldots,\; t_k \in_{T_k} u_k}
\end{mathpar}
Theorem \ref{thm:bethinterp} shows that when we have a proof in such a system we can create
an $\nrc$ definition, and we conjecture that it is possible to do this efficiently.
In fact, we can also show
that the higher-type interpolation lemma, Lemma \ref{lem:invar}, holds for classical entailment.
Although our proof of Lemma \ref{lem:invar} is via induction on proofs,
the extension for classical entailment can be done using model-theoretic techniques,
in particular a dichotomy theorem for automorphisms stemming from work
of  Makkai \cite{makkai}. We are investigating  an extension
of our  proof system that will allow us to lift our current inductive argument
for Lemma \ref{lem:invar} to the classical setting.
We conjecture that it will lead us to an efficient procedure for extracting NRC
terms from classical functionality proofs, thereby simultaneously generalizing
Theorem~\ref{thm:betheffective} and Theorem \ref{thm:bethinterp}.

\myeat{
\michael{This well be moved}
One application of our synthesis results is to rewrite
an $\nrc$ transformation $\trans$ in terms of a given set of expressions $V_1 \ldots V_n$.
Corollary \ref{cor:nrcbeth} in this paper already implies that in looking to 
implement  
$\trans$ over $V_1 \ldots V_n$ , it suffices to look for a rewriting in $\nrc$. 
}

In addition to the application areas exhibited  in Examples \ref{ex:invertiblereturn} and \ref{ex:views},
we think that procedures for generating implementations in functional languages
from implicit definitions should have other applications in programming languages
and verification. For example they could be relevant
for generating 
 programs transforming structured data in the context of
more specialized input structures,
such as strings and trees \cite{interpmikolaj}.  

We focused here on a stripped-down setting
where at the base level we have no additional structure, but many of our
results (e.g. Theorem \ref{thm:bethinterp}) generalize in the presence of additional
axiomatizable structure on the base set.  Another important direction is to generalize
the algorithmic development (e.g. Theorem~\ref{thm:betheffective}) to incorporate
specialized decision procedures available on this additional structure.


\begin{acks}
We are very grateful to Szymon Toru\'nczyk,  who outlined a route 
to show
that implicitly definable transformations over nested relations can be defined via interpretations, 
in the process  conjecturing  a  more general result concerning
definability in multi-sorted logic. Szymon also helped in simplifying the mapping of $\nrc$ expressions to interpretations, a basic component in one of our characterizations.
We also thank Ehud Hrushovski, who sketched a proof of the Beth-style
result for multi-sorted logic that serves as another component.
His proof proceeds along very similar lines
to the one we present in this paper, but makes use of a prior Beth-style result in 
classical model
theory \cite{makkai}.
This work was funded by EPSRC grant EP/M005852/1.
\end{acks}




\bibliography{obj}

\section*{Supplementary materials}
\appendix
\section{Proofs for Section 3}
\subsection{Proof that we can obtain $\nrc$ expressions that verify $\deltazero$ formulas}
Recall that in the body of the paper, we claimed the following statement, concerning
the equivalence of $\nrc$ expressions of Boolean type and $\deltazero$ formulas:

\medskip

There is a polynomial time
function  taking a $\deltazero$ formula $\phi(\vec x)$ and producing
 an $\nrc$ expression $\verify_\phi(\vec x)$, where the expression  takes as input
$\vec x$ and returns true if and only if $\phi$ holds.

\medskip

We refer to this as the ``Verification Proposition'' later on in these supplementary
materials.

\begin{proof}
First, one should note that every term in the logic can be translated to
a suitable $\nrc$ expression of the same sort. For example, a variable in
the logic corresponds to a variable in $\nrc$. 

We prove the proposition by induction over the formula $\phi(\vec x)$.
\begin{compactitem}
\item If $\phi(\vec x)$ is an equality $t = t'$ or a membership $t \in t'$,
it is straightforward to write out $\nrc$ expressions that verify them by
simultaneous induction on the type.
For equality, the expression verifies two containments, with a containment
$t \subseteq t'$ verified as $\bigcup \{x \in t | E'(x ,t') \}$, where
$E'(x,t')$ is the expression obtained for membership inductively.

\item If $\phi(\vec x)$ is a disjunction $\phi_1(\vec x) \vee \phi_2(\vec x)$,
we take $\verify_\phi(\vec x) = \verify_{\phi_1} \cup \verify_{\phi_2}$.
We proceed similarly for disjunction thanks to $\cap$.
\item If $\phi(\vec x)$ is a negation, we use the definability of negation
in $\nrc$.
\item If $\phi(\vec x)$ begins with a bounded existential quantification $\exists z \in y \; \psi(\vec x, y, z)$,
we simply set $\verify_\phi(\vec x, y) = \bigcup \{ \verify_{\psi(\vec x, y, z)} \mid z \in y \}$.
Universal quantification is then treated similarly by using  negation in $\nrc$.
\end{compactitem}
\end{proof}

Note that the converse (without the polynomial time bound) also holds; this will follow
from the more general result on moving from $\nrc$ to interpretations
that is proven later in the supplementary materials.



\section{Proofs for Section 4: properties  of the proof system, details of the
 synthesis results}
\subsection{Strength of the proof system} 

In the body of the paper we claimed
that although our proof system does not derive
every classically valid $\deltazero$ sequent, we can
show that it derives all sequents of the shape we consider that
are constructively derivable  in the
sense of intuitionistic logic.
In this subsection we  present variants of prior intuitionistic calculi formally,
and detail the argument for their equivalence with our system.

\begin{figure}[h]
\begin{mathpar}
\inferrule* [left=\rnax] { }{\Gamma, \; \phi \vdash \phi}
\\
\inferrule* [left=\rnandL] {\Gamma, \; \phi_1, \; \phi_2 \vdash \psi}{\Gamma,\; \phi_1 \wedge \phi_2 \vdash \psi}
\and
\inferrule* [left=\rnandR] {\Gamma \vdash \phi \and
\Gamma \vdash \psi
 }{\Gamma \vdash \phi \wedge \psi}
\and
\inferrule* [left=\rntop]{ }{\Gamma \vdash \top}
\\
\inferrule* [left=\rnbot] { }{\Gamma,  \bot \vdash \phi}
\and
\inferrule* [left=\rnorL] {\Gamma, \; \phi_1 \vdash \psi \qquad
\phi_2, \Gamma \vdash \psi}{\Gamma, \; \phi_1 \vee \phi_2 \vdash \psi}
\and
\inferrule* [left=\rnorR] {\Gamma \vdash \phi_i \and i \in \{1,2\}}{\Gamma \vdash \phi_1 \vee \phi_2}
\\
\inferrule* [left=\rnimplL] {\Gamma \vdash \phi \and \Gamma, \; \psi \vdash \theta}{\Gamma, \; \phi \Rightarrow \psi \vdash \theta}
\and
\inferrule* [left=\rnimplR] {\Gamma, \; \phi \vdash \psi}{\Gamma \vdash \phi \Rightarrow \psi}
\\
\inferrule* [left=\rnallL] {\Gamma, \; \phi[t/y] \vdash \psi}{\Gamma, \; \forall y ~~ \phi \vdash \psi}
\and
\inferrule* [left=\rnallR] {\Gamma \vdash \phi \and z \notin \freevars(\Gamma))}{\Gamma \vdash \forall z \; \phi}
\\
\inferrule* [left=\rnexL] {\Gamma, \; \phi \vdash \psi \qquad x \notin \freevars(\Gamma, \psi)}{\Gamma, \; \exists x ~~\phi \vdash \psi}
\and
\inferrule* [left=\rnexR] {\Gamma \vdash \phi[t/x]}{\Gamma \vdash \exists x~~ \phi}
\\
\inferrule* [left=\rneqL] {\Gamma[s/x,t/y] \vdash \phi[s/x,t/y]}{\Gamma[t/x,s/y], \; t = s \vdash \phi[t/x,s/y]}
\and
\inferrule* [left=\rneqR] { }{\Gamma \vdash t = t}
\\
\inferrule* [left=\rnpaireta] {\Gamma[\<x_1,x_2\>/x] \vdash \phi[\<x_1,x_2\>/x] \and x_1, x_2 \notin \freevars(\Gamma,\phi)}{\Gamma \vdash \phi}
\and
\inferrule* [left=\rnpairbeta]{\Gamma[t_i/x] \vdash \phi[t_i/x] \and i \in \{1,2\}}{\Gamma[\pi_i(\<t_1,t_2\>)/x] \vdash \phi[\pi_i(\<t_1,t_2\>)/x]}
\and
\inferrule* [left=\rnuniteta]{\Gamma[()/x] \vdash \phi[()/x]}{\Gamma \vdash \phi}
\\
\end{mathpar}

\caption{The intuitionistic sequent calculus (LJ) for multi-sorted first-order logic with equality and pairs}
\label{fig:LJsequentcalculus}
\end{figure}


Let us first recall the syntax of multi-sorted first-order logic,
with equality at every sort and a predicate $- \in_T -$
for every sort $T$ representing \emph{membership}.
$$\varphi, \; \psi \bnfeq t \in_T u \bnfalt t =_T u \bnfalt \top \bnfalt \bot \bnfalt \phi \wedge \psi \bnfalt \phi \vee \psi \bnfalt \phi \Rightarrow \psi \bnfalt \forall x^T \; \phi \bnfalt \exists x^T \; \phi$$
We will deal with the case where the terms are built up using Ur-element constants,  the unit constant,
the pairing function
and the projection functions.
The intuitionistic sequent calculus we adopt for first-order logic with equality, projection, and pairing
is shown in Figure~\ref{fig:LJsequentcalculus}, with the structural rules (weakening and contraction) omitted.
It is a straightforward extension of the textbook definition of the sequent calculus
LJ for intuitionistic first-order logic
(see e.g.~\cite[Sections 7.2 and 9.3]{su06book} and~\cite[Chapter 3]{ts00book}) due to Gentzen~\cite{gentzen1935}
to accommodate our typing discipline and additional rules concerning equalities, projection and pairing.
The main technical distinction between LJ and the sequent calculus for classical logic LK is that
there is a single conclusion formula on the right, rather than a list of formulas.
This prevents one from deriving the law of excluded middle $\vdash \phi \vee \neg \phi$ for arbitrary $\phi$ in LJ. 
Note that this does not imply that the calculus is incomplete for  (translations of) the restricted sequents that
we deal with in our calculus.

The extensions of LJ to accommodate typed terms, equality, and the
projection and pairing functions are straightforward. Although
we are not aware of a source describing exactly  the  proof system above, \cite[Chapter 4]{jacobsbook}
describes an equivalent system based on natural deduction
and~\cite[Section 4.7]{ts00book} extends LJ with rules for equality without types.

In this section, we define a translation of the sequents $\Theta; \; \Gamma \vdash \phi$
of our restricted proof system into sequents $\widetilde\Theta, \; \widetilde\Gamma \vdash \widetilde\phi$ of 
the calculus displayed in Figure~\ref{fig:LJsequentcalculus},
which we refer to as LJ from now on.

As is customary for two-sided sequent calculi, rules introducing logical connectives
can be split into \emph{left}(-hand side) and \emph{right}(-hand side) rules. We make 
this distinction
in our naming conventions, using $\textsc{L}$ and $\textsc{R}$ in rule names to indicate left and right rules.
Informally speaking, a rule is left if the right-hand side formula
stays the same in the premises and the conclusion and the corresponding connective occurs
in the left-hand side of the conclusion. Right rules can be similarly characterized.
Some rules are neither right nor left. 
For LJ, these would be the axiom rule $\textsc{AX}$ and the rules $\times_\eta$, $\times_\beta$ and $\unit_\eta$.

\myeat{We will first discuss the relationship of our restricted syntax for $\deltazero$ formula
(which does not use equality predicates for sorts other than $\ur$ and restrict uses of the atomic
$\in_T$ predicate) with the more standard notion \michael{the more standard
notion of what? If you mean $\deltazero$ 
formulas, then I am not sure which is more standard. Even then I can not
see the relation between this and what comes next},
  and explain why we do not require the axiom of extensionality.
We will then argue that proofs of $\Theta; \; \Gamma \vdash \psi$ in our restricted
system to proofs of $\widetilde\Theta, \; \widetilde\Gamma \vdash \widetilde\psi$ in LJ, and that, conversely
proofs of $\widetilde\Theta, \; \widetilde\Gamma \vdash \widetilde\psi$ may be turned into proofs of $\Theta; \; \Gamma \vdash \phi$
in our restricted system.}

\myparagraph{Translation to LJ sequents}
We will need to perform some translations from the membership contexts and $\deltazero$ formulas used
in our context to the multi-sorted first-order formulas used in LJ.
$\deltazero$ formulas $\phi$ as defined in the paper can be regarded as 
a particular case of general formulas with an abbreviated syntax. Formally, for
each $\deltazero$ formula $\phi$ we have a corresponding first-order formula $\phi^*$ defined
in the usual way
$$
\begin{array}{lcl !\qquad lcl}
(t =_\ur u)^* &\eqdef& t =_\ur u &
(t \neq_\ur u)^* &\eqdef& t =_\ur u \Rightarrow \bot
\\
\top^* &\eqdef& \top &
\bot^* &\eqdef& \bot 
\\
(\phi \wedge \psi)^* &\eqdef& \phi^* \wedge \psi^* &
(\phi \vee \psi)^* &\eqdef& \phi^* \vee \psi^*
\\ 
(\forall x \in_T t ~~\phi)^* &\eqdef& \forall x^T~~ (x \in_T t \Rightarrow \phi^*) &
(\exists x \in_T t ~~\phi)^* &\eqdef& \exists x^T~~ (x \in_T t \wedge \phi^*)
\\ 
\end{array}
$$
Recall that sequents in our restricted system are of the shape
$\Theta; \; \Gamma \vdash \psi$
where $\Theta$ is a multiset of pairs of formulas $t \in_T u$,
$\Gamma$ a list of $\deltazero$ formulas and $\psi$ a special
right-hand side formula of shape either $t \in_T u$, $t \subseteq_T u$ or $t =_T u$.
Given such contexts, we write $\widetilde{\Gamma}$ for the multiset of formulas
$\{\phi^* \mid \phi \in \Gamma\}$ and $\widetilde{\Theta}$ for the multiset $\{ t \in_T u \mid  (t \in_T u) \in \Theta\}$.
As for right-hand side formulas $\psi$, we define the notation $\widetilde{\psi}$ by recursion on the type of the main connective of $\psi$ as follows:
$$
\begin{array}{rcl !\qquad rcl}
t \macroin_T u &\eqdef& \exists z'~(z' \in u \wedge t \macroeq_T z')
&
t \macrosubseteq_T u &\eqdef& \forall z ~~(z \in_T t \Rightarrow t \macroin_T u)
\\
t \macroeq_{\sett(T)} u &\eqdef& {t \macrosubseteq_T u} ~~\wedge ~~{u \macrosubseteq_T t}&
t \macroeq_\unit u &\eqdef& \top \\
\multicolumn{6}{c}{t \macroeq_{T_1 \times T_2} u ~~\eqdef~~ {\pi_1(t) \macroeq_{T_1} \pi_1(u)}~~ \wedge~~ {\pi_2(t) \macroeq_{T_2} \pi_2(u)} \qquad t \macroeq_\ursort u ~~\eqdef~~ t =_\ur u}\\
\end{array}
$$



\myparagraph{Translating proofs to LJ}
We are now ready to state the first direction concerning the equivalence between LJ and our proof system.
\begin{lemma}
\label{lem:tolj}
If $\Theta; \; \Gamma \vdash \phi$ is derivable in our restricted system, then
LJ derives $\widetilde\Theta, \; \widetilde\Gamma \vdash \widetilde\phi$.
\end{lemma}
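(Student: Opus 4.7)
The plan is to proceed by structural induction on the derivation of $\Theta; \; \Gamma \vdash \phi$ in our restricted proof system, doing a case analysis on the last rule applied and exhibiting an LJ derivation of the translated sequent $\widetilde\Theta, \; \widetilde\Gamma \vdash \widetilde\phi$ in each case. Before starting the case analysis I would first establish two lightweight preparatory facts inside LJ that will be re-used throughout: (i) reflexivity and symmetry of the macro $\macroeq_T$ by induction on $T$ (using LJ's $\rneqR$ at sort $\ursort$, $\rnandR$ for products, the trivial case at $\unit$, and a book-keeping argument at $\sett$); and (ii) admissibility of the rules of Figure~\ref{fig:admissible-rules} (weakening, axiom, the two movement rules $\in$-l and $\subseteq$-l, and the usual $\Rightarrow$-l rule) inside LJ, which will absorb several low-level bureaucratic steps.

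Next I would dispatch the routine cases. The structural rule Contraction, the left-hand-side connective rules $\bot$-L, $\wedge$-L, $\vee$-L, the quantifier rules $\forall$-L and $\exists$-L (after unfolding bounded quantifiers to $\forall y (y \in_T z \Rightarrow \widetilde\phi)$ and $\exists y (y \in_T z \wedge \widetilde\phi)$ and combining LJ's $\rnallL$/$\rnexL$ with $\rnimplL$), the substitution rule $=$-subst (which maps directly to $\rneqL$), the clash rule $\neq$-L (via $\rneqR$ and $\rnbot$), and the term-shape rules $\times_\beta$, $\times_\eta$ all translate in essentially one-to-one fashion. The axiom $\in_\ursort$-R is handled by providing the witness $t$ to the existential in $\widetilde{t \in_\ursort u} = \exists z'\,(z' \in_\ursort u \wedge t = z')$ and closing with $\rneqR$. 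For the set-predicate right rules $=_{\sett}$-R, $=_\times$-R, $=_\unit$-R and $\subseteq$-R, the macro-translation already matches the connective structure, so $\rnandR$, $\rnallR$, $\rnimplR$ and $\rntop$ suffice; the rule $\in_{\sett}$-R is handled by invoking the IH to obtain $\widetilde{t =_{\sett(T)} u}$, applying symmetry of $\macroeq$, and instantiating the existential with witness $t$, using the membership hypothesis already in context.

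The main obstacle will be the right rule $=_\ursort$-R, which derives $t =_\ursort u$ from a premise of the form $\Theta, \; t \in_\ursort z; \; \Gamma \vdash u \in_\ursort z$ with $z$ fresh. By the IH and universal generalization on the fresh $z$, LJ gives $\forall z\,(t \in_\ursort z \Rightarrow \exists z'\,(z' \in_\ursort z \wedge u = z'))$, but pure multi-sorted LJ with only pairs and projections cannot turn this Leibniz-style indiscernibility statement into an equation, since we lack any way to instantiate $z$ by a singleton containing $t$. My plan to get around this is to augment the target LJ with the background sanity axioms implicitly used in our system, in particular an axiom schema $\forall x^\ursort\, \forall y^\ursort\,([\forall z^{\sett(\ursort)}\, (x \in_\ursort z \Rightarrow y \in_\ursort z)] \Rightarrow x =_\ursort y)$, which is the Leibniz counterpart of extensionality at the Ur-element level and is equivalent to having singletons available; this matches the discussion in the paper body that one may treat $=_T$, $\subseteq_T$ and $\in_T$ either as derived predicates or as primitives equipped with extensionality-style axioms. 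Once this axiom is in place, the translation of $=_\ursort$-R is immediate: abstract over $z$, apply the axiom, and close. Identifying the minimal such extension that matches the target LJ cited from \cite{jacobsbook} (and checking that no further axioms are needed for any other case) is where the real work lies, but every remaining case is routine.
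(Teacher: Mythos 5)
You have correctly identified that every rule except $=_\ur$-\textsc{R} translates to an admissible rule of LJ, and your treatment of those routine cases matches the paper's. The problem is your resolution of the one hard case. The lemma asserts derivability in the LJ of Figure~\ref{fig:LJsequentcalculus} itself; augmenting LJ with a Leibniz-style axiom $\forall x\,\forall y\,([\forall z\,(x \in_\ursort z \Rightarrow y \in_\ursort z)] \Rightarrow x =_\ursort y)$ proves a different, weaker statement, and it undermines the point of the whole subsection, which is that the restricted calculus derives exactly the sequents derivable in a \emph{standard} intuitionistic sequent calculus with only equality, pairing and projections (this is also the system for which the converse, Lemma~\ref{lem:fromlj}, is established). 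So as written your argument does not prove the lemma.

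The missing idea is that you should not invoke the generic induction hypothesis on the premise of $=_\ur$-\textsc{R} and then try to recover $t =_\ursort u$ from the resulting translated sequent --- you are right that this last step is impossible in pure LJ. Instead, strengthen what is proved for premises of that particular shape: show by a \emph{nested} induction on the restricted-system derivation of $\Theta,\; t \in_\ursort z;\; \Gamma \vdash u \in_\ursort z$ (with $z \notin \freevars(\Theta,\Gamma,t,u)$) that LJ directly derives $\widetilde\Theta,\; \widetilde\Gamma \vdash t =_\ursort u$. This works because $z$ is fresh: the only membership fact about $z$ available in the context is $t \in_\ursort z$, so the restricted proof can only close the goal $u \in_\ursort z$ by eventually reaching that fact, possibly through $=$-subst, $\times_\beta/\times_\eta$ and left rules; replaying this trace in LJ yields a chain of equalities establishing $t =_\ursort u$ using only \rneqL{} and \rneqR. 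This is exactly the content of the paper's Proposition~\ref{prop:impredureq-tolj}, which is then plugged into the main induction at the $=_\ur$-\textsc{R} case. With that replacement your proof goes through; without it there is a genuine gap.
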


Towards a proof of Lemma~\ref{lem:tolj}, first notice that for every rule
$$
\dfrac{\Theta; \; \Gamma \vdash \psi \qquad \ldots}{\Theta'; \; \Gamma' \vdash \psi'}
$$
of our restricted system, the rule
$$
\dfrac{\widetilde\Theta, \; \widetilde\Gamma \vdash \widetilde\psi \qquad \ldots}{\widetilde\Theta', \; \widetilde\Gamma' \vdash \widetilde\psi'}
$$
is easily seen to be admissible in LJ, save for one:
$$
\dfrac{\Theta, \; t \in_\ursort  z ; \; \Gamma \vdash u \in_\ursort z \qquad z \notin \freevars(\Theta, \Gamma, t, u)}{\Gamma \vdash t =_\ur u}$$
It is helpful to treat the sequents of the type $\Theta, \; t \in_\ur z; \; \Gamma \vdash u \in_\ur z$ with $z \notin \freevars(\Theta, \Gamma, t, u)$ as a special case.

\begin{proposition}
\label{prop:impredureq-tolj}
For every contexts $\Theta$, $\Gamma$ and terms $t$ and $u$ of type $\ur$ whose free variables
do not include $z$, if the sequent
$\Theta, \; t \in_\ursort z; \; \Gamma \vdash u \in_\ursort z$
is derivable in the restricted system, then LJ derives $\widetilde\Theta, \; \widetilde\Gamma \vdash t =_\ur u$.
\end{proposition}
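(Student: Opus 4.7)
The plan is to prove the proposition by induction on the derivation of the restricted-system sequent $\Theta, \; t \in_\ursort z; \; \Gamma \vdash u \in_\ursort z$, using the freshness of $z$ in an essential way. The key invariant to maintain along the induction is the following: since $z$ is fresh in the conclusion and every eigenvariable proviso in the restricted system's rules (notably those of $\subseteq$-\textsc{R}, $=_\ursort$-\textsc{R}, $\in_\sett$-\textsc{R}, and $\exists$-\textsc{L}) introduces a variable fresh with respect to the conclusion of the rule, the variable $z$ is never captured as a new eigenvariable anywhere in the proof tree. Consequently, at every node of the derivation the right-hand side has the shape $v \in_\ursort z$ for some term $v$ depending on the path, and the only memberships of the form $w \in_\ursort z$ appearing in any $\in$-context are substitution instances of the original $t \in_\ursort z$. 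In particular, the only possible last rules of such a derivation are the axiom $\in_\ursort$-\textsc{R} and those left or tupling rules that preserve the shape of the right-hand side, namely $\wedge$-\textsc{L}, $\vee$-\textsc{L}, $\bot$-\textsc{L}, $\forall$-\textsc{L}, $\exists$-\textsc{L}, $=$-subst, $\neq$-\textsc{L}, contraction, $\times_\beta$, and $\times_\eta$.

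In the axiom case, the conclusion has shape $\Theta', \; v \in_\ursort z; \; \Gamma' \vdash v \in_\ursort z$, and the freshness observation forces $v = t$, so the required LJ sequent $\widetilde{\Theta'}, \widetilde{\Gamma'} \vdash t =_\ursort t$ is given directly by the $=$-\textsc{R} axiom of LJ. For each other rule, I would apply the induction hypothesis to the premise(s) to obtain LJ derivations of the form $\widetilde{\Theta'_i}, \widetilde{\Gamma'_i} \vdash t_i =_\ursort u_i$ and assemble them using the corresponding LJ rule. For instance, in the $\exists$-\textsc{L} case, the IH yields $\widetilde{\Theta}, \; x \in_T y, \; \widetilde{\Gamma}, \; \phi^\ast \vdash t =_\ursort u$, and we conclude using $\wedge$-\textsc{L} followed by $\exists$-\textsc{L} in LJ, since the translation of $\exists x \in_T y\; \phi$ is $\exists x^T ~(x \in_T y \wedge \phi^\ast)$; the $\forall$-\textsc{L} case is dual, instantiating the LJ quantifier and discharging the translated bounding hypothesis; the $=$-subst case uses LJ's equality substitution rule together with the fact that substitution commutes with the translation; the propositional cases are direct; $\neq$-\textsc{L} is handled via ex falso; and $\times_\eta$, $\times_\beta$ mirror their LJ counterparts.

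The main obstacle is establishing the freshness invariant carefully against every rule of the restricted system, which amounts to verifying that no rule can extend the $\in$-context with a new formula $w \in_\ursort z$ whose occurrence is not a substitution instance of the original $t \in_\ursort z$. Once this invariant is secured, the inductive construction of the LJ derivation proceeds mechanically and closely parallels the main inductive argument used to prove Lemma~\ref{lem:tolj}, the only essential novelty being that the right-hand side of the translated sequent becomes $t =_\ursort u$ rather than a direct translation of $u \in_\ursort z$.
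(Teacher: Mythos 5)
Your proposal is correct and follows essentially the same route as the paper: an induction on the restricted-system derivation in which the freshness of $z$ is propagated upward (so the right-hand side always keeps the shape $v \in_\ursort z$ and the unique $z$-membership in the $\in$-context pins down the axiom case), with each left rule reassembled by its LJ counterpart. You merely make the freshness invariant and the base case more explicit than the paper, which only exhibits the \rnallL{} case as representative.
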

\begin{proof}
The proof goes by induction on the proof in the restricted system.
For most cases, the induction hypothesis is used in a very simple way.
We focus on one representative subcase.
\begin{itemize}
\item If the last rule applied is a $\forall$ rule, with $\Gamma = \Gamma', \; \forall x \in_T y ~~\phi$
$$\dfrac{\Theta,\; t \in_\ur z; \; \Gamma', \; \phi[v/x] \vdash u \in_\ursort z}{\Theta, \; t \in_\ur z; \; \Gamma', \; \forall x \in_T y ~~\phi \vdash u \in_\ur z}$$
then we must have $v \in_T y$ occurring in $\Theta, \; t \in_\ur z$.
By assumption, $z$ does not occur freely in $\Theta$, so we have necessarily that $v$ does not have $z$ as a free variable. Therefore $z$ does not occur free in either $\Theta$, $\Gamma'$ or $\phi[v/x]$, so we can conclude by applying the inductive hypothesis and using the rule \rnallL~of LJ.
\[
\text{
\AXC{ Induction hypothesis}
\UIC{$\widetilde{\Theta}, \; t \in_\ur z, \; \widetilde{\Gamma'}, \;  \phi^*[v/x] \vdash t =_\ur u$}
\AXC{ }
\UIC{$\widetilde{\Theta}, \; t \in_\ur z, \; \widetilde{\Gamma'} \vdash v \in y$}
\BIC{$\widetilde{\Theta}, \; t \in_\ur z, \; \widetilde{\Gamma'}, \;  v \in y \Rightarrow \phi^*[v/x] \vdash t =_\ur u$}
\UIC{$\widetilde{\Theta}, \; t \in_\ur z, \; \widetilde{\Gamma'}, \;  (\forall x \in y~~ \phi)^* \vdash t =_\ur u$}
\DisplayProof}
\]
\end{itemize}
\end{proof}

\begin{proof}[Proof of Lemma~\ref{lem:tolj}]
The proof goes by induction over the proof of $\Theta; \; \Gamma \vdash \psi$ in the
restricted system. Now that we have proven Proposition~\ref{prop:impredureq-tolj}, all
the cases are straightforward. We only outline a few.
\begin{itemize}
\item If the last rule applied is $=_\ur$-\textsc{R}
$$
\dfrac{\Theta, \; t \in_\ursort  z ; \; \Gamma \vdash u \in_\ursort z \qquad z \notin \freevars(\Theta, \Gamma, t, u)}{\Gamma \vdash t =_\ur u}$$
then we may use the induction hypothesis together with Proposition~\ref{prop:impredureq-tolj}.
\item If the last rule applied is $\in_\sett$-\textsc{R}
$$
\dfrac{\Theta, \; t \in_T u; \; \Gamma \vdash t =_T t'}{\Theta, \; t \in_T u; \; \Gamma \vdash t' \in_T u}
$$
recalling that $t' \macroin_T u$ is defined as $\exists x~~ (x \in_T u \wedge x \macroeq_T t')$, 
we give the following derivation in LJ
\[
\text{
\AXC{ }
\UIC{$\widetilde{\Theta}, \; t \macroin_T u, \; \widetilde\Gamma \vdash t \macroin_T u$}
\AXC{Induction hypothesis }
\UIC{$\widetilde{\Theta}, \; t \macroin_T u, \; \widetilde\Gamma \vdash t \macroeq_T t'$}
\BIC{$\widetilde{\Theta}, \; t \macroin_T u, \; \widetilde\Gamma \vdash t \macroin_T u \wedge t \macroeq_T t'$}
\UIC{$\widetilde{\Theta}, \; t \macroin_T u, \; \widetilde\Gamma \vdash t' \macroin_T u$}
\DisplayProof}
\]
\end{itemize}
\end{proof}



\myparagraph{From LJ to our restricted calculus}
Now, we prove the converse of Lemma~\ref{lem:tolj}.

\begin{lemma}
\label{lem:fromlj}
If the sequent $\widetilde\Theta, \; \widetilde\Gamma \vdash \widetilde\psi$ is derivable in LJ, then
$\Theta; \; \Gamma \vdash \psi$ is derivable in the restricted system.
\end{lemma}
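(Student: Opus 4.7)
The plan is to proceed via cut-elimination for LJ, followed by induction on cut-free LJ derivations. Cut-elimination for LJ extended with typed equality and pairing rules is classical (see e.g.~\cite{ts00book}) and yields a cut-free derivation of $\widetilde\Theta, \widetilde\Gamma \vdash \widetilde\psi$. By the subformula property of cut-free derivations, every formula appearing in the derivation is a subformula of $\widetilde\Theta$, $\widetilde\Gamma$, or $\widetilde\psi$, which considerably restricts the shape of formulas we must handle.

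Before the main induction, I would establish several admissibility results in the restricted system: weakening, contraction, a generalized axiom rule, and crucially a cut rule for $\deltazero$ formulas on the left, proven by the standard double induction on cut-formula complexity and derivation size. I would also generalize the statement so that the induction applies to intermediate sequents in the LJ derivation, not merely sequents in the image of the translation: the generalization will allow right-hand sides to range over subformulas of the $\widetilde{\cdot}$-translations.

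The main induction then proceeds by cases on the last LJ rule. Left-side LJ rules applied to $\widetilde\phi$ for $\phi \in \Gamma$ correspond directly to left rules in our system, with one notable economy: decomposing a bounded universal quantifier takes a single \rnallL\ step in our system, whereas LJ requires a composite \rnallL\ followed by \rnimplL. The atomic axiom and equality rules translate immediately. For right-side LJ rules applied to $\widetilde\psi$, the correspondence matches our set-theoretic right rules: \rnandR\ applied to $t \macroeq_{\sett(T)} u$ corresponds to $=_{\sett}$-R; \rnallR\ followed by \rnimplR\ on $t \macrosubseteq_T u$ corresponds to $\subseteq$-R; and the \rnexR~then \rnandR\ sequence introducing $t \macroin_T u$ is simulated via $\in_{\sett}$-R together with the subderivations yielded by induction.

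The main obstacle is handling intermediate right-hand side formulas that are not of the directly restricted shape, for instance the conjunction $z' \in_T u \wedge t \macroeq_T z'$ arising inside the decomposition of $t \macroin_T u$. I expect to handle these by arranging the induction hypothesis to cover the shape of every subformula of $\widetilde\psi$, or alternatively by leveraging the admissible cut rule to split such a derivation into independent pieces whose right-hand sides are each in one of the three allowed shapes. Routine management of free-variable side conditions and the bookkeeping between the $\Theta$ and $\Gamma$ contexts will require care but should pose no fundamental difficulty.
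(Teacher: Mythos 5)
Your overall strategy is workable and, once the details are filled in, it converges on essentially the same argument as the paper, but organized in the opposite direction: the paper normalizes the \emph{source} LJ derivation through a chain of intermediate calculi (replacing \rnimplL/\rnandR\ and the quantifier rules by bounded variants whose guards live in the context, then right-focusing the proof so that the invertible right rules $\wedge$-R, $\Rightarrow$-R, \rnallR\ are applied eagerly), after which the translation into the restricted system is a rule-by-rule reading. You instead propose to strengthen the \emph{target} system with admissible structural rules, chiefly cut, and to generalize the induction hypothesis over subformula shapes. Both routes can be made to work; the paper's buys a translation whose inductive cases are genuinely immediate at the cost of three lengthy admissibility lemmas, while yours concentrates all the difficulty into the cut-admissibility claim for the restricted calculus.

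That concentration is where your proposal underestimates the problem. The crux, which you file under ``routine bookkeeping between the $\Theta$ and $\Gamma$ contexts,'' is that every rule of the restricted system that consumes a membership guard --- \rnallL, \rnexR-like uses of $\in_{\sett}$-R, and $\in_{\sett}$-R itself --- requires that guard to occur \emph{as a hypothesis in $\Theta$}, whereas cut-free LJ only requires it to be \emph{derivable}. Concretely, LJ instantiates $\forall x\,(x \in y \Rightarrow \phi)$ at an arbitrary term $t$ with the \rnimplL\ premise $\Gamma \vdash t \in y$ discharged by a subderivation, and proves $t \macroin_T u$ by \rnexR\ at a witness $w$ for which $w \in u$ is merely provable; the restricted \rnallL\ and $\in_{\sett}$-R accept neither. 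Closing this gap is exactly what the paper's bounded-rule admissibility lemmas do, and it is what your admissible cut would have to do --- but then that cut (really: cutting a membership atom into $\Theta$, plus a cut-like rule combining a derivation of $w \in_T u$ with one of $t =_T w$ to yield $t \in_T u$, not a generic ``cut for $\deltazero$ formulas on the left'') carries the entire weight of the lemma and cannot be dismissed as standard, especially since the restricted system's left rules only fire when the right-hand side is a membership atom, which complicates the usual permutation argument. Your option of generalizing the induction hypothesis runs into the same wall from the other side: when LJ applies a left rule under a right-hand side like $z' \in_T u \wedge t \macroeq_T z'$, the restricted system has no left rule applicable at all, so you must first permute the right rule downward --- which is precisely the right-focusing step you would need to prove. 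Name and prove one of these two normalization/admissibility results explicitly and the rest of your plan goes through.
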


This direction is harder to prove than Lemma~\ref{lem:tolj}, so we will decompose this result in multiple steps:
\begin{enumerate}
\item First, we note that we have the subformula property for LJ:
any formula $\phi$ occurring in a LJ-proof tree is necessarily a subformula of some 
formula occurring at the root, up to substitution of terms.
This allows us to distinguish a special class of formulas which we call \emph{sub$\deltazero$ formulas} and consider
LJ sequents containing only such formulas.
\item For sequents containing only sub$\deltazero$ formulas, we note that if we replace the rules
\rnexL, \rnallL, \rnexR ~and \rnallR~by the bounded variants
\begin{mathpar}
\\
\inferrule* [left=\rnallLBV] {\Gamma, \; t \in y, \; \phi[t/x] \vdash \psi}{\Gamma, \; t \in u, \; \forall x ~(x \in y \Rightarrow \phi) \vdash \psi}
\and
\inferrule* [left=\rnallRBV] {\Gamma, \; z \in y \vdash \phi \and z \notin \freevars(\Gamma))}{\Gamma \vdash \forall z~ (z \in y \Rightarrow \phi)}
\\
\inferrule* [left=\rnexLBV] {\Gamma, \; x \in y, \; \phi \vdash \psi \qquad x \notin \freevars(\Gamma, \psi, y)}{\Gamma, \; \exists x ~(x \in y \wedge \phi) \vdash \psi}
\and
\inferrule* [left=\rnexRBV] {\Gamma, \; t \in y \vdash \phi[t/x]}{\Gamma, \; t \in y \vdash \exists x~ (x \in y \wedge \phi)}
\end{mathpar}
while deriving the same sequents as LJ, while retaining the constraint that the right-hand side formula be neither a conjunct, universal quantification or implication when left-hand side rules are applied. We will call the corresponding system $\ljboundedvarq$.
\item Then, we note that $\ljboundedvarq$ is equivalent to its restriction where left rules
cannot be applied if the
right-hand side formula under consideration is a conjunction, an implication or a universal quantification.
\item Finally, the translation can go by induction on such restricted proofs.
\end{enumerate}

We now go through  these steps in more detail.

\myparagraph{Step 1}
That LJ has the subformula property is obvious from inspection of the proof rules. We identify
the set of subformulas of (translation of) $\deltazero$ formulas, that we call \emph{sub$\deltazero$ formulas}.
\begin{definition}
A sub$\deltazero$ formula is a formula of LJ which is either of the shape $t \in_T u$, $t \in_T u \wedge \phi^*$,
$t \in_T u \Rightarrow \phi^*$ or $\phi^*$, where $\phi$ is a $\deltazero$ formula.
\end{definition}

From now on, we will suppose that all sequents under consideration exclusively contain sub$\deltazero$ formulas.
We call $\ljdeltazero$ the subsystem of LJ where all sequents contain exclusively sub$\deltazero$ formulas.

\myparagraph{Step 2}
Now we need to show that replacing the rules $Q$-\textsc{D} by their counterpart $Q$-\textsc{DBV}, with $Q \in \{ \forall, \exists\}$ and $D \in \{L,R\}$ does not limit LJ's power, \emph{as far as sub$\deltazero$ formulas are concerned}.
It is actually more convenient to do this in multiple steps, which are all proven by straightforward (if lengthy) induction on the proofs. To this end, we consider the following
three set of rules

\begin{mathpar}
\inferrule* [left=\rnimplLB] {\Gamma, \; t \in u, \; \phi \vdash \psi}{\Gamma, \; t \in u, \; t \in u \Rightarrow \phi \vdash \psi}
\and
\inferrule* [left=\rnandRB] {\Gamma, \; t \in u \vdash \phi}{\Gamma, \; t \in u \vdash t\in u \wedge \phi}
\end{mathpar}
\begin{center} \rule{0.5\textwidth}{0.5pt} \end{center}
\begin{mathpar}
\inferrule* [left=\rnallLB] {\Gamma, \; t \in u, \; \phi[t/x] \vdash \psi}{\Gamma, \; t \in u, \; \forall x ~(x \in u \Rightarrow \phi) \vdash \psi}
\and
\inferrule* [left=\rnallRB] {\Gamma, \; z \in u \vdash \phi \and z \notin \freevars(\Gamma,u)}{\Gamma \vdash \forall z~ (z \in u \Rightarrow \phi)}
\\
\inferrule* [left=\rnexLB] {\Gamma, \; x \in u, \; \phi \vdash \psi \qquad x \notin \freevars(\Gamma, \psi, u)}{\Gamma, \; \exists x ~(x \in u \wedge \phi) \vdash \psi}
\and
\inferrule* [left=\rnexRB] {\Gamma, \; t \in u \vdash \phi[t/x]}{\Gamma, \; t \in y \vdash \exists x~ (x \in u \wedge \phi)}
\end{mathpar}
\begin{center} \rule{0.5\textwidth}{0.5pt} \end{center}
\begin{mathpar}
\inferrule* [left=\rnallLBV] {\Gamma, \; t \in y, \; \phi[t/x] \vdash \psi}{\Gamma, \; t \in u, \; \forall x ~(x \in y \Rightarrow \phi) \vdash \psi}
\and
\inferrule* [left=\rnallRBV] {\Gamma, \; z \in y \vdash \phi \and z \notin \freevars(\Gamma))}{\Gamma \vdash \forall z~ (z \in y \Rightarrow \phi)}
\\
\inferrule* [left=\rnexLBV] {\Gamma, \; x \in y, \; \phi \vdash \psi \qquad x \notin \freevars(\Gamma, \psi, y)}{\Gamma, \; \exists x ~(x \in y \wedge \phi) \vdash \psi}
\and
\inferrule* [left=\rnexRBV] {\Gamma, \; t \in y \vdash \phi[t/x]}{\Gamma, \; t \in y \vdash \exists x~ (x \in y \wedge \phi)}
\end{mathpar}
and the corresponding proof systems:
\begin{itemize}
\item We call $\ljboundedconn$ the system $\ljdeltazero$ with the addition of the rules \rnimplLB~and \rnandRB~but omitting the rules \rnimplL~and the following instances of \rnandR
\begin{mathpar}
\inferrule* {\Gamma \vdash t \in u \qquad \Gamma \vdash \phi}{\Gamma \vdash t \in u \wedge \phi}
\end{mathpar}
\item We call $\ljboundedq$ the system $\ljboundedconn$ with the addition of the rules \rnallLB, \rnallRB, \rnexLB~and \rnexRB, but omitting the rules \rnallL, \rnallR, \rnexL~and \rnexR.
\item We call $\ljboundedvarq$ the system LJB2 with the addition of the rules \rnallLBV, \rnallRBV, \rnexLBV~and \rnexRBV, but omitting the rules \rnallLB, \rnallRB, \rnexLB~and \rnexRB.
\end{itemize}

We can now show that all those systems derive the same sequents thanks 
to a series of lemmas 
stating that when moving from $\ljdeltazero$ to $\ljboundedconn$
to $\ljboundedq$ to $\ljboundedvarq$, in each step 
the rules  we have removed 
remain admissible using the rules we have added. The admissibility of each individual rule mentioned in the lemmas
 can be shown by a lengthy induction.
\begin{lemma}
\label{lem:ljboundedconn}
The rules \rnimplL~and \rnandR ~are admissible in $\ljboundedconn$.
\end{lemma}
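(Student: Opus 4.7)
The plan is to reduce both admissibility claims to a single atomic cut lemma for $\ljboundedconn$: if $\Gamma \vdash t \in_T u$ and $\Gamma, t \in_T u \vdash \psi$ are both derivable, then so is $\Gamma \vdash \psi$. Given such a cut admissibility result, both desired admissibility statements follow routinely.

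For the omitted instance of \rnandR~applied to a sub$\deltazero$ conjunction $t \in_T u \wedge \phi$, given derivations $\pi_1 : \Gamma \vdash t \in_T u$ and $\pi_2 : \Gamma \vdash \phi$ in $\ljboundedconn$, we first weaken $\pi_2$ to $\Gamma, t \in_T u \vdash \phi$, apply \rnandRB~to obtain $\Gamma, t \in_T u \vdash t \in_T u \wedge \phi$, and then cut on $t \in_T u$ using $\pi_1$ to conclude $\Gamma \vdash t \in_T u \wedge \phi$. The argument for \rnimplL~is essentially symmetric: given $\pi_1 : \Gamma \vdash t \in_T u$ and $\pi_2 : \Gamma, \psi \vdash \theta$, we weaken $\pi_2$ to $\Gamma, t \in_T u, \psi \vdash \theta$, apply \rnimplLB~to obtain $\Gamma, t \in_T u, t \in_T u \Rightarrow \psi \vdash \theta$, and then cut on $t \in_T u$ using a suitable weakening of $\pi_1$ to derive $\Gamma, t \in_T u \Rightarrow \psi \vdash \theta$.

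The bulk of the work therefore lies in proving atomic cut admissibility, which I would establish by induction on the derivation of $\Gamma, t \in_T u \vdash \psi$, following the standard Gentzen-style cut-elimination strategy. A key simplification is that since the cut formula $t \in_T u$ is atomic, it is never the principal formula of any introduction rule in $\ljboundedconn$: in particular, in the new rules \rnimplLB~and \rnandRB~the atom $t \in_T u$ appears only as a side formula in the context, not as principal. Consequently there is no need for a nested induction on the size of the cut formula, and the only nontrivial base case is the axiom rule, where a conclusion of the form $\Gamma', t \in_T u \vdash t \in_T u$ is simply replaced by $\pi_1$ followed by weakening.

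The main obstacle will be verifying that cut commutes cleanly past every other rule of $\ljboundedconn$, especially the term-substitution rules \rneqL, \rnpairbeta, \rnpaireta, and \rnuniteta, since these rewrite terms and may therefore affect occurrences of the cut formula. However, substitution preserves the shape of $\in$-atoms and can be applied uniformly to $\pi_1$ as well, so these cases reduce to routine appeals to weakening and substitution lemmas. The quantifier rules commute without incident provided standard $\alpha$-renaming is performed to avoid capture of the eigenvariables by the free variables of $\pi_1$.
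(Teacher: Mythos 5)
Your reduction of both admissibility claims to an atomic cut lemma is a reasonable reorganization, and the two derivations you build from \rnandRB~and \rnimplLB~plus cut are correct \emph{assuming} that lemma. The gap is in the cut lemma itself. You propose to prove it by induction on the derivation of $\Gamma, t \in_T u \vdash \psi$ and assert that, since the cut formula is atomic and ``never principal,'' every case is a routine commutation and no nested induction is needed. But in $\ljboundedconn$ the atom $t \in_T u$ is not an inert side formula: it is a \emph{required side condition} of exactly the two new rules. If the right-hand derivation ends with \rnandRB, say $\Gamma_0, t\in_T u \vdash \phi$ over $\Gamma_0, t\in_T u \vdash t\in_T u \wedge \phi$, and the cut formula is that context occurrence of $t\in_T u$, then the inductive hypothesis yields $\Gamma_0 \vdash \phi$, and you still hold $\Gamma_0 \vdash t\in_T u$; combining these into $\Gamma_0 \vdash t\in_T u \wedge \phi$ is precisely the omitted instance of \rnandR~whose admissibility you are trying to establish. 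The \rnimplLB~case is symmetric: after cutting into the premise you are left with $\Gamma_0, t\in_T u \Rightarrow \phi, \phi \vdash \psi$ and need \rnimplL~(or a non-atomic cut on $\phi$) to discharge the extra $\phi$. So the argument is circular exactly at the cases that constitute the content of the lemma.

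The repair is to run the induction on the \emph{other} premise, the derivation of $\Gamma \vdash t \in_T u$. Because the succedent is atomic, no right rule of $\ljboundedconn$ can end that derivation; the last rule is either \rnax, in which case $t \in_T u$ already occurs in $\Gamma$ and \rnandRB~(resp.\ \rnimplLB) applies directly to the other premise, or it is a left, structural, or term-manipulation rule, which permutes below the conclusion after weakening the other premise to match the new context (with the usual substitution lemma for $\times_\beta$, $\times_\eta$ and $=$-congruence, and $\alpha$-renaming for eigenvariables). This is exactly the shape of the paper's proof, which establishes admissibility of \rnandR~and \rnimplL~directly by this one-sided permutation argument without factoring through a cut lemma at all.
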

\begin{proof}
Let us first focus on the admissibility of {\rnandR}.
By induction on the depth of a $\ljboundedconn$ proof of
$$\Gamma \vdash t \in_T u \qquad \text{and} \qquad \Gamma \vdash \psi$$
we want to show that $\Gamma \vdash t \in_T u \wedge \psi$ is derivable in 
$\ljboundedconn$.
 Note that if the first
conjunct is not a formula of the shape $t \in_T u$, we may conclude using an instance of \rnandR~of $\ljboundedconn$,
To this end, we make a case analysis according to the last $\ljboundedconn$ rule applied to derive $\Gamma \vdash t \in_T u$.
As they are many cases, we only outline a few representative ones. 
Most cases are easy because
it cannot be the case that a right-hand side rule of $\ljboundedconn$  may be applied, since $t \in_T u$ is an atomic
formula.
\begin{itemize}
\item If the last rule applied was an axiom, this means that $t \in_T u$ was part of $\Gamma$.
In this case
$$\dfrac{\Gamma \vdash \psi}{\Gamma \vdash t \in_T u \wedge \psi}$$
is an instance of \rnandRB, the designated replacement of \rnandR.
\item If the last rule applied was \rnandL, assuming that $\Gamma = \Gamma', \phi_1 \wedge \phi_2$
$$\dfrac{\Gamma',\; \phi_1, \; \phi_2 \vdash t \in_T u}{\Gamma', \; \phi_1 \wedge \phi_2 \vdash t \in_T u}$$
then the induction hypothesis gives us a proof of $\Gamma', \; \phi_1, \; \phi_2 \vdash t \in_T u \wedge \psi$,
so we may build the tree
$$
\dfrac{
\dfrac{\text{Induction hypothesis}}{
\Gamma', \; \phi_1, \; \phi_2 \vdash t \in_T u \wedge \psi}}
{\Gamma \vdash \psi}$$
by applying the rule \rnandL.
\end{itemize}

The admissibility of \rnimplL ~ is handled similarly, noticing that, since we are dealing
with sub$\deltazero$ formulas, the antecedent of an implication in such a rule is also an
atomic formula $t \in_T u$. 
\end{proof}
\begin{corollary}
\label{cor:ljboundedconn}
$\ljdeltazero$ and $\ljboundedconn$ derive the same sequents.
\end{corollary}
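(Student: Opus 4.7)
The plan is to observe that the corollary follows almost immediately once Lemma~\ref{lem:ljboundedconn} is in hand, since the corollary really asserts the mutual admissibility of two overlapping rule sets. So I would split the argument into the two inclusions and handle the easy one first.

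For the direction $\ljboundedconn \subseteq \ljdeltazero$, I would simply observe that \rnimplLB and \rnandRB are instances of \rnimplL and \rnandR respectively, modulo a trivial application of the axiom rule to discharge the repeated membership premise. Concretely, any use of
\[
\inferrule{\Gamma, \; t \in u, \; \phi \vdash \psi}{\Gamma, \; t \in u, \; t \in u \Rightarrow \phi \vdash \psi}
\]
can be simulated in $\ljdeltazero$ by applying \rnimplL, whose two premises are $\Gamma, t \in u \vdash t \in u$ (an axiom) and the given premise; similarly \rnandRB is obtained from \rnandR together with an axiom. A straightforward induction on the proof tree, replacing each \rnimplLB or \rnandRB at the leaves with the corresponding $\ljdeltazero$ derivation, yields the claim.

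For the reverse inclusion $\ljdeltazero \subseteq \ljboundedconn$, I would argue by induction on the $\ljdeltazero$ derivation of $\Gamma \vdash \psi$, dispatching every rule other than \rnimplL and the ``offending'' instance of \rnandR by the identical rule in $\ljboundedconn$ (they are all shared). For the remaining two cases, I invoke Lemma~\ref{lem:ljboundedconn}: given $\ljboundedconn$ derivations of the premises (supplied by the induction hypothesis), the admissibility of \rnimplL and \rnandR gives a $\ljboundedconn$ derivation of the conclusion. The only mild subtlety is that Lemma~\ref{lem:ljboundedconn} applies only when the conjunct on the right, or the antecedent of the implication on the left, is a membership formula $t \in_T u$—but this is precisely guaranteed by the sub$\deltazero$ restriction that scopes every sequent in $\ljdeltazero$, so no other cases arise. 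I expect no real obstacle beyond being careful to check that the rules not mentioned in the two lists (structural rules, the atomic-on-right instance of \rnandR, equality and pair rules) are literally the same in both systems and so require no translation.
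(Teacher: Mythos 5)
Your proposal is correct and follows the same route as the paper: the forward inclusion is dispatched by noting that \rnimplLB\ and \rnandRB\ are simulated by \rnimplL\ and \rnandR\ plus an axiom (the paper dismisses this as ``obvious''), and the reverse inclusion follows by induction using the admissibility statement of Lemma~\ref{lem:ljboundedconn} for the two removed rules. The extra detail you supply about the sub$\deltazero$ restriction is consistent with how the lemma itself is proved, so nothing is missing.
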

\begin{proof}
Thanks to Lemma~\ref{lem:ljboundedconn}, it is then obvious that all the rules of $\ljdeltazero$
are admissible in $\ljboundedconn$, so every sequent derivable in $\ljdeltazero$ is derivable in $\ljboundedconn$.
The converse is obvious.
\end{proof}

\begin{lemma}
\label{lem:ljboundedq}
The rules \rnallL, \rnallR, \rnexL~and \rnexR~are admissible in $\ljboundedq$.
\end{lemma}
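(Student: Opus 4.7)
The plan is to prove admissibility of each of the four unrestricted quantifier rules in $\ljboundedq$ by induction on derivations, leveraging the sub$\deltazero$ subformula property to force every quantified formula appearing in a $\ljboundedq$ derivation to take one of the bounded shapes $\forall y~(y \in u \Rightarrow \phi)$ or $\exists y~(y \in u \wedge \phi)$. Consequently, each unrestricted rule applied in this fragment is really introducing or eliminating a bounded quantifier and can be simulated by its $\textsc{B}$-counterpart.

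For \rnallL, I would start from a derivation of $\Gamma, (t \in u \Rightarrow \phi'[t/y]) \vdash \psi$ in $\ljboundedq$ and observe that the only $\ljboundedq$ rule that can act on the hypothesis $(t \in u \Rightarrow \phi'[t/y])$ as its principal formula is \rnimplLB, which requires $t \in u$ to already be in the context. The construction replaces this hypothesis by $\forall y~(y \in u \Rightarrow \phi')$ throughout the derivation tree (down to and including the \rnimplLB step that consumes it), and replaces each such \rnimplLB step by an application of \rnallLB: both rules share the premise $\Gamma, t \in u, \phi'[t/y] \vdash \psi$ and produce the same effect on the context. The resulting tree is a valid derivation of $\Gamma, \forall y~(y \in u \Rightarrow \phi') \vdash \psi$. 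The case of \rnexR is dual: a derivation of $\Gamma \vdash (t \in u \wedge \phi'[t/x])$ is rerouted by replacing each final \rnandRB step by an \rnexRB step, which shares the same premise and yields the existential conclusion instead of the conjunctive one.

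For \rnallR and \rnexL, I would rely on standard invertibility lemmas for \rnimplR and \rnandL respectively, each provable by a direct induction on $\ljboundedq$ proofs. Given $\Gamma \vdash (z \in u \Rightarrow \phi')$ with $z \notin \freevars(\Gamma)$ (and hence, by well-formedness of $\forall z~(z \in u \Rightarrow \phi')$, also $z \notin \freevars(u)$), invertibility of \rnimplR yields $\Gamma, z \in u \vdash \phi'$, to which \rnallRB applies. Dually, from $\Gamma, (y \in u \wedge \phi') \vdash \psi$ with $y$ fresh, invertibility of \rnandL yields $\Gamma, y \in u, \phi' \vdash \psi$, to which \rnexLB applies.

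The main obstacle will not be any single step but careful bookkeeping: for the \rnallL and \rnexR cases one has to verify that the global replacement inside a derivation tree commutes with every other rule of $\ljboundedq$ and preserves freshness conditions, in particular when the affected formula is carried across \rnexLB or \rnallRB steps, where variable capture must be avoided by renaming. This verification, routine case by case, accounts for the ``lengthy'' qualifier in the paper even though each local step is mechanical.
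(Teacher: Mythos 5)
Your proposal is essentially the paper's own proof: an induction on (equivalently, a global rewriting of) the $\ljboundedq$ derivation in which the only interesting case for \rnallL{} (resp.\ \rnexR) is an \rnimplLB{} (resp.\ \rnandRB) step consuming the principal formula, which is replaced by \rnallLB{} (resp.\ \rnexRB) since the two rules share the same premise, while \rnallR{} and \rnexL{} reduce to inverting \rnimplR{} and \rnandL{} before applying the bounded right/left rules. The only small inaccuracy is the claim that \rnimplLB{} is the \emph{only} rule that can act on the hypothesis $t \in u \Rightarrow \phi'[t/y]$ as principal formula --- the axiom rule \rnax{} can too, when that implication is also the conclusion --- but this base case is dispatched by one application of \rnimplR{} followed by \rnallLB{} and an axiom, so it does not affect the argument.
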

\begin{proof}
Let us focus on \rnallL. We assume that we have a $\ljboundedq$ derivation of
$$\Gamma, \; t \in_T u \Rightarrow \phi[t/x] \vdash \psi$$
and we show, by induction on its depth, that we way obtain a $\ljboundedq$ derivation
of $\Gamma, \; \forall x \; (x \in_T u \Rightarrow \psi)$. As usual, one should proceed by
case analysis on the last rule applied to get $\Gamma, \; t \in_T u \Rightarrow \phi[t/x] \vdash \psi$.
In all but one case, the main formula under consideration is not $t \in_T u \Rightarrow \phi[t/x]$ and it
is easy to use the induction hypothesis. The only interesting case thus occurs when the last rule applied was
the \rnimplLB~rule
$$
\dfrac{\Gamma, \; \phi \vdash \psi}{\Gamma, \; t \in_T u \Rightarrow \phi[t/x] \vdash \psi}
$$
In such a case, we know that $t \in_T u$ is a formula occurring in $\Gamma$, so we replace the application of
this rule with the new rule \rnallLB~of $\ljboundedq$ to conclude.
$$
\dfrac{\Gamma, \; \phi \vdash \psi}{\Gamma, \; \forall x \; (x \in_T u \Rightarrow \phi) \vdash \psi}
$$

The reasoning for the other rule \rnexR~is extremely similar, where the only interesting case occurs upon
applying a rule \rnandRB. The last two rules are also handled similarly, the interesting case for the
admissibility of \rnallR~(respectively \rnexL) being \rnimplR~(respectively \rnandL).
\end{proof}
\begin{corollary}
\label{cor:ljboundedq}
$\ljboundedconn$ and $\ljboundedq$ derive the same sequents.
\end{corollary}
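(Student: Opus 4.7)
The plan is to mimic the structure of the proof of Corollary~\ref{cor:ljboundedconn}, splitting the equivalence into two directions. For the direction from $\ljboundedconn$ to $\ljboundedq$, I note that $\ljboundedq$ shares all rules with $\ljboundedconn$ except that the standard quantifier rules \rnallL, \rnallR, \rnexL, \rnexR have been replaced by their bounded variants \rnallLB, \rnallRB, \rnexLB, \rnexRB. Since Lemma~\ref{lem:ljboundedq} establishes that the four removed rules are admissible in $\ljboundedq$, every $\ljboundedconn$-derivation may be transported into $\ljboundedq$ by leaving every non-quantifier rule unchanged and replacing each use of a standard quantifier rule by the derivation witnessing its admissibility, proceeding by induction on proof depth.

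For the converse direction, I plan to check that each of the four new bounded quantifier rules of $\ljboundedq$ is derivable in $\ljboundedconn$. The argument should be straightforward: the bounded rules are exactly what one obtains by composing the corresponding standard quantifier rule of $\ljboundedconn$ with an application of \rnimplL or \rnandRB on the guarding membership hypothesis. For instance, to simulate \rnallLB, given a $\ljboundedconn$-derivation of $\Gamma, t \in u, \phi[t/x] \vdash \psi$, I would first apply \rnimplL using the axiom-derived $\Gamma, t \in u \vdash t \in u$ to obtain $\Gamma, t \in u, t \in u \Rightarrow \phi[t/x] \vdash \psi$, and then apply \rnallL with the witnessing term $t$ to conclude $\Gamma, t \in u, \forall x (x \in u \Rightarrow \phi) \vdash \psi$. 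The rule \rnexRB~is simulated symmetrically using \rnandR~together with \rnexR, and \rnallRB~and \rnexLB~are handled analogously using \rnimplR~and \rnandL~respectively.

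With both directions established, $\ljboundedconn$ and $\ljboundedq$ derive the same sequents. I do not expect a real obstacle; the only care needed is to ensure that the simulations of the new rules stay within the sub$\deltazero$ fragment, which is automatic because the guarding formula $t \in u$ is already atomic and of the required shape, so no out-of-fragment intermediate formulas are introduced.
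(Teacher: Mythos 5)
Your proof is correct and follows the same route the paper intends: one direction is exactly Lemma~\ref{lem:ljboundedq}, and the converse is the routine derivation of the four bounded quantifier rules from the unbounded ones that the paper dismisses as obvious. One small correction: \rnimplL~and the relevant instance of \rnandR~are by definition \emph{not} rules of $\ljboundedconn$ (they were removed in passing from $\ljdeltazero$ to $\ljboundedconn$), so in your simulations of \rnallLB~and \rnexRB~you must instead invoke their designated replacements \rnimplLB~and \rnandRB, which have exactly the premises you produce, so the argument goes through unchanged.
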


\begin{lemma}
\label{lem:ljboundedvarq}
The rules \rnallLB, \rnallRB, \rnexLB~and \rnexRB~are admissible in $\ljboundedvarq$.
\end{lemma}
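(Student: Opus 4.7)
The plan is to argue by induction on the depth of a $\ljboundedq$ derivation. For rules already present in $\ljboundedvarq$, the inductive step is immediate via the induction hypothesis. The non-trivial cases are applications of \rnallLB, \rnallRB, \rnexLB, and \rnexRB whose principal quantified formula carries a compound bound $u$ rather than a bare variable; each such instance must be simulated using its variable-bounded counterpart together with the term-conversion rules $\times_\eta$, $\times_\beta$, and $\unit_\eta$, all of which are available in $\ljboundedvarq$.

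Focusing on \rnallLB: given an inductively-provided $\ljboundedvarq$ derivation of $\Gamma, t \in u, \phi[t/x] \vdash \psi$, we must derive $\Gamma, t \in u, \forall x\,(x \in u \Rightarrow \phi) \vdash \psi$. If $u$ is already a variable, then \rnallLBV applies directly. Otherwise, $u$ is built from variables using tupling and projection. We reduce $u$ to a variable by first applying $\times_\eta$ at the goal sequent to expand each variable of product sort occurring in $u$ into a fresh pair $\<v_1, v_2\>$, and then $\times_\beta$ to reduce the resulting projections of pairs, with $\unit_\eta$ handling any unit-typed variables. This process terminates in the structural complexity of $u$ and yields a sequent whose universal is bounded by a fresh variable $y$. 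At this point \rnallLBV applies, and the premise it asks for is the image of the assumed derivation under the same uniform sequence of eta/beta conversions. The cases for \rnexLB, \rnallRB, and \rnexRB are symmetric, using \rnexLBV, \rnallRBV, and \rnexRBV respectively.

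The main obstacle is bookkeeping: the eta/beta conversions must be applied uniformly to every formula in the sequent (not merely to the principal formula carrying the bound) so that the premise obtained after the application of the BV-rule matches the substitution-instance of the induction hypothesis. This requires auxiliary substitution and closure lemmas for $\ljboundedvarq$ — specifically, that $\ljboundedvarq$ derivations are stable under substitution of a term for a variable of compatible sort, and under $\times_\beta$, $\times_\eta$, and $\unit_\eta$ conversions of their sequents. These are routine to establish but notation-heavy, and we would collect them as separate propositions before carrying out the induction above.
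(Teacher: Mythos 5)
Your proposal is correct and follows essentially the same route as the paper: both simulate a compound-bounded quantifier rule by using \rnpaireta, \rnpairbeta~and \rnuniteta~to drive the bounding term down to a bare variable and then invoking the corresponding \textsc{BV} rule, modulo the same routine substitution/closure facts applied uniformly to the whole sequent. The only point where the paper is more explicit is the termination argument you leave as ``structural complexity of $u$'': since $\times_\eta$ \emph{increases} raw term size, the paper inducts on the lexicographic pair $\<v_t, r_t\>$ (sum of the sizes of the sorts of the free variables of the bound, then term size), and you would need some such measure to make your reduction precise.
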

\begin{proof}
All four cases are proven in a similar manner. Exceptionally, the induction this time
is not over the size of the proofs, but rather on a quantity computed from the bounding
term occurring in the main quantifier of the rule. For instance, this would be $t$ in the
following instance of \rnexLB:
\begin{mathpar}
\inferrule* [left=\rnexLB] {\Gamma, \; x \in t, \; \phi \vdash \psi}{\Gamma, \; \exists x ~(x \in t \wedge \phi) \vdash \psi}
\end{mathpar}
The ``size'' of such a term $t$ is the pair $\<v_t, r_t\>$ computed as follows:
\begin{itemize}
\item There is an intuitive notion of size for types defined by induction:
$$
\begin{array}{lcl}
s(\ur) &=& 1 \\
s(\unit) &=& 1 \\
s(T_1 \times T_2) &=& 1 + s(T_1) + s(T_2) \\
s(\sett(T)) &=& 1 + s(T) \\
s(T_1 \times T_2) &=& 1 + s(T_1) + s(T_2) \\
\end{array}
$$
From this we can define the ``variable size'' of a term $t$, denoted
  $v_t$, to be the sum of the size of the free variables of t.
$$v_t = \sum_{\substack{x \in \freevars(t) \\ \text{$x$ of type $T$}}} s(T)$$
\item $r_t$ is the intuitive notion of size for terms, computed by induction over $t$:
$$
\begin{array}{lcl}
r_{c_i} &=& 1 \\ r_{(t,u)} &=& 1 + r_{t} + r_u \\
r_{()} &=& 1 \\ r_{\pi_i(t)} &=& 1 + r_t \\
\end{array}
$$
\end{itemize}

Then we can use the fact that the lexicographic product of $\mathbb{N}$ with itself is well-founded
to run induction over the pair $(v_t, r_t)$. Let us do so for the rule \rnexLB.
To this end, suppose that $t$ is a term such that the rule
\begin{mathpar}
\inferrule* {\Gamma, \; x \in u, \; \phi \vdash \psi}{\Gamma, \; \exists x ~(x \in u \wedge \phi) \vdash \psi}
\end{mathpar}
is admissible in $\ljboundedvarq$ for every $u$ such that either $v_u < v_t$ or $v_u = v_t$ and $r_u < r_t$.
We proceed with a case analysis to show that the same rule with $t$ instead of $u$ is admissible.
\begin{itemize}
\item If $t$ is a variable, then this is an instance of the rule \rnexLBV~of $\ljboundedvarq$.
\item Otherwise, if $t$ has a free variable $z$ of type $T_1 \times T_2$, one may apply the rule \rnpaireta
$$
\dfrac{\Gamma[\<z_1,z_2\>/z], \; \exists x \in t[\<z_1,z_2\>/z]~~ \phi[\<z_1,z_2\>/z] \vdash 
\psi[\<z_1,z_2\>/z]}{\Gamma, \; \exists x \in t ~~ \phi \vdash \psi}$$
and conclude using our induction hypothesis since $v_{t[\<z_1,z_2\>/z]} < v_{t}$.
\item Otherwise, if $t$ has no such free variable, but is itself not a free variable, then it is
necessarily of the shape $\pi_i(\<t_1,t_2\>)$ for some $i \in \{1,2\}$, so we may apply the rule \rnpairbeta
$$
\dfrac{\Gamma, \; \exists x \in t_i~~ \phi \vdash \psi}{\Gamma, \; \exists x \in t ~~ \phi \vdash \psi}$$
and conclude using our induction hypothesis as we have $v_{t_i} \le v_t$ and $r_{t_i} < r_t$.
\end{itemize}
\end{proof}
\begin{corollary}
\label{cor:ljboundedvarq}
$\ljboundedq$ and $\ljboundedvarq$ derive the same sequents.
\end{corollary}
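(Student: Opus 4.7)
The plan is to follow exactly the pattern of Corollaries~\ref{cor:ljboundedconn} and~\ref{cor:ljboundedq}, using Lemma~\ref{lem:ljboundedvarq} as the main ingredient. Since the two systems share all the non-quantifier rules as well as all left-hand-side and right-hand-side restrictions, the only thing to verify is that the quantifier rules of one system can be simulated in the other.

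For the direction from $\ljboundedvarq$ to $\ljboundedq$, the four bounded variable quantifier rules \rnallLBV, \rnallRBV, \rnexLBV, \rnexRBV are literally special cases of the bounded quantifier rules \rnallLB, \rnallRB, \rnexLB, \rnexRB, obtained by requiring the bounding term to be a variable. Hence any derivation in $\ljboundedvarq$ translates line-for-line into a derivation in $\ljboundedq$ with no modification whatsoever.

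For the converse direction, observe that every rule of $\ljboundedq$ other than \rnallLB, \rnallRB, \rnexLB, \rnexRB is already a rule of $\ljboundedvarq$. For the four remaining rules, Lemma~\ref{lem:ljboundedvarq} has already established that they are admissible in $\ljboundedvarq$. Hence we may transform any derivation in $\ljboundedq$ into one in $\ljboundedvarq$ by replacing each application of \rnallLB, \rnallRB, \rnexLB, or \rnexRB with the derivation witnessing its admissibility, while leaving every other step unchanged.

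There is no real obstacle here; all the work has been done in Lemma~\ref{lem:ljboundedvarq}, whose proof proceeds by a delicate induction on the lexicographic size $(v_t, r_t)$ of the bounding term (making essential use of the rules \rnpaireta\ and \rnpairbeta\ to reduce the case of a compound term to the case of a variable). The corollary itself is purely a packaging step.
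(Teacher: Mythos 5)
Your proposal is correct and follows exactly the paper's (implicit) argument: the corollary is indeed just a packaging of Lemma~\ref{lem:ljboundedvarq}, with one direction given by admissibility of the removed rules and the other by observing that the bounded-variable rules are instances of the bounded-term rules. Nothing further is needed.
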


\begin{lemma}
\label{lem:ljtoboundedvarq}
$\ljdeltazero$ and $\ljboundedvarq$ derive the same sequents.
\end{lemma}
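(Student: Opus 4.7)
The plan is to obtain Lemma~\ref{lem:ljtoboundedvarq} as an immediate consequence of the three corollaries already established, by transitivity of the relation ``derives the same sequents as''. Concretely, I would chain Corollary~\ref{cor:ljboundedconn}, Corollary~\ref{cor:ljboundedq}, and Corollary~\ref{cor:ljboundedvarq}: the first gives equiderivability of $\ljdeltazero$ and $\ljboundedconn$, the second of $\ljboundedconn$ and $\ljboundedq$, and the third of $\ljboundedq$ and $\ljboundedvarq$. Combining these three equalities of derivable-sequent classes yields that $\ljdeltazero$ and $\ljboundedvarq$ derive the same sub$\deltazero$ sequents.

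No induction is needed at this stage; all the real work has already been encapsulated in Lemmas~\ref{lem:ljboundedconn}, \ref{lem:ljboundedq} and~\ref{lem:ljboundedvarq}, each of which showed that the rules dropped at a given stage remain admissible after their replacements are introduced, and each of which is the non-trivial ingredient underlying the corresponding corollary. Thus there is no genuine obstacle to proving the lemma once those results are in hand; the statement is essentially bookkeeping that collects the incremental equivalences into a single equivalence between the two endpoints of the chain, namely the original restricted sub$\deltazero$ fragment of LJ and the fully bounded-variable-quantifier version whose rules match most closely the ones of our restricted calculus from Section~\ref{sec:effectivebeth}.

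The reason this chained reformulation is worth stating on its own is that $\ljboundedvarq$ is precisely the system whose left rules \rnallLBV\ and \rnexLBV\ and right rules \rnallRBV\ and \rnexRBV\ mirror the membership-context-driven quantifier rules of our own proof system, so Lemma~\ref{lem:ljtoboundedvarq} is the version of the equivalence directly usable in the next step: translating LJ-derivable sub$\deltazero$ sequents into derivations of our restricted system (Lemma~\ref{lem:fromlj}). In that subsequent step the induction will finally be performed on $\ljboundedvarq$ proofs rather than on arbitrary LJ proofs, which is what makes the reduction carried out here pay off.
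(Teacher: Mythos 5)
Your proof is correct and is exactly the paper's own argument: the lemma is obtained by chaining Corollaries~\ref{cor:ljboundedconn}, \ref{cor:ljboundedq} and~\ref{cor:ljboundedvarq} via transitivity of equiderivability. No further work is needed.
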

\begin{proof}
Combine Corollaries~\ref{cor:ljboundedconn},~\ref{cor:ljboundedq} and~\ref{cor:ljboundedvarq}.
\end{proof}

\myparagraph{Step 3}
Recall that a right-hand side rule is one that changes the right-hand side formula. Among the
rules of $\ljboundedvarq$, these are the rules \rneqR, \rnandR, \rnandRB, \rnimplR, \rnallRBV~and \rnexRBV.
We call a proof tree \emph{right-focused} if every occurrence of sequent $\Gamma \vdash \psi$ in
the tree such that the top-level connective of $\psi$ is either $\forall$, $\Rightarrow$ or $\wedge$
is necessarily the conclusion of a right-hand side rule.

The rationale behind this choice is that the rules \rnandR, \rnandRB, \rnimplR~and \rnallRBV~are \emph{invertible}
(if their conclusion is true, so are all the premises), so they may be safely applied eagerly.

\begin{lemma}
\label{lem:rightfocus}
If $\Gamma \vdash \phi$ is derivable in $\ljboundedvarq$, then there is a right-focused 
$\ljboundedvarq$ proof tree of deriving $\Gamma \vdash \phi$.
\end{lemma}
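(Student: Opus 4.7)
The plan is to first establish height-preserving invertibility of the right-hand side rules for $\wedge$, $\Rightarrow$, and $\forall$ (namely \rnandR, \rnandRB, \rnimplR, and \rnallRBV), and then to leverage these invertibility results to eagerly apply those rules whenever the right-hand side formula permits.

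First I would prove, separately for each of the four invertible rules, a lemma of the form: if $\mathcal{D}$ is a $\ljboundedvarq$ derivation of the conclusion of the rule with depth $d$, then one can construct a $\ljboundedvarq$ derivation of each premise with depth at most $d$. For instance, for \rnimplR, the statement is that any derivation of $\Gamma \vdash \phi \Rightarrow \psi$ can be transformed to a derivation of $\Gamma, \phi \vdash \psi$ of depth at most $d$; and analogously for the other three. Each of these is proven by induction on $d$ with a case analysis on the last rule applied: when it is the matching right rule, the desired premise is read off directly; for every other rule, one applies the induction hypothesis to the subderivations and reapplies the same rule to reassemble the result, taking minor care to ensure that the context extensions imposed by the inverted rule do not interfere with freshness side conditions.

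With invertibility in hand, I would prove the main claim by induction on the pair $(\phi, \mathcal{D})$, where $\phi$ is the right-hand side of the conclusion of $\mathcal{D}$, ordered lexicographically with $\phi$ by the subformula order (with ties broken by the depth of $\mathcal{D}$). If $\phi$ does not have a top-level $\wedge$, $\Rightarrow$, or $\forall$, the root node is vacuously compatible with right-focusing, so I apply the induction hypothesis to each immediate sub-derivation (whose RHS is no larger and whose depth is strictly smaller). If $\phi$ is $\phi_1 \wedge \phi_2$, $\phi_1 \Rightarrow \phi_2$, or $\forall z~(z \in y \Rightarrow \phi_0)$, I would apply the corresponding invertibility lemma to extract derivations of the premises of the right rule (each with strictly smaller RHS in the subformula order), invoke the induction hypothesis to turn those into right-focused derivations, and reassemble with an application of the right rule itself, yielding a right-focused derivation of the original sequent.

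The main technical obstacle will be handling the interaction between the invertibility arguments and rules that rearrange terms (such as \rnpairbeta, \rnpaireta, \rnuniteta, and \rneqL), where one must verify that commuting an inverted right rule past such a rule is sound because their effects on the sequent commute up to substitution. A secondary subtlety is the split between \rnandR and \rnandRB in $\ljboundedvarq$: the invertibility argument for a conclusion of the shape $\Gamma \vdash t \in u \wedge \psi$ must recognize that the last right rule applicable is necessarily \rnandRB, which requires the ambient context to already contain $t \in u$, and hence one must track this hypothesis through the inductive analysis rather than try to uniformly handle both conjunction rules in the same step.
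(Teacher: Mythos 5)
Your proposal is correct in substance and rests on the same underlying fact as the paper's proof: the right rules for $\wedge$, $\Rightarrow$ and $\forall$ permute below every other rule of $\ljboundedvarq$. The packaging differs, though. The paper runs a single induction on the depth of the given derivation and never states invertibility as a separate lemma: it applies the induction hypothesis to the premises of the last rule, observes that the resulting right-focused subproofs must end with the mandated right rule, peels that rule off, re-applies the last rule to the peeled premises, and restores the right rule at the root. You instead prove explicit invertibility lemmas and then induct on the lexicographic pair (right-hand formula, depth). Your measure is arguably the more robust one: a derivation consisting of a single axiom $\Gamma, \phi \vdash \phi$ with $\phi$ compound has minimal depth yet still needs to be expanded by recursion on $\phi$ to become right-focused, which your formula-first measure accounts for directly. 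Two caveats. First, \emph{height-preserving} invertibility fails in $\ljboundedvarq$ as literally stated (inverting \rnimplR on the axiom $\Gamma, t \in u \Rightarrow \psi \vdash t \in u \Rightarrow \psi$ costs an extra \rnimplLB step), but since your main induction decreases on the formula component whenever an invertibility lemma is invoked, you never actually use the height bound and should simply drop it. Second, the \rnandRB subtlety you flag cannot be discharged for arbitrary sequents: $t \in u \wedge \psi \vdash t \in u \wedge \psi$ is derivable (it is an axiom) but admits no right-focused proof, because the only right rule that can produce such a conjunction is \rnandRB, which demands that $t \in u$ occur literally in the context. The lemma is only true --- and only needed --- for the sequents arising from the translation $\widetilde{(\cdot)}$, in which a conjunction on the right never has a bare membership atom as a conjunct; this restriction is implicit in the paper's statement and should be made explicit in your argument rather than "tracked through" the induction.
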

\begin{proof}
The result is proven by induction over the depth of the proof-tree, and is straightforward.
We sketch one of the case: if the last rule applied is \rnorL~and the right-hand side formula
is an implication
$$\dfrac{\Gamma, \phi_1 \vdash \psi \Rightarrow \theta \qquad \Gamma, \phi_2 \vdash \psi \Rightarrow \theta}{\Gamma, \phi_1 \vee \phi_2 \vdash \psi \Rightarrow \theta}$$
by the induction hypotheses, we have right-focused proofs $\pi_i$ with conclusion $\Gamma, \phi_i, \psi \vdash \theta$ for $i \in \{1, 2\}$. We may then build the tree
$$
\dfrac{\dfrac{\dfrac{\pi_1}{\Gamma, \phi_1, \psi \vdash \theta} \qquad \dfrac{\pi_2}{\Gamma, \phi_2, \psi \vdash \theta}}{\Gamma, \phi_1 \vee \phi_2, \psi \vdash \theta}}{\Gamma, \phi_1 \vee \phi_2 \vdash \psi \Rightarrow \theta}$$
which is right-focused.
\end{proof}

\myparagraph{Step 4}
First, we observe that $\ljboundedvarq$ has a stronger variant of the subformula property:
if all formulas in the conclusion sequent $\Gamma \vdash \psi$ is the translation
of some $\deltazero$ formula, then all formulas occurring in a proof tree are actually
$\deltazero$ formulas.

\begin{lemma}
\label{lem:ljboundedvarq-to-restr}
If $\widetilde\Theta, \widetilde\Gamma \vdash \tilde\psi$ has a right-focused proof tree in 
$\ljboundedvarq$,
then there is a proof of $\Theta; \; \Gamma \vdash \psi$ in our restricted system.
\end{lemma}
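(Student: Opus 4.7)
The plan is to induct on the height of the right-focused $\ljboundedvarq$ proof tree, doing a case analysis on the last rule applied. The key structural lever is that the right-focused condition forces all invertible right rules (\rnandR, \rnandRB, \rnimplR, \rnallRBV) to be applied eagerly, so whenever the outermost connective of $\widetilde\psi$ is $\forall$, $\Rightarrow$ or $\wedge$, the last rule applied must be the matching right rule. Now observe that the shape of $\widetilde\psi$ is completely determined by $\psi$ in a syntactic way: $\widetilde{t \subseteq_T u}$ begins with $\forall$ (then $\Rightarrow$), $\widetilde{t =_{\sett(T)} u}$ and $\widetilde{t =_{T_1 \times T_2} u}$ are conjunctions, $\widetilde{t =_\unit u}$ is $\top$, and $\widetilde{t =_\ursort u}$ is the atomic equality $t =_\ursort u$. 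Thus the right-focusing of the LJ proof automatically mirrors the structure of the right rules of our restricted system ($\subseteq$-R, $=_\sett$-R, $=_\times$-R, $=_\unit$-R), and each corresponding inductive step is immediate.

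For the left-hand side rules of $\ljboundedvarq$ (\rnandL, \rnorL, \rnexLBV, \rnallLBV, \rnimplLB), the right-hand side formula is preserved between conclusion and premises, so I would simply apply the induction hypothesis to the premises and then wrap with the matching left rule of the restricted system, using the admissible rules of Figure~\ref{fig:admissible-rules} (notably \textsc{wk}, \textsc{ax}, $\in$-l, $\subseteq$-l and $\Rightarrow$-l) to mediate minor syntactic mismatches. The rules \rnpairbeta, \rnpaireta, \rnuniteta~and \rneqL~translate directly into their counterparts $\times_\beta$, $\times_\eta$ and \textsc{=-subst}, with $\unit_\eta$-style simplifications handled by the universal satisfaction of $=_\unit$ in the restricted system.

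The delicate cases, which I expect to be the main obstacle, are the two ``non-standard'' right rules of the restricted system: $\in_\sett$-R and $=_\ur$-R. For $\psi = t \in_{\sett(T)} u$, we have $\widetilde\psi = \exists z'(z' \in_{\sett(T)} u \wedge t \mathrel{\widetilde=_{\sett(T)}} z')$, and since $\exists$ on the right has no invertible rule, the last rule applied could be a left-hand side rule (handled by the paragraph above) or an \rnexRBV~step whose premise picks a witness $v$ with $v \in_{\sett(T)} u$ already in $\Theta$ and derives the conjunction $t \mathrel{\widetilde=_{\sett(T)}} v$. The latter situation corresponds exactly to an application of $\in_\sett$-R in the restricted system, after decomposing the conjunct via \rnandRB. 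Similarly, for $\psi = t =_\ursort u$, the right-focused proof reduces to a LJ derivation of the atomic $t =_\ursort u$; to recover a use of $=_\ur$-R (which requires a fresh $z$), I would introduce the auxiliary $z$ via weakening and use the admissibility of \rnimplL~and the admissible equality-reflexivity rule to close the branch, essentially reversing the encoding established in Proposition~\ref{prop:impredureq-tolj}.

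Combining Lemmas~\ref{lem:ljtoboundedvarq} and~\ref{lem:rightfocus} with this translation then yields Lemma~\ref{lem:fromlj}, and together with Lemma~\ref{lem:tolj} establishes the claimed equivalence between our restricted calculus and LJ (on sequents of the restricted shape). The whole argument is effective and the blow-up from the repeated case analyses is bounded polynomially in the size of the input proof, so we also get a polynomial-time translation procedure in both directions.
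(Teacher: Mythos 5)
Your overall strategy --- induction on the right-focused $\ljboundedvarq$ tree, with the observation that right-focusing forces the last rule to mirror the structure of $\widetilde\psi$ and hence the right rules $\subseteq$-\textsc{R}, $=_\sett$-\textsc{R}, $=_\times$-\textsc{R}, $=_\unit$-\textsc{R} of the restricted system --- is the same route the paper takes, and your treatment of the left rules and of the $\in_\sett$-\textsc{R} and $=_\ur$-\textsc{R} cases (which you flag as delicate but which the paper regards as routine) is workable. The genuine gap is in the one case you dismiss as immediate: you assert that \rneqL~``translates directly'' into \textsc{=-subst}. It does not. LJ's rule \rneqL~performs a simultaneous substitution of arbitrary \emph{terms} $s$ and $t$ throughout the sequent, whereas the restricted system's \textsc{=-subst} rule only replaces one \emph{variable} by another variable ($\Theta[y/x]; \Gamma[y/x] \vdash (v \in_T w)[y/x]$ from the hypothesis $x =_\ursort y$). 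When $s$ or $t$ is a compound term such as $\pi_1(w)$, there is no instance of \textsc{=-subst} that applies, and the step as you describe it fails.

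This is precisely the case the paper singles out as the only non-immediate one: it is handled by first proving a separate admissibility result (the analogue of \rneqL~with arbitrary terms is admissible in the restricted system), which is established by an induction on a well-founded measure of the terms involved, using $\times_\eta$ to expand product-typed free variables into pairs and $\times_\beta$ to reduce $\pi_i(\<t_1,t_2\>)$ until both sides of the equality are genuine variables --- exactly the same technique used in the step from $\ljboundedq$ to $\ljboundedvarq$. Your proof needs this lemma (or an equivalent normalization of the terms appearing in equality hypotheses) before the main induction can go through; without it the \rneqL~case is unproved. The fix is routine but it is a missing idea, not a notational quibble.
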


The proof goes by induction over the right-focused $\ljboundedvarq$ proof tree.
All cases are immediate, except for the case of the congruence rule
$$
\dfrac{\Gamma[s/x,t/y] \vdash \psi[s/x,t/y]}{\Gamma[t/x,s/y], \; t =_\ur s \vdash \psi[t/x,s/y]}
$$
This particular case can be treated by showing that the obvious counterpart to this rule is
admissible in the restricted system before embarking on the proof of Lemma~\ref{lem:ljboundedvarq-to-restr}.

\begin{proposition}
\label{prop:eqcong-adm}
The following rule is admissible in our restricted proof system
$$
\dfrac{\Theta[s/x,t/y];\; \Gamma[s/x,t/y] \vdash \psi[s/x,t/y]}{\Theta[t/x,s/y]; \; \Gamma[t/x,s/y], \; t =_\ur s \vdash \psi[t/x,s/y]}
$$
\end{proposition}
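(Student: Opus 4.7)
The strategy is a double induction: the outer induction is on the joint term complexity of $t$ and $s$ (both of type $\ur$), and the inner induction is on the height of the derivation of the premise. The intuition is that the substitutions $[s/x, t/y]$ (in the premise) and $[t/x, s/y]$ (in the conclusion) coincide once $t =_\ur s$ is assumed, so the hypothesis $t =_\ur s$ in the conclusion lets us ``exchange'' the roles of $s$ and $t$. The restricted system's $=$-subst rule already delivers this exchange for variable-to-variable substitutions, and the $\times_\beta$, $\times_\eta$, and $\unit_\eta$ rules allow us to reduce composite term substitutions to variable substitutions.

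For the outer base case, assume $t$ and $s$ are both variables of type $\ur$; write $t = b$ and $s = a$. The statement to prove is: from a derivation of $\Theta[a/x, b/y]; \Gamma[a/x, b/y] \vdash \psi[a/x, b/y]$, produce one of $\Theta[b/x, a/y]; \Gamma[b/x, a/y], b =_\ur a \vdash \psi[b/x, a/y]$. I would proceed by induction on the derivation of the premise, case-splitting on the last rule. Rules that do not rename variables (Contraction, $\wedge$-L, $\vee$-L, $\bot$-L, $\neq$-L, and the right-hand side rules $\subseteq$-R, $=_\sett$-R, $=_\times$-R, $=_\unit$-R, $=_\ur$-R, $\in_\sett$-R, $\in_\ursort$-R) commute directly with the substitution transformation: apply the inductive hypothesis to each subderivation and re-apply the same rule. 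The interesting cases are $\forall$-L, $\exists$-L, $\times_\beta$, $\times_\eta$, and $=$-subst, which themselves substitute terms; for each, one applies the inductive hypothesis to the subderivation after adjusting its substitution in the obvious way, then re-applies the rule to the conclusion. Finally, an axiom-style base case ($\in_\ursort$-R) yields a sequent in which one side uses $a$ and the other $b$, at which point a single additional $=$-subst with the new hypothesis $b =_\ur a$ closes the gap.

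For the outer inductive step, suppose $t$ or $s$ is not a variable. If either term contains a $\beta$-redex $\pi_i(\langle u_1, u_2\rangle)$, apply $\times_\beta$ to the target derivation to contract it; if either contains a free variable $z$ of pair type $T_1 \times T_2$, apply $\times_\eta$ to replace $z$ by $\langle z_1, z_2\rangle$ with fresh $z_1, z_2$; if either contains a free variable of type $\unit$, apply $\unit_\eta$. Each such step strictly decreases the outer measure on terms, and after finitely many steps both $t$ and $s$ are plain variables, reducing to the base case. A preliminary $\alpha$-renaming is used throughout to ensure that the free variables of $t$ and $s$ do not clash with the binders and auxiliary variables introduced by the premise's derivation.

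The main obstacle will be the $=$-subst case in the inner induction: there the derivation already performs an equality-driven substitution $[v/u]$ internally, and we must show that this substitution composes correctly with the outer substitution transformation $[a/x, b/y] \rightsquigarrow [b/x, a/y]$ together with the added hypothesis $b =_\ur a$. The book-keeping here is delicate because $u$ or $v$ may coincide with, or be capture-sensitive to, the variables $x, y, a, b$. I expect to resolve this by a careful normalization step (renaming $u, v$ away from $\{x, y, a, b\}$ before invoking the inductive hypothesis) and by appealing to $=$-subst a second time at the end of the case to realign the two chains of substitutions.
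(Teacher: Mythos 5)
Your plan is correct and follows essentially the same route as the paper: reduce to the case where $s$ and $t$ are variables by eliminating projections and pair-typed free variables with $\times_\beta$ and $\times_\eta$ (the paper explicitly says to mimic the term-measure induction of Lemma~\ref{lem:ljboundedvarq}), and then discharge the variable case with a derivation-level induction whose only real content is pushing the substitution through the proof and invoking $=$-subst to bridge the $a$/$b$ mismatch. Two small cautions: $\unit_\eta$ is a rule of LJ, not of the restricted system of Figure~\ref{fig:ourproofsystem} (and it is not needed, since any $\unit$-typed variable inside an $\ur$-typed term is eliminated by $\times_\beta$ after $\eta$-expansion of pair variables); and your ``outer measure on terms'' cannot be raw term size, because $\times_\eta$ replaces a variable by a pair and so increases it --- you need the lexicographic pair (sum of type-sizes of free variables, term size) used in the proof of Lemma~\ref{lem:ljboundedvarq}.
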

Proposition~\ref{prop:eqcong-adm} can
be proven in a similar way as Lemma~\ref{lem:ljboundedvarq}, by reducing to the case where $s$ and $t$ are variables
using the rules $\times_\beta$ and $\times_\eta$.
Then, similarly to Lemma~\ref{lem:tolj}, Lemma~\ref{lem:ljboundedvarq-to-restr} is 
proven by a routine induction on the proof of the desired sequent in $\ljboundedvarq$, which allows 
us to complete the proof of Lemma~\ref{lem:fromlj}. 

\begin{proof}[Proof of Lemma~\ref{lem:fromlj}]
Assume $\widetilde\Theta, \; \widetilde\Gamma \vdash \tilde\psi$ is derivable in LJ. Because of the
subformula property, it is also derivable in $\ljdeltazero$ and thus, by Lemma~\ref{lem:ljtoboundedvarq},
it is also derivable in $\ljboundedvarq$. Then, Lemma~\ref{lem:rightfocus} shows that it can be done using
a right-focused proof, and then Lemma~\ref{lem:ljboundedvarq-to-restr} allows us to conclude that
$\Theta; \; \Gamma \vdash \psi$ is derivable in the restricted system.
\end{proof}




\subsection{Proof of interpolation for $\deltazero$ formulas in the intuitionistic proof system}

\myeat{
\michael{It would better to state both a classical and an intuistionistic
result}
\pierre{Probably better off as a remark at the end not to have to revamp the text?
``This result can easily be adapted to classical entailments. To do so one, it would suffice
the proof system in the body of the paper to allow for multiple conclusions on the right hand-side
and add in the rule of contraction on the right.''
Otherwise, you can explain the switch replace the right hand-side $\psi$ by some $\Delta_R$
and little else should change.
}
}

Recall that in the body of the paper we made use of a Craig interpolation result
for $\deltazero$ formulas, both for classical validity and intuitionistic provability.
Both may be proven in similar way, but 
we only give the proof for the intuitionistic case here. The classical result is obtained by taking a system with
multiple conclusions. With this caveat, the inductive proof is essentially the same. The
The precise rule can be found in the conclusion of the  body of the paper.


We restate the result, abusing notation by
eliding the difference between membership contexts and $\deltazero$ formulas:

\medskip

Let $\Lambda_L$ and $\Lambda_R$ be multi-sets each consisting possibly of formulas and membership contexts
and $\psi$ a formula. Let $\vec{i}$ be the collection of variables that occur in $\Lambda_L$ and which 
also occur in $\Lambda_R \msu \psi$.
Then for every derivation
$$\Lambda_L \msu \; \Lambda_R \vdash \psi$$
there exists a $\deltazero$ formula $\theta$ with free variables $\vec i$ such that the following holds
\[
\Lambda_L \models \theta \qquad \qquad \text{and} \qquad \qquad \Lambda_R \msu \; \theta \models \psi
\]
Further, there is a polynomial-time algorithm which outputs $\Theta$ when given as input a formal derivation of $\Lambda_L, \; \Lambda_R \vdash \psi$.

\medskip

We use induction on the complexity of the proofs, following the template
presented in  Fitting's textbook \cite{fittingbook}, see also the expositions of 
this  method
in \cite{wernhard, tomanweddell}.
We present here further representative cases of the rules, omitting many cases
that are either trivial or similar to rules that are already covered below.

In order for the inductive argument to go through, we assume that if we have
$t \in u$ in a $\in$-context, then $t$ does not contain a projection $\pi_i$
as a subterm. This can be guaranteed by transforming the proof so that
the initial steps consist of application of the rules $\times_\beta$ and $\times_\eta$,
which are invertible.

The base case consists of rules with no hypotheses.

Consider first the case of a proof consisting only of an application of the rule:

\[
\inferrule{ }{\Lambda \msu  \;   t \neq_\ursort t \vdash u \in_T v}
\]

Note that $t \neq_\ursort t$ is a $\deltazero$ formula representing $\false$, just
as $t =_\ursort  t$ represents $\true$.

If  $t \neq_\ursort t$ is in $\Lambda_L$ we generate $t \neq_\ursort t$, while
if it is in $\Lambda_R$ we generate $t =_\ursort  t$.

For the hypothesis-free rule:
\[
\inferrule{ }{\Theta \msu \; t_0 \in_\ursort u  \msu \; \Gamma \msu  \; t_0 =_\ursort t_1 \msu  \; \ldots \msu  \; t_{k-1} =_\ursort t_k \vdash t_k \in_\ursort u}
\]

we will generate $t \in_\ursort u$ if $t \in_\ursort u$ is in $\Lambda_L$,
and otherwise $\neg (t \in_\ursort u)$.

We now consider the case where the final rule applied is:
\[\inferrule{\Lambda \msu  \; t \in_{\sett(T)} v \vdash t =_{\sett(T)} u}{\Lambda \msu \;  t \in_{\sett(T)} v \vdash u \in_{\sett(T)} v}\]

First consider the subcase where $t \in_{\sett(T)} v$ is in $\Lambda_L$  within  the bottom sequent.
Thus our goal is to find an interpolant $\theta'$ which contains only variables common
to $\Lambda_L \msu \;   t \in_{\sett(T)} v $ and $\Lambda_R \msu \; u \in_{\sett(T)} v$.

We apply the induction hypothesis with the same decomposition of the left side into $L$ and $R$.
It gives us a $\theta$ such that 
$ \Lambda_L\msu \;  t \in_{\sett(T)} v  \vdash \theta$ and
$\Lambda_R \msu \; \theta \vdash  x =_{\sett(T)} u$,
and $\theta$ includes only variables that are common
to $\Lambda_L \msu  \; t \in_{\sett(T)} v $ and  $\Lambda_R \msu \; x =_{\sett(T)} u$.
Thus all the variables in $\theta$ meet the criteria for $\theta'$ except possibly for $t$.

We set $\theta' = \exists t \in v ~ \theta$.
The free variables in $\theta'$ are those of $\theta$ other than $t$, and also $v$, and
thus they meet the desired criteria.

It is easy to see using the properties of $\theta'$ that $\Lambda_L \msu \;   t \in_{\sett(T)} v \models \theta'$ and
$\Lambda_R \msu \; \theta' \models  u \in_{\sett(T)} v$ as required.

In the other subcase, where $t \in_{\sett(T)} v$ is in  $\Lambda_R$,  we can apply the induction hypothesis as above and set
$\theta'=\theta$.

We now turn to the case where the last proof rule is:
\[
\inferrule{\Lambda \msu \; z \in_T \msu \; t \vdash z \in_T u \and z \notin \freevars(\Lambda, t, u)}{\Lambda \vdash t \subseteq_T u}
\]
We call the induction hypothesis on the top sequent, splitting the formulas the same
way but putting $ z \in_T t$ in $\Lambda_R$. We can use the inductively formed interpolant  directly.

Let us turn to the case where the last rule applied is:
\[
\inferrule{\Lambda \msu \; t \in_{T} z \msu \;  \phi[t/y] \vdash v \in_{T'} w}{\Lambda \msu \; t \in_T z \msu \; \forall y \in_T z ~~ \phi\vdash v \in_{T'} w}
\]

To simplify matters, let us assume that $t$ is a single variable.
We first consider the subcase where $\forall y \in_T z ~~ \phi$ is in $\Lambda_R$ in the bottom.
We can apply the induction hypothesis to the top sequent with the partition of formulas being the one induced from 
the partition on the bottom.
The induction gives us a $\theta$ that may use the variable $t$, which may not occur in any formula within $\Lambda_R$
in the bottom sequent, and hence is not allowed in our interpolant for the bottom.
If this happens, then  this implies that $t \in_{T} z$ is in $\Lambda_L$ on the bottom.
 In this case we  set $\theta'= \exists y \in_{T} z~ \theta$.
It is clear that $\Lambda_L \msu \;  t \in_{T} z \models \theta'$.
Since $t$ does not occur in $\Lambda_L$  and
$\Lambda_L\ \msu \;  \phi[t/y] \msu \; \theta \models v \in_{T'} w$ by induction,
we conclude that $\Lambda_L \msu \;  \phi \msu \; \theta' \models v \in_{T'} w$ as required.

Now consider the subcase where  $\forall y \in z ~~ \phi$ is in  $\Lambda_L$ in the bottom sequent.
We apply induction in the same
way, to obtain $\theta$ as above. The only difficult case is when $t$  only occurs in formulas
within $\Lambda_R$ on the bottom. In this case we can check that $\theta' = \forall y \in z ~ ~ \phi$ can
be used as the desired interpolant.

\myeat{
The  case where the last rule applied is:
\[
\inferrule{\Theta \msu \; x \in z \msu \; \phi \msu \; \Lambda \vdash t \in x' \qquad x \notin \freevars(\Lambda \msu \Theta \msu t \msu x')}{\Theta \msu \; \exists x \in z~~\phi \msu \; \Lambda \vdash t \in x'}
\]
is handled similarly to the case immediately above.
}


\subsection{Proof of the higher-type interpolation lemma}
Recall the higher-type interpolation lemma from the body of the paper, which gives the inductive invariant used in the
 synthesis of $\nrcwget$ expressions from proof:

\medskip
Let $\Theta = \Theta_L, \Theta_R$ be a $\in$-context and $\Gamma = \Gamma_L, \Gamma_R$ a context.
Call $L = \freevars(\Theta_L, \Gamma_L)$ the set of left-hand side variables, $R = \freevars(\Theta_R, \Gamma_R)$ the set of right-hand side variables, and $C = \freevars(\Theta_L, \Gamma_L) \cap \freevars(\Theta_R, \Gamma_R)$ the set of common free variables.
Suppose that $t$ and $u$ are terms of suitable types such that $\freevars(t) \subseteq L$
and $\freevars(u) \subseteq R$ and
Then we have:
\begin{compactitem}
\item If $\Theta; \; \Gamma \;\vdash; t =_{T} u$ is derivable, there is an $\nrc$ expression $E$ of type $T$ such that $\Theta; \; \Gamma \models t = E = u$ and $\freevars(E) \subseteq C$.
\item If $\Theta; \; \Gamma \;\vdash\; t \subseteq_T u$ is derivable, there is an $\nrc$ expression $E$ of type $\sett(T)$ such that $\Theta; \; \Gamma \models t \subseteq E \subseteq u$ and $\freevars(E) \subseteq C$.
\item If $\Theta;\; \Gamma \;\vdash\; t \in_T u$ is derivable, then there is an $\nrc$ expression $E$ of type $\sett(T)$ such that $\Theta; \; \Gamma \models t \in E$ and $\freevars(E) \subseteq C$.
\end{compactitem}

Further the desired expressions can be constructed in time polynomial in the size of the proof (e.g. measured
in terms of the number of steps and the maximal size of a sequent in each step).

\medskip

\begin{proof}
First, we assume that if we have $t \in u$ in $\Theta_L, \; \Theta_R$,
then $t$ does not contain a projection $\pi_i$
as a subterm. This can be guaranteed by transforming the proof so that
the initial steps consist of application of the rules $\times_\beta$ and $\times_\eta$,
which are invertible.

We proceed by induction over the proof tree, calling $E$ the  desired 
expression that we want to create in the
inductive step. In each case we will prove the result
for the bottom sequent of a proof rule by making a single call
to the induction hypothesis for each sequent on top of the proof rule.
We will require a partition of the symbols in the top sequent, but it will
always be clear from the bottom sequent.
\begin{itemize}
\item If the last proof rule used is contraction, we directly use the induction hypothesis.
\item If the last proof rule used is $=_\sett$-\textsc{R} then we directly use the induction hypothesis as well.
$$
\dfrac{\Theta; \; \Gamma \vdash t \subseteq u \qquad \qquad \Theta; \; \Gamma \vdash u \subseteq t}{\Theta; \; \Gamma \vdash t = u}
$$
then one has a transformation  $E'$ such that $\Gamma \models t \subseteq E' \subseteq u$ by applying the induction
hypothesis on the first subproof. Since the system is sound, we do have $\Gamma \models t = u$, so
$\Gamma \models t = E' = u$. We can thus take $E = E'$.
\item If the last proof rule used is $=_\times$-\textsc{R}
$$
\dfrac{\Theta; \; \Gamma \vdash \pi_1(t) =_{T_1} \pi_1(u) \qquad \qquad \Theta; \; \Gamma \vdash \pi_2(t) =_{T_2} \pi_2(t)}{\Theta; \; \Gamma \vdash t =_{T_1 \times T_2} u}
$$
The induction hypothesis yields $\nrc$ expressions $E_1$ and $E_2$ such that
$$\Theta; \; \Gamma \models \pi_1(t) = E_1 = \pi_1(u) \qquad \qquad \text{and} \qquad \qquad
\Theta; \; \Gamma \models \pi_2(t) = E_2 = \pi_2(u)$$
It suffices to take $E = (E_1,E_2)$.

\item If the last proof rule used is $=_\unit$-\textsc{R}
$$
\dfrac{}{\Theta; \; \Gamma \vdash t =_\unit u}
$$
Then the expression   returning the unique element of $\unit$ works.

\item If the last proof used is $=_\ur$-\textsc{R}
$$
\dfrac{\Theta, \; t \in_\ursort z; \; \Gamma \vdash u \in_\ursort z \qquad \qquad z \notin \freevars(\Theta, \Gamma,t,u)}{\Theta; \; \Gamma \vdash t =_\ursort u}$$
The induction hypothesis gives us an expression $E'$ of type $\sett(\ursort)$ such that
$$\Theta; \; \Gamma, \; t \in_\ursort z \models u \in E'$$
Note that since $z$ is fresh, we must actually have
$$
\Theta; \; \Gamma \models u \in E'
$$
Applying interpolation, there is a $\deltazero$ formula $\theta(\vec i, z)$ such that
$$\Theta_I, \; \Theta_L; \; \Gamma_L, \; t \in_\ursort z \models \theta(\vec i, z)
\qquad
\text{and}
\qquad
\Theta_R; \; \Gamma_R, \; \theta(\vec i, z) \models u \in_\ursort z
$$
This means that we have
$$\Theta; \; \Gamma, \models \theta(\vec i, z) \leftrightarrow t \in z
$$
In particular $\Theta; \Gamma$
entails that $\{t\}$ is the unique singleton set $z$ satisfying $\theta(vec i,z)$.

So we may take $E$ to be the unique element of $\{ x \in E' \mid \theta(\vec i, \{x\})\}$, which can be
formally defined in $\nrc$ as
$$E = \nrcget \left( \bigcup \{ \case(\verify_\theta(\vec i, \{x\}), \{x\}, \emptyset) \mid x \in E'\} \right)$$

\item If the last proof rule used is $\subseteq$-\textsc{R}
$$
\dfrac{\Theta, \; z \in_T t; \; \Gamma \vdash z \in_T u \qquad z \notin \freevars(\Theta; \; \Gamma, t, u)}{\Theta; \; \Gamma \vdash t \subseteq_T u}
$$
then the inductive hypothesis gives us an expression $E'(\vec i)$ such that
$$\Theta; \; \Gamma \models  z \in E'$$
Apply interpolation to the premise so as to obtain a $\deltazero$ formula $\theta(\vec i, z)$ with
$$\Theta_I, \; \Theta_L; \; \Gamma_L, \; z \in t \models \theta(\vec i, z) \qquad
\text{and}
\qquad
\Theta_R; \; \Gamma_R, \; \theta(\vec i, z) \models z \in u$$
In this case, we take
$$E(\vec i) = \{ z \in E'(\vec i) \mid \theta(\vec i, z) \}$$
which is $\nrc$-definable as
$$\bigcup \{ \case(\verify_\theta(\vec i, z), \{ z \}, \emptyset) \mid z \in E'(\vec i) \}$$
Now, let us assume that $\Gamma$ is valid and show that $t \subseteq E$ and $E \subseteq u$.
\begin{itemize}
\item Suppose that $z \in t$. By the induction hypothesis, we know that $z \in E'$.
But we also know that $\Gamma_L$ is valid, so that $\theta(\vec i, z)$ holds.
By definition, we thus have $z \in E$.
\item Now suppose that $z \in E$, that is, that $z \in E'$ and $\theta(\vec i, z)$ holds.
The latter directly implies that $z \in u$ since $\Gamma_R$ is valid.
\end{itemize}

\item If the last proof rule used is $\in_\sett$-\textsc{R}
$$
\dfrac{\Theta, \; t \in_{\sett(T)} v; \; \Gamma \vdash t =_{\sett(T)} u}{
\Theta, \; t \in_{\sett(T)} v; \; \Gamma \vdash u \in_{\sett(T)} v}
$$
then, by using the induction hypothesis on the premise,
we get an expression $E'$ which is equal to $u$ assuming $\Theta, t \in_{\sett(T)} v; \; \Gamma$.
So we may take $E = \{E'\}$.

\item
If the last proof rule used is $\in_\ur$-\textsc{R}
$$\dfrac{}{\Theta, \; t \in_\ursort u; \; \Gamma \vdash t \in_\ursort u}$$
then it means that $\freevars(t) \subseteq C$, so we may take the expression $\{t\}$.
\item
If the last rule used is $\times_\beta$ or $=$-\textsc{subst} 
\begin{mathpar}
\inferrule{\Theta[t_i/y]; \; \Gamma[t_i/y] \vdash (t \in_T u)[t_i/y] \and i \in \{1,2\}}{\Theta; \; \Gamma[\pi_i(\<t_1,t_2\>)/y] \vdash (t \in_T u)[\pi_i(\<t_1,t_2\>)/y]}
\and
\inferrule{\Theta[y/x]; \; \Gamma[y/x] \vdash v[y/x] \in_T w[y/x]}{\Theta; \; \Gamma, \; x =_\ur y \vdash w \in_T v}
\end{mathpar}
the expression obtained using the induction hypothesis allows to reach our conclusion.
\item If the last rule used is $\times_\eta$
\begin{mathpar}
\inferrule{\Theta[\<x_1, x_2\>/x]; \; \Gamma[\<x_1,x_2\>/x] \vdash (t \in_T u)[\<x_1,x_2\>/x] \and x_1,x_2 \notin \freevars(\Theta; \; \Gamma, t, u)}{\Theta; \; \Gamma \vdash t \in_T u}
\end{mathpar}
then the induction hypothesis yields an expression $E'$. If $x \notin L \cap R$, then we also have that $x_1, x_2 \notin L \cap R$,
so $E'$ has the expected free variables and we may set $E = E'$.
Otherwise, $x_1$ and $x_2$ are among the free variables of $E'$ and $x \in L \cap R$. Writing $E'(\vec z, x_1, x_2)$ to
clarify the free variables, it suffices to set
$$E(\vec z, x) ~~=~~ E'(\vec z, \pi_1(x), \pi_2(x))$$

\item
If the last proof rule is \rnbot
\begin{mathpar}
\inferrule{ }{\Theta; \; \Gamma,\;  \bot \vdash t \in_T u}
\end{mathpar}
then, any expression can be used since the premise is contradictory.
This is also the case for the rule $\neq-\textsc{L}$.

\item
If the last proof rule is \rnandL
\begin{mathpar}
\inferrule{\Theta; \; \Gamma, \;  \phi,\;  \psi  \vdash t \in_T u}{\Theta; \; \Gamma, \; \phi \wedge \psi \vdash t \in_T u}
\end{mathpar}
one may directly take the expression given by the induction hypothesis.

\item
If the last proof rule used is \rnorL
\begin{mathpar}
\inferrule{\Theta; \; \Gamma, \;  \phi \vdash t \in_T u \qquad
\Theta; \; \Gamma, \;  \psi \vdash t \in_T u
}{\Theta; \; \Gamma, \;  \phi \vee \psi \vdash t \in_T u}
\end{mathpar}
the induction hypothesis yields expressions $E_1$ and $E_2$ of sort $\sett(T)$ such that
$$ \Theta; \; \Gamma, \; \phi \models t \in E_1 \qquad \text{and} \qquad \Theta; \; \Gamma, \; \psi \models t \in E_2$$
So we may take $E = E_1 \cup E_2$.

\item
Suppose the last proof rule used is \rnallL
\begin{mathpar}
\inferrule{\Theta, \; t \in_{T} z; \; \Gamma, \; \phi[t/y] \vdash v \in_{T'} w}{\Theta, \; t \in_T z ; \; \Gamma , \; \forall y \in_T z ~~ \phi\vdash v \in_{T'} w}
\end{mathpar}
If $t \in_T z$ and $\forall y \in_T z ~~ \phi$ are both part of the left-hand side or right-hand side,
then we may directly use the inductive hypothesis to obtain an expression $E'$, and we may check that $E = E'$ satisfies the
inductive invariant.
Otherwise, it might be the case that $E'$ contains some additional variables $x_1, \ldots, x_k$ from the term $t$ and that $z \in L \cap R$.
Recall that our preliminary assumption means that $t$ does not contain any projection, so that we have terms $p_1, \ldots, p_k$ with
a single variable $u$ such that $p_i[t/u]$ is semantically equivalent to $x_i$.
Then, we may show that
\[ E = \bigcup \{ E'\left[p_1/x_1, \ldots, p_k/x_k\right] \mid u \in y \} \]
satisfies the invariant.

\item If the last proof rule used is \rnexL
\begin{mathpar}
\inferrule{\Theta, \; x \in_T y; \; \Gamma, \; \phi \vdash t \in_{T'} v \qquad x \notin \freevars(\Theta, \Gamma, y, t, v)}{\Theta; \; \Gamma, \; \exists x \in_T y~~\phi \vdash t \in_{T'} v}
\end{mathpar}
we may apply the induction hypothesis to obtain $E'$ that also satisfy the invariant in the conclusion (note that $\freevars(E') \subseteq L \cap R$ since $x$ is fresh),
so we can conclude by taking $E=E'$.
\end{itemize}
\end{proof}


\section{Reduction to monadic schemas}
In the body of the paper we mentioned a reduction of problems about
$\nrc$ and interpretations to the case of Monadic schemas. This was
explicitly stated in Section 6, but we make use of it also in the arguments
for converting between interpretations and $\nrcwget$ in Section 5.

\subsection*{Reduction to monadic schemas for $\nrc$}
In the body of the paper
we mentioned that it is possible to reduce questions about
definability within $\nrc$ to the case of monadic schemas.
We now give the details of this reduction.

Recall that  \emph{monadic type} is a type built only using
the atomic type $\ursort$ and the type constructor
$\sett$. Monadic types are in one-to-one correspondence with
natural numbers by setting
$\ursort_0 \eqdef \ursort$ and $\ursort_{n+1} \eqdef \sett(\ursort_n)$.
A monadic type is thus a $\ursort_n$ for some $n \in \bbN$.
A nested relational schema is monadic if it contains only monadic types,
and a $\deltazero$ formula is said to be monadic if it all of its variables have monadic types.

We start with a version of the reduction only for $\nrc$ expressions:

\begin{proposition} \label{prop:reducemonadic-nrc}
For any nested relational schema $\aschema$, there is
a monadic nested relational schema $\aschema'$,
 an injection $\convert$ from instances of $\aschema$ to instances
of $\aschema'$ that is definable in $\nrc$, and an $\nrc[\nrcget]$ expression
$\convert^{-1}$
such that $\convert^{-1} \circ \convert$ is the identity transformation from
$\aschema \to \aschema$.

Furthermore, there is a $\deltazero$ formula $\image_\convert$ from $\aschema'$ to $\bool$
such that $\image_\convert(i')$ holds if and only if $i' = \convert(i)$
for some instance $i$ of $\aschema$.
\end{proposition}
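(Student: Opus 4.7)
The plan is to encode each type $T$ appearing in $\aschema$ by a monadic type $T^\sharp = \ursort_{d(T)}$, where $\ursort_0 = \ursort$ and $\ursort_{n+1} = \sett(\ursort_n)$, via a depth function $d$ defined by $d(\ursort) = 0$, $d(\unit) = 1$, $d(\sett(T)) = d(T) + 1$, and $d(T_1 \times T_2) = \max(d(T_1), d(T_2)) + 2$. The monadic schema $\aschema'$ will contain one declaration $x^\sharp : \ursort_{d(T)}$ for each original declaration $x : T$, and $\convert$ is then obtained as the componentwise application of per-type converters $\convert_T$.

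The converters $\convert_T$ and their inverses $\convert_T^{-1}$ will be defined by simultaneous recursion on $T$. The base cases are the identity on $\ursort$ and the constant $\emptyset$ for $\unit$; on $\sett(T)$, $\convert$ acts pointwise via a comprehension. For $T_1 \times T_2$, I would first lift both encoded components to the common depth $m = \max(d(T_1), d(T_2))$ by iterating the padding map $\iota(x) = \{x\} : \ursort_k \to \ursort_{k+1}$, and then emit the Kuratowski pair $\{\{\iota^{m-d(T_1)}(a^\sharp)\}, \{\iota^{m-d(T_1)}(a^\sharp), \iota^{m-d(T_2)}(b^\sharp)\}\} \in \ursort_{m+2}$. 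The inverse $\convert^{-1}$ uses $\nrcget$ to extract the unique element of a singleton (to undo one $\iota$-step) and the unique element of the intersection of a Kuratowski pair (to recover the first component); the second component is then obtained by the standard Kuratowski decoding. The forward map $\convert$, in contrast, stays in pure $\nrc$.

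The image predicate $\image_\convert$ will be built by induction on $T$ as a $\deltazero$ formula $\image_T(y)$ asserting that $y$ has the correct shape: $\top$ for $\ursort$, emptiness for $\unit$, $\forall z \in y\ \image_T(z)$ for $\sett(T)$, and, for $T_1 \times T_2$, the Kuratowski-shape conditions together with the recursive image predicates applied to the decoded and unpadded components. Since decoding and unpadding only access elements lying beneath $y$, these conditions are expressible with bounded quantification and hence remain $\deltazero$. Correctness of $\convert^{-1} \circ \convert = \mathrm{id}$, together with the fact that $\image_\convert$ captures exactly the image of $\convert$, then follows by a routine simultaneous induction on $T$.

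The main obstacle I anticipate is the bookkeeping around the depth-equalisation step for products: the padding maps $\iota$ iterated different numbers of times on the two sides of a product must each be realised as a concrete $\nrc$ expression of the right type, and the $\deltazero$ predicate asserting that an encoded value has been padded the correct number of times must be formulated uniformly in the relevant depths. Once this is handled, the rest of the construction amounts to a standard Kuratowski encoding adapted to the typed, bounded-quantifier setting of $\nrc$ and $\deltazero$, and the polynomial blowup in the sizes of $\convert$, $\convert^{-1}$, and $\image_\convert$ is straightforward to verify.
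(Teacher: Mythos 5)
Your proposal is correct and follows essentially the same route as the paper's proof: the same depth function, padding of the two components of a product to a common depth via iterated singletons, a Kuratowski pair encoding, decoding of projections via $\nrcget$ (with the degenerate singleton-pair case handled by a definable case split), and a parallel induction defining $\convert_T$, $\convert^{-1}_T$, and a $\deltazero$ image predicate using only bounded quantification. The only deviations are cosmetic (e.g.\ encoding $\unit$ as $\emptyset$ at depth $1$ rather than as a constant at depth $0$), and the ``bookkeeping obstacle'' you flag is exactly what the paper resolves with its explicit $\uparrow$/$\downarrow$ maps and $\image_{\uparrow}$ predicates.
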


To prove this we  give an encoding of
general nested relational schemas into monadic nested relational schemas that
will allow us to reduce the equivalence between $\nrc$ expression,
interpretations, and implicit definitions to the case where
input and outputs are monadic.

Note that it  will turn out to be crucial to check that this encoding may be
defined \emph{either} through $\nrc$ expressions or interpretations, but
in this subsection we  will give the definitions in terms of $\nrc$ expressions.

The first step toward defining these encodings is actually to
emulate in a sound way the cartesian product structure for
types $\ursort_n$. Here ``sound'' means that we should give
terms for pairing and projections that satisfy the usual
equations associated with cartesian product structure.

\begin{proposition}
\label{prop:nrc-monadicproduct}
For every $n_1, n_2 \in \bbN$, there are $\nrc$ expressions
$\widehat{\pair}(x,y) : \ursort_{n_1}, \ursort_{n_2} \to \ursort_{\max(n_1,n_2) + 2}$
and $\nrcwget$ expressions $\hat\pi_i(x) : \ursort_{\max(n_1,n_2) + 2} \to \ursort_{n_i}$ for $i \in \{1,2\}$
such that the following equations hold
$$
\hat{\pi}_1\left(\widehat{\pair}(a_1,a_2)\right) ~ = ~ a_1
\qquad \qquad
\hat{\pi}_2\left(\widehat{\pair}(a_1,a_2)\right) ~ = ~ a_2
$$
Furthermore, there is a $\deltazero$ formula $\image_{\widehat{\pair}}(x)$ such that
$\image_{\widehat{\pair}}(a)$ holds if and only if there exists $a_1, a_2$ such
that $\widehat\pair(a_1,a_2) = a$. In such a case, the following also holds
$$\widehat{\pair}(\hat\pi_1(a),\hat\pi_2(a)) ~ = ~ a$$
\end{proposition}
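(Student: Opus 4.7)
\medskip

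\noindent\textbf{Proof plan.} The plan is to simulate the Kuratowski encoding of pairs within the monadic type hierarchy. First, I would introduce a lifting operation $\mathsf{lift}_k^n : \ursort_k \to \ursort_n$ for $k \le n$, defined as the $(n-k)$-fold iterated singleton constructor $x \mapsto \{x\}$. This is plainly definable in $\nrc$. It is an injection whose inverse $\mathsf{unlift}_k^n$ is definable in $\nrcwget$ by iterating $\nrcget$. The image of $\mathsf{lift}_k^n$ is a $\deltazero$-definable predicate $\mathsf{isTower}_k^n(x)$ stating that $x$ is a chain of $(n-k)$ nested singletons ending in some element of $\ursort_k$ (bounded quantification suffices since we only descend into the structure of $x$).

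\medskip

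\noindent With $n = \max(n_1,n_2)$, I would then define the pairing by
\[
\widehat{\pair}(a,b) \;\eqdef\; \bigl\{\, \{\mathsf{lift}_{n_1}^n(a)\},\; \{\mathsf{lift}_{n_1}^n(a),\;\mathsf{lift}_{n_2}^n(b)\}\,\bigr\}
\]
which is an $\nrc$ expression of type $\ursort_{n+2}$, as required. For projections, given $p : \ursort_{n+2}$, set
\[
U(p) \eqdef \bigcup\{x \mid x \in p\}, \qquad I(p) \eqdef \{x \in U(p) \mid \forall y \in p~~ x \in y\}
\]
both of which are $\nrc$-definable of type $\ursort_{n+1}$. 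Then the first coordinate's lifted form is $a' = \nrcget(I(p))$, and the second coordinate's lifted form is $b' = \nrcget(I(p))$ when $I(p) = U(p)$ and $b' = \nrcget(U(p) \setminus I(p))$ otherwise, assembled via the $\case$ construct. Finally I take $\hat\pi_i(p) \eqdef \mathsf{unlift}_{n_i}^n(\ldots)$ applied to the appropriate choice; both projections live in $\nrcwget$.

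\medskip

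\noindent The equations $\hat\pi_i(\widehat{\pair}(a_1,a_2)) = a_i$ then follow by direct computation: when $p = \widehat{\pair}(a_1,a_2)$ we have $I(p) = \{\mathsf{lift}_{n_1}^n(a_1)\}$ and $U(p) = \{\mathsf{lift}_{n_1}^n(a_1),\mathsf{lift}_{n_2}^n(a_2)\}$, so the conditional produces the correct lifted witnesses, which $\mathsf{unlift}$ then recovers. For the image formula, I would let $\image_{\widehat{\pair}}(p)$ assert in $\deltazero$ that $p$ has one or two elements, that each element of $p$ is a subset of $U(p)$, that $I(p)$ is a singleton, that $I(p) \subseteq y$ for every $y \in p$, that $|U(p) \setminus I(p)| \le 1$, and that the (necessarily unique) candidates for $a'$ and $b'$ extracted as above satisfy $\mathsf{isTower}_{n_1}^n$ and $\mathsf{isTower}_{n_2}^n$ respectively. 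This is clearly $\deltazero$ since all quantifications stay within $p$ and its members. The converse equation $\widehat{\pair}(\hat\pi_1(p),\hat\pi_2(p)) = p$ on such $p$ then follows by unwinding the definitions.

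\medskip

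\noindent The main technical nuisance, rather than a deep obstacle, is handling the degenerate case $a_1 = a_2$ where the Kuratowski pair collapses to a single-element set; this forces the $\case$-based definition of the second projection and the corresponding disjunct in $\image_{\widehat{\pair}}$. Everything else is routine verification once the encoding and its image predicate are set up.
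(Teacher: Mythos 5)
Your proposal is correct and follows essentially the same route as the paper: Kuratowski pairs of singleton-lifted elements, $\nrcget$-based unlifting, a case split on the degenerate collapsed pair, and a $\deltazero$ image predicate built from singleton/two-element tests. The only immaterial difference is that you compute the projections via the union $U(p)$ and intersection $I(p)$ of the members of $p$ and branch on $I(p)=U(p)$, whereas the paper routes through an auxiliary $\allpairs$ expression to make the same case distinction and extract the same intersection.
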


\begin{proof}
We adapt the Kuratowski encoding of pairs $(a,b) \mapsto \{ \{a\},\{a,b\}\}$.
The notable thing here is that, for this encoding
to make sense in the typed monadic setting, the types of $a$ and $b$ need to be the same.
This will not be an issue because we have $\nrc$-definable embeddings 
$$\uparrow_n^m : \ursort_n \to \ursort_m$$
for $n \le m$ defined as the $m - n$-fold composition of the singleton transformation $x \mapsto \{x\}$.
This will be sufficient to define the analogues of pairing for monadic types and thus
to define $\convert_T$ by induction over $T$.
On the other hand, $\convert^{-1}_T$ will require a suitable encoding of projections.
This means that to decode an encoding of a pair, we need to make use of a transformation inverse to
the  singleton construct $\uparrow$. But we have this  thanks to the $\nrcget$ construct. We let
$$\downarrow_n^m : \ursort_m \to \ursort_n$$ the transformation inverse to
 $\uparrow_n^m$, defined as the $m - n$-fold composition of $\nrcget$.

Firstly, we define the family of transformations $\widehat{\pair}_{n,m}(x_1,x_2)$,
where $x_i$ is an input of type $\ursort_{n_i}$ for $i \in \{1,2\}$ and
the output is of type $\ursort_{\max(n_1,m_2)+2}$, as follows
$$\widehat{\pair}_{n_1,n_2}(x_1,x_2) \; \eqdef \;
\{ \{ \uparrow x_1 \}, \{\uparrow x_1, \uparrow x_2\}\}$$

The associated projections
$\hat\pi_i^{n_1,n_2}(x)$ where $x$ has type $\ursort_{\max(n_1,n_2)+2}$
and the output is of type $\ursort_{n_i}$
are a bit more challenging to construct.
The basic idea is that there is
first a case distinction to be made for encodings $\widehat{\pair}_{n,m}(x_1,x_2)$:
depending on whether  $\uparrow x_1 = \uparrow x_2$
or not. This can be actually tested by a $\nrc$ expression.
Once this case distinction is made, one may informally
compute the projections as follows:
\begin{itemize}
\item if $\uparrow x_1 = \uparrow x_2$, both projections can be computed
as a suitable downcasting $\downarrow$ (the depth of the downcasting is determined by the output
type, which is not necessarily the same for both projections).
\item otherwise, one needs to single out the singleton $\{\uparrow x_1\}$ and the
two-element set $\{ \uparrow x_1, \uparrow x_2 \}$ in $\nrc$. Then, one
may compute the first projection by downcasting the singleton, and the second projection
by first computing $\{\uparrow x_2\}$ as a set difference and then downcasting with $\downarrow$.
\end{itemize}

We now give the formal encoding for projections, making a similar case distinction.  
To this end, we first define a generic $\nrc$ expression
$$\allpairs_T(x) : \sett(T) \to \sett(T \times T)$$
computing all the pairs of distinct elements of its input $x$
$$\allpairs_T(x) = \bigcup \{ \bigcup \{ \{(y, z)\} \mid y \in x \setminus \{z\}\} \mid z \in x \}$$
Note in particular that $\allpairs(i) = \emptyset$ if and only if $i$ is a singleton or the empty set.
The projections can thus be defined as
$$
\begin{array}{l!~c!~l}
\hat{\pi}_1(x) &\eqdef&
\case\left(\allpairs(x) = \emptyset,~ \downarrow x,~ \downarrow \bigcup\{ \pi_1(z) \cap \pi_2(z) \mid z \in \allpairs(x) \}\right) \\
\tilde{\pi_2}(x) &\eqdef& \case\left(\allpairs(x) = \emptyset, ~
\downarrow x, ~ \downarrow (x \setminus \uparrow \hat{\pi_1}(x)))\right) \\
\end{array}
$$
These definitions crucially ensure that,
for every object $a_i$ with $i \in \{1,2\}$, we  have
$$\hat{\pi}_i\left(\widehat{\pair}(a_1,a_2)\right) ~ = ~ a_i$$

Now all remains to be done is to define $\image_{\widehat\pair}$.
Before that, it is helpful to define a formula $\image_{\uparrow_n^m}(x)$
which holds if and only if $x$ is in the image of $\image_{\uparrow_n^m}$.

As a preliminary step, define generic $\deltazero$ formulas $\issing(x)$ and $\istwo(x)$
taking an object of type $\sett(T)$ and returning a Boolean indicating
whether the object is a singleton or a two-element set.
Defining $\image_{\uparrow_n^m}$ is straightforward using $\issing$ and Boolean connectives.
Then $\image_{\widehat{\pair}_{n,n}}(x)$ can be defined as follows for each $n \in \bbN$
$$
\begin{array}{lcl}
\image_{\widehat{\pair}_{n,n}}(x) &\eqdef&
\left(\issing(x) \wedge \image_{\widehat{\pair}_{n,n}}^{\issing}(x)\right) \vee 
\left(\istwo(x) \wedge \image_{\widehat{\pair}_{n,n}}^{\issing}(x)\right)
\\ 
\image_{\widehat{\pair}_{n,n}}^\issing(x) &\eqdef& \exists z \in x~~ \issing(z) \\
\image_{\widehat{\pair}_{n,n}}^\istwo(x) &\eqdef& \exists z \; z' \in x~~ (\istwo(z) \wedge \issing(z') \wedge \forall y \in z~~ y \in z')
\end{array}
$$
Then, the more general $\image_{\widehat{\pair}_{n_1,n_2}}$ can be defined using
$\image_{\uparrow_{n_i}^{m}}$ where $m = \max(n_1,n_2)$.
$$
\begin{array}{lll}
\image_{\widehat{\pair}_{n_1,n_2}}(x)
&\eqdef&
\image_{\widehat{\pair}_{m,m}}(x) ~\cap~
\image_{\uparrow_{n_1}^m}(\widehat\pi_1(x)) ~\cap~
\image_{\uparrow_{n_2}^m}(\widehat\pi_2(x))
\end{array}
$$

One can then easily check that $\image_{\widehat{\pair}}$ does have the advertised property:
if $\image_{\widehat{\pair}}(a)$ holds for some
object $a$, then there are $a_1$ and $a_2$ such that $\widehat\pair(a_1,a_2) = a$ and we have
$$\widehat{\pair}(\hat\pi_1(a),\hat\pi_2(a)) ~ = ~ a$$
\end{proof}

We are now ready to give the proof of the proposition  given
at the beginning of this subsection.

\begin{proof}

$\convert_T$, $\convert^{-1}_T$ and $\image_{\convert_T}$ are defined
by induction over $T$. Beforehand, define the
map $d$ taking a type $T$ to a natural number $d(T)$ so
that $\convert$ maps instances of type $T$ to monadic types $\ursort_{d(T)}$.
$$
\begin{array}{l!~c!~ l !\qquad l!~c!~l}
d(\ursort) &=& 0 &
d(\sett(T)) &=& 1 + d(T) \\
d(T_1 \times T_2) &=& 2 + \max(d(T_1),d(T_2))
&
d(\unit) &=& 0 \\
\end{array}
$$

$\convert_T$, $\convert^{-1}_T$ and $\image_{\convert_T}$ are then defined by the following clauses,
where we write $\map\left(z \mapsto E\right)(x)$ for the $\nrc$ expression $\bigcup\{\{E\} \mid z \in x \}$. 

$$
\begin{array}{lcl}
\convert_\ursort(x) &\eqdef& x \\
\convert_{\sett(T)}(x) &\eqdef& \map\left(z \mapsto \convert_{T}(z)\right)(x) \\
\convert_\unit(x) &\eqdef& c_0 \\
\convert_{T_1 \times T_2}(x) &\eqdef& \widehat{\pair}(\convert_{T_1}(\pi_1(x)), \convert_{T_2}(\pi_2(x))) \\
\\
\convert^{-1}_\ursort(x) &\eqdef& x \\
\convert^{-1}_{\sett(T)}(x) &\eqdef& \map\left(z \mapsto \convert^{-1}_{T}(z)\right)(x) \\
\convert_\unit(x) &\eqdef& () \\
\convert^{-1}_{T_1 \times T_2}(x) &\eqdef& \left\< \convert^{-1}_{T_1}(\hat\pi_1(x)), \convert_{T_2}^{-1}(\hat\pi_2(x)) \right\> \\
\\
\image_{\convert_\ursort}(x) &\eqdef& \true \\
\image_{\convert_{\sett(T)}}(x) &\eqdef& \forall z \in x ~~ \image_{\convert_{T}}(z) \\
\image_{\convert_{T_1 \times T_2}}(x) &\eqdef&
\image_{\pair_{d(T_1),d(T_2)}}(x) \wedge
\image_{\convert_{T_1}}(\hat\pi_1(x)) \wedge \image_{\convert_{T_2}}(\hat\pi_2(x))
\end{array}
$$
It is easy to check, by induction over $T$, that for every object $a$ of type $T$
$$\convert^{-1}(\convert(a)) = a$$
and that for every object $b$ of type $\ursort_{d(T)}$, if $\image_{\convert_T}(b) = \true$, then it lies in the image
of $\convert_T$ and $\convert(\convert^{-1}(b)) = b$.
\end{proof}


\subsection{Monadic reduction for interpretations}
We have  seen so far that it is possible to reduce questions about
definability within $\nrc$ to the case of monadic schema.
Now we turn to the analogous statement for interpretations, given by
the following proposition:

\begin{proposition}
\label{prop:reducemonadic-interp}
For any object schema $\aschema$, there is
 a monadic nested relational schema $\aschema'$,
 a $\deltazero$ interpretation $\interp_\convert$ from instances of $\aschema$ to instances
of $\aschema'$, and another interpretation
$\interp_{\convert^{-1}}$ from instances of $\aschema$ to instances of $\aschema'$ compatible
with $\convert$ and $\convert^{-1}$ as defined in Proposition~\ref{prop:reducemonadic-interp} in the following sense:
for every instance $I$ of $\aschema$ and
for every instance $J$ of $\aschema'$ in the codomain of $\convert$, we have

$$\convert^{-1}(J) = \collapse(\interp_{\convert^{-1}}(J)) \qquad \qquad \convert(I) = \collapse(\interp_\convert(I))$$
\end{proposition}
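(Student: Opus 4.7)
The plan is to deduce Proposition~\ref{prop:reducemonadic-interp} almost immediately from the work already done, by invoking Theorem~\ref{thm:coddnrc} (the half that converts $\nrcwget$ expressions to $\deltazero$ interpretations) on the $\nrc$ and $\nrcwget$ expressions $\convert$ and $\convert^{-1}$ provided by Proposition~\ref{prop:reducemonadic-nrc}. Concretely, I take $\aschema'$ to be the same monadic schema constructed there, and I let $\interp_\convert$ (respectively $\interp_{\convert^{-1}}$) be the $\deltazero$ interpretation obtained by applying the $\nrcwget$-to-interpretation translation of Theorem~\ref{thm:coddnrc} to the $\nrc$ expression $\convert$ (respectively to the $\nrcwget$ expression $\convert^{-1}$). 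Since Theorem~\ref{thm:coddnrc} in particular yields that the $\nrcwget$ expression and the resulting interpretation define the \emph{same} nested relational transformation, and since, by definition, the transformation defined by an interpretation $\interp$ on an input $\inobj$ is $\collapse(\outobj, \interp(\inobj))$, we immediately get the two compatibility equations $\convert(I) = \collapse(\interp_\convert(I))$ and $\convert^{-1}(J) = \collapse(\interp_{\convert^{-1}}(J))$ for every $I$ of schema $\aschema$ and every $J$ in the image of $\convert$.

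The only points worth checking explicitly are (i) that the type of $\convert^{-1}$ indeed falls within the scope of Theorem~\ref{thm:coddnrc} (since the expression uses $\nrcget$ through the $\downarrow$-operators from Proposition~\ref{prop:nrc-monadicproduct}, but this is precisely what the $\nrcwget$ version of the theorem handles), and (ii) that on instances $J$ not in the image of $\convert$ we do not need anything — the statement quantifies $J$ over the codomain of $\convert$, so partiality of $\convert^{-1}$ outside its intended domain is irrelevant.

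If one preferred not to black-box Theorem~\ref{thm:coddnrc}, a direct construction is also possible and is a useful sanity check: one would define $\interp_{\convert_T}$ and $\interp_{\convert_T^{-1}}$ by induction on $T$, mirroring the inductive definitions in Proposition~\ref{prop:reducemonadic-nrc}. For the base cases ($T = \ur$ and $T = \unit$) the interpretations are the identity on $\ur$ and the trivial interpretation; for the set case the interpretation reuses the same sort and element relation, relativised to subobjects of the input constant; and for the product case one codes $T_1 \times T_2$ via $\ur_{\max(d(T_1), d(T_2))+2}$ using the $\deltazero$ formulas underlying $\widehat\pair$, $\hat\pi_1$, $\hat\pi_2$ and $\image_{\widehat\pair}$ from Proposition~\ref{prop:nrc-monadicproduct}. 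One then closes under composition using Proposition~\ref{prop:fointerp-comp} exactly as in the $\nrc$ definition. Compatibility with $\collapse$ is then verified by a straightforward structural induction on $T$, using the equations $\hat\pi_i(\widehat\pair(a_1, a_2)) = a_i$.

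The main conceptual obstacle, such as it is, lies in matching the ``constant-vs-free-variable'' viewpoints between $\nrc$ and interpretations: $\nrc$ expressions take inputs as free variables, whereas interpretations take them as distinguished constants whose subobjects populate the input sorts. This shift is handled uniformly inside the proof of Theorem~\ref{thm:coddnrc}, so once we invoke that theorem no further bookkeeping is required. Beyond this, the proposition is essentially a corollary of Proposition~\ref{prop:reducemonadic-nrc} and Theorem~\ref{thm:coddnrc}, and no new ideas are needed.
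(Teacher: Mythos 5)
Your primary route is circular. The half of Theorem~\ref{thm:coddnrc} you want to invoke (the translation from $\nrcwget$ expressions to $\deltazero$ interpretations, Lemma~\ref{lem:forward} in the supplementary materials) is itself proved by first reducing to monadic input and output schemas: its proof explicitly constructs an interpretation for $\convert(E(\convert^{-1}(x)))$ and then pre/post-composes with $\interp_{\convert}$ and $\interp_{\convert^{-1}}$, i.e.\ it \emph{leverages Proposition~\ref{prop:reducemonadic-interp}} as stated. Since $\convert$ itself is an $\nrc$ expression whose input schema is the arbitrary (non-monadic) $\aschema$, applying the general $\nrcwget$-to-interpretation translation to $\convert$ is exactly the case whose proof presupposes the proposition you are trying to establish. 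So the ``almost immediate corollary'' reading does not stand on its own.

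The good news is that your fallback paragraph --- the direct inductive construction of $\interp_{\convert_T}$ and $\interp_{\convert_T^{-1}}$ mirroring the clauses of Proposition~\ref{prop:reducemonadic-nrc}, using interpretation-level analogues of $\widehat\pair$, $\hat\pi_1$, $\hat\pi_2$, singleton, map and pairing, and closing under composition via Proposition~\ref{prop:fointerp-comp} --- is not merely a sanity check: it is the actual proof, and it is essentially what the paper does (via Proposition~\ref{prop:interp-sing-cup-map-pair} and hand-built interpretations $\interp_{\uparrow}$, $\interp_{\downarrow}$, $\interp_{\widehat\pair}$, $\interp_{\hat\pi_i}$). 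To repair your write-up, promote that paragraph to the main argument, and fill in the one place where real work is needed: the interpretations of the Kuratowski projections $\hat\pi_1$ and $\hat\pi_2$ require explicit $\deltazero$ domain formulas (distinguishing the singleton case from the two-element case of the encoded pair), since these are the only constructs that do not follow mechanically from the generic closure properties.
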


Before proving Proposition~\ref{prop:reducemonadic-interp}, it is helpful to check that
a number of basic $\nrc$ connectives may be defined at the level of interpretations.
To do so, we first present a technical result for more general interpretations.

\begin{proposition}
\label{prop:mostowski-collapse-interp}
For any sort $T$, there is an interpretation
of $\aschema_T$ into $\aschema_T$ taking a models $M$ whose every sort is non-empty and $\booltype$
has at least two elements to a model $M$ of $\oneobjth(T)$.
Furthermore, we have that $M'$ is (up to isomorphism) the largest quotient of $M'$ satisfying $\oneobjth(T)$.
\end{proposition}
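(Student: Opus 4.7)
The plan is to construct $\interp$ by defining a family of $\deltazero$ formulas $\phi_\equiv^{T'}(x,y)$ and $\phi_\domainof^{T'}(x)$, one for each component type $T'$ of $T$, by recursion on $T'$. Since the set of component types of $T$ is finite, this produces finitely many formulas. Intuitively, $\phi_\equiv^{T'}$ identifies exactly those elements of sort $T'$ that must be equated by any model of $\oneobjth(T)$ built on top of the membership and projection structure provided by $M$. I set $\phi_\equiv^{\ursort}(x,y) := x =_{\ursort} y$, $\phi_\equiv^{\unit}(x,y) := \top$, and for products
\[
\phi_\equiv^{T_1 \times T_2}(x,y) := \phi_\equiv^{T_1}(\pi_1(x),\pi_1(y)) \wedge \phi_\equiv^{T_2}(\pi_2(x),\pi_2(y)),
\]
while for set types I use the Mostowski-style clause
\[
\phi_\equiv^{\sett(T')}(x,y) := (\forall z \in_{T'} x\;\exists z' \in_{T'} y\;\phi_\equiv^{T'}(z,z')) \wedge (\forall z' \in_{T'} y\;\exists z \in_{T'} x\;\phi_\equiv^{T'}(z,z')).
\]
I take $\phi_\domainof^{T'}(x) := \phi_\equiv^{T'}(x,x)$; an easy induction shows this to be equivalent to $\top$. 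The equality symbol at sort $T'$ is interpreted by $\phi_\equiv^{T'}$, and $\in_{T'}$ by $\exists z \in_{T'} y\;\phi_\equiv^{T'}(x,z)$. All function symbols (projections, tupling, the unit constant, the Boolean constants and the distinguished constant $\obj$) are interpreted by themselves.

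I then verify the side-conditions in the definition of interpretation: $\phi_\equiv^{T'}$ is an equivalence relation, and the pullbacks of equality and membership are well defined with respect to it. Reflexivity, symmetry and transitivity follow by simultaneous induction on $T'$; at sort $\sett(T')$ one uses the inductive hypothesis and the non-emptiness assumption to produce the required representatives. Compatibility of pairing, projections and $\in_{T'}$ with $\phi_\equiv$ is similarly routine. Next I check that $M' = \interp(M)$ satisfies every axiom of $\oneobjth(T)$. The congruence axioms reduce to the compatibility just established. The projection laws $\pi_i(\<t_1,t_2\>) = t_i$ hold on the nose in $M$ and hence in $M'$. The $\eta$-law $\<\pi_1(x),\pi_2(x)\> =_{T_1 \times T_2} x$ holds because both sides have matching projections in $M$, and $\phi_\equiv^{T_1 \times T_2}$ was defined to be agreement on projections. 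Extensionality for $\sett(T')$ is immediate from the definition of $\phi_\equiv^{\sett(T')}$, and singleton-ness of $\unit$ follows from $\phi_\equiv^{\unit} := \top$ combined with the non-emptiness hypothesis.

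For the universal property, suppose $q : M \to N$ is a surjective homomorphism preserving all functions and relations of $\aschema_T$, with $N$ a model of $\oneobjth(T)$. By induction on the component type $T'$ I would prove that whenever $\phi_\equiv^{T'}(a,b)$ holds in $M$ we have $q(a) = q(b)$ in $N$: the case of $\ursort$ uses that $q$ preserves $=_\ursort$, the case of $\unit$ uses that $N$ makes $\unit$ a singleton, the case of $T_1 \times T_2$ uses the $\eta$-law and congruence of equality in $N$, and the case of $\sett(T')$ uses extensionality in $N$ together with preservation of $\in_{T'}$ and the inductive hypothesis. Combined with surjectivity of $q$, this yields a unique factorisation of $q$ through $M'$, showing that $M'$ is, up to isomorphism, the largest quotient of $M$ satisfying $\oneobjth(T)$.

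The main obstacle will be managing the simultaneous induction on component types cleanly. In particular, transitivity of $\phi_\equiv^{\sett(T')}$ and compatibility of $\in_{T'}$ with it both require the induction hypothesis at $T'$, and verifying the pairing $\eta$-law interacts correctly with the product clause requires being careful that $\phi_\equiv^{T_1 \times T_2}$ depends only on the projections of its arguments — otherwise the pullback of equality could fail to be preserved by tupling. No infinitary machinery is required, since the recursion on types is finite and all properties are first-order in $M$.
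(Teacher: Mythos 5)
Your construction is correct, and it follows the same basic strategy as the paper: exhibit the collapse as a quotient interpretation given by a $\deltazero$-definable family $\phi_\equiv^{T'}$ over the component types, with the identity on function symbols. The difference is in the equivalence itself. The paper's proof consists only of the defining clauses, and its clauses at compound sorts are \emph{non-recursive}: $\phi_\equiv^{\sett(T')}(x,y)$ is literal coextensionality $\forall z\,(z \in x \Leftrightarrow z \in y)$ and $\phi_\equiv^{T_1\times T_2}$ compares projections under the input structure's own equality, whereas you recurse through the component types (back-and-forth up to $\phi_\equiv^{T'}$ at set sorts, projectionwise $\phi_\equiv^{T_i}$ at products). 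For an arbitrary input structure your hereditary version is what the ``largest quotient satisfying $\oneobjth(T)$'' claim actually requires: a one-level identification at sort $T'$ can force further identifications at sort $\sett(T')$ that the paper's literal clause does not see, so the paper's relation yields the largest such quotient only when the input is already extensional below the top level, or after iterating the one-step quotient (which is legitimate, since interpretations compose, but is not said). You also supply everything the paper omits — the PER and congruence checks, verification of the $\oneobjth(T)$ axioms in the quotient, and the universal property via factorisation of any full surjective homomorphism onto a model of $\oneobjth(T)$ — and your plan for these is sound (just note that the factorisation argument needs the quotient map to be full, i.e.\ relations in the target are exactly the images, which is the standard reading of ``quotient'' here). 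In short: same idea, but your recursive formulation is the more robust one and actually carries the ``largest quotient'' claim in full generality.
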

\begin{proof}
This interpretation corresponds to a quotient of the input, that is definable at
every sort
\[
\begin{array}{lcl!\qquad lcl}
\varphi_\equiv^{\sett(T)}(x,y) &~~=~~&
\forall z~(z \in x \Leftrightarrow z \in y)
& \varphi_\equiv^{T_1 \times T_2}(x,y) &~~=~~& \pi_1(x) = \pi_1(y) \wedge \pi_2(x) = \pi_2(y) \\
\varphi_\equiv^{\unit}(x,y) &~~=~~& \top
& \varphi_\equiv^{\ursort}(x,y) &~~=~~& x =_\ursort y \\
\end{array}
\]
\end{proof}

\begin{proposition}
\label{prop:interp-sing-cup-map-pair}
The following $\deltazero$-interpretations are definable:
\begin{itemize}
\item $\interp_\sing$ defining the transformation $x \mapsto \{x\}$.
\item $\interp_\cup$ defining the transformation $x,y \mapsto x \cup y$.
\end{itemize}
Furthermore, assuming that $\interp$ is a $\deltazero$-interpretation
defining a transformation $E$ and $\interp'$ is a $\deltazero$-interpretation
defining a transformation $R$, the following $\deltazero$-interpretations
are also definable:
\begin{itemize}
\item $\map(\interp)$ defining the transformation $x \mapsto \{ E(y) \mid x \in y\}$.
\item $\<\interp,\interp'\>$ defining the transformation $x,y \mapsto (E(x),F(y))$.
\end{itemize}
\end{proposition}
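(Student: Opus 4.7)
The plan is to give direct constructions of each of the four interpretations, specifying in each case the coding tuples $\interpsort(\sort)$, the domain and equivalence formulas $\varphi_\domainof^\sort$ and $\varphi_\equiv^\sort$, the formulas interpreting the element relations, the terms interpreting projections and tupling, and the distinguished output constant, and then verifying that the congruence constraints are met and that the Mostowski collapse of the output yields the intended transformation. For $\interp_\sing$ (input $\inobj : T$, output $\{\inobj\} : \sett(T)$), all component types of $T$ are copied from the input via the identity interpretation, while the new top sort $\sett(T)$ is coded by $\interpsort(\sett(T)) = (\unit)$ with trivial $\varphi_\domainof^{\sett(T)}$ and $\varphi_\equiv^{\sett(T)}$; the element relation is $\varphi_{\in_T}(z, x) = (z =_T \inobj)$ and the output constant is the unique inhabitant of $\unit$. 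For $\interp_\cup$, the analogous construction applies, with $\varphi_{\in_T}(z, x) = (z \in_T \inobj_1) \vee (z \in_T \inobj_2)$. Collapse correctness in both cases is immediate.

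For $\<\interp, \interp'\>$ producing pairs, we take the disjoint union of the sorts, codings, and formulas of $\interp$ and $\interp'$, tagging with $\booltype$ when a sort $\sort$ lies in the component types of both $T'$ and $T''$: in such shared cases we set $\interpsort(\sort) = \booltype \cdot \interpsort_\interp(\sort) \cdot \interpsort_{\interp'}(\sort)$, and let the domain, equivalence, and element-relation formulas branch on the tag. The new top sort $T' \times T''$ is coded by $\interpsort_\interp(T') \cdot \interpsort_{\interp'}(T'')$, with output constant the concatenation of the two individual output constants; the projection terms extract the appropriate sub-tuple and the pairing term is concatenation. Congruence for the resulting interpretation is inherited directly from that of $\interp$ and $\interp'$.

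The main obstacle is the construction of $\map(\interp)$, since it requires re-parameterizing every sort and formula of $\interp$ by a fresh variable $y$ ranging over the input set $X$. For each sort $\sort$ in the component types of $T'$, we introduce a sort $\sort$ in $\map(\interp)$ with coding tuple $\interpsort_{\map(\interp)}(\sort) = T \cdot \interpsort_\interp(\sort)$, prepending a $T$-component for the index. All formulas of $\interp$ are lifted to $\map(\interp)$ by substituting $y$ (respectively $y_1$ in equivalence and element-relation formulas) for the distinguished input constant $\inobj$, and by conjoining with $y \in_T X$, together with $y_1 =_T y_2$ in the equivalence and element-relation formulas for sorts strictly below $T'$. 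The new top sort $\sett(T')$ is coded by $(\unit)$, with element relation $\varphi_{\in_{T'}}((y, \vec x), \ast) = (y \in_T X) \wedge \varphi_\domainof^{\interp, T'}(\vec x)[y/\inobj]$. The key insight for correctness is that the Mostowski collapse at sort $\sett(T')$ quotients by set equality automatically: we do not need the equivalence at sort $T'$ to identify $(y_1, \vec x_1)$ with $(y_2, \vec x_2)$ when $E(y_1)$ and $E(y_2)$ happen to collapse to the same nested relation, since $\collapse$ will merge them at the $\sett(T')$ level. The remaining verification of the congruence constraints and of the commutation of $\collapse$ with this parameterization is then a routine induction on the structure of the component types of $T'$.
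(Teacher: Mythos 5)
Your overall blueprint matches the paper's: identity/trivial codings for $\interp_\sing$ and $\interp_\cup$, a $\booltype$-tagged union of the two codings for $\<\interp,\interp'\>$, and an index-prepended re-parameterization of $\interp$ for $\map(\interp)$. The first two constructions are essentially the paper's and are fine.

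The gap is in how you treat equivalence in $\map(\interp)$ and $\<\interp,\interp'\>$. By conjoining $y_1 =_T y_2$ (resp.\ branching on the $\booltype$ tag) into $\varphi_\equiv^\sort$, you produce output structures in which two codes $(y_1,\vec x_1)$ and $(y_2,\vec x_2)$ can have exactly the same members under $\varphi_\in$ yet be inequivalent --- this happens whenever $E(y_1)$ and $E(y_2)$ (or the two components of the pair) share a subobject at some set sort. Such a structure violates set extensionality, which is an axiom of the output theory $\Sigma(T_{out})$, so by the paper's definition what you have built is not a $\deltazero$ interpretation of $\Sigma(T_{out})$ at all. Your ``collapse merges duplicates'' argument only rescues the top-level object $\collapse(\outobj,M')$; it does not make the interpretation well-formed, and --- more importantly for how this proposition is actually used --- it does not guarantee correctness under composition: Proposition~\ref{prop:fointerp-comp} and Lemma~\ref{lem:forward} feed these outputs into further interpretations whose formulas mention $=_{\sett(T)}$, which is then interpreted as your non-extensional $\varphi_\equiv$. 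The paper closes this hole in two different ways: for $\map(\interp)$ it defines $\phi_\in^{T'}(s,\vec x,s',\vec y)$ as $\exists \vec x'\,\bigl(\psi_\in^{T'}(\vec x',\vec y)[s'/\inobj]\wedge\phi_\equiv^{T'}(s,\vec x,s',\vec x')\bigr)$, so that membership is closed under an equivalence built extensionally by induction on the type; for $\<\interp,\interp'\>$ it explicitly notes the failure of $\oneobjth$ and post-composes with the quotient interpretation of Proposition~\ref{prop:mostowski-collapse-interp}. Either repair would also fix your construction. A smaller point: you should exempt $\ursort$ (and $\unit$) from the index-prepending in $\map(\interp)$, since the interpretation is required to preserve $\ursort$.
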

\begin{proof}
\begin{itemize}
\item
For the singleton construction $\{e\}$ with $e$ of type $T$, we take the interpretation
$\interp_e$ for $e$, where $e$ itself is interpreted by a constant $c$ and we add an extra
level represented by an input constant $c'$. Then $\varphi_\domainof^{\sett(T)}(x)$ is set to $y = c'$ and
$\phi_\in^T(x,y)$ to $x = c \wedge y = c'$.
\item The empty set $\{\}$ at type $\sett(T)$ is given by the trivial interpretation
where $\varphi_\domainof^{\sett(T)}(x)$ is set to $x = c$ for some constant $c$
and $\varphi_\domainof^{T'}$ is set to false for $T'$ a component type of $T$, as well as all the $\varphi_\in^T$.
\item For the binary union $\cup : \sett(T), \sett(T) \to \sett(T)$, the interpretation is easy: $T$
is interpreted as itself. The difference between input and output
is that $\sett(T) \times \sett(T)$ is not an output sort and that $\sett(T)$ is interpreted as a single
element, the constant $()$ of $\unit$.
$$
\begin{array}{lcl}
\phi_\domainof^{\sett(T)}(x) &~~ \eqdef ~~& x = () \\
\phi_\in^T(z,x) &~~ \eqdef ~~& z \in \pi_1(\inobj) \vee z \in \pi_2(\inobj)
\end{array}
$$
\item We now discuss the $\map$ operator. Assume that we have an interpretation
$\interp$ defining a transformation $S \to T$ that we want to lift to an interpretation $\map(\interp) : \sett(S) \to \sett(T)$.
Let us write $\psi_\domainof^{T'}$, $\psi_\in^{T'}$ and $\psi_\equiv^{T'}$ for the formulas making
up $\interp$ and reserve the $\phi$ formulas for $\map(\interp)$.
At the level of sort, let us write $\interpsort^\interp$ and $\interpsort^{\map(\interp)}$ to distinguish the two.

For every $T' \subtype T$ such that $T'$ is not a cartesian product or a component type of $\booltype$,
we set $\interpsort^{\map(\interp)}(T') = S, \interpsort^{\interp}$. This means that objects of
sort $T'$ are interpreted as in $\interp$ with an additional tag of sort $S$.
We interpret the output object $\sett(T)$ as a singleton by setting $\interpsort^{\map(\interp)}(\sett(T)) = \unit$.

Assuming that $T \neq \ursort, \unit$, $\map(\interp)$ is determined by setting the following 
$$
\begin{array}{lcl}
\phi_\domainof^\ursort(a) &~~\eqdef~~& \exists s \in \inobj~~ \psi_\domainof(a)[s/\inobj] \\
\phi_\in^\ursort(a, s, \vec x) &~~\eqdef~~& \psi_\in^\ursort(a,\vec x)[s/\inobj] \\
\\
\phi_\domainof^{T'}(s, \vec x) &~~\eqdef~~& \psi_\domainof^{T'}(x)[s'/\inobj] \\
\phi_\in^{T'}(s, \vec x, s', \vec y) &~~\eqdef~~& \exists \vec{x'}~~ \psi_\in^{T'}(\vec{x'}, \vec y)[s'/\inobj] \wedge \phi^{T'}_\equiv(s,\vec x,s',\vec x') \\
\\
\phi_\domainof^{T}(s, \vec x) &~~\eqdef~~& s \in \inobj \\
\phi_\in^{T}(s, \vec x) &~~\eqdef~~& \phi_\domainof^T(s, \vec x) \\
\end{array}
$$
where $[x/\inobj]$ means that we replace occurrences of the constant
$\inobj$ by the variable $x$ and sorts $T'$ and $T' \times T''$ are component types of
$T$. Note that this definition is technically by induction over the type, as we use
$\phi^{T'}_\equiv$ to define $\phi_\in^{T'}$.
In case $T$ is $\ursort$ or $\unit$, the last two formulas $\phi_\domainof^{T}$ and $\phi_\in^{T}$ need to change.
If $T = \unit$, then we set
$$\phi_\domainof^{\unit}(c_0) \eqdef \phi_\in^\unit(c_0,c_0) \eqdef \exists s \in \inobj~~ \top$$
and if $T = \ursort$, we set
$$\phi_\domainof^{\ursort}(a) \eqdef \phi_\in^\ursort(a) \eqdef  \exists s \in \inobj~~ \psi_\domainof(a)[s/\inobj]$$
\item Finally we need to discuss the pairing of two interpretation-definable
transformations $\< \interp_1, \interp_2 \> : S \to T_1 \times T_2$.
Similarly as for map we reserve
$\phi_\domainof^T$, $\phi_\in^T$ and $\phi_\equiv^T$
formulas for the interpretation $\< \interp_1, \interp_2 \>$.
We write
$\psi_\domainof^T$, $\psi_\in^T$ and $\psi_\equiv^T$
for components of $\interp$ and
$\theta_\domainof^T$, $\theta_\in^T$ and $\theta_\equiv^T$
for components of $\interp'$.

Now, the basic idea is to interpret output sorts of $\< \interp_1, \interp_2 \>$ as tagged unions
of elements that either come from $\interp_1$ or $\interp_2$.
Here, we exploit the assumption that $\aschema_T$ contains the sort $\booltype$.
and that every sort is non-empty to interpret the tag of the union.
The union itself is then encoded as a concatenation of a tuple representing
a would-be element form $\interp_1$ with another tuple representing a would-be element from $\interp_2$,
the correct component being selected with the tag. For that second trick to work, note that
we exploit the fact that every sort has a non-empty denotation in the input structure.
Concretely, for every $T$ component type of either $T_1$ or $T_2$, we thus set
$$
\begin{array}{lcll}
\interpsort^{\<\interp_1,\interp_2\>}(T) &~~ \eqdef ~~& \booltype, \interpsort^{\interp_1}(T), \interpsort^{\interp_2}(T)
\\
\phi_\domainof^T(u, \vec x, \vec y) &~~ \eqdef ~~& (u = \booltt \wedge \psi_\domainof^T(\vec x)) \vee (u \neq \booltt \wedge \theta_\domainof^T(\vec y)) \\
\phi_\in^T(u, \vec x, \vec y, u', \vec x', \vec {y'}) &~~ \eqdef ~~& (u = u' = \booltt \wedge \psi_\in^T(\vec x, \vec x')) \vee (u = u' = \boolff \wedge \theta_\in^T(\vec {y}, \vec {y'})) \\
\phi_\equiv^T(u, \vec x, \vec y, u', \vec x', \vec {y'}) &~~ \eqdef ~~& (u = u' = \booltt \wedge \psi_\equiv^T(\vec x, \vec x')) \vee (u = u' = \boolff \wedge \theta_\equiv^T(\vec {y}, \vec {y'})) \\
\end{array}
$$
Note that this interpretation does not quite correspond to a pairing
because it is not a complex object interpretation: the interpretation of common subobjects of
$T_1$ and $T_2$ are not necessarily identified, so the output is not necessarily a model of $\oneobjth$.
This is fixed by postcomposing with
the interpretation of Proposition~\ref{prop:mostowski-collapse-interp} to obtain $\<I_1,I_2\>$.
\end{itemize}
\end{proof}

\myeat{Proposition~\ref{prop:interp-sing-cup-map-pair} only 
only handles
the more basic transformations like $x,y \mapsto x \cup y$.
It does not by itself give us an implementation of  basic $\nrc$ connectives such as the union operation
$\vec x \mapsto E_1(\vec x) \cup E_2(\vec x)$. 
We can make use of Proposition \ref{prop:fointerp-comp}
to get the more general operation.
}

\begin{proof}[Proof of Proposition~\ref{prop:reducemonadic-interp}]

Similarly as with Proposition~\ref{prop:reducemonadic-nrc}, we define auxiliary
interpretations $\interp_{\uparrow}$, $\interp_{\downarrow}$
$\interp_{\widehat\pair}$, $\interp_{\hat\pi_1}$ and $\interp_{\hat\pi_2}$
mimicking the relevant constructs of Proposition~\ref{prop:reducemonadic-nrc}.
Then we will dispense with giving the recursive definitions of $\interp_{\convert_T}$
and $\interp_{\convert^{-1}_T}$, as they will be obvious from inspecting the 
clauses given in the proof of Proposition~\ref{prop:reducemonadic-nrc} and
replicating them using Proposition~\ref{prop:interp-sing-cup-map-pair} together with closure under composition
of interpretations. 

$\interp_{\uparrow}$, $\interp_{\downarrow}$
and $\interp_\pair$ are easy to define through Proposition~\ref{prop:interp-sing-cup-map-pair}, so
we focus on the projections $\interp_{\hat\pi^{n_1,n_2}_1}$ and $\interp_{\hat\pi^{n_1,n_2}_2}$,
defining transformations from $\ursort_m$ to $\ursort_{n_i}$ for $i \in \{1,2\}$ where $m \eqdef \max(n_1,n_2)$.
Note that in both cases, the output sort is part of the input sorts. Thus an output sort will be interpreted
by itself in the input, and the formulas will be trivial for every sort lying strictly below the output sort:
we take 
$$\phi_{\in_{\ursort_{k}}}(x,y) ~\eqdef~ x \in y \wedge \phi_\domainof^{\ursort_{k+1}}(y) \qquad
  \phi_{\equiv}^{\ursort_k}(x,y) ~\eqdef~ x = y \qquad
  \phi_\domainof^{\ursort_k}(x) ~\eqdef~ \top
$$
for every $k < n_i$ ($i$ according to which projection we are defining).
The only remaining important data that we need to provide are the 
formulas $\phi_\domainof^{\ursort_{n_i}}$,
which, of course, differ for both projections. We provide those below, calling $o_{in}$ the designated input
object. For both cases, we use an auxiliary predicate  $x \in^k y$ standing for $\exists y_1 \in y \ldots \exists y_{k-1}
\in y_{k-2}~~ x \in y_{k-1}$ for $k > 1$; for $k=0,1$, we take $x \in^1 y$ to be $x \in y$ and $x \in^0 y$ for $x = y$.
\begin{itemize}
\item For $\interp_{\hat\pi^{n_1,n_2}_1}$, we set
$$\phi_\domainof^{\ursort_{n_1}}(x) ~ \eqdef ~ \forall z \in o_{in}~\exists z' \in z ~~~ x \in^{m-n_1} z'$$
The basic idea is that the outermost $\forall\exists$ ensures that we compute the intersection of the two sets
contained in the encoding of the pair. 
\item For $\interp_{\hat\pi^{n_1,n_2}_2}$, first note that there are obvious $\deltazero$-predicates $\issing(x)$ and
$\istwo(x)$ classifying singletons and two element sets. This allows us to write the following $\deltazero$ formula
$$
\small
\phi_\domainof^{\ursort_{n_1}}(x) ~ \eqdef ~
\bigvee \left[ \begin{array}{l}
\issing(x) \wedge \forall z \in o_{in}~\exists z' \in z~~~ x \in^{m-n_2} z' \\
\istwo(x) \wedge \exists z \; z' \in o_{in}~ \exists y \in z'~ (y \notin z \wedge x \in^{m-n_2} z')
\end{array}\right.
$$
\end{itemize}
It is then easy to check that, regarded as transformations, those interpretation also implement the
projections for Kuratowski pairs.

\end{proof}


\section{Proofs for Section 5: equivalence of nested relational 
transformations and interpretations}

\myparagraph{From $\nrcwget$ expressions to interpretations}
In the body of the paper we claimed that $\nrcwget$ expressions have
the same expressiveness as interpretations.
One direction of this expressive equivalence is given in the
following lemma:

\begin{lemma} \label{lem:forward} There is an $\exptime$ computable function
taking an $\nrcwget$ expression $E$ to an equivalent FO interpretation $\interp_E$.
\end{lemma}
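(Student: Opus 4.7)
The plan is to proceed by structural induction on the $\nrcwget$ expression $E$, using Proposition~\ref{prop:interp-sing-cup-map-pair} and closure of $\deltazero$ interpretations under composition (Proposition~\ref{prop:fointerp-comp}) to handle each syntactic construct. Given $E : T_1, \ldots, T_n \to T$, the interpretation $\interp_E$ will take as input a structure encoding a tuple of objects of types $T_1 \ldots T_n$ (concretely, an instance of $\aschema_{T_1 \times \cdots \times T_n}$) and produce a structure encoding an object of type $T$; that is, an instance of $\aschema_T$ carrying a distinguished output constant $\outobj$.

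For the base cases, a variable $x_i : T_i$ corresponds to the trivial interpretation that renames the designated input constant to $\outobj$ on sort $T_i$ and uses the identity on subobjects. The unit expression $()$ and the empty set $\emptyset_T$ each yield a small explicit interpretation whose output domain is interpreted as a singleton or as the empty partial equivalence relation, respectively. For the inductive step, I would treat each remaining construct in turn. Tupling $\langle E_1, E_2 \rangle$ and the singleton $\{E\}$ follow directly from the pairing and singleton interpretations of Proposition~\ref{prop:interp-sing-cup-map-pair}; binary union $E_1 \cup E_2$ uses $\interp_\cup$ composed with $\langle \interp_{E_1}, \interp_{E_2} \rangle$. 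Projections $\pi_i(E)$ are obtained by reassigning the output constant of $\interp_E$ to the corresponding component. Set difference $E_1 \setminus E_2$ is handled by introducing a fresh output sort for the resulting set and defining $\phi_\in^T(x, \outobj)$ as ``$x$ belongs to the interpretation of $E_1$ and not to the interpretation of $E_2$''; the required membership formulas come from $\interp_{E_1}$ and $\interp_{E_2}$ via Lemma~\ref{lem:uniformreduction}-style rewriting (the latter being internal to the simple interpretation machinery here). The big union $\bigcup\{E_2 \mid x \in E_1\}$ can be implemented as $\map(\interp_{E_2})$ applied to $\interp_{E_1}$, followed by a $\deltazero$-definable flattening interpretation. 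For $\nrcget_T(E)$, it suffices by the observation in Section~\ref{sec:prelims} that $\nrcget_T$ is $\nrc$-definable from $\nrcget_\ursort$ to reduce to the case $T = \ursort$; there, given $\interp_E$ producing a set of Ur-elements, we build an output interpretation whose sort $\ursort$ is inherited from the input and whose $\phi_\domainof^\ursort(a)$ asserts either ``$a$ is the unique element of $E$'' (a $\deltazero$ condition over the formulas of $\interp_E$) or, failing that, ``$a = c_0$'', the chosen default.

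The principal obstacle is the complexity accounting. Two sources of blowup conspire. First, the tupling interpretation of Proposition~\ref{prop:interp-sing-cup-map-pair} concatenates the sort sequences of its two arguments (with an additional $\booltype$ tag), so the arity of the tuple representing each output sort can double at each tupling step. Since an $\nrcwget$ expression of size $n$ can nest tuplings to depth $\Theta(n)$, the arities $\interpsort(\sort')$ — and correspondingly the sizes of the formulas $\phi_\equiv, \phi_\domainof, \phi_\in$ that must quantify over these tuples — can reach $2^{\Theta(n)}$. Second, composing interpretations via $\phi \mapsto \phi^*$ substitutes the defining formulas of the outer interpretation into those of the inner one; iterating this $\Theta(n)$ times likewise risks exponential expansion unless shared via let-bindings, and a naive first-order presentation cannot share. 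Together these account for the $\exptime$ bound, and also explain why one cannot hope to improve it to $\ptime$ without changing the target representation — matching the lower bound of~\cite{koch} cited earlier.

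The remaining work is routine: verify by induction that each $\interp_E$ satisfies the definitional constraints on interpretations (symmetry and transitivity of $\phi_\equiv$, agreement between $\phi_\domainof$ and $\phi_\equiv$, congruence of relations, and preservation of the background theory $\Sigma(T)$), and verify by induction on the type of the output that $\collapse(\outobj, \interp_E(M_{\inobj}))$ agrees with $E(\inobj)$ on every input $\inobj$. Both inductions follow the semantic clauses of the relevant $\nrc$ constructor essentially on the nose, modulo the Mostowski-style interpretation from Proposition~\ref{prop:mostowski-collapse-interp} used to identify sub-objects at shared sorts whenever tupling is involved.
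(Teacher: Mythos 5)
Your proposal is correct and follows essentially the same route as the paper's proof: a structural induction on $E$ that discharges each constructor using the basic interpretations of Proposition~\ref{prop:interp-sing-cup-map-pair} (singleton, union, $\map$, pairing plus the Mostowski-style quotient) together with closure of $\deltazero$ interpretations under composition, treating big union as $\bigcup \circ \map(\cdot) \circ \cdot$, set difference as a primitive interpretation composed with a pairing, and $\nrcget$ by reduction to sort $\ursort$. The only differences are cosmetic --- you skip the paper's preliminary reduction to monadic input/output schemas and instead carry product types through the induction directly, and you supply a more explicit accounting of the exponential blowup (arity growth under tupling/$\map$ and formula substitution under composition) than the paper bothers to give; your passing appeal to Lemma~\ref{lem:uniformreduction} for the set-difference case is misplaced but harmless, since the real mechanism is the pairing-plus-collapse construction you invoke elsewhere.
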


As we mentioned in the body of the paper,  
very similar results occur in the prior literature, going
as far back as \cite{simulation}. 
\myeat{
The underlying idea is
simply that we can ``shred''  a transformation to run  over a flat encoding
of a nested object, which
 has been investigated in several communities
for languages related to $\nrc$
\cite{cheneysigmod, xqueryinterp, cheneydbpl}.
There is also similarity to results
from the 1960's of  Gandy \cite{gandy}. Gandy
 defines a class of set functions that are similar
to $\nrc$, and shows that they are ``substitutable''. This is the core
of the argument for translating $\nrc$ to interpretations.
}

\begin{proof}
We can assume that the input and output schemas are monadic, using the
reductions to monadic schemas given previously.
Indeed, if we solve the problem
for expressions where input and output
schemas are monadic, we can reduce the problem of finding an interpretation for an
arbitrary $\nrcwget$ expression $E(x)$ as follows:
construct a $\deltazero$ interpretation $\interp$
for the  expression $\convert(E(\convert^{-1}(x)))$ --
where $\convert$ and $\convert^{-1}$
are taken as in Proposition~\ref{prop:reducemonadic-nrc} --
and then, using closure under composition of interpretations (see e.g.~\cite{xqueryinterp}),
one can then leverage Proposition~\ref{prop:reducemonadic-interp}
to produce the composition of $\interp_{\convert^{-1}}$, $\interp$ and $\interp_{\convert}$
which is equivalent to the original expression $E$.

The argument proceeds by induction on the structure of $E : \vec T \to S$ in
$\nrc$. Some atomic operators were treated in the prior section, like singleton
$\cup$, tupling, and projections.  Using
closure of interpretations under composition, we are thus able to translate
compositions of those operators.
We are only left with a few cases.
\begin{itemize}
\item
For the set difference, since interpretations are closed under
composition, it suffices to prove that we can code the transformation
$$(x, y) ~~\mapsto~~ x \setminus y$$
at every sort $\sett(\ursort_n)$. Each sort gets interpreted by itself.
We thus set
$$
\begin{array}{lcll}
\phi_\domainof^{\ursort_n}(z) &~~\eqdef~~&
z \in \pi_1(\inobj) \wedge z \notin \pi_1(\inobj) \\
\phi_\domainof^{\ursort_k}(z) &~~ \eqdef ~~&
\exists z' ~~(\phi_\domainof^{\ursort_n} \wedge z \in^{n-k} z') \\
\phi_\in^{\ursort_k}(z, z') &~~ \eqdef ~~& z \in z' \wedge \phi^{\ursort_k}_\domainof(z) \wedge \phi^{\ursort_{k+1}}_\domainof(z') \\
\end{array}
$$
\item To get $\nrcwget$ expressions, it suffices 
to create a
$\deltazero$ interpretation corresponding to  $\nrcget$
which follows
$$\phi_\domainof^\ursort(a) ~~ \eqdef~~ (\exists !~ z \in \inobj ~~ z = a) \vee (\neg (\exists !~ z \in \inobj) \wedge a = c_0)$$
\item For the binding operator
$$\bigcup \{ E_1 \mid x \in E_2 \}$$
we exploit the classical decomposition
$$\bigcup ~~\circ~~ \map(E_1) ~~\circ~~ E_2$$
As interpretations are closed under composition and the mapping operations
was handled in Proposition~\ref{prop:interp-sing-cup-map-pair},
it suffices to give an interpretation for the expression $\bigcup : \sett(\sett(T)) \to \sett(T)$
for every sort $T$. This is straightforward: each sort gets interpreted as itself,
except for $\sett(T)$ itself which gets interpreted as the singleton $\{c_0\}$.
The only non-trivial clause are the following
$$
\phi_\in^{T}(x,y) ~~\eqdef~~ \phi_\domainof^T ~~\eqdef~~ \exists y' \in \inobj ~~ x \in y'
$$
\end{itemize}
\end{proof}

\myparagraph{From interpretations to $\nrcwget$ expressions}
The other direction of the expressive equivalence is provided by the following
lemma:

\begin{lemma} \label{lem:back} There is a polynomial time function
taking a $\deltazero$ interpretation to an equivalent $\nrcwget$ expression.
\end{lemma}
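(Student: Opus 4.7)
The plan is to reduce to monadic schemas via Proposition \ref{prop:reducemonadic}, and then to exhibit an $\nrcwget$ expression that mimics the semantics of a monadic $\deltazero$ interpretation by recursively computing the Mostowski-style collapse of the output constant. Given an arbitrary $\deltazero$ interpretation $\interp : \aschema \to \aschema'$, I would compose $\interp_{\convert}$ (from the monadic encoding of $\aschema$) with $\interp$ and $\interp_{\convert^{-1}}$ (from the monadic encoding of $\aschema'$), invoking Proposition \ref{prop:fointerp-comp} to see that the composition is still a $\deltazero$ interpretation, now between monadic schemas. An $\nrcwget$ expression equivalent to that monadic-to-monadic interpretation can then be pre- and post-composed with $\convert$ and $\convert^{-1}$ as $\nrcwget$ expressions (again from Proposition \ref{prop:reducemonadic}) to obtain an expression equivalent to $\interp$. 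This reduction is polynomial.

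For the monadic-to-monadic case with input sort $\ursort_m$ and output sort $\ursort_n$, I would first build, for each $j \leq m$, an $\nrc$ expression $U_j$ enumerating all elements of sort $\ursort_j$ appearing beneath $\inobj$: set $U_m = \{\inobj\}$ and $U_j = \bigcup U_{j+1}$ for $j < m$, together with $U_\unit = \{()\}$ and $U_\booltype = \{\emptyset, \{()\}\}$. Taking cartesian products of these (as in Example \ref{ex:nrc}) yields, for each output sort $\ursort_k$, an expression $A_k$ enumerating all tuples whose sort sequence matches $\interpsort(\ursort_k)$; using $\verify_{\phi_\domainof^{\ursort_k}}$ from Proposition \ref{prop:verify}, I restrict this to the subset $B_k$ of tuples satisfying the domain formula.

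The output is then built by induction on the level $k$. Since the interpretation preserves $\ursort$, a tuple $\vec x$ coding an ur-element is that ur-element itself, so set $C_0(\vec x) = \vec x$. For $1 \leq k \leq n$, the collapse of a tuple $\vec x$ at level $k$ is the set of collapses of tuples at level $k-1$ related to $\vec x$ by the membership formula $\phi_{\in_k}$, which I express as
\[
C_k(\vec x) \;=\; \bigcup \bigl\{ \case(\verify_{\phi_{\in_k}}(\vec y, \vec x),\; \{ C_{k-1}(\vec y) \},\; \emptyset) \mid \vec y \in B_{k-1}\bigr\}.
\]
Substituting the tuple of terms $\overline{\outobj}_1, \ldots, \overline{\outobj}_r$ provided by the interpretation for the constant $\outobj$ into $C_n$ gives the desired output, possibly wrapped in an outer $\nrcget$ if the top-level type of $\outobj$ is an ur-element. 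Correctness follows from the congruence requirement on $\phi_\equiv$: $\phi_\equiv$-equivalent tuples have the same $\phi_{\in_k}$-predecessors, so by induction they yield identical sets under the set semantics of $\nrc$, and duplicate representatives of the same equivalence class in $B_{k-1}$ collapse to the same element. Polynomial size and construction time are immediate since each level contributes an expression of size polynomial in the formulas of the interpretation, and the number of levels is bounded by the depth of the output type. The main obstacle I expect is clerical bookkeeping, namely threading through the sort sequences $\interpsort(\ursort_k)$ consistently and handling the edge cases where the ultimate output has type $\ursort$ or $\unit$ so that the correct outer use of $\nrcget$ is generated.
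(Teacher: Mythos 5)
Your proposal is correct and follows essentially the same route as the paper's proof: reduce to monadic schemas, enumerate the candidate coding tuples level by level with the subobject-collecting expressions and $\verify_{\phi_\domainof}$, and recursively build the collapse of each coded element via $\bigcup$, $\case$, and $\verify_{\phi_\in}$, extracting the final answer with $\nrcget$. Your $C_k$ is exactly the paper's $E_d$ (with the base case given by the $\ursort$-preservation of the interpretation), and the only cosmetic difference is that you substitute the terms interpreting the output constant directly where the paper instead collects all top-level domain tuples into a singleton and applies $\nrcget$.
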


This direction is not used directly in the conversion
from implicitly definable transformations to $\nrcwget$, 
but it is of interest
in showing that $\nrcwget$ and $\deltazero$ interpretations are equally expressive.

\begin{proof}(of Lemma \ref{lem:back})
Using the reductions to monadic schemas, it suffices to show this
for transformations that have monadic input schemas as input and output.

Fix a $\deltazero$ interpretation $\interp$ with input $\ursort_n$ and output $\ursort_m$.

Before we proceed, first note that for every $d \le m$,
there is an $\nrc$ expression
$$E_d : \ursort_n \to \sett(\ursort_d)$$
collecting all of the subobjects of its input of sort $\ursort_d$.
It is formally defined by the induction over $n - d$.
$$E_m(x) ~~ \eqdef~~ \{x\} \qquad \qquad E_d(x) = \bigcup E_{d-1}(x)$$
Write $E_{d_1, \ldots, d_k}(x)$ for $\< E_{d_1}, \ldots, E_{d_k} \>(x)$
for every tuple of integers $d_1 \ldots d_k$.

For $d \le m$, let $d_1, \ldots, d_k$ be the tuple such that the output
sort $\ursort_d$ is interpreted by the list of input sorts
$\ursort_{d_1}, \ldots, \ursort_{d_k}$.
By induction over $d$, we build  $\nrc$ expressions
$$E_d : \ursort_m, \ursort_{d_1}, \ldots, \ursort_{d_k} \to \ursort_d$$
such that, provided that $\phi_\domainof^{\ursort_d}(\vec a)$ and $\phi_\domainof^{\ursort_{d+1}}(\vec b)$ hold, we have
$$\phi_\in^{\ursort_d}(\vec a, \vec b) \qquad \qquad  \text{if and only if} \qquad \qquad  E_d(\vec a) \in E_{d+1}(\vec b)$$

For $E_0 : \ursort_m, \ursort \to \ursort$, we simply take the second projection.
Now assume that $E_d$ is defined and that we are
looking to define $E_{d+1}$.
We want to set
\begin{align*}
E_{d+1}(x_{in},\vec{y}) \eqdef \{E_d(x_{in}, \vec x) \mid \vec{x} \in E_{d_1, \ldots, d_k}(x_{in}, \vec y) \wedge \verify_{\phi_{\in}^i}(x_{in}, \vec x, y_{in}, \vec y) \}
\end{align*}
which is $\nrc$-definable as follows 
$$\bigcup \left\{
\case\left(\verify_{\phi_{\in}^i}(x_{in}, \vec x, y_{in}, \vec y), ~ \{E_d(x_{in})\}, ~ \{\}\right)
\mid \vec x \in E_{d_1,\ldots,d_k}(x_{in}) \right\}
$$
where $\verify$ is given as in the Verification Proposition proven earlier in the 
supplementary materials
and $\{E(\vec x, \vec y) \mid \vec x \in E'(\vec y)\}$ is a
notation for $\bigcup \{ \ldots \bigcup \{ E(\vec x, \vec y) \mid x_1 \in \pi_1(E'(\vec y)) \} \ldots \mid x_k \in \pi_k(E'(\vec y))\}$.
It is easy to check that the inductive invariant holds.

Now, consider the  transformation $E_m : \ursort_n, \ursort_{m_1}, \ldots, \ursort_{m_k} \to \ursort_m$.
The transformation
$$R ~~ \eqdef ~~\{ E_m(x_{in}, \vec y) \mid \vec y \in E_{m_1, \ldots, m_k}(x_{in}) \wedge \phi_\domainof^{\ursort_m}(\vec y)\}$$
is also $\nrc$-definable using $\verify$.
Since the inductive invariant holds at level $m$, $R$ returns the singleton containing the output of $\interp$. Therefore $\nrcwget(R) : \ursort_n \to \ursort_m$ is the desired $\nrcwget$ 
expression equivalent to the interpretation $\interp$.
\end{proof}

Note that  the argument can be easily modified to
 produce an $\nrcwget$ expression that is \emph{composition-free}:  in
union expressions $\bigcup\{E_1 \mid x \in E_2\}$, the range $E_2$ of
the variable $x$ is always another variable.
In composition-free expressions, we allow as a native  construct $\case(B,E_1,E_2)$ where $B$ is a Boolean combination
of atomic transformations with Boolean output, since we cannot use composition to derive the conditional from the other operations.


Thus every $\nrcwget$ expression can be converted to one that is composition-free, and similarly
for $\nrcwget$.  The analogous
statements have been observed before for related languages like XQuery \cite{xqueryinterp}.


\section{Proofs for Section 6: proof details concerning generating interpretations from classical proofs} 

\subsection{Requirement that not all input sorts be singletons}
Recall  from Section 6 that in our main theorem 
relating implicit and explicit interpretability within multi-sorted logic,
 we required   that the theory
$\Sigma$ entails the existence of a sort in $\smallsorts$ with more than one element.

We now explain that this requirement
is essential. Otherwise we might have $\smallsorts$ entailed
by $\Sigma$ to consist of a single element which is named by a constant, 
while $\bigsorts$ has another sort with two elements, each named by a constant.
Since every element of the models of $\Sigma$ is named by a constant, all
models are isomorphic, and hence we have implicit interpretability vacuously.
But we cannot explicitly interpret $\bigsorts$ in $\smallsorts$ simply
for cardinality reasons.

\subsection{Details of the reduction allowing us to  drop additional parameters}

Recall that in the body of the paper we claimed that
to be able to generate $\nrcwget$ expressions from projective implicit
definitions, it suffices to deal with implicit definitions: formulas
$\Sigma'(\inobj, \outobj)$ with no auxiliary variables $\vec a$:

\medskip

For any $\deltazero$ formula $\Sigma(\inobj, \outobj, \vec a)$ 
that implicitly defines $\outobj$ as a  function of
$\inobj$, there is another $\deltazero$ formula $\Sigma'(\inobj, \outobj)$ which 
implicitly $\outobj$ as a function of $\inobj$  such that
$\Sigma(\inobj, \outobj, \vec a) \Rightarrow \Sigma'(\inobj, \outobj)$.

\medskip

We now give the proof:

\begin{proof}
The assumption that $\Sigma$ implicitly defines $\outobj$ as a function
of $\inobj$  means that we have an entailment
$$\Sigma(\inobj, \outobj, \vec a) \models \Sigma(\inobj, \outobj', \vec{a'}) \Rightarrow \outobj = \outobj'$$
Applying $\deltazero$ interpolation 
we may obtain a formula $\theta(\inobj,\outobj)$ such that
$$\Sigma(\inobj, \outobj, \vec a) \models \theta(\inobj, \outobj) \qquad \text{and} \qquad \theta(\inobj, \outobj) \wedge \Sigma(\inobj, \outobj', \vec {a'}) \models \outobj = \outobj'$$
Now we can derive the following entailment
$$\Sigma(\inobj, \outobj, \vec a) \models [\theta(\inobj,\outobj') \wedge \theta(\inobj,\outobj'')] \Rightarrow \outobj' = \outobj''$$
This entailment is obtained from the second property of $\theta$, since
we can infer that $\outobj'=\outobj$ and $\outobj''=\outobj$.

Now we can  apply interpolation again to obtain a formula $D(\inobj)$ such that
$$\Sigma(\inobj, \outobj, \vec a) \models D(\inobj) \qquad \text{and} \qquad D(\inobj) \wedge \theta(\inobj, \outobj') \wedge \theta(\inobj, \outobj'') \models \outobj' = \outobj''$$
We now claim that $\Sigma'(\inobj, \outobj) \eqdef D(\inobj) \wedge \theta(\inobj, \outobj)$ is an implicit definition
extending $\Sigma$.
Functionality of $\Sigma'$ is a consequence of the second entailment witnessing that $D$ is an interpolant.
Finally, the implication $\exists \vec a~~ \Sigma(\inobj,\outobj, \vec a) \models \Sigma'(\inobj, \outobj)$ is given by the
combination of the first entailments witnessing that $\theta$ and $D$ are interpolants.
\end{proof}


\subsection*{Reduction to complete theories} \label{sec:assumecompleteproof}
Recall the result on multi-sorted first-order logic  in the body of the paper:

\medskip

For any $\Sigma, \smallsorts, \bigsorts$ such that $\Sigma$ entails that a sort
of $\smallsorts$ has at least two elements, $\bigsorts$ is explicitly
interpretable over $\smallsorts$ if and only if it is implicitly interpretable
over $\smallsorts$.

\medskip

In the body of the paper, we argued that it suffices to prove this
for the case when $\Sigma$ is a complete theory. We now prove this:

\begin{proof}
Fix a $\Sigma$ 
satisfying the hypothesis, but not the conclusion,
and let $\rho$ be a sentence in the vocabulary of $\Sigma$.  We claim that
one of $\rho, \neg \rho$ can be added to $\Sigma$ in such
a way that the conclusion of the theorem still fails.
This would suffice, since then we can inductively complete
$\Sigma$ to a complete theory in which every  finite subset is satisfiable,
and hence by compactness a satisfiable theory.

The hypothesis of the theorem, implicit interpretability of $\bigsorts$
over $\smallsorts$ relative to $\Sigma$, is preserved
under extending $\Sigma$, and thus both $\Sigma \cup \{\rho\}$ and
$\Sigma \cup \{ \neg \rho \}$  implicitly define $\bigsorts$ as well.
Suppose by way of contradiction that in both extensions $\bigsorts$ is  explicitly
interpretable over $\smallsorts$. That is, suppose $\bigsorts$ is
explicitly interpretable over $\smallsorts$ via
$\Theta_1$ relative to $\Sigma \cup  \{\rho\}$, and also that
$\bigsorts$ is explicitly interpretable over $\smallsorts$
via $\Theta_2$ relative to $\Sigma \cup \{\neg \rho\}$.
At this point we would like to combine $\Theta_1$ and $\Theta_2$
to get an explicit interpretation relative to $\Sigma$, contradicting
the assumption. The obvious way to do  this would be to apply
$\Theta_1$ or $\Theta_2$ conditioning
on $\rho$. However, $\rho$ may make use of sorts outside of $\smallsorts$.

Consider the sentence $\Sigma_1$ stating that $\Sigma$
holds and if $\rho$ holds then $\bigsorts$ is interpreted
via $\Theta_1$ applied to $\smallsorts$. Then $\Sigma_1$
is implicitly definable over $\smallsorts$, and thus
by the standard Beth Definability theorem \cite{beth,craig57beth},  there is
a sentence $\Sigma'_1$ over $\smallsorts$ that holds
of models $M$ that extend to a $\Sigma_1$ structure.
Similarly we get a sentence $\Sigma'_2$ over $\smallsorts$
that holds of a $\smallsorts$ structure $M$ 
whenever $M$ has an expansion that either  satisfies
$\rho$ or agrees with $\Theta_2$.
We can form an interpretation that acts as $\Theta_1$
when $\Sigma'_1$ holds and as $\Theta_2$ when
$\Sigma'_2$ holds, and this gives a contradiction of the assumption
that the theorem failed for $\Sigma$.
\end{proof}


\subsection*{Proof of the final equivalence}
Recall that in the body of the paper we stated the following result:

\medskip

The following are equivalent for a transformation $\trans$:
\begin{compactitem}
\item $\trans$ is projectively implicitly definable by a $\deltazero$ formula
\item $\trans$ is implicitly definable by a $\deltazero$ formula
\item $\trans$ is definable via a $\deltazero$ interpretation
\item $\trans$ is $\nrcwget$ definable
\end{compactitem}

\medskip

The directions from the first bullet through to the fourth are proven
in the paper.
What remains is to show the following ``easy implication''.

\medskip

For every $\nrcwget$ expression  $E$ we can obtain 
a $\deltazero$ formula that  implicitly defines $E$.

\medskip

This can be done by induction on the structure of $E$.
For example, consider the case of the singleton constructor $E=\{F \}$.
Inductively we have $\phi_F(\vec x, q_2)$  defining $F$, and
from there we can define $E$ by:
\[
(\exists q_2 \in q_1 ~ \top)  \wedge (\forall q_2 \in q_1 ~ \phi_F(\vec x, q_2))
\]

We discuss briefly the inductive case of the union operator. One approach,
is to break this operator down into a simpler union operator where the variable
can only iterate over another variable. The full union operator can be recovered
if we also allow a composition operation. The simpler operator is easy to handle inductively.
Composition can be handled without a blow-up if we allow \emph{projective implicit definitions},
because projective implicit definitions are closed under composition.
From our prior results, we know that projective implicit definitions are no more expressive
than implicit ones.

An alternative is to rely on the $\nrcwget$ normalization result mentioned at the end of 
Lemma  \ref{lem:back}: we can pre-process $\nrcwget$ expressions to be composition-free: in unions
we do not iterate over  complex expressions. For these normalized expressions, the creation
of implicit definitions can be done in $\ptime$.




\end{document}